\documentclass[letterpaper,11pt,oneside,reqno]{amsart}
\usepackage{amsfonts,amsmath, amssymb,amsthm,amscd}
\usepackage[usenames,dvipsnames]{color}
\usepackage{bm}
\usepackage{mathrsfs}
\usepackage{yfonts}
\usepackage[mpexclude,DIV13]{typearea}
\usepackage{verbatim}
\usepackage{hyperref}
\usepackage{graphicx}
\usepackage[latin1]{inputenc}
\usepackage{latexsym}
\usepackage{lscape}
\usepackage{epsfig}
\usepackage{subfigure}

\numberwithin{equation}{subsection}

\usepackage{setspace}
\include{bibtex}

\DeclareGraphicsRule{.tif}{png}{.png}{`convert #1 `basename
#1.tif`.png}

\begin{document}

\bibliographystyle{alpha}
\newcommand{\cn}[1]{\overline{#1}}
\newcommand{\e}[0]{\epsilon}
\newcommand{\EE}{\ensuremath{\mathbb{E}}}
\newcommand{\qq}[1]{(q;q)_{#1}}
\newcommand{\A}{\ensuremath{\mathcal{A}}}
\newcommand{\GT}{\ensuremath{\mathbb{GT}}}
\newcommand{\link}{\ensuremath{Q}}
\newcommand{\PP}{\ensuremath{\mathbb{P}}}
\newcommand{\frakP}{\ensuremath{\mathfrak{P}}}
\newcommand{\frakQ}{\ensuremath{\mathfrak{Q}}}
\newcommand{\frakq}{\ensuremath{\mathfrak{q}}}
\newcommand{\R}{\ensuremath{\mathbb{R}}}
\newcommand{\Rplus}{\ensuremath{\mathbb{R}_{+}}}
\newcommand{\C}{\ensuremath{\mathbb{C}}}
\newcommand{\Z}{\ensuremath{\mathbb{Z}}}
\newcommand{\N}{\ensuremath{\mathbb{N}}}
\newcommand{\Weyl}[1]{\ensuremath{\mathbb{W}}^{#1}}
\newcommand{\LP}[1]{\ensuremath{\mathcal{C}}^{#1}}
\newcommand{\LPe}[1]{\ensuremath{\mathcal{C}}_{\e}^{#1}}
\newcommand{\LPsd}[1]{\ensuremath{\mathcal{\tilde{C}}}^{#1}}
\newcommand{\CP}[1]{\ensuremath{\mathcal{W}}^{#1}}
\newcommand{\Weylct}[1]{\ensuremath{\mathbb{\bar{W}}}^{#1}}
\newcommand{\Zgzero}{\ensuremath{\mathbb{Z}_{>0}}}
\newcommand{\Zgeqzero}{\ensuremath{\mathbb{Z}_{\geq 0}}}
\newcommand{\Zleqzero}{\ensuremath{\mathbb{Z}_{\leq 0}}}
\newcommand{\Q}{\ensuremath{\mathbb{Q}}}
\newcommand{\T}{\ensuremath{\mathbb{T}}}
\newcommand{\Y}{\ensuremath{\mathbb{Y}}}
\newcommand{\M}{\ensuremath{\mathbf{M}}}
\newcommand{\MM}{\ensuremath{\mathbf{MM}}}
\newcommand{\W}[1]{\ensuremath{\mathbf{W}}_{(#1)}}
\newcommand{\WM}[1]{\ensuremath{\mathbf{WM}}_{(#1)}}
\newcommand{\Zsd}{\ensuremath{\mathbf{Z}}}
\newcommand{\Fsd}{\ensuremath{\mathbf{F}}}
\newcommand{\symBM}{\ensuremath{\mathbf{W}}}
\newcommand{\symFE}{\ensuremath{\mathbf{S}}}
\newcommand{\tazrp}{\ensuremath{\nu}}

\newcommand{\Domain}{\ensuremath{\mathbf{D}}}

\newcommand{\const}{\ensuremath{c}}

\newcommand{\Real}{\ensuremath{\mathrm{Re}}}
\newcommand{\Imag}{\ensuremath{\mathrm{Im}}}
\newcommand{\re}{\ensuremath{\mathrm{Re}}}

\newcommand{\Sym}{\ensuremath{\mathrm{Sym}}}

\newcommand{\bfone}{\ensuremath{\mathbf{1}}}

\newcommand{\whitenoise}{\ensuremath{\mathscr{\dot{W}}}}
\newcommand{\alphaW}[1]{\ensuremath{\mathbf{\alpha W}}_{(#1)}}
\newcommand{\alphaWM}[1]{\ensuremath{\mathbf{\alpha WM}}_{(#1)}}
\newcommand{\malpha}{\ensuremath{\hat{\alpha}}}
\newcommand{\walpha}{\ensuremath{\alpha}}
\newcommand{\edge}{\textrm{edge}}
\newcommand{\dist}{\textrm{dist}}

\newcommand{\OO}[0]{\Omega}
\newcommand{\F}[0]{\mathfrak{F}}
\newcommand{\poly}[0]{R}
\def \Ai {{\rm Ai}}
\def \Pf {{\rm Pf}}
\def \sgn {{\rm sgn}}
\def \SS {\mathcal{S}}
\newcommand{\poles}{\mathbb{A}}
\def \ss {\mathcal{X}}
\newcommand{\var}{{\rm var}}

\newcommand{\Res}[1]{\underset{{#1}}{\mathbf{Res}}}
\newcommand{\Lim}[1]{\underset{{#1}}{\mathbf{lim}}}
\newcommand{\Resq}[1]{\underset{{#1}}{\mathbf{Res}^q}}
\newcommand{\Resc}[1]{\underset{{#1}}{\mathbf{Res}^c}}
\newcommand{\Resfrac}[1]{\mathbf{Res}_{{#1}}}
\newcommand{\Sub}[1]{\underset{{#1}}{\mathbf{Sub}}}
\newcommand{\Subq}[1]{\underset{{#1}}{\mathbf{Sub}^q}}
\newcommand{\Subc}[1]{\underset{{#1}}{\mathbf{Sub}^c}}

\newcommand{\llangle}[0]{\ensuremath{\big\langle}}
\newcommand{\rrangle}[0]{\ensuremath{\big\rangle_{\mathcal{W}}}}
\newcommand{\rranglesmall}[0]{\ensuremath{\big\rangle_{\mathcal{C}}}}
\newcommand{\rranglesd}[0]{\ensuremath{\big\rangle}_{\mathcal{\tilde{C}}}}

\newcommand{\psir}{\ensuremath{\Psi^{r}}}
\newcommand{\psil}{\ensuremath{\Psi^{\ell}}}
\newcommand{\psifwd}{\ensuremath{\Psi^{\textrm{fwd}}}}
\newcommand{\psimfwd}{\ensuremath{\Psi^{\textrm{cfwd}}}}
\newcommand{\psibwd}{\ensuremath{\Psi^{\textrm{bwd}}}}

\newcommand{\psire}{\ensuremath{\Psi^{r,\e}}}
\newcommand{\psiretilde}{\ensuremath{\hat{\Psi}^{r,\e}}}
\newcommand{\psile}{\ensuremath{\Psi^{\ell,\e}}}
\newcommand{\psifwde}{\ensuremath{\Psi^{\textrm{fwd},\e}}}
\newcommand{\psimfwde}{\ensuremath{\Psi^{\textrm{cfwd},\e}}}
\newcommand{\psibwde}{\ensuremath{\Psi^{\textrm{bwd},\e}}}

\newcommand{\psirezero}{\ensuremath{\Psi^{r,0}}}
\newcommand{\psiretildezero}{\ensuremath{\tilde{\Psi}^{r,0}}}
\newcommand{\psilezero}{\ensuremath{\Psi^{\ell,0}}}

\newcommand{\psirsd}{\ensuremath{\tilde{\Psi}^{r}}}
\newcommand{\psilsd}{\ensuremath{\tilde{\Psi}^{\ell}}}
\newcommand{\psifwdsd}{\ensuremath{\tilde{\Psi}^{\textrm{fwd}}}}
\newcommand{\psimfwdsd}{\ensuremath{\tilde{\Psi}^{\textrm{cfwd}}}}
\newcommand{\psibwdsd}{\ensuremath{\tilde{\Psi}^{\textrm{bwd}}}}

\newcommand{\psirct}{\ensuremath{\bar{\Psi}^{r}}}
\newcommand{\psilct}{\ensuremath{\bar{\Psi}^{\ell}}}

\newcommand{\Id}{\ensuremath{\mathrm{Id}}}
\newcommand{\V}{\ensuremath{\Delta}}

\newcommand{\HqBoson}{\ensuremath{\mathcal{H}^{\textrm{q-Boson}}}}

\newcommand{\KqBoson}{\ensuremath{\mathcal{K}^{\textrm{q-Boson}}}}
\newcommand{\KqBosonDual}{\ensuremath{\mathcal{M}^{\textrm{q-Boson}}}}
\newcommand{\FqBoson}{\ensuremath{\mathcal{F}^{\textrm{q-Boson}}}}
\newcommand{\JqBoson}{\ensuremath{\mathcal{J}^{\textrm{q-Boson}}}}
\newcommand{\JqBosonlarge}{\ensuremath{\mathcal{J}_{\textrm{large}}^{\textrm{q-Boson}}}}

\newcommand{\KqBosone}{\ensuremath{\mathcal{K}^{\textrm{q-Boson},\e}}}
\newcommand{\KqBosoneDual}{\ensuremath{\mathcal{M}^{\textrm{q-Boson},\e}}}
\newcommand{\FqBosone}{\ensuremath{\mathcal{F}^{\textrm{q-Boson},\e}}}
\newcommand{\JqBosone}{\ensuremath{\mathcal{J}^{\textrm{q-Boson},\e}}}

\newcommand{\KqBosonezero}{\ensuremath{\mathcal{K}^{\textrm{q-Boson},0}}}
\newcommand{\KqBosonezeroDual}{\ensuremath{\mathcal{M}^{\textrm{q-Boson},0}}}
\newcommand{\FqBosonezero}{\ensuremath{\mathcal{F}^{\textrm{q-Boson},0}}}
\newcommand{\JqBosonezero}{\ensuremath{\mathcal{J}^{\textrm{q-Boson},0}}}

\newcommand{\KqBosonsd}{\ensuremath{\mathcal{K}^{\textrm{S-D}}}}
\newcommand{\KqBosonsdDual}{\ensuremath{\mathcal{M}^{\textrm{S-D}}}}
\newcommand{\FqBosonsd}{\ensuremath{\mathcal{F}^{\textrm{S-D}}}}
\newcommand{\JqBosonsd}{\ensuremath{\mathcal{J}^{\textrm{S-D}}}}

\newcommand{\KqBosonct}{\ensuremath{\mathcal{K}^{\textrm{F-Y}}}}
\newcommand{\KqBosonctDual}{\ensuremath{\mathcal{M}^{\textrm{F-Y}}}}
\newcommand{\FqBosonct}{\ensuremath{\mathcal{F}^{\textrm{F-Y}}}}
\newcommand{\JqBosonct}{\ensuremath{\mathcal{J}^{\textrm{F-Y}}}}

\renewcommand{\i}{\mathbf i}
\newcommand{\qfac}[2]{\ensuremath{#2}_{#1}!}
\newcommand{\Abwd}{\ensuremath{\mathcal{H}^{\textrm{bwd}}}}
\newcommand{\Afwd}{\ensuremath{\mathcal{H}^{\textrm{fwd}}}}
\newcommand{\Amfwd}{\ensuremath{\mathcal{H}^{\textrm{cfwd}}}}

\newcommand{\Abwde}{\ensuremath{\mathcal{H}^{\textrm{bwd},\e}}}
\newcommand{\Afwde}{\ensuremath{\mathcal{H}^{\textrm{fwd},\e}}}
\newcommand{\Amfwde}{\ensuremath{\mathcal{H}^{\textrm{cfwd},\e}}}

\newcommand{\Abwdsd}{\ensuremath{\mathcal{\tilde{H}}^{\textrm{bwd}}}}
\newcommand{\Afwdsd}{\ensuremath{\mathcal{\tilde{H}}^{\textrm{fwd}}}}
\newcommand{\Amfwdsd}{\ensuremath{\mathcal{\tilde{H}}^{\textrm{cfwd}}}}

\newcommand{\difbwd}{\ensuremath{\nabla^{\textrm{bwd}}}}
\newcommand{\diffwd}{\ensuremath{\nabla^{\textrm{fwd}}}}

\newcommand{\difbwde}{\ensuremath{\nabla^{\textrm{bwd},\e}}}
\newcommand{\diffwde}{\ensuremath{\nabla^{\textrm{fwd},\e}}}

\newcommand{\Freebwd}{\ensuremath{\mathcal{L}^{\textrm{bwd}}}}
\newcommand{\Freefwd}{\ensuremath{\mathcal{L}^{\textrm{fwd}}}}

\newcommand{\Freebwde}{\ensuremath{\mathcal{L}^{\textrm{bwd},\e}}}
\newcommand{\Freefwde}{\ensuremath{\mathcal{L}^{\textrm{fwd},\e}}}

\newcommand{\Freebwdsd}{\ensuremath{\mathcal{\tilde{L}}^{\textrm{bwd}}}}
\newcommand{\Freefwdsd}{\ensuremath{\mathcal{\tilde{L}}^{\textrm{fwd}}}}

\newcommand{\ul}[2]{\underline{#1}_{#2}}
\newcommand{\qhat}[1]{\widehat{#1}^{q}}
\newcommand{\La}[0]{\Lambda}
\newcommand{\la}[0]{\lambda}
\newcommand{\ta}[0]{\theta}
\newcommand{\w}[0]{\omega}
\newcommand{\ra}[0]{\rightarrow}
\newcommand{\vectoro}{\overline}
\newtheorem{theorem}{Theorem}[section]
\newtheorem{partialtheorem}{Partial Theorem}[section]
\newtheorem{conj}[theorem]{Conjecture}
\newtheorem{lemma}[theorem]{Lemma}
\newtheorem{proposition}[theorem]{Proposition}
\newtheorem{corollary}[theorem]{Corollary}
\newtheorem{claim}[theorem]{Claim}
\newtheorem{formal}[theorem]{Critical point derivation}
\newtheorem{experiment}[theorem]{Experimental Result}
\newtheorem{prop}{Proposition}

\def\todo#1{\marginpar{\raggedright\footnotesize #1}}
\def\change#1{{\color{green}\todo{change}#1}}
\def\note#1{\textup{\textsf{\Large\color{blue}(#1)}}}

\theoremstyle{definition}
\newtheorem{remark}[theorem]{Remark}

\theoremstyle{definition}
\newtheorem{example}[theorem]{Example}

\theoremstyle{definition}
\newtheorem{definition}[theorem]{Definition}

\theoremstyle{definition}
\newtheorem{definitions}[theorem]{Definitions}

\begin{abstract}
We develop spectral theory for the generator of the $q$-Boson (stochastic) particle system. Our central result is a Plancherel type  isomorphism theorem for this system. This theorem has various implications. It proves the completeness of the Bethe ansatz for the $q$-Boson generator and consequently enables us to solve the Kolmogorov forward and backward equations for general initial data. Owing to a Markov duality with $q$-TASEP, this leads to moment formulas which characterize the fixed time distribution of $q$-TASEP started from general initial conditions. The theorem also implies the biorthogonality of the left and right eigenfunctions.

We consider limits of our $q$-Boson results to a discrete delta Bose gas considered previously by van Diejen, as well as to another discrete delta Bose gas that describes the evolution of  moments of the semi-discrete stochastic heat equation (or equivalently, the O'Connell-Yor semi-discrete directed polymer partition function). A further limit takes us to the delta Bose gas which arises in studying moments of the stochastic heat equation / Kardar-Parisi-Zhang equation.
\end{abstract}

\title{Spectral theory for the $q$-Boson particle system}
\author[A. Borodin]{Alexei Borodin}
\address{A. Borodin,
Massachusetts Institute of Technology,
Department of Mathematics,
77 Massachusetts Avenue, Cambridge, MA 02139-4307, USA, and Institute for Information Transmission Problems, Bolshoy Karetny per. 19, Moscow 127994, Russia}
\email{borodin@math.mit.edu}

\author[I. Corwin]{Ivan Corwin}
\address{I. Corwin, Columbia University,
Department of Mathematics,
2990 Broadway,
New York, NY 10027, USA,
and Clay Mathematics Institute, 10 Memorial Blvd. Suite 902, Providence, RI 02903, USA,
and Massachusetts Institute of Technology,
Department of Mathematics,
77 Massachusetts Avenue, Cambridge, MA 02139-4307, USA}
\email{ivan.corwin@gmail.com}

\author[L. Petrov]{Leonid Petrov}
\address{L. Petrov,
Department of Mathematics, Northeastern University, 360 Huntington ave.,
Boston, MA 02115, USA, and Institute for Information Transmission Problems, Bolshoy Karetny per. 19, Moscow, 127994, Russia}
\email{lenia.petrov@gmail.com}

\author[T. Sasamoto]{Tomohiro Sasamoto}
\address{T. Sasamoto,
Chiba University, Department of Mathematics, 1-33 Yayoi-cho, Inage, Chiba, 263-8522, Japan, and Zentrum Mathematik, Technische Universit\"at Mu\"nchen, Boltzmannstrasse 3 85748 Garching Germany}
\email{sasamoto@math.s.chiba-u.ac.jp}

\maketitle

\setcounter{tocdepth}{3}
\tableofcontents
\hypersetup{linktocpage}

%

\section{Introduction}

In this work we develop spectral theory for the $q$-Boson (stochastic) particle system\footnote{In \cite{SasWad} this particle system was referred to as the $q$-Boson totally asymmetric diffusion model while in \cite{BorCor,BCS,Hyun} it was also referred to as $q$-TAZRP. The term ``stochastic'' is included here to differentiate this with the non-stochastic quantum particle system considered in earlier work \cite{BBT,BIK} under the name $q$-Bosons. That earlier studied system is a special limit of the more general system presently considered (see Section \ref{vandiejen}). Despite this, in what follows we will generally suppress the term ``stochastic'', though still always referring to the stochastic particle system}. This is an interacting particle system whose generator is a stochastic representation of the generalization of the $q$-Boson Hamiltonian introduced by Sasamoto-Wadati in 1998 \cite{SasWad} (see Section \ref{afortiori} for more details). The system (in fact, a totally asymmetric zero range process) consists of $k\geq 1$ particles on $\Z$ with locations labeled by $\vec{n}=(n_1\geq \cdots \geq n_k)\in \Z^k$. In continuous time, each cluster of particles with the same location transfers one particle to the left by one at rate $(1-q^c)$, where $c$ is the size of the cluster and $q$ is a parameter fixed between 0 and 1. In order to preserve the ordering of $\vec{n}$, the highest index particle in a cluster is always the one which moves left.

This particle system can be understood as being a discrete space, $q$-deformation of the continuum delta Bose gas on $\R$ with attractive coupling constant (see Section \ref{deltabosesec}). This delta Bose gas has a rich history, going back to the foundational work of Lieb-Liniger in 1963 \cite{LL}, and it has recently played an important role in the physics literature \cite{McGuire,K,Dot,CDR,CDprl,ProS1,ProS2,CorwinQuastel,ProSpoComp,ImSa,ImSaKPZ,CDlong,ImSaKPZ2,Dot2,Dot3,Dot4,Dot5,ISS} surrounding the Kardar-Parisi-Zhang (KPZ) equation and universality class (which includes random growth models, interacting particle systems and directed polymers -- see the review \cite{ICReview}). More exactly the moments (for a fixed time $t$ but possibly different spatial locations $x_1,\ldots, x_k\in \R$) of the solution to the stochastic heat equation (whose logarithm is the KPZ equation that models a randomly growing interface) satisfy the delta Bose gas with initial data corresponding to the initial data of the stochastic heat equation. For more details see Section \ref{deltabosesec}.

The moment problem for the solution to the stochastic heat equation is not well-posed since its moments, though all finite, grow too fast to characterize the distribution of the solution. Despite this mathematical limitation, there has been a significant amount of non-rigorous work using these moment formulas to extract distributional information about the solutions to KPZ equation -- this sometimes goes by the name of the polymer replica method. In the instances for which rigorous results were available via other means \cite{ACQ,CQ,BCF,BCFV} it has been checked that these computations have yielded the correct answer. There are now many non-rigorous KPZ distribution computations -- such as those involving different types of initial data (flat / half-flat \cite{CDprl,CDlong}, stationary \cite{ImSa, ImSaKPZ,ImSaKPZ2}, or more general \cite{CorwinQuastel}) or different times \cite{Dot4} -- which are based on this technique and which do not yet have rigorous counterparts. Such computations involve some level of guessing (as to how to sum certain divergent series) which varies problem to problem, hence it is hard to be confident (let alone prove) whether the outcome of each additional computation will yield the correct answer.

One way to put this line of work on a firm, even rigorous footing is to find discrete regularizations of the KPZ equation which are integrable and well-posed in the sense that formulas for moments characterize the distribution. One such system is the $q$-deformed totally asymmetric simple exclusion process ($q$-TASEP) which was introduced by Borodin-Corwin \cite{BorCor} in 2011 via the framework of Macdonald processes. It was later observed by Borodin-Corwin-Sasamoto \cite{BCS} that moments of $q$-TASEP solve the $q$-Boson particle system with initial data corresponding to the initial condition of $q$-TASEP. Drawing on \cite{BorCor}, in \cite{BCS} this system was explicitly solved for one family of initial data (corresponding to $q$-TASEP started from half-stationary initial condition). Until now, it was not clear how to analyze this system for general initial data, as would be necessary to rigorously approach the many problems studied via the polymer replica method in the past few years. 

The reason why the delta Bose gas can be solved for general initial data is that it is known how to diagonalize the Hamiltonian. Though the eigenfunctions for this Hamiltonian have been known via coordinate Bethe ansatz \cite{Bethe} since the work of Lieb-Liniger \cite{LL}, alone they do not suffice to solve the time evolution equation. Ultimately this requires a Plancherel formula which shows how to decompose general initial data onto a subset of the algebraic eigenfunctions (and hence shows completeness of the Bethe ansatz as well). Various forms of such a result have been proved in \cite{Oxford,HO,ProSpoComp} for the delta Bose gas.

In this present work we prove a Plancherel formula for the $q$-Boson particle system. This allows us to explicitly compute the moments of $q$-TASEP for general initial data which will hopefully serve as the basis for further rigorous asymptotic work confirming the results of the previously made non-rigorous calculations about the KPZ equation. Our proof of the Plancherel formula is fairly simple, relying in part on the {\it contour shift argument} used in Heckman-Opdam's proof of the delta Bose gas Plancherel formula \cite{HO} (see Section \ref{plansec} for more history on this argument). In that previous work, the Hermiticity of the Hamiltonian played an important rule. In our work we consider a non-Hermitian Hamiltonian, however we find that a symmetry (known sometimes as PT-invariance or joint space-reflection and time-reversal symmetry) between our Hamiltonian and its adjoint is a suitable replacement for Hermiticity, see Remark \ref{commrem}.

A large part of the paper is devoted to modifications, implications and degenerations of this core result. For example, the Plancherel formula immediately implies a (spatial) orthogonality of the left and right eigenfunctions and suggests a second (spectral) orthogonality which we prove separately by ultimately appealing to the Cauchy-Littlewood formula for Hall-Littlewood (multivariate symmetric) polynomials (though we should emphasize that our eigenfunctions are not Hall-Littlewood polynomials and only degenerate to them in a certain limit). With minor modifications all results (though not necessarily the probabilistic interpretations) apply for more general parameter $q\in \C$. Such $q$ (especially on the complex unit circle) arise naturally from a quantum physics perspective  (see Section \ref{complexq}).

It is possible to take a limit of the $q$-Boson particle system so as to recover the continuum delta Bose gas and its Plancherel formula. There are also two discrete space limits we study -- one leads to a delta Bose gas and Plancherel formula previously studied by van Diejen \cite{vd} and another leads to the delta Bose gas and Plancherel formula related to the semi-discrete stochastic heat equation (or equivalently the O'Connell-Yor semi-discrete directed polymer partition function).

There exist other integrable discrete regularizations of the KPZ equation (and delta Bose gas) such as the asymmetric simple exclusion process (ASEP) or $q$-PushASEP \cite{BorPet,CorPet}. In a companion paper \cite{BCPS2} we prove (in a similar manner as here) a Plancherel formula for ASEP and the Heisenberg XXZ quantum spin chain on $\Z$ (thus recovering the XXZ results of \cite{BabThom, BabGut, Gut} in a different manner).


\subsection{Main results}\label{mainsec}

Let $\Weyl{k} = \Big\{\vec{n} = (n_1,\ldots,n_k)\in \Z^k\big\vert n_1\geq \cdots \geq n_k\Big\}$. The backward generator for the $q$-Boson particle system is written $\Abwd$ and defined via its action on functions $f:\Weyl{k}\to \C$ as
\begin{equation*}
\big(\Abwd f \big)(\vec{n}) = \sum_{i=1}^{M} (1-q^{c_i}) \big(f(\vec{n}_{c_1+\cdots +  c_{i}}^{-}) - f(\vec{n})\big).
\end{equation*}
Here $M$ denotes the number of clusters of $\vec{n}\in \Weyl{k}$ and $(c_1,\ldots, c_M)$ denote the sizes of these clusters (so that $n_1=\cdots=n_{c_1}>n_{c_1+1}=\cdots =n_{c_1+c_2}>\cdots >n_{c_1+\cdots +c_{M-1}+1} = \cdots = n_{c_1+\cdots +c_M}$) and $\vec{n}_{i}^{-}=(n_1,\ldots, n_{i}-1,\ldots, n_k)$. The forward generator of the $q$-Boson particle system is written $\Afwd$ and given by the matrix transpose of $\Abwd$ (see Definition \ref{bwdgendef} for its explicit action).

We will be concerned with eigenfunctions of $\Abwd$ and $\Afwd$. Define the function $C_{q}:\Weyl{k}\to \R$  by
$$
C_{q}(\vec{n}) = (-1)^k q^{-\frac{k(k-1)}{2}} \prod_{i=1}^{M} (c_i)!_q,
$$
where $(c_1,\ldots, c_M)$ are as above, and $(c)!_q = \prod_{j=1}^{c} \tfrac{1-q^j}{1-q}$ is the $q$-factorial. This is an invariant measure for the $q$-Boson particle system and the backward and forward generators are related by
\begin{equation}\label{ptsym}
\Abwd = (R C_q) \Afwd (R C_q)^{-1}
\end{equation}
where $\big(R f\big)(n_1,\ldots, n_k) = f( -n_k,\ldots, -n_1)$ and $C_q$ is the multiplication operator $\big(C_q f\big)(\vec{n}) = C_q(\vec{n})f(\vec{n})$. This relationship is sometimes called PT-invariance, see Remark \ref{commrem}.

For all $z_1,\ldots, z_k\in \C\setminus \{1\}$, set
\begin{eqnarray*}
\psibwd_{\vec{z}}(\vec{n})  &=& \sum_{\sigma\in S_k} \prod_{1\leq B<A\leq k} \frac{z_{\sigma(A)}-q z_{\sigma(B)}}{z_{\sigma(A)}- z_{\sigma(B)}} \, \prod_{j=1}^{k} (1-z_{\sigma(j)})^{-n_j},\\
\psifwd_{\vec{z}}(\vec{n}) &=& C_q^{-1}(\vec{n})\sum_{\sigma\in S_k} \prod_{1\leq B<A\leq k} \frac{z_{\sigma(A)}-q^{-1} z_{\sigma(B)}}{z_{\sigma(A)}- z_{\sigma(B)}} \, \prod_{j=1}^{k} (1-z_{\sigma(j)})^{n_j}.
\end{eqnarray*}

Our first result is that these are the right eigenfunctions of $\Abwd$ and $\Afwd$ (respectively).
\begin{proposition}[Proposition \ref{prop211} below]
For all $z_1,\ldots, z_k\in \C\setminus \{1\}$, we have
$$
\big(\Abwd \psibwd_{\vec{z}}\big)(\vec{n}) = (q-1)(z_1+\cdots +z_k) \psibwd_{\vec{z}}(\vec{n}),\qquad \big(\Afwd \psifwd_{\vec{z}}\big)(\vec{n}) = (q-1)(z_1+\cdots +z_k) \psifwd_{\vec{z}}(\vec{n}).
$$
\end{proposition}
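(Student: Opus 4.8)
The plan is to prove the backward statement directly by comparing $\Abwd$ with a ``free'' generator defined on all of $\Z^k$, and then to deduce the forward statement from the PT-invariance relation \eqref{ptsym}. Introduce the free operator $\mathcal{L}$ acting on functions $g:\Z^k\to\C$ by $(\mathcal{L}g)(\vec n)=(1-q)\sum_{i=1}^{k}\big(g(\vec{n}_{i}^{-})-g(\vec n)\big)$, under which every plane wave $\prod_{j}(1-z_{\sigma(j)})^{-n_j}$ is an eigenfunction: lowering $n_i$ by one multiplies the $\sigma$-term by $(1-z_{\sigma(i)})$, giving eigenvalue $(1-q)\sum_i\big((1-z_{\sigma(i)})-1\big)=(q-1)\sum_i z_i$, independent of $\sigma$. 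Reading $\psibwd_{\vec z}$ off the same formula on all of $\Z^k$, linearity yields $\mathcal{L}\,\psibwd_{\vec z}=(q-1)(z_1+\cdots+z_k)\,\psibwd_{\vec z}$ everywhere. It therefore suffices to show that $\Abwd$ and $\mathcal{L}$ agree on $\psibwd_{\vec z}$ for $\vec n\in\Weyl{k}$.

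I would then localize this to a single cluster. Since each index lies in exactly one cluster and singletons contribute identically to both operators (for $c=1$ the rate $1-q^{c}$ equals $1-q$ and the move stays in $\Weyl{k}$), the difference $(\Abwd-\mathcal{L})\psibwd_{\vec z}(\vec n)$ splits as a sum over clusters, and it is enough to prove, for a cluster occupying the contiguous block of indices $a+1,\dots,a+c$ at a common position $m$,
\[
(1-q)\sum_{j=a+1}^{a+c}\big(\psibwd_{\vec z}(\vec{n}_{j}^{-})-\psibwd_{\vec z}(\vec n)\big)=(1-q^{c})\big(\psibwd_{\vec z}(\vec{n}_{a+c}^{-})-\psibwd_{\vec z}(\vec n)\big).
\]
Only the term $j=a+c$ remains in $\Weyl{k}$; the others leave it, which is exactly why the free extension is needed. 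Expanding both sides in plane waves and writing $A_\sigma=\prod_{B<A}\frac{z_{\sigma(A)}-qz_{\sigma(B)}}{z_{\sigma(A)}-z_{\sigma(B)}}$, I would exploit that the positions inside the cluster coincide, so permuting the block indices leaves each monomial unchanged. Grouping $S_k$ into cosets modulo the subgroup $H\cong S_c$ permuting the block, and using that clusters are \emph{contiguous} (each outside index sits entirely to the left or entirely to the right of the block), the factors of $A_{\rho\tau}$ coming from the outside pairs and the cross pairs are independent of $\tau\in H$ and factor out; the distinct cosets give linearly independent monomials, so the identity collapses to a single-cluster identity in the $c$ variables $\zeta_i=z_{\rho(a+i)}$.

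Thus everything reduces to the symmetric-function identity: with $B_\pi=\prod_{1\le b<a\le c}\frac{\zeta_{\pi(a)}-q\zeta_{\pi(b)}}{\zeta_{\pi(a)}-\zeta_{\pi(b)}}$,
\[
(1-q)\Big(\textstyle\sum_i\zeta_i\Big)\sum_{\pi\in S_c}B_\pi=(1-q^{c})\sum_{\pi\in S_c}B_\pi\,\zeta_{\pi(c)}.
\]
This follows from the two evaluations $\sum_\pi B_\pi=(c)!_q$ and $\sum_\pi B_\pi\,\zeta_{\pi(c)}=(c-1)!_q\sum_i\zeta_i$ together with $(1-q)(c)!_q=(1-q^{c})(c-1)!_q$. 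I expect these two $q$-symmetrization evaluations --- the first a standard Hall--Littlewood-type constant sum, the second its weighted refinement --- to be the main obstacle; both can be obtained by induction on $c$ from the exchange relation $B_{\pi s_j}=-\frac{\zeta_{\pi(j)}-q\zeta_{\pi(j+1)}}{\zeta_{\pi(j+1)}-q\zeta_{\pi(j)}}B_\pi$, and I have checked the full computation by hand in the case $c=2$.

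Finally, the forward eigenrelation follows without new work from \eqref{ptsym}: writing $\Afwd=(RC_q)^{-1}\Abwd(RC_q)$, the function $(RC_q)^{-1}\psibwd_{\vec z}$ is an eigenfunction of $\Afwd$ with the same (manifestly $R$-invariant) eigenvalue $(q-1)(z_1+\cdots+z_k)$, and a direct check --- the reversal $R$ turning $q$ into $q^{-1}$ in the coefficients, with the $C_q^{-1}$ prefactor matching --- identifies $(RC_q)^{-1}\psibwd_{\vec z}$ with $\psifwd_{\vec z}$. Alternatively, the forward statement admits a verbatim repetition of the direct argument above.
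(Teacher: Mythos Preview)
Your argument is correct and is at heart the same as the paper's: both show that $\psibwd_{\vec z}$ is an eigenfunction of the free generator $\mathcal L$ on $\Z^k$ and that $\mathcal L$ and $\Abwd$ agree on $\psibwd_{\vec z}$ when restricted to $\Weyl{k}$. The difference is packaging. The paper inserts the intermediate notion of \emph{two-body boundary conditions}: it first checks (by pairing $\sigma$ with $\tau_i\sigma$) that $\psibwd_{\vec z}$ satisfies $(\difbwd_i-q\difbwd_{i+1})u|_{n_i=n_{i+1}}=0$, and then shows separately that these $(k-1)$ conditions force $(1-q)\sum_{i}\difbwd_i u=(1-q^{c})\difbwd_{\text{last}}u$ on each cluster by telescoping. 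Your route unwinds these two steps into a single direct computation, reducing coset by coset to the identities $\sum_\pi B_\pi=(c)!_q$ and $\sum_\pi B_\pi\,\zeta_{\pi(c)}=(c-1)!_q\sum_i\zeta_i$; both are correct (the second follows, e.g., from the partial-fraction expansion of $\prod_i\tfrac{\zeta-q\zeta_i}{\zeta-\zeta_i}-1$). The paper's abstraction is more reusable (the boundary-condition framework is invoked again later), while your version is more self-contained; your remark on ``linearly independent monomials'' is unnecessary, since you are proving the identity coset by coset and then summing, not deducing coset identities from the total. For the forward statement, the paper simply repeats the backward argument with $q$ replaced by $q^{-1}$, whereas your use of PT-invariance (\ref{ptsym}) together with the relation $(RC_q)^{-1}\psibwd_{\vec z}=q^{k(k-1)/2}\psifwd_{\vec z}$ (equation (\ref{psisym})) is a legitimate and slightly slicker alternative.
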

The proof of this result is based on the fact that the backward and ($C_q$-conjugated) forward generators can be rewritten as free generators subject to $(k-1)$ two-body boundary conditions. The coordinate Bethe ansatz then readily implies the result. Since we are working on $\Z$ (as opposed to a finite or periodic interval) there are no Bethe equations to be solved.

Since $\psifwd_{\vec{z}}(\vec{n})$ is a right eigenfunction for the forward generator $\Afwd$, and $\Afwd$ is the matrix transpose of $\Abwd$, it follows that $\psibwd_{\vec{z}}(\vec{n})$ is also a left eigenfunction for $\Afwd$. Thus, we use the alternative notation
$$
\psil_{\vec{z}}(\vec{n}) = \psibwd_{\vec{z}}(\vec{n}), \qquad  \psir_{\vec{z}}(\vec{n}) = \psifwd_{\vec{z}}(\vec{n}).
$$

We now proceed to our main result -- a Plancherel type isomorphism theorem.

\begin{definition}
Let $\CP{k}$ be the space of functions $f:\Weyl{k}\to \C$ of compact support, and let $\LP{k}$ be the space of symmetric Laurent polynomials $G:\C^k\to \C$ in the variables $1-z_1,\ldots, 1-z_k$. Let $\gamma_1,\ldots,\gamma_k$ be positively oriented, closed contours chosen so that they all contain $1$, so that the $\gamma_A$ contour contains the image of $q$ times the $\gamma_B$ contour for all $B>A$, and so that $\gamma_k$ is a small enough circle around 1 that does not contain $q$ (see Figure \ref{circontours} in Section \ref{plansec} below).

Define the (symmetric) bilinear pairing $\llangle \cdot ,\cdot \rrangle$ on functions $f,g\in\CP{k}$ via
\begin{equation*}
\llangle f,g\rrangle = \sum_{\vec{n}\in \Weyl{k}} f(\vec{n})g(\vec{n}),
\end{equation*}
and the (symmetric) bilinear pairing $\llangle \cdot ,\cdot \rranglesmall$ on functions $F,G\in \LP{k}$ via
\begin{equation*}
\llangle F,G\rranglesmall = \sum_{\lambda\vdash k}\, \oint_{\gamma_k} \cdots \oint_{\gamma_k} d\mu_{\lambda}(\vec{w}) \prod_{j=1}^{\ell(\lambda)} \frac{1}{(w_j;q)_{\lambda_j}}  F(\vec{w}\circ\lambda) G(\vec{w}\circ \lambda).
\end{equation*}
Here $\lambda=(\lambda_1\geq \lambda_{2}\geq \cdots \geq 0)\vdash k$ is a partition of size $k$ (i.e., $\sum \lambda_i = k$) and
\begin{equation}\label{dmulambda}
d\mu_{\lambda}(\vec{w}) = \frac{(1-q)^{k}(-1)^k q^{-\frac{k^2}{2}}}{m_1! m_2!\cdots} \det\left[\frac{1}{w_i q^{\lambda_i} -w_j}\right]_{i,j=1}^{\ell(\lambda)} \prod_{j=1}^{\ell(\lambda)} w_j^{\lambda_j} q^{\frac{\lambda_j^2}{2}} \frac{dw_j}{2\pi \i};
\end{equation}
where $m_i=|\{j:\lambda_j=i\}|$, the $q$-Pochhammer symbol is $(a;q)_{n} = (1-a)(1-qa)\cdots (1-q^{n-1}a)$ and we use the notation
\begin{equation*}
\vec{w} \circ \lambda = (w_1,qw_1,\ldots, q^{\lambda_1-1}w_1, w_2, q w_2,\ldots, q^{\lambda_2-1}w_2,\ldots, w_{\lambda_{\ell}},q w_{\lambda_{\ell}},\ldots, q^{\lambda_{\ell}-1} w_{\lambda_{\ell}}).
\end{equation*}

\begin{remark}\label{smallremarkintro}
The pairing $\llangle \cdot,\cdot\rranglesmall$ on $\LP{k}$  has a simpler alternative definition:
\begin{equation*}
\llangle F,G\rranglesmall = \oint_{\gamma} \cdots \oint_{\gamma} d\mu_{(1)^k}(\vec{w}) \prod_{j=1}^{k} \frac{1}{1-w_j}  F(\vec{w}) G(\vec{w}),
\end{equation*}
where $\gamma$ can be chosen to be a circle containing both 1 and 0 and $(1)^k$ is the partition with $k$ ones. However, it turns out that the more involved definition above is more useful for our present purposes (in particular when taking various limits).
\end{remark}

The {\it $q$-Boson transform} $\FqBoson$ takes functions $f\in\CP{k}$ into functions $\FqBoson f\in~\LP{k}$ via
\begin{equation*}
\big(\FqBoson f\big)(\vec{z}) = \llangle f, \psir_{\vec{z}}\rrangle.
\end{equation*}

The (candidate) {\it $q$-Boson inverse transform} $\JqBoson$ takes functions $G\in\LP{k}$ into functions $\JqBoson G\in \CP{k}$ via
\begin{equation*}
\big(\JqBoson G\big)(\vec{n}) = \oint_{\gamma_1} \frac{dz_1}{2\pi \i} \cdots \oint_{\gamma_k} \frac{dz_k}{2\pi \i}  \prod_{1\leq A<B\leq k} \frac{z_A-z_B}{z_A-qz_B}\, \prod_{j=1}^{k} (1-z_{j})^{-n_j-1}\, G(\vec{z}).
\end{equation*}
By shrinking the nested contours so that they all lie upon $\gamma_k$ (cf. Lemma \ref{expandlem}) this can also be written as
\begin{equation}
\big(\JqBoson G\big)(\vec{n}) = \llangle \psil(\vec{n}),G\rranglesmall,
\end{equation}
where $\psil(\vec{n})$ is the function which maps $\vec{z}\mapsto \psil_{\vec{z}}(\vec{n})$.
\end{definition}

\begin{theorem}[Theorem \ref{isothm} below]
The $q$-Boson transform $\FqBoson$ induces an isomorphism between $\CP{k}$ and $\LP{k}$ with inverse given by $\JqBoson$. Moreover, for any $f,g\in \CP{k}$
\begin{equation*}
\llangle f,g \rrangle = \llangle \FqBoson(\mathcal{P} f),\FqBoson g\rranglesmall,
\end{equation*}
and for any $F,G\in \LP{k}$
\begin{equation*}
\llangle \mathcal{P}^{-1}(\JqBoson F),\JqBoson G \rrangle = \llangle F,G\rranglesmall.
\end{equation*}
Here $\mathcal{P}:\CP{k}\to \CP{k}$ is defined via its action $(\mathcal{P}g)(\vec{n}) = (-1)^k C_q(\vec{n})(R g)(\vec{n})$ and is the operator which maps right eigenfunctions to left eigenfunctions.
\end{theorem}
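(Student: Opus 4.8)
The plan is to deduce the whole theorem from a single \emph{spatial biorthogonality} identity for the eigenfunctions, proved by a contour-shift argument, after first checking that the two transforms respect the stated spaces. That $\FqBoson$ maps $\CP{k}$ into $\LP{k}$ is immediate: for $f$ of compact support $\FqBoson f(\vec z)=\sum_{\vec m}f(\vec m)\psir_{\vec z}(\vec m)$ is a finite sum, and each $\psir_{\vec z}(\vec m)$ is a symmetric Laurent polynomial in the variables $1-z_j$ because summation over $S_k$ cancels the apparent poles at $z_A=z_B$. That $\JqBoson$ maps $\LP{k}$ into $\CP{k}$ requires a degree count: for fixed $G\in\LP{k}$, once the entries of $\vec n$ leave a bounded region the integrand $\prod_{A<B}\tfrac{z_A-z_B}{z_A-qz_B}\prod_j(1-z_j)^{-n_j-1}G(\vec z)$ is either holomorphic inside the nested contours or has vanishing total residue, so $\JqBoson G$ has compact support.

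Using the collapsed form $\JqBoson G(\vec n)=\llangle\psil(\vec n),G\rranglesmall$ from Lemma~\ref{expandlem}, the bilinearity of $\llangle\cdot,\cdot\rranglesmall$, and the finite expansion of $\FqBoson f$, I would interchange the sum and the integral to obtain $\JqBoson\FqBoson f(\vec n)=\sum_{\vec m}f(\vec m)\,\llangle\psil(\vec n),\psir(\vec m)\rranglesmall$. Thus $\JqBoson\FqBoson=\Id$ is equivalent to the biorthogonality
\[ \llangle\psil(\vec n),\psir(\vec m)\rranglesmall=\bfone_{\vec n=\vec m},\qquad \vec n,\vec m\in\Weyl{k}, \]
which by Lemma~\ref{expandlem} is the same as the nested-contour statement $\oint_{\gamma_1}\cdots\oint_{\gamma_k}\prod_{A<B}\tfrac{z_A-z_B}{z_A-qz_B}\prod_j(1-z_j)^{-n_j-1}\psir_{\vec z}(\vec m)\,\tfrac{d\vec z}{(2\pi\i)^k}=\bfone_{\vec n=\vec m}$. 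To establish this I would substitute the symmetric-sum definition of $\psir_{\vec z}(\vec m)$; the kernel's numerator factors $z_A-z_B$ cancel the denominators $z_{\sigma(A)}-z_{\sigma(B)}$ coming from the eigenfunction, leaving poles only at $z_A=qz_B$ and $z_A=q^{-1}z_B$. Exploiting that $\psir_{\vec z}(\vec m)$ is symmetric in $\vec z$, a symmetrization step reduces the sum over $S_k$ to a single representative, after which I shrink the nested contours $\gamma_1,\dots,\gamma_{k-1}$ onto $\gamma_k$ one at a time. Each contraction crosses poles at $z_A=qz_B$ and produces residue terms; the PT-invariance \eqref{ptsym} of the integrand---the structural replacement for the Hermiticity used in Heckman--Opdam's proof for the delta Bose gas---forces these residue cross-terms to cancel in pairs, so that only the fully collapsed contribution survives. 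On the common contour the integrand factorizes, and each one-dimensional integral evaluates, via the substitution $u_j=1-z_j$, to $\bfone_{n_j=m_j}$; since $\vec n,\vec m$ are ordered this assembles into $\bfone_{\vec n=\vec m}$, with the accumulated powers of $q$ and signs matching $C_q$.

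From $\JqBoson\FqBoson=\Id$ the map $\FqBoson$ is injective and $\JqBoson$ surjective; to upgrade this to an isomorphism I would prove that $\JqBoson$ is injective by a triangularity argument in the degree filtration of $\LP{k}$ (the integrals $\JqBoson G(\vec n)$ recover the symmetrized Laurent coefficients of $G$), or equivalently by a dual biorthogonality proved by the same contour method, whence $\JqBoson(\FqBoson\JqBoson-\Id)=0$ yields $\FqBoson\JqBoson=\Id$. The two Parseval identities are equivalent once the isomorphism is known (put $f=\JqBoson F$, $g=\JqBoson G$), and the remaining one follows by expanding $\FqBoson(\mathcal P f)$, reindexing the sum by $\vec n\mapsto R\vec n$, and invoking the intertwining $\mathcal P\psir_{\vec z}\propto\psil_{\vec z}$ (that $\mathcal P$ carries right into left eigenfunctions) together with the biorthogonality; this reduces $\llangle\FqBoson(\mathcal P f),\FqBoson g\rranglesmall$ to $\sum_{\vec n}f(\vec n)g(\vec n)=\llangle f,g\rrangle$, the normalizing constants matching by the definition of $\mathcal P$.

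The main obstacle is the residue bookkeeping inside the contour-shift: controlling exactly which poles are crossed as each contour is contracted and proving that the resulting cross-terms cancel in the symmetrized sum. This is precisely the point where PT-symmetry substitutes for Hermiticity, and making the cancellation hold uniformly in $\vec n$ and $\vec m$ is the crux; the well-definedness of the transforms, the reduction to biorthogonality, the injectivity of $\JqBoson$, and the Parseval bookkeeping are all comparatively routine.
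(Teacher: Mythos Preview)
Your reduction of $\JqBoson\FqBoson=\Id$ to the spatial biorthogonality $\llangle\psil(\vec n),\psir(\vec m)\rranglesmall=\bfone_{\vec n=\vec m}$ is correct and matches the paper's structure. But your proposed proof of the biorthogonality itself misidentifies how PT-invariance enters, and the argument as written would not go through.

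Specifically: the residue terms you pick up when contracting the $\gamma_A$ to $\gamma_k$ do \emph{not} cancel in pairs. Those residues are precisely the string contributions indexed by partitions $\lambda\vdash k$ in the expansion of Lemma~\ref{expandlem}; they are an essential part of the Plancherel measure, not an artifact to be eliminated. Relatedly, once all contours are on $\gamma_k$ the integrand does not factorize into one-variable pieces---the cross-factors $z_A-qz_B$ persist---so the step ``each one-dimensional integral evaluates to $\bfone_{n_j=m_j}$'' has no content. The paper's actual mechanism is a two-sided squeeze. Keeping the nested contours, one expands $\gamma_1$ to infinity in the formula for $\llangle\KqBoson\bfone_{\vec m},\bfone_{\vec n}\rrangle$; the residue at $\infty$ vanishes unless $n_1\le m_1$. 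PT-invariance is then used \emph{not} to make residues cancel, but to produce a \emph{second} nested-contour expression for the same number, namely $\llangle\bfone_{\vec m},(C_qR)^{-1}\KqBoson(C_qR\bfone_{\vec n})\rrangle$, in which the roles of $\vec m$ and $\vec n$ are effectively swapped; in that expression one shrinks $\gamma_k$ to $1$ and finds the residue vanishes unless $m_1\le n_1$. Thus $m_1=n_1$, and one iterates. The diagonal value is then computed by taking all residues at $\infty$ and invoking the $q$-factorial identity $\sum_{\sigma\in S_m}\prod_{B<A}\frac{z_{\sigma(A)}-q^{-1}z_{\sigma(B)}}{z_{\sigma(A)}-z_{\sigma(B)}}=m!_{q^{-1}}$, which is where the constant in $C_q$ is fixed.

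Your treatment of the dual direction $\FqBoson\JqBoson=\Id$ is also too optimistic. The paper does not obtain it from ``the same contour method'' or a simple degree filtration; it requires the spectral orthogonality $\llangle\psil_{\vec w},\psir_{\vec z}\rrangle$ (Proposition~\ref{specorth}), which is proved by an entirely different route: an $\e$-deformation of the system whose $\e=0$ limit reduces to the Cauchy--Littlewood identity for Hall--Littlewood polynomials, together with a derivative-in-$\e$ argument showing the orthogonality relation is $\e$-independent. The Parseval identities and the equivalence between them, once the two-sided isomorphism is in hand, are indeed routine, as you say.
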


One immediate corollary of this Plancherel isomorphism theorem is the completeness of the coordinate Bethe ansatz.

\begin{corollary}[Corollary \ref{completenesscor} below]\label{compl}
For all $f\in \CP{k}$,
\begin{equation*}
f(\vec{n}) = \big(\JqBoson \FqBoson f\big)(\vec{n}) = \sum_{\lambda\vdash k}\, \oint_{\gamma_k} \cdots \oint_{\gamma_k} d\mu_{\lambda}(\vec{w}) \prod_{j=1}^{\ell(\lambda)} \frac{1}{(w_j;q)_{\lambda_j}}  \psil_{\vec{w}\circ\lambda}(\vec{n}) \llangle f,\psir_{\vec{w}\circ \lambda}\rrangle.
\end{equation*}
\end{corollary}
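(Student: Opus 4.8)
The plan is to obtain this corollary by directly unwinding the definitions of the two transforms inside the isomorphism Theorem~\ref{isothm}. The first equality $f(\vec n)=(\JqBoson\FqBoson f)(\vec n)$ is precisely the assertion that $\JqBoson$ is a left inverse of $\FqBoson$ on $\CP{k}$, which is part of the content of Theorem~\ref{isothm}; once that theorem is granted, this equality needs no separate argument.

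To reach the explicit right-hand side I would use the pairing form of the inverse transform in place of its nested-contour definition. Collapsing the nested contours $\gamma_1,\dots,\gamma_k$ onto $\gamma_k$ --- the contour-shrinking step provided by Lemma~\ref{expandlem} --- rewrites the inverse transform as $(\JqBoson G)(\vec n)=\llangle\psil(\vec n),G\rranglesmall$ for every $G\in\LP{k}$. Taking $G=\FqBoson f$, which lies in $\LP{k}$ by Theorem~\ref{isothm}, and then expanding $\llangle\cdot,\cdot\rranglesmall$ by its definition on $\LP{k}$ produces the sum over partitions $\lambda\vdash k$, the iterated $\gamma_k$-integrals against $d\mu_\lambda(\vec w)$, the prefactors $(w_j;q)_{\lambda_j}^{-1}$, and the factor $\psil_{\vec w\circ\lambda}(\vec n)$ obtained by evaluating $\psil(\vec n)$ at the specialized argument $\vec w\circ\lambda$. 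For the last remaining factor I would substitute the definition $(\FqBoson f)(\vec z)=\llangle f,\psir_{\vec z}\rrangle$ at $\vec z=\vec w\circ\lambda$, turning $(\FqBoson f)(\vec w\circ\lambda)$ into $\llangle f,\psir_{\vec w\circ\lambda}\rrangle$. Assembling these pieces reproduces the displayed identity verbatim.

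I do not expect any genuine obstacle here, since all of the analytic content is carried by the results I am invoking: Theorem~\ref{isothm} supplies both the inversion $\JqBoson\FqBoson=\Id$ and the fact that $\FqBoson$ maps $\CP{k}$ into $\LP{k}$, while Lemma~\ref{expandlem} supplies the equivalence of the nested-contour and $\mathcal{C}$-pairing forms of $\JqBoson$. The only points that warrant a moment's care are that $\FqBoson f$ is a bona fide symmetric Laurent polynomial in $1-z_1,\dots,1-z_k$, so that its evaluation at $\vec w\circ\lambda$ is meaningful, and that the finite partition sum and the contour integrals may be freely interchanged with this evaluation --- both of which are immediate given the finiteness of the sum over $\lambda$ and the polynomial dependence of $\FqBoson f$ on its arguments.
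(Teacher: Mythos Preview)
Your proposal is correct and follows essentially the same approach as the paper. The paper's proof (of Corollary~\ref{completenesscor}) appeals to Theorem~\ref{KqBosonId} (the statement that $\KqBoson=\JqBoson\FqBoson$ acts as the identity on $\CP{k}$) together with equation~(\ref{Kexpform}), which is precisely the expansion of $\KqBoson$ obtained from Lemma~\ref{expandlem}; you invoke the same two ingredients, merely citing Theorem~\ref{isothm} in place of the narrower Theorem~\ref{KqBosonId} from which it is derived.
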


This shows that the space $\CP{k}$ is decomposed onto left (and right) eigenfunctions with spectral variables $\vec{w}\circ\lambda$ over all $\lambda\vdash k$ and $\vec{w}$ of length $\ell(\lambda)$, with respect to the (complex-valued) Plancherel measure given above in (\ref{dmulambda}). The lack of positivity of this measure should be compared to the case of Hermitian Hamiltonians such as the Heisenberg XXZ quantum spin chain on $\Z$ or delta Bose gas on $\R$. Thus our Plancherel isomorphism theorem is not an isomorphism between $L^2$ spaces. There are some degenerations of this system which admit $L^2$ space isomorphisms, namely the system considered in Sections \ref{vandiejen} and \ref{deltabosesec}. The system considered in Section \ref{semidiscsec} involves a complex-valued Plancherel measure, and likewise does not admit an $L^2$ isomorphism.

In applying a contour shift argument (used, for instance, in Heckman-Opdam's proof of the delta Bose gas Plancherel formula \cite{HO}) in this proof of the Plancherel formula, the non-Hermitian nature of our Hamiltonian introduces some additional non-triviality. However, the symmetry given earlier in (\ref{ptsym}) between $\Abwd$ and $\Afwd$ (see also Remark \ref{commrem}) is a suitable replacement for Hermiticity.

The Plancherel isomorphism theorem also contains within it the biorthogonality (with respect to the spatial variable $\vec{n}$ and spectral variable $\vec{w}$) of the left and right eigenfunctions (see Corollary \ref{north} and Proposition \ref{specorth} below).

The completeness result above along with the fact that $\psil_{\vec{z}}(\vec{n})$ is a right eigenfunction for $\Abwd$, enables us to explicitly solve the Kolmogorov backward equation for the $q$-Boson particle system with general initial data in $\CP{k}$.

The $q$-Boson particle system plays an analogous role to $q$-TASEP, as the continuum delta Bose gas plays to the Kardar-Parisi-Zhang equation (cf. Section \ref{deltabosesec}). More precisely, due to a Markov duality (see Section \ref{qtasepsec} or \cite{BCS}), if we write the trajectory of $q$-TASEP as $\vec{x}(t)$, then
\begin{equation*}
h(t;\vec{n}):=\EE\Big[ \prod_{i=1}^{k} q^{x_{n_i}(t)+n_i}\Big]
\end{equation*}
solves the $q$-Boson particle system Kolmogorov backward equation
$$
\frac{d}{dt} h(t;\vec{n}) = \big(\Abwd h\big)(t;\vec{n})
$$
with initial data depending on the initial condition $\vec{x}(0)$ for $q$-TASEP.

Since we know how to solve the backward equation (via the spectral decomposition above), this gives us access to exact formulas for moments of $q$-TASEP started from any initial data. However, such an exact formula does involve evaluating $\big(\FqBoson h_0\big)(\vec{z})$, where $h_0(\vec{n}) = h(0;\vec{n})$. For certain types of initial data this summation can be explicitly calculated (see for instance Corollary \ref{abovecor2} and Lemma \ref{belowlemma}). This is due to the fact that if $h_0(\vec{n}) = \big(\JqBoson G\big)(\vec{n})$ for some function $G(\vec{z})$, then the Plancherel isomorphism theorem implies that $\big(\FqBoson h_0\big)(\vec{z})= G(\vec{z})$. This provides a systematic way to discover and proof many new combinatorial formulas.

In particular, for $q$-TASEP with step initial condition (i.e., $x_i(0)=-i$ for $i\geq 1$), this approach leads to (see Proposition \ref{dualiyresult} below)
\begin{equation}\label{page6star}
\EE\Big[ \prod_{i=1}^{k} q^{x_{n_i}(t)+n_i}\Big]  = (-1)^k q^{\frac{k(k-1)}{2}} \oint_{\gamma_1} \frac{dz_1}{2\pi \i}\cdots \oint_{\gamma_k} \frac{dz_k}{2\pi \i} \prod_{1\leq A<B\leq k} \frac{z_A-z_B}{z_A-qz_B} \prod_{j=1}^{k} (1-z_j)^{-n_j} \frac{e^{(q-1)tz_j}}{z_j},
\end{equation}
where $\gamma_1,\ldots, \gamma_k$  are as above, with the additional condition that they do not include $0$. Proving this formula requires us to go beyond the functional spaces $\LP{k}$ and $\CP{k}$, which in this case is readily doable.

This moment formulas for step initial data has appeared previously (though not via this spectral approach). With all $n_i\equiv n$ it was first proved in \cite[Section 3.3]{BorCor} using the theory of Macdonald processes. The general $\vec{n}$ result was then proved in \cite[Theorem 2.11]{BCS} by simply guessing the above formula and checking that it satisfied the backward equation and initial data. The Macdonald process approach was then extended in \cite{BCGS} to also cover general $\vec{n}$. Extending from step to general initial data was unclear until the present work.

Since $q\in (0,1)$, the above moment formulas completely characterize the distribution of $q$-TASEP at time $t$. So far, this has been used to write a Fredholm determinant formula for the location of a particle $x_n(t)$ (cf. \cite[Theorem 3.2.11]{BorCor} or \cite[Theorem 3.12]{BCS}). In turn, this formula have served as effective starting point for asymptotic analysis of a variety of models related to $q$-TASEP (cf. \cite{BorCor,BCS, BCF,BCFV}, and related work \cite{BCR}). The direct asymptotics (for instance to the Tracy-Widom GUE distribution and KPZ universality class) of $q$-TASEP has not yet been performed, though it would seem that the above mentioned Fredholm determinant formula is well suited for that.

Thus, as pointed out earlier, $q$-TASEP and the $q$-Boson particle process serve as discrete regularizations of the Kardar-Parisi-Zhang equation and the continuum delta Bose gas, for which the physics replica method can be put on firm mathematical ground.

\subsection{Motivations}\label{motsec}

There are three related motivations which led to the present work: The polymer replica method from physics, Tracy-Widom's ASEP transition probability formulas, and measures on partitions and Gelfand-Tsetlin patterns (sequences of interlacing partitions) such as Schur, Whittaker and Macdonald processes.

\subsubsection{Polymer replica method}
We have already discussed much of our motivation coming from the polymer replica method. Let us just add that the connection between the moments of the stochastic heat equation and the delta Bose gas does not seem to be firmly established in the mathematical literature and some of the types of initial data one wishes to study for the KPZ equation (such as narrow wedge) fall outside of the realm of the presently proved Plancherel formulas. Thus, by working with a discrete regularization, one may hope to avoid such issues as well.

\subsubsection{Tracy-Widom's ASEP transition probability formulas}
In 2008, Tracy-Widom \cite{TW1} computed an exact formula for the transition probability of the $k$-particle ASEP. Though the coordinate Bethe ansatz was central to their work, they did not diagonalize the $k$-particle ASEP Hamiltonian. We address this diagonalization in the companion paper \cite{BCPS2} as well as the relationship between Tracy-Widom's work and the ASEP Plancherel formula. Tracy-Widom have recorded further progress in computing transition probabilities for variants of ASEP on $\Z$ (such as on $\Z_{\geq 0}$ \cite{TWhalfspace}). Lee \cite{Lee} has further developed Tracy-Widom's methods. Indeed, soon after the first posting of the present work, Korhonen-Lee \cite{Hyun} demonstrated how the Tracy-Widom approach can be employed to find transition probabilities for the $q$-Boson particle system (see also Section \ref{transprobsec} below). It would be reasonable to try to develop analogous results to ours in all of these other settings.

\subsubsection{Measures on partitions and Gelfand-Tsetlin patterns}
Measures on partitions and Gelfand-Tsetlin patterns have played an important role in a wide variety of probabilistic systems including random matrix theory, random growth processes, interacting particle systems, directed polymer models, and random tilings (see, for example, the recent review \cite{BorGor} and references therein). During the last 15 years there was significant activity in the analysis of determinantal measures -- in particular the so-called Schur processes \cite{Ok, OkResh} which are written in terms of Schur symmetric functions. It is possible to define generalizations of the Schur processes by replacing Schur symmetric functions with Macdonald symmetric functions (their two-parameter $(q,t)$-generalizations). Until recently, little was done with respect to these generalizations, owing largely to the fact that they no longer are determinantal.

In 2009, O'Connell \cite{OCon} introduced the Whittaker process and related it to the O'Connell-Yor semi-discrete directed polymer partition function. Furthermore, he utilized a Whittaker function integral identity to compute an exact formula for the Laplace transform of the partition function. As Whittaker functions are limits (for $t=0$ and $q\to 1$) of Macdonald symmetric functions \cite{GLOqlim}, this development was a source of motivation for Borodin-Corwin's \cite{BorCor} subsequent 2011 work on Macdonald processes.

The work of Borodin-Corwin \cite{BorCor} gives answers to two basic questions about Macdonald processes (and their various degenerations): what is their probabilistic content and how can one compute meaningful information and asymptotics. In order to endow the Macdonald processes with probabilistic content, \cite{BorCor} showed how to construct Markov dynamics which map Macdonald processes to other Macdonald processes (with evolved sets of parameters). The construction in \cite{BorCor} is an adaption of one from the Schur process context due to Borodin-Ferrari \cite{BF, twosides} (and based on an idea of Diaconis-Fill \cite{DiaconisFill}). There are other Markov dynamics (at various levels of degeneration of Macdonald processes) besides those considered in \cite{BorCor} -- see \cite{BorPet, OConPei, OCon, COSZ}. The $q$-TASEP was discovered as a one-dimensional marginal of the Markov dynamics corresponding to setting the Macdonald parameters $t=0$, and $q\in (0,1)$. When $q\to 0$, $q$-TASEP becomes the usual TASEP, and the above Macdonald process turns into a Schur process.

The second challenge in studying Macdonald processes was to compute meaningful information and asymptotics. In \cite{BorCor} this was accomplished by directly appealing to the integrable structure of the Macdonald symmetric polynomials -- in particular their explicit eigenfunction relationships with the Macdonald difference operators. This structure naturally led to nested contour integral formulas for expectations of various observables of the Macdonald processes. With regards to $q$-TASEP, this provided formulas for the expectation $h(t;\vec{n})=\EE[\prod_{j=1}^{k} q^{x_{n_j}(t)+n_j}]$ like (\ref{page6star}) above.

As a stochastic process, $q$-TASEP has a scaling limit to the logarithm of the semi-discrete stochastic heat equation (or free energy of the O'Connell-Yor semi-discrete directed polymer) and further to the KPZ equation (logarithm of continuum stochastic heat equation). The limits of the observables studied for $q$-TASEP correspond to products of values of the solution to these stochastic heat equations at fixed time and varying spatial location, and thus \cite{BorCor} found similar nested contour integral formulas for these limits. As explained earlier, for the stochastic heat equation, these observables satisfy the delta Bose gas and have a significant literature surround them. The nested contour integral formulas for the delta Bose gas seem to have first appeared in 1985 physics work of Yudson \cite{Yudson} and reemerged independently in 1997 work of Heckman-Opdam \cite{HO} proving the Plancherel formula for the delta Bose gas.

A limitation of the Macdonald process approach to studying $q$-TASEP is that it only works for initial conditions of $q$-TASEP which arise as marginals of Macdonald processes. There are certainly many examples of such initial conditions (an infinite-dimensional family that, in particular, includes the step initial condition). Generic initial conditions, and certain important ``examples'' of initial conditions (such as flat or half-flat) do not arise in this manner and hence cannot be treated via this method.

Inspired by the connection between the nested contour integral formulas and the delta Bose gas, Borodin-Corwin-Sasamoto \cite{BCS} discovered that it was possible to implement the ideas of the polymer replica method at the level of $q$-TASEP (and ASEP). In particular, they showed that for any initial conditions, the expectations $h(t;\vec{n})$ solve a discrete $q$-deformed (or regularized) version of the delta Bose gas in which the $q$-Boson particle system generator replaces the delta Bose gas Hamiltonian. The initial data for the $q$-Boson particle system backward equation directly corresponds with the initial condition for $q$-TASEP, and \cite{BCS} showed that for step and half-stationary $q$-TASEP initial conditions the solution $h(t;\vec{n})$ was given by nested contour integral formulas. The step solution came directly from \cite{BorCor} while the half-stationary did not (though should correspond to a two-sided generalization of Macdonald processes such as developed in the Schur context in \cite{twosides}).

The question of how to solve the $q$-Boson particle system for general initial data (and hence solve for $h(t;\vec{n})$ for $q$-TASEP with general initial conditions) is the central motivation of the present work. The Plancherel formula and its various consequences provide a solution to this question.

\subsubsection{Algebraic motivations}\label{afortiori}
A fortiori, this work provides some motivation to better understand the relationship between various algebraic structures. One question which was answered in \cite{BCdiscrete} was how the theory of Macdonald processes implies the fact that $h(t;\vec{n})$ solves the $q$-Boson particle system. This relationship turned out to be a consequence of  a commutation relation satisfied by the Macdonald first order difference operators. However, there remain a few algebraic mysteries which we now touch upon.

The $q$-Boson particle system on a periodic lattice arises as a stochastic representation of the $q$-Boson Hamiltonian considered in \cite{SasWad}. This Hamiltonian is a generalization of an earlier version considered by Bogoliubov-Bullough-Timonen \cite{BBT} and Bogoliubov-Izergin-Kitanine \cite{BIK}. The algebraic $q$-Boson Hamiltonian from \cite{SasWad} is written as
\begin{equation*}
\HqBoson = - \sum_{j=1}^{M} (B^{\dag}_{j-1} +\gamma B^{\dag}_j) B_j.
\end{equation*}
Here $\{B_j\}_{j=1}^{M}$, $\{B^{\dag}_j\}_{j=1}^{M}$ and another set $\{N_j\}_{j=1}^{M}$ belong to the $q$-Boson algebra generated by the relations
$$
[N_j,B_k^{\dag}] = B_j^{\dag} \bfone_{j=k}, \qquad [N_j,B_k] = -B_j \bfone_{j=k}, \qquad [B_j,B_k^{\dag}] = q^{N_j} \bfone_{j=k}.
$$
One imposes periodic (of length $M$) boundary conditions by defining $B_{0}^{\dag} = B_M$.

A periodic version of the $q$-Boson particle system backward generator corresponds to a particular representation of $-(1-q)\HqBoson$ in which the operators act on functions $f:(\Z_{\geq 0})^{M}\to \C$ as
$$
\big(B_j^{\dag} f\big) (\vec{\tazrp}) = f(\vec{\tazrp}_{j}^{+}), \qquad \big(B_j f\big)(\vec{\tazrp}) = \frac{1-q^{\tazrp_j}}{1-q} f(\vec{\tazrp}_{j}^{-}), \qquad \big(Nf\big)(\vec{\tazrp})= \tazrp_j f(\vec{\tazrp}).
$$
One checks that these operators satisfy the $q$-Boson algebra. The operator $-(1-q)\HqBoson$ acts on functions $f:(\Z_{\geq 0})^{M}\to \C$ as
$$
\Big(-(1-q)\HqBoson f\Big)(\vec{\tazrp}) = \sum_{j=1}^{M} (1-q^{\tazrp_j}) \big(f(\vec{\tazrp}^{j,j-1}) - f(\vec{\tazrp})\big)
$$
where $\vec{\tazrp}^{j,j-1} = (\tazrp_1,\ldots,\tazrp_{j-1}+1,\tazrp_j-1,\ldots, \tazrp_M)$, and where $\vec{\tazrp}^{1,0} = (\tazrp_1-1,\ldots, \tazrp_M+1)$. This is the backward generator of the $q$-Boson particle system (see Section \ref{stochrepsec}) on a periodic portion of $\Z$.

This algebraic $q$-Boson Hamiltonian arises from the $L$-matrix given in (5.1) of \cite{SasWad} and its integrability (in the sense of satisfying the Yang-Baxter equations) was shown there as well. The version studied earlier in \cite{BBT,BIK} corresponds to $\gamma=0$ and is only stochastic\footnote{By a stochastic Hamiltonian, we mean a linear operator in the space of functions on the (discrete) state space that, when viewed as a matrix with rows and columns indexed by the states of the system, has nonnegative off-diagonal matrix elements, and that maps the constant functions to zero. Under certain regularity assumptions (which we do not check here), such a Hamiltonian generates a continuous time Markov jump process on the state space, see e.g. \cite{Lig}.} in the limit when $q\to 1$. Due to our probabilistic motivations, we primarily consider $\gamma=-1$ here, though in Section \ref{epdef} we consider the general $\gamma$ Hamiltonian (under the identification of $\gamma=-\e$) and show how our results extend. The spectral theory for the periodic $\gamma=0$ case has received attention recently in \cite{Tsilevich,VDcircle,Korff}, and likewise for the infinite lattice (as considered herein) $\gamma=0$ case in \cite{vd,VDEE,VDEE2}. In these cases, Hall-Littlewood polynomials play the role of left and right eigenfunctions.

Another algebraic mystery has to do with two sets of symmetric functions. The Macdonald process related to $q$-TASEP is defined in terms of Macdonald polynomials at $t=0$ and $q\in (0,1)$ (these are sometimes called $q$-Whittaker functions \cite{GLO}). Via the connection to the $q$-Boson particle system we naturally arrive at another set of symmetric polynomials -- the eigenfunctions for the $q$-Boson particle system -- which appear to be quite remarkable and (though not the same) are close relatives of Hall-Littlewood symmetric polynomials (and degenerate to them in the limit corresponding to $\gamma=0$ above). We do not presently have an algebraic explanation for this transition from one type of symmetric polynomials to another. Perhaps a further investigation into the relation of $q$-Bosons to degenerations of the double affine Hecke algebra might shed light on this (see developments in this direction in \cite{HO,vd,Korff,takeyama} and references therein).

The work of \cite{BorCor,BCS} showed that for the particular step and half-stationary initial conditions, the relationship between $q$-TASEP and the $q$-Boson particle system persists when inhomogeneous particle jumping rate parameters $a_i$ are introduced. These $a$-parameters have an important meaning from the Macdonald perspective as the variables of the Macdonald polynomials. From the perspective of the $q$-Boson particle system, \cite{BCS} showed that the backward generator (similar holds true for the conjugated forward generator) was equivalent to a free generator with inhomogeneous rates depending on the $a$'s and the same two-body boundary condition as occurs when all $a_i \equiv 1$. Despite this evidence, we do not presently know how to develop a general $a$-parameter version of the eigenfunctions and Plancherel formula. One reason why a general $a$-version of our present result could be quite enlightening is that we do not have an analog of Macdonald processes or $a$-parameters in relation to ASEP and the Heisenberg XXZ quantum spin chain, and this may help guide that investigation.

In \cite{BCdiscrete} two discrete time versions of $q$-TASEP were introduced and studied. Their moments solve discrete variants of the $q$-Boson particle system evolution equations. It would be interesting to develop the parallel of the results of this paper in the discrete-time context. There is another generalization of $q$-TASEP which is called $q$-PushASEP (cf. \cite{BorPet,CorPet}). It seems likely that the analog of the $q$-Boson particle system for these variants of $q$-TASEP should have the eigenfunctions as above (with different eigenvalues). As explained in \cite{SasWad}, this should also hold true for any operator which arises out of specializing the spectral parameter of the $q$-Boson transfer matrix $\tau(u)$ (since transfer matrices commute for different spectral parameters). It is reasonable, then, to hope to show that these discrete systems arise from the transfer matrix.

At the same time as the present work was posted, Povolotsky \cite{Pov2} has posted a work in which he introduces a class of models solvable via coordinate Bethe ansatz which includes the $q$-Boson particle system (and also those of \cite{BCdiscrete}) as special cases (see also earlier related work \cite{PPH,Pov1}). Similar to here, he computes the eigenfunctions via coordinate Bethe ansatz. He also conjectures their completeness and orthogonality. Section \ref{appsec} below shows exactly how our Plancherel theorem proves completeness and biorthogonality for the $q$-Boson particle system. We hope our methods will extend to the full class of models discussed above.

\subsection{Notation}\label{notations}
We collect many of the definitions and notations used throughout the paper (including some already discussed earlier in the introduction).

Define
\begin{equation*}
\Weyl{k} = \Big\{\vec{n} = (n_1,\ldots,n_k)\in \Z^k\big\vert n_1\geq \cdots \geq n_k\Big\}.
\end{equation*}

For $\vec{n}\in \Z^k$ set
\begin{equation*}
\vec{n}_i^{\pm} = (n_1, \cdots , n_{i}\pm 1, \cdots , n_k).
\end{equation*}
The backward difference operator $\difbwd$ and forward difference operator $\diffwd$ act on functions $f:\Z\to \C$ as
\begin{equation*}
\big(\difbwd f \big)(n) = f(n-1)-f(n),\qquad \big(\diffwd f \big)(n) = f(n+1)-f(n).
\end{equation*}
For a function $f:\Z^k\to \C$, and $1\leq i\leq k$, we write $\difbwd_i$ and $\diffwd_i$ as the application of the respective operator in the variable $n_i$.

The space-reflection operator $R$ acts on functions $f:\Z\to \C$ as
\begin{equation*}
\big(R f\big)(n_1,\ldots, n_k) = f( -n_k,\ldots, -n_1).
\end{equation*}

Define a symmetric bilinear pairing $\llangle \cdot ,\cdot \rrangle$ on functions $f,g:\Weyl{k}\to \C$ via
\begin{equation*}
\llangle f,g\rrangle = \sum_{\vec{n}\in \Weyl{k}} f(\vec{n})g(\vec{n})
\end{equation*}
(assuming the series converges) and note that $\llangle f,g\rrangle = \llangle f h,g h^{-1} \rrangle$ for any function $h\neq 0$, and that
\begin{equation}\label{bilinflip}
\llangle R f,R g\rrangle = \llangle f,g\rrangle.
\end{equation}

To $\vec{n}\in \Weyl{k}$ of the form
$$n_1= \cdots = n_{c_1} > n_{c_1+1}= \cdots = n_{c_1+c_2} > \cdots >n_{c_1+\cdots +c_{M-1}+1} = \cdots = n_{c_1+\cdots c_M}$$
we associate the list of ordered cluster sizes $\vec{c}=(c_1,\ldots, c_M)$ (where $M$ is the number of clusters). We also associate to $\vec{n}$ the ordered list $\vec{g} = (g_1,\ldots ,g_M)$ of gaps between the clusters of $\vec{n}$. Thus for $2\leq i\leq M$, $g_i= n_{c_{i-1}}-n_{c_{i}}$ and by convention $g_1=+\infty$. We write these functions of $\vec{n}$ as $\vec{c}(\vec{n})$, $M(\vec{n})$ and $\vec{g}(\vec{n})$. To illustrate, for $\vec{n}= (2,1,-2,-2,-2)$, $c(\vec{n}) = (1,1,3)$, $M(\vec{n})=3$ and $g(\vec{n}) = (+\infty,1,3)$. The pair $(\vec{c},\vec{g})$ identifies $\vec{n}$ modulo a global shift.

A partition $\lambda=(\lambda_1\geq \lambda_2\geq \cdots \geq 0)$ is a weakly decreasing set of non-negative integers with finitely many nonzero entries. The size of $\lambda$ is $|\lambda|=\sum_i \lambda_i$ and the length of $\lambda$ is $\ell(\lambda)=|\{i:\lambda_i\neq 0\}|$. If $|\lambda|=k$ we write $\lambda\vdash k$. To a partition we associate its multiplicities $m_i = |\{j:\lambda_j=i\}|$ and sometimes write $\lambda = 1^{m_1}2^{m_2}\cdots$. For a partition $\lambda\vdash k$ of length $\ell$ and a set of variables $\vec{w}=w_1,\ldots, w_\ell$ we define
\begin{equation}\label{wlambda}
\vec{w} \circ \lambda = (w_1,qw_1,\ldots, q^{\lambda_1-1}w_1, w_2, q w_2,\ldots, q^{\lambda_2-1}w_2,\ldots, w_{\lambda_{\ell}},q w_{\lambda_{\ell}},\ldots, q^{\lambda_{\ell}-1} w_{\lambda_{\ell}}).
\end{equation}
We will also use an additive version of $\vec{w}\circ\lambda$ defined as
\begin{equation}\label{wlambdasd}
\vec{w} \tilde{\circ} \lambda = (w_1,w_1+1,\ldots, w_1+\lambda_1-1, w_2, w_2+1,\ldots, w_2+\lambda_2-1,\ldots, w_{\lambda_{\ell}},w_{\lambda_{\ell}}+1,\ldots, w_{\lambda_{\ell}}+\lambda_{\ell}-1).
\end{equation}
In general, an arrow over a variable (such as $\vec{z}$) denotes a vector $\vec{z}=(z_1,\ldots, z_k)$ whose length is generally $k$ (or else explicitly stated).

We define the space $\CP{k}$ of compactly supported functions $f:\Weyl{k}\to \C$. We also define the space $\LP{k}$ of symmetric Laurent polynomials in $1-z_1,\ldots, 1-z_k$. In particular, any $G\in \LP{k}$ can be written as
$$
\sum_{\vec{n}\in \Weyl{k}} \sum_{\sigma\in S_k} a_{\vec{n}} \prod_{i=1}^{k} (1-z_{\sigma(i)})^{n_i}
$$
or as
$$\frac{1}{\V(\vec{z})}
\sum_{\vec{n}\in \Weyl{k}} \sum_{\sigma\in S_k} \sgn(\sigma) b_{\vec{n}} \prod_{i=1}^{k} (1-z_{\sigma(i)})^{n_i}
$$
with all but finitely many coefficients $a_{\vec{n}},b_{\vec{n}}\in \C$ fixed to be zero, and $\V(\vec{z}) = \prod_{A<B} (z_A-z_B)$ is the Vandermonde determinant.

Throughout we assume that the parameter $q\in (0,1)$, though in Section \ref{complexq} we explain how our results (or modifications of them) hold for more general $q\in \C$. The $q$-Pochhammer symbol and $q$-factorial are defined as
\begin{equation*}
(a;q)_{n} = \prod_{i=0}^{n-1} (1-q^i a), \qquad n!_q = \frac{(q;q)_n}{(1-q)^n}.
\end{equation*}
The $q\to 1$ limit of $(a;q)_n$ is written $(a)_n = a (a+1)\cdots (a+n-1)$.

The symmetric group on $k$ elements is denoted by $S_k$, the indicator function for an event $E$ by $\bfone_{E}$, and the identity operator by $\Id$.

\subsection{Acknowledgements}

This work was started at the Simons Symposium on the Kardar-Parisi-Zhang Equation and advanced during the Lebesgue Centre's summer school on the KPZ equation and Rough Paths and the Cornell probability summer school. IC appreciates discussions on Bethe ansatz for non-Hermitian Hamiltonians with Gunter Sch\"{u}tz while visiting the Hausdorff Institute. The authors all appreciate illuminating comments and discussions of a draft of this work with Jan Felipe van Diejen, Erdal Emsiz, Christian Korff, Alexander Povolotsky and Gunter Sch\"{u}tz.

AB was partially supported by the NSF grant DMS-1056390. IC was partially supported by the NSF through DMS-1208998 as well as by Microsoft Research through the Schramm Memorial Fellowship, and by the Clay Mathematics Institute through a Clay Research Fellowship.  LP was partially supported by the RFBR-CNRS grant 11-01-93105. TS was supported by KAKENHI (22740054).

\subsection{Outline}

In Section \ref{qBosonsyst} we use the coordinate Bethe ansatz to determine algebraic eigenfunctions for the forward and backward generators of the $q$-Boson particle system. The main result of this paper -- the Plancherel isomorphism theorem involving these eigenfunctions -- is proved in Section \ref{plansec}. That section also contains results about the completeness and biorthogonality of these eigenfunctions. Section \ref{appsec} applies the Plancherel formula in order to solve the Kolmogorov forward and backward equations, and as an application compute moments of the $q$-TASEP. Section \ref{complexq} records how the earlier results persist when $q$ is complex. Section \ref{semidisc} describes two limits of the $q$-Boson particle system, one of which relates to earlier work of van Diejen \cite{vd} and the other of which relates to the semi-discrete stochastic heat equation.
Finally, in Section \ref{appendixsec} (Appendix) we briefly touch on the continuum limit of the work of this paper (as relates to the continuum delta Bose gas and stochastic heat equation), prove a proposition related to the expansion of nested contour integrals, and provide a direct combinatorial proof of Lemma \ref{belowlemma}. 

\section{Coordinate Bethe ansatz}\label{qBosonsyst}

In this section we introduce the $q$-Boson particle system and use the Bethe ansatz to write down algebraic eigenfunctions for its forward and backward generators. The main idea here is that it is possible (and straightforward) to rewrite the generator of the system in terms of a constant coefficient, separable {\it free generator} with $(k-1)$ two-body boundary conditions imposed. The coordinate Bethe ansatz is then readily applied to provide the desired eigenfunctions. It is worth recalling that since we work on $\Z$ (as opposed to a finite or periodic interval) there are no Bethe equations which must be solved in computing the algebraic eigenfunctions. The usefulness of this approach for interacting particle systems goes back to the work of \cite{Schutz,AKK}.

\subsection{The $q$-Boson particle system}\label{stochrepsec}

The $q$-Boson particle system is a stochastic interacting particle system $\vec{\tazrp}(t)$ (in fact, a certain totally asymmetric zero range process) with state space $\vec{\tazrp}(t)\in (\Z_{\geq 0})^{\Z}$ (see \cite{Lig} for general background). Here $\tazrp_i(t)$ represents the non-negative number of particles at a given site $i\in\Z$. In continuous time, independently for each $i\in \Z$, one particle may move from site $i$ to site $i-1$ at rate $1-q^{\tazrp_i(t)}$ (i.e., according to an exponential waiting time of that rate). This corresponds to changing $(\tazrp_{i-1}(t),\tazrp_i(t))$ to $(\tazrp_{i-1}(t)+1,\tazrp_{i}(t)-1)$.

We will focus on the restriction of this process to a finite number $k\geq 1$ of particles. Since the dynamics conserve the number of particles, this restriction is itself a stochastic interacting particle system (see Figure \ref{qBoson} for an illustration of $k=9$).

To any $\vec{\tazrp}\in (\Z_{\geq 0})^{\Z}$ such that $\sum_{i\in \Z}\tazrp_i = k$ we may associate a (weakly) ordered list of particle locations $\vec{n}=(n_1\geq \cdots \geq n_k)\in \Weyl{k}$. We write $\vec{n}(\vec{\tazrp})$ to denote this function. For example, if $k=5$ and $\tazrp_i\equiv 0$ except for $\tazrp_{-2} = 3$, $\tazrp_{1}=1$ and $\tazrp_{2}=1$, then $\vec{n}(\vec{\tazrp}) = (2,1,-2,-2,-2)$. Thus, we may write $\vec{n}(t) = \vec{n}(\tazrp(t))$ as the $q$-Boson particle system at time $t$ in these coordinates.

We may write, in these $\vec{n}$-coordinates, the backward and forward generators for the $k$ particle restriction of the $q$-Boson particle system (see also Section \ref{appsec}). In what follows we write $\vec{c} = \vec{c}(\vec{n})$, $M=M(\vec{n})$, and $\vec{g}= \vec{g}(\vec{n})$ (recall from Section \ref{notations}) thus suppressing the $\vec{n}$ dependence.

\begin{figure}
\begin{center}
\includegraphics[scale=1]{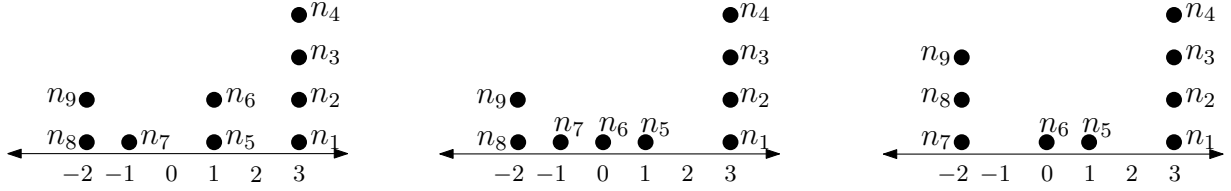}
\end{center}
\caption{Three states of the $q$-Boson particle system with 9 particles. In the left-most state, $\tazrp_3=4$, $\tazrp_2=0$, $\tazrp_1=2$, $\tazrp_0=0$, $\tazrp_{-1}=1$ and $\tazrp_{-2}=2$. The value of $\vec{n}$ associated to this state is given by reading off the location of the labels $n_1,\ldots, n_9$. In the middle state, the particle corresponding with $n_6$ has decreased location by 1 (as occurs at rate $1-q^2)$, and in the right-most state the particle corresponding with $n_7$ has decreased location by 1 (as occurs at rate $1-q$).}\label{qBoson}
\end{figure}

\begin{definition}\label{bwdgendef}
The $q$-Boson backward generator acts on functions $f:\Weyl{k}\to \C$ as
\begin{equation*}
\big(\Abwd f \big)(\vec{n}) = \sum_{i=1}^{M} (1-q^{c_i}) \big(f(\vec{n}_{c_1+\cdots +  c_{i}}^{-}) - f(\vec{n})\big)
\end{equation*}
where $\vec{n}_j^-$ is defined in Section \ref{notations}.

The $q$-Boson forward generator (the matrix transpose of $\Abwd$) acts on function $f:\Weyl{k}\to \C$ as
\begin{equation}\label{Afwddef}
\big(\Afwd f \big)(\vec{n}) = \sum_{i=1}^{M} \left( \left( (1-q^{c_{i-1}+1}) \bfone_{g_i=1} + (1-q) \bfone_{g_i>1}\right) f(\vec{n}_{c_1+\cdots+c_{i-1}+1}^{+})  - (1-q^{c_i}) f(\vec{n})\right)
\end{equation}
where by convention, for $i=1$ we set $c_{i-1}+1=c_1+\cdots + c_{i-1}+1=1$.

The function $C_{q}:\Weyl{k}\to \R$ depends only on the list $\vec{c}(\vec{n})=(c_1,\ldots, c_M)$ of cluster sizes for $\vec{n}$ via
\begin{equation}\label{Cqdef}
C_{q}(\vec{n}) = (-1)^k q^{-\frac{k(k-1)}{2}} \prod_{i=1}^{M} (c_i)!_q.
\end{equation}
Let $C_q^{-1}(\vec{n})$ denote $(C_q(\vec{n}))^{-1}$. We will also write $C_q$ and $C_q^{-1}$ as (diagonal) multiplication operators defined so that $(C_qf)(\vec{n}) = C_q(\vec{n}) f(\vec{n})$ and $(C_q^{-1}f)(\vec{n}) = C_q^{-1}(\vec{n}) f(\vec{n})$. One readily sees that $C_q$ and the space-reflection operator $R$ commute.

The $q$-Boson conjugated forward generator is defined as
$$\Amfwd = C_q \Afwd C_q^{-1}.$$
\end{definition}

\begin{lemma}\label{equalAfwdlem}
We have that
\begin{equation}\label{equalAfwd}
\big(\Amfwd f\big)(\vec{n}) = \sum_{i=1}^{M} (1-q^{c_i}) \left(f(\vec{n}_{c_1+\cdots+c_{i-1}+1}^{+}) - f(\vec{n})\right).
\end{equation}
\end{lemma}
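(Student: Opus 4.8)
The plan is to prove the identity by unwinding the definition $\Amfwd = C_q \Afwd C_q^{-1}$ directly and showing that conjugation by the diagonal operator $C_q$ renormalizes every jump coefficient of $\Afwd$ to the single uniform value $1-q^{c_i}$. Writing $g = C_q^{-1} f$ and inserting the explicit form of $\Afwd$ from \eqref{Afwddef}, I would first isolate the diagonal contribution: after multiplying by $C_q(\vec{n})$, the term $-(1-q^{c_i}) C_q(\vec{n}) C_q^{-1}(\vec{n}) f(\vec{n})$ collapses immediately to $-(1-q^{c_i}) f(\vec{n})$ because $C_q(\vec{n}) C_q^{-1}(\vec{n}) = 1$. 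This reproduces exactly the diagonal part of \eqref{equalAfwd}, so everything then hinges on the jump terms.

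For the jump term indexed by $i$, writing $p_i = c_1 + \cdots + c_{i-1} + 1$ for the index of the top particle of the $i$-th cluster, the relevant prefactor after conjugation is
\[
\Big( (1-q^{c_{i-1}+1}) \bfone_{g_i=1} + (1-q) \bfone_{g_i>1}\Big)\, \frac{C_q(\vec{n})}{C_q(\vec{n}_{p_i}^{+})}.
\]
Since the constants $(-1)^k q^{-k(k-1)/2}$ appearing in \eqref{Cqdef} do not depend on the configuration, the ratio $C_q(\vec{n})/C_q(\vec{n}_{p_i}^{+})$ reduces to the ratio of the products of $q$-factorials of cluster sizes, which I would evaluate using the elementary identity $\tfrac{(c)!_q}{(c-1)!_q} = \tfrac{1-q^c}{1-q}$. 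The crux is to track how incrementing $n_{p_i}$ reshapes the list of cluster sizes, and this depends only on whether the gap above the $i$-th cluster satisfies $g_i = 1$ or $g_i > 1$, and on whether $c_i = 1$ or $c_i > 1$.

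The substance of the argument is then the resulting four-case analysis. When $g_i = 1$ the moved particle merges into cluster $i-1$, turning the pair $(c_{i-1}, c_i)$ into $(c_{i-1}+1,\, c_i - 1)$ (cluster $i$ disappearing if $c_i = 1$), so the ratio equals $\tfrac{1-q^{c_i}}{1-q^{c_{i-1}+1}}$ and the $(1-q^{c_{i-1}+1})$ prefactor cancels to leave $1-q^{c_i}$. When $g_i > 1$ the moved particle forms a new singleton cluster (or, if $c_i=1$, merely shifts, leaving all sizes unchanged), giving ratio $\tfrac{1-q^{c_i}}{1-q}$ (respectively $1$), so the $(1-q)$ prefactor again yields $1-q^{c_i}$. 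I expect the main obstacle to be purely bookkeeping: handling the boundary index $i=1$ consistently with the degenerate $c_i=1$ subcases. Here the conventions $g_1 = +\infty$ (which deactivates the $\bfone_{g_1=1}$ branch) and $c_0 + 1 = 1$ place the $i=1$ term squarely in the $g_i>1$ case, and one checks that $1-q = 1-q^{c_i}$ precisely when $c_i=1$. Collecting all four cases shows every jump prefactor equals $1-q^{c_i}$, which together with the diagonal part established first gives \eqref{equalAfwd}.
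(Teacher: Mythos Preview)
Your proposal is correct and follows essentially the same route as the paper's proof: both compute the ratio $C_q(\vec{n})/C_q(\vec{n}_{p_i}^{+})$ by tracking how the cluster-size list changes under the move, then observe that this ratio exactly converts the jump coefficient of $\Afwd$ into $1-q^{c_i}$. The paper treats only the two cases $g_i=1$ and $g_i>1$ (its ratio formulas absorb the $c_i=1$ subcase automatically, since $(c_i-1)!_q$ still makes sense at $c_i=1$), whereas you split into four cases and handle $i=1$ explicitly; this is extra bookkeeping rather than a different idea.
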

\begin{proof}
Since $\Amfwd$ is defined via conjugation of $\Afwd$ by the diagonal matrix $C_q$, it suffices to check the coincidence of the off-diagonal matrix elements of $\Amfwd$ with those of the right-hand side of (\ref{equalAfwd}).

Observe that if $g_i=1$ then $\vec{c}(\vec{n}_{c_1+\cdots+c_{i-1}+1}^{+}) = (c_1,\ldots,c_{i-2}, c_{i-1}+1,c_{i}-1,c_{i+1},\ldots)$. From the definition of $C_q$ it then follows that
$$
\frac{C_q(\vec{n})}{C_q(\vec{n}_{c_1+\cdots+c_{i-1}+1}^{+})} = \frac{1-q^{c_i}}{1-q^{c_{i-1}+1}}.
$$
Similarly, if $g_i>2$ then $\vec{c}(\vec{n}_{c_1+\cdots+c_{i-1}+1}^{+}) = (c_1,\ldots,c_{i-2}, c_{i-1}, 1,c_{i}-1,c_{i+1},\ldots)$. Then
$$
\frac{C_q(\vec{n})}{C_q(\vec{n}_{c_1+\cdots+c_{i-1}+1}^{+})} = \frac{1-q^{c_i}}{1-q}.
$$
These are exactly the ratios of the off-diagonal matrix elements of (\ref{Afwddef}) and (\ref{equalAfwd}).
\end{proof}

Let us note that in the above proof the constant $(-1)^k q^{-\frac{k(k-1)}{2}}$ in the definition of $C_q$ played no role. This choice of constant is fixed in the proof of Theorem \ref{KqBosonId} below, by (\ref{kqbosonthmend}) and the computation following it. Another choice of constant would result in proving that $\KqBoson$ acts as a constant (not equal to one) times the identity operator.

\begin{remark}\label{commrem}
Lemma \ref{equalAfwdlem} implies that
$R^{-1} \Abwd R = \Amfwd = C_q \Afwd C_q^{-1}$ or equivalently
$$
\Abwd = (R C_q) \Afwd (R C_q)^{-1}
$$
showing that $\Abwd$ and $\Afwd$ are related via a similarity transform. Up to a constant, the function $C_q(\vec{n})$ is the invariant measure for the $q$-Boson particle process (see \cite[Corollary 3.3.12]{BorCor}). In the study of interacting particle systems, the conjugated forward generator $\Amfwd$ is sometimes known as the adjoint or time-reversed generator. The above relationship shows that our present model is PT-invariant, i.e., invariant under joint space-reflection and time-reversal. This type of invariance is of fundamental significance in many areas of physics and features prominently in some papers on interacting particle systems such as \cite{GJL,TS}. It may be possible to extend the present approach to similarly study and develop spectral theory for other PT-invariant Bethe ansatz solvable models. 
We are grateful to Gunter Sch\"{u}tz for noting that the above relation is equivalent to the PT-invariance of the model.

PT-invariance is a property which holds true for all spatially homogeneous nearest neighbor asymmetric zero range processes, so long as $C_q$ is replaced by the product invariant measure for the process. The proof of this general fact follows the same lines as Lemma \ref{equalAfwdlem}.
\end{remark}


\subsection{Free generators with two-body boundary conditions}\label{freegensec}
The key to finding the eigenfunctions for the backward and conjugated forward generators is to reexpress them in terms of constant-coefficient, separable {\it free generators} subject to $(k-1)$ two-body boundary conditions. It is certainly not typical that the generators of an interacting particle system can be reexpressed in this manner (see however \cite{TW1,BCS} for ASEP). This is the hallmark of integrable systems of Bethe-ansatz type.

Recall the definitions of $\difbwd_i$ and $\diffwd_i$ from Section \ref{notations}.
\begin{definition}\label{freedef}
The $q$-Boson backward free generator $\Freebwd$ acts on functions $u:\Z^k\to \C$ as
\begin{equation}\label{freebwddef}
\big(\Freebwd u \big)(\vec{n}) = (1-q) \sum_{i=1}^{k} \big(\difbwd_i u\big)(\vec{n}).
\end{equation}
We say that the function $u:\Z^k\to \C$  satisfies the $(k-1)$ $q$-Boson backward two-body boundary conditions if
\begin{equation}\label{star1}
\textrm{for all } 1\leq i\leq k-1\qquad \big(\difbwd_i - q\difbwd_{i+1}\big)u\big\vert_{\vec{n}:n_i=n_{i+1}} \equiv 0.
\end{equation}
The $q$-Boson forward free generator $\Freefwd$ acts on functions $u:\Z^k\to \C$ as
\begin{equation}\label{freefwddef}
\big(\Freefwd u\big)(\vec{n}) =(1-q) \sum_{i=1}^{k} \big(\diffwd_i u\big)(\vec{n}).
\end{equation}
We say that the function $u:\Z^k\to \C$  satisfies the  $(k-1)$ $q$-Boson forward two-body boundary conditions
\begin{equation}\label{star2}
\textrm{for all } 1\leq i\leq k-1\qquad \big(q\diffwd_i - \diffwd_{i+1}\big)u\big\vert_{\vec{n}:n_i=n_{i+1}} \equiv 0.
\end{equation}
\end{definition}

\begin{proposition}\label{freetrueequiv}
If $u:\Z^k\to \C$ satisfies the $(k-1)$ $q$-Boson backward (respectively, forward) two-body boundary conditions, then for $\vec{n}\in \Weyl{k}$,
\begin{equation*}
\big(\Freebwd u\big)(\vec{n}) = \big(\Abwd u\big)(\vec{n})\qquad \big(\textrm{respectively, } \big(\Freefwd u\big)(\vec{n}\big) = \big(\Amfwd u\big)(\vec{n})\big).
\end{equation*}
\end{proposition}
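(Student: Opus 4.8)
The plan is to evaluate $\big(\Freebwd u\big)(\vec{n})$ at a fixed $\vec{n}\in\Weyl{k}$ by regrouping the sum over the $k$ summation indices according to the cluster structure of $\vec{n}$, and then to collapse each cluster's contribution to a single term using the two-body boundary conditions. Write $\vec{c}=(c_1,\ldots,c_M)$ for the cluster sizes, so that cluster $i$ occupies the indices $p_i+1,\ldots,p_i+c_i$ with $p_i=c_1+\cdots+c_{i-1}$. Since $\big(\difbwd_i u\big)(\vec{n})=u(\vec{n}_i^{-})-u(\vec{n})$, I would first write
\[
\big(\Freebwd u\big)(\vec{n}) = (1-q)\sum_{i=1}^{M}\sum_{j=1}^{c_i}\big(u(\vec{n}_{p_i+j}^{-})-u(\vec{n})\big),
\]
and the goal becomes to show that each inner sum equals $\tfrac{1-q^{c_i}}{1-q}\big(u(\vec{n}_{p_i+c_i}^{-})-u(\vec{n})\big)$, which recovers $\Abwd u$ term by term.

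The key step is the telescoping within a single cluster. For each index $p_i+j$ strictly inside the cluster (that is, $1\le j\le c_i-1$) one has $n_{p_i+j}=n_{p_i+j+1}$, so the backward boundary condition (\ref{star1}) applies at $\vec{n}$ itself and reads
\[
u(\vec{n}_{p_i+j}^{-})-u(\vec{n}) = q\big(u(\vec{n}_{p_i+j+1}^{-})-u(\vec{n})\big).
\]
Setting $a_j:=u(\vec{n}_{p_i+j}^{-})-u(\vec{n})$, this is exactly $a_j=q\,a_{j+1}$, whence $a_j=q^{\,c_i-j}a_{c_i}$. Summing the resulting geometric series gives $\sum_{j=1}^{c_i}a_j = a_{c_i}\sum_{m=0}^{c_i-1}q^{m} = \tfrac{1-q^{c_i}}{1-q}\,a_{c_i}$, and multiplying by the prefactor $(1-q)$ produces precisely $(1-q^{c_i})\big(u(\vec{n}_{p_i+c_i}^{-})-u(\vec{n})\big)$, the $i$-th summand of $\Abwd u$. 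Summing over $i$ finishes the backward statement.

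The forward case is entirely parallel. Using $\big(\diffwd_i u\big)(\vec{n})=u(\vec{n}_i^{+})-u(\vec{n})$ and writing $b_j:=u(\vec{n}_{p_i+j}^{+})-u(\vec{n})$, the forward boundary condition (\ref{star2}) at $\vec{n}$ now reads $b_{j+1}=q\,b_j$, so the chain collapses onto $b_1$ rather than onto $b_{c_i}$; the same geometric summation yields the term $(1-q^{c_i})\big(u(\vec{n}_{p_i+1}^{+})-u(\vec{n})\big)$, matching $\Amfwd u$ as computed in Lemma \ref{equalAfwdlem}. The one point to be careful about is that the boundary conditions are invoked only at indices with $n_i=n_{i+1}$, i.e. strictly inside a cluster; across a cluster boundary one has $n_{p_i+c_i}>n_{p_i+c_i+1}$, so no boundary relation is (or needs to be) used there, and the surviving bottom (resp. top) particle of each cluster is left untouched.

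The computation is short and the only genuine content is the telescoping identity, so I do not anticipate a real obstacle. It is however worth emphasizing that $u$ is taken to be defined on all of $\Z^k$, so that the shifted arguments $\vec{n}_{p_i+j}^{\pm}$ --- several of which leave $\Weyl{k}$ --- are all meaningful; the boundary conditions are precisely the mechanism that reorganizes these off-chamber evaluations into the in-chamber action of $\Abwd$ (resp. $\Amfwd$), converting $c_i$ separate rate-$(1-q)$ moves into a single rate-$(1-q^{c_i})$ move of the appropriate particle.
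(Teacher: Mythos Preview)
Your proof is correct and follows essentially the same approach as the paper: regroup the free generator's sum by clusters, use the two-body boundary conditions to telescope within each cluster, and sum the resulting geometric series. Your write-up is in fact more explicit than the paper's (which compresses the backward case to a single displayed line and says ``the same reasoning applies'' for the other clusters and the forward case), but the content is identical.
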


\begin{proof}
Assume that $u$ satisfies (\ref{star1}). If $\vec{n}$ has $n_{1}=\cdots =n_{c_1}$, then
$$(1-q) \sum_{i=1}^{c_1} \big(\difbwd_i u\big)(\vec{n}) = (1-q)\sum_{i=1}^{c_1} q^{i-1} \big(\difbwd_{c_1} u\big)(\vec{n}) = (1-q^{c_1})\big(\difbwd_{c_1} u\big)(\vec{n}).$$
The same reasoning applies to the other clusters of $\vec{n}$ readily implying that $\big(\Freebwd u\big)(\vec{n})~=~\big(\Abwd u\big)(\vec{n})$.

Likewise assume that $u$ satisfies (\ref{star2}). If $\vec{n}$ has $n_{1}=\cdots =n_{c_1}$, then it follows that
$$(1-q) \sum_{i=1}^{c_1} \big(\diffwd_i u\big)(\vec{n}) = (1-q)\sum_{i=1}^{c_1} q^{c_1-i} \big(\diffwd_{1} u\big)(\vec{n}) = (1-q^{c_1})\big(\diffwd_{1} u\big)(\vec{n}).$$
The same reasoning applies to the other clusters of $\vec{n}$ readily implying (we have also appealed here to Lemma \ref{equalAfwdlem}) that $\big(\Freefwd u\big)(\vec{n}) = \big(\Amfwd u\big)(\vec{n})$.
\end{proof}

\begin{definition}\label{eigdeffree}
A function $\psibwd:\Z^k\to \C$ is called an eigenfunction of the $q$-Boson backward free generator with $(k-1)$ two-body boundary conditions if $\psibwd$ is an eigenfunction for the $q$-Boson backward free generator and it satisfies the $(k-1)$ $q$-Boson backward two-body boundary conditions (\ref{star1}). We likewise define what it means for a function $\psimfwd:\Z^k\to \C$ to be an eigenfunction of the $q$-Boson forward free generator with $(k-1)$ two-body boundary conditions (\ref{star2}).
\end{definition}

The following is a corollary of Proposition \ref{freetrueequiv}.

\begin{corollary}\label{eigcorgen}
Any eigenfunction $\psibwd:\Z^k\to \C$ for the $q$-Boson backward free generator with $(k-1)$ two-body boundary conditions is, when restricted to $\vec{n}\in \Weyl{k}$, an eigenfunction for the $q$-Boson backward generator $\Abwd$ with the same eigenvalue.

Similarly any eigenfunction $\psimfwd:\Z^k\to \C$ for the $q$-Boson forward free evolution equation with $(k-1)$ two-body boundary conditions is, when restricted to $\vec{n}\in \Weyl{k}$, an eigenfunction for the $q$-Boson conjugated forward generator $\Amfwd$ with the same eigenvalue. In turn, $C_q^{-1}\psimfwd$ is an eigenfunction for the $q$-Boson forward generator $\Afwd$ with the same eigenvalue.
\end{corollary}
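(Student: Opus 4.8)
The plan is to read off both statements as immediate consequences of Proposition \ref{freetrueequiv}, after unwinding Definition \ref{eigdeffree} of an eigenfunction of a free generator subject to the two-body boundary conditions.

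First I would treat the backward case. Suppose $\psibwd:\Z^k\to\C$ is an eigenfunction of the free generator $\Freebwd$ with eigenvalue $\lambda$ that also satisfies the backward two-body boundary conditions (\ref{star1}). By Definition \ref{eigdeffree} this means $\big(\Freebwd \psibwd\big)(\vec{n}) = \lambda\,\psibwd(\vec{n})$ for all $\vec{n}\in\Z^k$, in particular for $\vec{n}\in\Weyl{k}$. Since (\ref{star1}) holds, Proposition \ref{freetrueequiv} applies and gives $\big(\Abwd\psibwd\big)(\vec{n}) = \big(\Freebwd\psibwd\big)(\vec{n})$ for every $\vec{n}\in\Weyl{k}$. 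Combining the two yields $\big(\Abwd\psibwd\big)(\vec{n}) = \lambda\,\psibwd(\vec{n})$ on $\Weyl{k}$. The one small point I would verify explicitly is that the right-hand side of the definition of $\Abwd$ only samples $\psibwd$ at points of $\Weyl{k}$, so that the restriction is genuinely an eigenfunction of an operator acting on functions on $\Weyl{k}$: the vector $\vec{n}_{c_1+\cdots+c_i}^{-}$ lowers the last particle of the $i$-th cluster by one, and since the gap $g_{i+1}\geq 1$ this preserves the weak ordering, whence $\vec{n}_{c_1+\cdots+c_i}^{-}\in\Weyl{k}$. Thus the restriction of $\psibwd$ to $\Weyl{k}$ is an eigenfunction of $\Abwd$ with eigenvalue $\lambda$.

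The forward case is identical in form. If $\psimfwd$ is an eigenfunction of $\Freefwd$ with eigenvalue $\lambda$ satisfying (\ref{star2}), then Proposition \ref{freetrueequiv} gives $\big(\Amfwd\psimfwd\big)(\vec{n}) = \big(\Freefwd\psimfwd\big)(\vec{n}) = \lambda\,\psimfwd(\vec{n})$ for $\vec{n}\in\Weyl{k}$, so the restriction of $\psimfwd$ is an eigenfunction of $\Amfwd$ with eigenvalue $\lambda$. For the final assertion I would invoke the conjugation relation $\Amfwd = C_q\Afwd C_q^{-1}$ from Definition \ref{bwdgendef}. Applying $C_q^{-1}$ to the eigenvalue equation $\Amfwd\psimfwd = \lambda\,\psimfwd$ gives $\Afwd\big(C_q^{-1}\psimfwd\big) = C_q^{-1}\Amfwd\psimfwd = \lambda\,\big(C_q^{-1}\psimfwd\big)$, so $C_q^{-1}\psimfwd$ is an eigenfunction of $\Afwd$ with the same eigenvalue $\lambda$.

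Since every step is a direct substitution into an already established identity, I expect no genuine obstacle here; the only thing requiring a line of care is the well-definedness of the restriction to $\Weyl{k}$, namely that $\Abwd$ and $\Amfwd$, viewed as operators on functions on $\Weyl{k}$, reference only values on $\Weyl{k}$, which is immediate from the cluster/gap structure used to define them.
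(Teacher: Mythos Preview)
Your proposal is correct and is exactly the approach the paper indicates: the paper states the corollary as an immediate consequence of Proposition \ref{freetrueequiv} and gives no further argument. Your additional remarks (that $\Abwd$ and $\Amfwd$ only sample values inside $\Weyl{k}$, and the explicit conjugation by $C_q$) are fine elaborations of details the paper leaves implicit.
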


\subsection{Coordinate Bethe ansatz}

\subsubsection{General review}\label{corbetherev}
We briefly review the coordinate Bethe ansatz \cite{Bethe}. Given an operator $L$ acting on functions from $X\to \C$ (where $X$ is an arbitrary space) we form the $k$-particle operator
$\mathcal{L}$ which acts on functions $f:X^k\to \C$ as
\begin{equation}\label{calA}
(\mathcal{L} f)(\vec{x}) = \sum_{i=1}^{k} (L_i f)(\vec{x})
\end{equation}
where $L_i$ acts as $L$ on the coordinate $x_i$. Write $\psi_z$ to denote any eigenfunction with eigenvalue $z\in \C$ such that $L\psi_z = z\psi_z$ (we are not assuming all $z$ correspond to an eigenfunction or that all eigenvalues are simple). Then it follows that
\begin{equation}\label{Psibethe}
\Psi_{\vec{z}}(\vec{x}) = \sum_{\sigma\in S_k} \mathbf{A}_{\sigma}(\vec{z}) \prod_{j=1}^{k} \psi_{z_{\sigma(j)}}(x_j)
\end{equation}
is an eigenfunction of $\mathcal{L}$ with eigenvalue $(z_1+\cdots +z_k)$. Presently $\mathbf{A}_{\sigma}(\vec{z})\in \C$ can be arbitrary.

Consider an operator $B$ which acts on functions $g:X^2\to \C$. We say that $g$ satisfies the two-body boundary condition corresponding to $B$ if for all $x\in X$, $\big(Bg\big)(x,x)=0$.
Let $B_{i,i+1}$ act as $B$ in the variables $x_i$ and $x_{i+1}$ and as the identity for all other variables. We wish to find $\Psi$ such that
\begin{equation}\label{BCbethe}
\textrm{for all }1\leq i\leq k-1\qquad B_{i,i+1} \Psi\big\vert_{\vec{x}:x_i=x_{i+1}} \equiv 0.
\end{equation}

This can be accomplished by choosing $\mathbf{A}_{\sigma}(\vec{z})$ suitably. Define the function
\begin{equation*}
S(z_1,z_2) = \frac{\big(B (\psi_{z_1}\otimes \psi_{z_2})\big)(x,x)}{(\psi_{z_1}\otimes \psi_{z_2})(x,x)},
\end{equation*}
where $B(f_1\otimes f_2)$ is defined as $B$ applied to the function which takes $(x_1,x_2)\mapsto f_1(x_1)f_2(x_2)$.

\begin{lemma}\label{Abetheconst}
If
\begin{equation*}
\mathbf{A}_{\sigma}(\vec{z}) = \sgn(\sigma) \prod_{1\leq B<A\leq k}\frac{S(z_{\sigma(A)},z_{\sigma(B)})}{S(z_{A},z_{B})}
\end{equation*}
then $\Psi_{\vec{z}}(\vec{x})$ from (\ref{Psibethe}) is an eigenfunction of the operator $\mathcal{L}$ from (\ref{calA}) with eigenvalue $(z_1+\cdots +z_k)$, for which the $(k-1)$ two-body boundary conditions in (\ref{BCbethe}) hold.
\end{lemma}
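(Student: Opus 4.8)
The plan is to observe that the eigenfunction property of $\Psi_{\vec z}$ with eigenvalue $z_1+\cdots+z_k$ is automatic for \emph{any} coefficients $\mathbf{A}_{\sigma}(\vec z)$, as already recorded in the discussion around (\ref{Psibethe}); so the entire content of the lemma is to check that the specific choice of $\mathbf{A}_{\sigma}$ forces the $(k-1)$ two-body boundary conditions (\ref{BCbethe}). I would fix an index $i$ with $1\le i\le k-1$ and prove $B_{i,i+1}\Psi_{\vec z}\big\vert_{\vec x:\,x_i=x_{i+1}}\equiv 0$ separately for each $i$. Since the formula for $\mathbf{A}_{\sigma}$ is a single explicit expression, establishing the relation for every adjacent index simultaneously requires no compatibility (Yang--Baxter) condition --- a simplification special to working on all of $X^k$ rather than on a finite or periodic chain.

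First I would compute the action of $B_{i,i+1}$ on a single summand. Because $B$ acts only in the variables $x_i,x_{i+1}$ and the summand factorizes, $B_{i,i+1}\big(\prod_{j}\psi_{z_{\sigma(j)}}(x_j)\big)$ equals $\big(\prod_{j\neq i,i+1}\psi_{z_{\sigma(j)}}(x_j)\big)\cdot\big(B(\psi_{z_{\sigma(i)}}\otimes\psi_{z_{\sigma(i+1)}})\big)(x_i,x_{i+1})$. Restricting to $x_i=x_{i+1}=x$ and invoking the definition of $S$, the second factor becomes $S(z_{\sigma(i)},z_{\sigma(i+1)})\,\psi_{z_{\sigma(i)}}(x)\psi_{z_{\sigma(i+1)}}(x)$. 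Writing $P_\sigma(\vec x)$ for the plain product $\prod_j\psi_{z_{\sigma(j)}}(x_j)$ restricted to $x_i=x_{i+1}$, this shows the restricted summand equals $S(z_{\sigma(i)},z_{\sigma(i+1)})\,P_\sigma(\vec x)$. The key structural remark is that $P_\sigma$ is unchanged under $\sigma\mapsto\sigma\tau_i$, where $\tau_i\in S_k$ is the transposition of $i$ and $i+1$: on the diagonal $x_i=x_{i+1}$ only the unordered pair of spectral parameters sitting at positions $i,i+1$ matters.

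Hence I would group the sum over $S_k$ into the pairs $\{\sigma,\sigma\tau_i\}$ and show each pair cancels. The contribution of a pair is $\big[\mathbf{A}_\sigma S(z_{\sigma(i)},z_{\sigma(i+1)})+\mathbf{A}_{\sigma\tau_i}S(z_{\sigma(i+1)},z_{\sigma(i)})\big]P_\sigma(\vec x)$, so it suffices to prove $\mathbf{A}_{\sigma\tau_i}/\mathbf{A}_\sigma=-\,S(z_{\sigma(i)},z_{\sigma(i+1)})/S(z_{\sigma(i+1)},z_{\sigma(i)})$. This is where the explicit product formula is used: the denominator $\prod_{B<A}S(z_A,z_B)$ is $\sigma$-independent and drops out, $\sgn(\sigma\tau_i)/\sgn(\sigma)=-1$ supplies the sign, and in the ratio of the numerators every factor indexed by a pair meeting $\{i,i+1\}$ in at most one element merely gets reordered and cancels, leaving only the single factor for the pair $(i,i+1)$, which contributes exactly $S(z_{\sigma(i)},z_{\sigma(i+1)})/S(z_{\sigma(i+1)},z_{\sigma(i)})$. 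Substituting this ratio makes the bracket vanish identically, so every pair cancels and the total is zero.

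The routine but genuinely careful part --- the main obstacle --- is the bookkeeping in that last step: verifying that in $\prod_{B<A}S(z_{\sigma\tau_i(A)},z_{\sigma\tau_i(B)})$ versus $\prod_{B<A}S(z_{\sigma(A)},z_{\sigma(B)})$ all contributions except the $(i,i+1)$ factor cancel. I would handle this by splitting the index pairs $(B,A)$ according to whether they meet $\{i,i+1\}$ in zero, one, or two elements, treating $\ell<i$ and $\ell>i+1$ symmetrically in the one-element case. I should also flag the implicit hypothesis that $S(z_1,z_2)$ is well defined, i.e.\ that the ratio in its definition is independent of the base point $x$ (as holds for the homogeneous free generators and plane-wave eigenfunctions $\psi_z$ used in the $q$-Boson application); this is what guarantees $\mathbf{A}_{\sigma}$ is a function of $\vec z$ alone and that the pointwise cancellation is legitimate.
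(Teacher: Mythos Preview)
Your proof is correct and follows essentially the same approach as the paper: pair each $\sigma$ with the permutation swapping positions $i$ and $i+1$, note that on the diagonal $x_i=x_{i+1}$ the plane-wave products for the two paired permutations coincide, and verify via the explicit formula for $\mathbf{A}_\sigma$ that the two coefficients combine to cancel. Your write-up is slightly more detailed on the factor-by-factor bookkeeping (and in flagging the implicit assumption that $S$ is independent of the base point $x$), but the underlying argument is identical to the paper's.
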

\begin{proof}
For any $\sigma\in S_k$ consider the pair of terms in (\ref{Psibethe}) corresponding to $\sigma$ and $\tau_i\sigma$ where $\tau_i$ represents the transposition of $i$ and $i+1$. We will show that $B_{i,i+1}$ applied to this pair
$$
\mathbf{A}_{\sigma}(\vec{z}) \prod_{j=1}^{k} \psi_{z_{\sigma(j)}}(x_j)  +  \mathbf{A}_{\tau_i\sigma}(\vec{z}) \prod_{j=1}^{k} \psi_{z_{(\tau_i\sigma)(j)}}(x_j)
$$
is identically zero if $x_i=x_{i+1}$. If this holds for all pairs, this implies the same holds true for $\Psi_{\vec{z}}(\vec{x})$. Call the first term above $T$ and the second $T_{\tau}$. Then it follows from the definition of the function $S$ that
$$
(B_{i,i+1} T)(\vec{x}) = S(z_{\sigma(i)},z_{\sigma(i+1)})T(\vec{x}),\qquad (B_{i,i+1} T_{\tau})(\vec{x}) = S(z_{\sigma(i+1)},z_{\sigma(i)})T_{\tau}(\vec{x}).
$$
For $\vec{x}$ such that $x_i=x_{i+1}$
$$
T_{\tau}(\vec{x}) = - \frac{S(z_{\sigma(i)},z_{\sigma(i+1)})}{S(z_{\sigma(i+1)},z_{\sigma(i)})} T(\vec{x})
$$
and thus
\begin{eqnarray*}
\big(B_{i,i+1} (T+T_{\tau})\big)(\vec{x}) &=& S(z_{\sigma(i)},z_{\sigma(i+1)})T(\vec{x})  +  S(z_{\sigma(i+1)},z_{\sigma(i)})T_{\tau}(\vec{x}) \\
&=&  S(z_{\sigma(i)},z_{\sigma(i+1)})T(\vec{x}) -  S(z_{\sigma(i+1)},z_{\sigma(i)}) \frac{S(z_{\sigma(i)},z_{\sigma(i+1)})}{S(z_{\sigma(i+1)},z_{\sigma(i)})} T =0.
\end{eqnarray*}
\end{proof}

\subsubsection{Application to $q$-Boson particle system generators}\label{applytoq}

We may apply the coordinate Bethe ansatz of Section \ref{corbetherev} to the $q$-Boson backward and forward free generators with $(k-1)$ two-body boundary conditions of Section \ref{freegensec}. One should take care to notice that the eigenfunction associated to $\vec{z}$ has eigenvalue $(q-1)(z_1+\cdots+z_k)$ and not $(z_1+\cdots+z_k)$ as in the above general construction.

\begin{definition}\label{eigdefn}
For all $z_1,\ldots, z_k\in \C\setminus \{1\}$, set
\begin{eqnarray*}
\psibwd_{\vec{z}}(\vec{n})  &=& \sum_{\sigma\in S_k} \prod_{1\leq B<A\leq k} \frac{z_{\sigma(A)}-q z_{\sigma(B)}}{z_{\sigma(A)}- z_{\sigma(B)}} \, \prod_{j=1}^{k} (1-z_{\sigma(j)})^{-n_j}\\
\psimfwd_{\vec{z}}(\vec{n}) &=& \sum_{\sigma\in S_k} \prod_{1\leq B<A\leq k} \frac{z_{\sigma(A)}-q^{-1} z_{\sigma(B)}}{z_{\sigma(A)}- z_{\sigma(B)}} \, \prod_{j=1}^{k} (1-z_{\sigma(j)})^{n_j}\\
\psifwd_{\vec{z}}(\vec{n}) &=& C_q^{-1}(\vec{n}) \psimfwd_{\vec{z}}(\vec{n}).
\end{eqnarray*}
Observe that for any fixed $\vec{n}\in \Weyl{k}$ these are symmetric Laurent polynomials in $(1-z_1),\ldots, (1-z_k)$ thus elements of $\LP{k}$.
\end{definition}

\begin{proposition}\label{prop211}
For all $z_1,\ldots, z_k\in \C\setminus \{1\}$, $\psibwd_{\vec{z}}(\vec{n})$
is an eigenfunction for the $q$-Boson backward free generator with $(k-1)$ two-body boundary conditions (Definition \ref{eigdeffree})  with eigenvalue $(q-1)(z_1+\cdots +z_k)$. The restriction of $\psibwd_{\vec{z}}(\vec{n})$ to $\vec{n}\in \Weyl{k}$ is consequently an eigenfunction for the $q$-Boson backward generator $\Abwd$ with the same eigenvalue.

Similarly, for all $z_1,\ldots, z_k\in \C\setminus \{1\}$, $\psimfwd_{\vec{z}}(\vec{n})$
is an eigenfunction  for the $q$-Boson forward free generator with $(k-1)$ two-body boundary conditions (Definition \ref{eigdeffree}) with eigenvalue $(q-1)(z_1+\cdots +z_k)$. The restriction of $\psimfwd_{\vec{z}}(\vec{n})$ to $\vec{n}\in \Weyl{k}$ is consequently an eigenfunction for the $q$-Boson conjugated forward generator $\Amfwd$ with the same eigenvalue. The restriction of $\psifwd_{\vec{z}}(\vec{n})$ to $\vec{n}\in \Weyl{k}$ is likewise an eigenfunction for the $q$-Boson forward generator $\Afwd$ with the same eigenvalue.
\end{proposition}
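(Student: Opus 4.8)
The plan is to observe that this proposition is essentially an assembly of machinery already in place: the general coordinate Bethe ansatz of Lemma \ref{Abetheconst} produces eigenfunctions of a separable $k$-particle free generator that satisfy prescribed two-body boundary conditions, and Corollary \ref{eigcorgen} (via Proposition \ref{freetrueequiv}) then transfers these to the actual generators $\Abwd$, $\Amfwd$, $\Afwd$. So the real content is to identify the correct one-body ingredients $(L,\psi_z,B)$, to compute the associated scattering factor $S$, and to check that the functions $\psibwd_{\vec{z}}$, $\psimfwd_{\vec{z}}$ of Definition \ref{eigdefn} coincide, up to an $\vec{n}$-independent scalar, with the Bethe superposition (\ref{Psibethe}) carrying the amplitudes dictated by Lemma \ref{Abetheconst}.

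First, for the backward case, I would take the one-body operator $L = (1-q)\difbwd$, so that $\Freebwd = \sum_i L_i$, together with the one-body eigenfunctions $\psi_z(n) = (1-z)^{-n}$. A one-line computation gives $\difbwd\,\psi_z = -z\,\psi_z$, hence $L\psi_z = (q-1)z\,\psi_z$; the eigenvalue is $(q-1)z$ rather than $z$, which is exactly the source of the overall factor $(q-1)$ in the claimed eigenvalue $(q-1)(z_1+\cdots+z_k)$ and is harmless since the amplitude prescription of Lemma \ref{Abetheconst} is insensitive to the eigenvalues. The boundary operator is $B = \difbwd_1 - q\difbwd_2$, as in (\ref{star1}); applying it to $\psi_{z_1}\otimes\psi_{z_2}$ and evaluating on the diagonal yields $S(z_1,z_2) = qz_2 - z_1$ (well-defined precisely because $z_i\neq 1$ keeps $(\psi_{z_1}\otimes\psi_{z_2})(n,n)$ nonzero, which is where the hypothesis $z_i\in\C\setminus\{1\}$ enters). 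Substituting this $S$ into the amplitude $\mathbf{A}_\sigma = \sgn(\sigma)\prod_{B<A} S(z_{\sigma(A)},z_{\sigma(B)})/S(z_A,z_B)$ and using the Vandermonde relation $\prod_{B<A}(z_{\sigma(A)}-z_{\sigma(B)}) = \sgn(\sigma)\,\V(\vec{z})$ to absorb the sign, I would verify that the superposition equals $\psibwd_{\vec{z}}$ up to the single $\vec{n}$-independent factor $\prod_{B<A}(qz_B - z_A)\big/\big((-1)^{\binom{k}{2}}\V(\vec{z})\big)$. Since multiplication by a constant in $\vec{n}$ preserves both the eigenrelation and the homogeneous condition (\ref{star1}), Lemma \ref{Abetheconst} makes $\psibwd_{\vec{z}}$ an eigenfunction of $\Freebwd$ with the backward boundary conditions, and Corollary \ref{eigcorgen} gives the claim for $\Abwd$.

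The forward case runs in complete parallel: take $L = (1-q)\diffwd$, one-body eigenfunctions $\psi_z(n) = (1-z)^{n}$ (again with $\diffwd\,\psi_z = -z\,\psi_z$, eigenvalue $(q-1)z$), boundary operator $B = q\diffwd_1 - \diffwd_2$ from (\ref{star2}), and $S(z_1,z_2) = z_2 - qz_1$. The same Vandermonde manipulation identifies the Bethe superposition with $\psimfwd_{\vec{z}}$ up to an $\vec{n}$-independent scalar, so $\psimfwd_{\vec{z}}$ is an eigenfunction of $\Freefwd$ with the forward boundary conditions. Corollary \ref{eigcorgen} then yields that the restriction of $\psimfwd_{\vec{z}}$ to $\Weyl{k}$ diagonalizes $\Amfwd$ and that $\psifwd_{\vec{z}} = C_q^{-1}\psimfwd_{\vec{z}}$ diagonalizes $\Afwd$, all with eigenvalue $(q-1)(z_1+\cdots+z_k)$.

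I expect no genuine obstacle; the only care required is bookkeeping. Specifically, one must track the factor $(q-1)$ arising from $L\psi_z = (q-1)z\,\psi_z$ (the general review in (\ref{Psibethe}) tacitly labels an eigenfunction by its eigenvalue, a convention not met here), and one must confirm that the $q$-dependent prefactors thrown off by $S$ — namely the powers of $(-1)$ from rewriting $qz_{\sigma(B)}-z_{\sigma(A)} = -(z_{\sigma(A)}-qz_{\sigma(B)})$ in the backward case, and of $(-q)$ from $z_{\sigma(B)}-qz_{\sigma(A)} = -q(z_{\sigma(A)}-q^{-1}z_{\sigma(B)})$ in the forward case — are independent of $\sigma$ and hence absorbed cleanly into the overall normalization, leaving the amplitudes exactly as written in Definition \ref{eigdefn}.
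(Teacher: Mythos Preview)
Your proposal is correct and follows essentially the same route as the paper: identify the one-body eigenfunctions, apply Lemma \ref{Abetheconst} to compute $S$ and build the Bethe superposition, note that it differs from the functions of Definition \ref{eigdefn} only by an $\vec n$-independent scalar, and then invoke Corollary \ref{eigcorgen}. The only cosmetic differences are that the paper records $S(z_1,z_2)=z_1-qz_2$ (your $qz_2-z_1$ up to an irrelevant global sign) and, in the forward case, first rescales the boundary condition by $q^{-1}$ to obtain $S(z_1,z_2)=z_1-q^{-1}z_2$ rather than your $z_2-qz_1$; these choices of $S$ differ by overall constants and hence yield the same amplitude ratios.
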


\begin{proof}
Let us focus on the backward case first. Observe that $(1-z)^{-n}$ is an eigenfunction for $(1-q)\difbwd$ with eigenvalue $(q-1)z$. We may apply Lemma \ref{Abetheconst} to construct an eigenfunction for the backward free generator which satisfies the two-body boundary conditions of Definition \ref{freedef}. In this application of Lemma \ref{Abetheconst} we find that $S(z_1,z_2) = z_1-q z_2$. Comparing the outcome of the lemma to the expression above for $\psibwd$ we see that they differ by an overall multiplicative factor
$$
\prod_{1\leq B<A\leq k} \frac{z_A-q z_B}{z_A-z_B}.
$$
However, since the $z$-variables are fixed, this factor is just a constant, and hence the desired result follows. The implication for the backward generator $\Abwd$ follows from Corollary \ref{eigcorgen}.

Similar reasoning applies in the forward case. Observe that $(1-z)^{n}$ is an eigenfunction for $(1-q)\diffwd$ with eigenvalue $(q-1)z$. The forward two-body boundary condition can be multiplied by $q^{-1}$ yielding
\begin{equation*}
(\diffwd_i - q^{-1}\diffwd_{i+1})u = 0.
\end{equation*}
We may apply Lemma \ref{Abetheconst} to construct an eigenfunction for the forward free generator which satisfies the above two-body boundary conditions. In this application of Lemma \ref{Abetheconst} we find that $S(z_1,z_2) = z_1-q^{-1} z_2$. Just as in the backward case, this differs from $\psimfwd$ by an overall multiplicative constant (that depends only on the $z$-variables) hence the desired result follows. The implication for the conjugated forward generator $\Amfwd$ and the forward generator $\Afwd$ follows from Corollary \ref{eigcorgen}.
\end{proof}

\begin{remark}\label{qBosonHamiltoniansym}
We may extend the eigenfunctions of Definition \ref{eigdefn} so as to be defined for all of $\Z^k$ (rather than $\Weyl{k}$) by fixing that the value for a general $\vec{n}\in \Z^k$ is the same as the value of $\sigma \vec{n}\in \Weyl{k}$ (where $\sigma\in S_k$ is a permutation of the elements of $\vec{n}$ taking it into $\Weyl{k}$). It is possible to write down an operator on all of $\Z^k$ for which these extensions (which we write with the same notation) are still eigenfunctions. For instance (cf. \cite[Proposition 2.7, C]{BCS}) one readily observes that
$$
(1-q)\left[\sum_{i=1}^{k} \difbwd_i + (1-q^{-1}) \sum_{1\leq i<j\leq k} \bfone_{n_i=n_j} q^{j-i} \difbwd_i\right] \psibwd_{\vec{z}}(\vec{n}) = (1-q) (z_1+\cdots +z_k)\, \psibwd_{\vec{z}}(\vec{n}).
$$
The extension of $\psibwd_{\vec{z}}(\vec{n})$ to $\Z^k$ is through symmetric extension and it is not clear how to modify the formula for the eigenfunction so that the $\vec{n}$ coordinates do not require permuting. For the case of the delta Bose gas (a particular limit of the present system considered later in Section \ref{deltabosesec}) such a formula can be seen in \cite[Equation (A.12)]{Dot}.
\end{remark}

\subsection{Left and right eigenfunctions}\label{leftrightsec}

Proposition \ref{prop211} along with the fact that $\Afwd$ is the transpose of $\Abwd$ implies that
\begin{equation}\label{eqn8}
\big(\Afwd \psifwd_{\vec{z}}\big)(\vec{n}) = \Big((1-q) \sum_{i=1}^{k} z_i\Big) \psifwd_{\vec{z}}(\vec{n}), \qquad \big(\psibwd_{\vec{z}} \Afwd \big)(\vec{n})  = \psibwd_{\vec{z}}(\vec{n}) \Big((1-q) \sum_{i=1}^{k} z_i\Big)
\end{equation}
showing the $\psifwd_{\vec{z}}(\vec{n})$ and $\psibwd_{\vec{z}}(\vec{n})$ are (respectively) right and left eigenfunctions for $\Afwd$ with eigenvalue $(q-1)(z_1+\cdots +z_k)$. This motivates the following.

\begin{definition}\label{leftrighteig}
For any $\vec{z}=(z_1,\ldots, z_k)\in \big(\C\setminus\{1\}\big)^k$ define
\begin{equation*}
\psir_{\vec{z}}(\vec{n})=  \psifwd_{\vec{z}}(\vec{n})=  C^{-1}_q(\vec{n}) \, \sum_{\sigma\in S_k} \prod_{1\leq B<A\leq k} \frac{z_{\sigma(A)}-q^{-1} z_{\sigma(B)}}{z_{\sigma(A)}- z_{\sigma(B)}} \, \prod_{j=1}^{k} (1-z_{\sigma(j)})^{n_j}
\end{equation*}
where $C_q(\vec{n})$ is given in (\ref{Cqdef}). Likewise define
\begin{equation*}
\psil_{\vec{z}}(\vec{n})=  \psibwd_{\vec{z}}(\vec{n})= \sum_{\sigma\in S_k} \prod_{1\leq B<A\leq k} \frac{z_{\sigma(A)}-q z_{\sigma(B)}}{z_{\sigma(A)}- z_{\sigma(B)}} \, \prod_{j=1}^{k} (1-z_{\sigma(j)})^{-n_j}.
\end{equation*}
\end{definition}

We could have just as well defined the right and left eigenfunctions with respect to the operator $\Abwd$ (see Remark \ref{wrtbwdremark} for the implications of this). Finally, observe the symmetry of $\psil$ and $\psir$ with respect to the space-reflection operator
\begin{equation}\label{psisym}
\big(R \psil_{\vec{z}}\big)(\vec{n}) =q^{\frac{k(k-1)}{2}} C_q(\vec{n}) \psir_{\vec{z}}(\vec{n}), \qquad\qquad
\big(R\psir_{\vec{z}}\big)(\vec{n}) = q^{-\frac{k(k-1)}{2}} C_q^{-1}(\vec{n}) \psil_{\vec{z}}(\vec{n}).
\end{equation}

\section{Plancherel formulas}\label{plansec}

In this section we prove a Plancherel formula (Theorem \ref{KqBosonId}) related to the $q$-Boson particle system. We define the transform $\FqBoson$ in which functions are paired with right eigenfunctions. The (candidate) inverse $\JqBoson$ is defined in terms of nested contour integrals, though can also be expressed via a (different) pairing with left eigenfunctions with respect to a (complex) Plancherel measure. The composition of these transforms is written as $\KqBoson:=\JqBoson \FqBoson$. Theorem \ref{KqBosonId} shows that on $\CP{k}$, $\KqBoson$ acts as the identity operator. The proof of this theorem is quite simple and relies on two steps. In the first step we demonstrate a certain symmetry of $\KqBoson$ and in the second step we use elementary residue considerations and this symmetry to prove that $\KqBoson$ acts as the identity.


A key step in our proof has a history and is sometimes called the {\it contour shift argument}. In the related continuum delta Bose gas (cf. Section \ref{deltabosesec}), Heckman and Opdam \cite{HO} used a similar approach to prove a Plancherel formula for that system. They, in turn, attribute it to van den Ban and Schlichtkrull \cite{VS} and ultimately to Helgason, Gangolli and Rozenberg's argument in the proof of the Plancherel theorem for a Riemannian symmetric space $G/K$ \cite{Helbook,GV}. It seems, in fact, that the original instance of this argument goes back to Helgason's 1966 work \cite{Helgason66b}.

The other result we prove in this section is a dual Plancherel formula (Theorem \ref{KqBosonIdDual}) which shows that on $\LP{k}$, $\KqBosonDual:=\FqBoson\JqBoson $ acts as the identity operator. The proof of this dual Plancherel formula goes through a spectral orthogonality of the eigenfunctions $\psir$ and $\psil$  which is given in Proposition \ref{specorth}.

Combining the Placherel formula and the dual Plancherel formula leads to the Plancherel isomorphism given in Theorem \ref{isothm}.

\subsection{The $q$-Boson transform and inverse transform}
We introduce the $q$-Boson transform and the $q$-Boson inverse transform, as well as the contours $\gamma_1,\ldots,\gamma_k$, and $\gamma,\gamma'$. Recall the definitions of the function spaces $\CP{k}$ and $\LP{k}$ from Section \ref{notations}.

\begin{definition}\label{KqTASEPdef}
The $q$-Boson transform $\FqBoson$ takes functions $f\in\CP{k}$ into functions $\FqBoson f\in~\LP{k}$ via
\begin{equation*}
\big(\FqBoson f\big)(\vec{z}) = \llangle f, \psir_{\vec{z}}\rrangle.
\end{equation*}

Fix any set of positively oriented, closed contours $\gamma_1,\ldots, \gamma_k$ chosen so that they all contain $1$, so that the $\gamma_A$ contour contains the image of $q$ times the $\gamma_B$ contour for all $B>A$, and so that $\gamma_k$ is a small enough circle around 1 so as not to contain $q$. Let us also fix contours $\gamma$ and $\gamma'$ where $\gamma$ is a positively oriented closed contour which contains 1 and its own image under multiplication by $q$, and $\gamma'$ contains $\gamma$ and is such that for all $z\in \gamma$ and $w\in \gamma'$, $|1-w|>|1-z|$. See Figure \ref{circontours} for an illustration of a possible set of such contours.

\begin{figure}
\begin{center}
\includegraphics[scale=1.3]{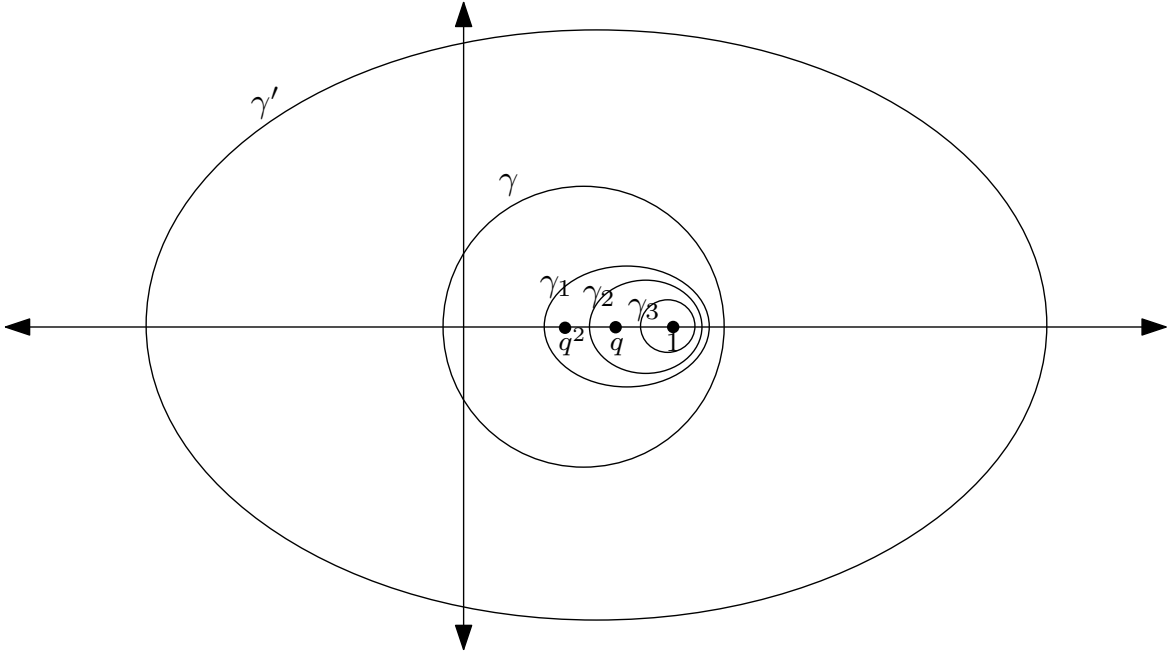}
\end{center}
\caption{For $k=3$ and $q\in (0,1)$, a possible set of contours $\gamma_1,\gamma_2,\gamma_3$ as well as possible contours for $\gamma$ and $\gamma'$ (see Definition \ref{KqTASEPdef}).}\label{circontours}
\end{figure}

The (candidate) $q$-Boson inverse transform $\JqBoson$ takes functions $G\in\LP{k}$ into functions $\JqBoson G\in \CP{k}$ via
\begin{equation}\label{JqBosontrans}
\big(\JqBoson G\big)(\vec{n}) = \oint_{\gamma_1} \frac{dz_1}{2\pi \i} \cdots \oint_{\gamma_k} \frac{dz_k}{2\pi \i}  \prod_{1\leq A<B\leq k} \frac{z_A-z_B}{z_A-qz_B}\, \prod_{j=1}^{k} (1-z_{j})^{-n_j-1}\, G(\vec{z}).
\end{equation}

The composition of the transform and (candidate) inverse transform takes functions $f\in \CP{k}$ into functions $\KqBoson f\in \CP{k}$ via
\begin{eqnarray}\label{KqBosondef}
\big(\KqBoson f \big)(\vec{n}) &=& \big(\JqBoson\FqBoson f\big)(\vec{n}) \\
\nonumber &=&  \oint_{\gamma_1} \frac{dz_1}{2\pi \i} \cdots \oint_{\gamma_k} \frac{dz_k}{2\pi \i}  \prod_{1\leq A<B\leq k} \frac{z_A-z_B}{z_A-qz_B}\, \prod_{j=1}^{k} (1-z_{j})^{-n_j-1} \, \llangle f,\psir_{\vec{z}}\rrangle.
\end{eqnarray}

The composition of the (candidate) inverse transform and the transform takes functions $G\in \LP{k}$ into functions $\KqBosonDual G\in \LP{k}$ via
\begin{eqnarray}\label{KqBosondefDual}
\big(\KqBosonDual G \big)(\vec{n}) &=& \big(\FqBoson \JqBoson F\big)(\vec{z}) \\
\nonumber &=& \sum_{\vec{n}\in \Weyl{k}} \psir_{\vec{z}}(\vec{n}) \oint_{\gamma_1} \frac{dz_1}{2\pi \i} \cdots \oint_{\gamma_k} \frac{dz_k}{2\pi \i}  \prod_{1\leq A<B\leq k} \frac{z_A-z_B}{z_A-qz_B}\, \prod_{j=1}^{k} (1-z_{j})^{-n_j-1} G(\vec{z}).
\end{eqnarray}
\end{definition}

\begin{remark}
It is an easy residue calculation to check that $\FqBoson$ maps $\CP{k}$ into $\LP{k}$ and $\JqBoson$ maps $\LP{k}$ into $\CP{k}$. It is convenient to work with these function spaces since it avoids certain issues such as showing convergence of the summation defining $\FqBoson$. We expect (though do not attempt to prove) that the results which we now develop can be shown to hold true with respect to larger function spaces.
\end{remark}

The operator $\JqBoson$ can be written in terms of $\psil_{\vec{z}}(\vec{n})$ in two ways.
\begin{lemma}\label{321prime}
Consider a symmetric function $G:\C^k\to\C$ and positively oriented, closed contours $\gamma_1,\ldots,\gamma_k$ and $\gamma$ such that:
\begin{itemize}
\item The contour $\gamma$ is a circle around 1 and 0 which contains the image of $\gamma$ multiplied by $q$;
\item For all $1\leq A<B\leq k$, the interior of $\gamma_A$ contains $1$ and the image of $\gamma_B$ multiplied by $q$;
\item For all $1\leq j\leq k$, there exist deformations $D_j$ of $\gamma_j$ to $\gamma$ so that for all $z_1,\ldots, z_{j-1},z_{j+1},\ldots, z_k$ with $z_i\in \gamma$ for $1\leq i<j$, and $z_i\in \gamma_i$ for $j<i\leq k$, the function $z_j\mapsto \V(\vec{z}) G(z_1,\ldots ,z_j,\ldots, z_k)$ is analytic in a neighborhood of the area swept out by the deformation $D_j$.
\end{itemize}
Then,
\begin{equation*}
\big(\JqBoson G\big)(\vec{n}) =  \oint_{\gamma} \cdots \oint_{\gamma} d\mu_{(1)^k}(\vec{w}) \prod_{j=1}^{k} \frac{1}{1-w_j}  \psil_{\vec{w}}(\vec{n}) \, G(\vec{w}),
\end{equation*}
where $(1)^k$ is the partition with $k$ ones and $d\mu_{\lambda}$ is defined in (\ref{dmulambda}).
\end{lemma}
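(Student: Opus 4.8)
The plan is to show that both sides of the asserted identity equal the single symmetric contour integral
\begin{equation*}
I(\vec n):=\oint_{\gamma}\cdots\oint_{\gamma}\prod_{1\le A<B\le k}\frac{w_A-w_B}{w_A-qw_B}\,\prod_{j=1}^{k}(1-w_j)^{-n_j-1}\,G(\vec w)\,\prod_{j=1}^{k}\frac{dw_j}{2\pi\i},
\end{equation*}
which is exactly the integrand defining $\JqBoson G$ in (\ref{JqBosontrans}) with every nested contour $\gamma_j$ replaced by the common contour $\gamma$. This decouples the statement into a purely algebraic identity ($I(\vec n)$ equals the right-hand side) and a contour-deformation identity ($I(\vec n)=(\JqBoson G)(\vec n)$).

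For the algebraic identity I would start from the right-hand side and expand both $\psil_{\vec w}(\vec n)$ (a sum over $\sigma\in S_k$, see Definition \ref{leftrighteig}) and $d\mu_{(1)^k}(\vec w)$ (formula (\ref{dmulambda}) at $\lambda=(1)^k$, so that $\ell(\lambda)=k$, $m_1!=k!$, and $\prod_j w_j^{\lambda_j}q^{\lambda_j^2/2}=q^{k/2}\prod_j w_j$). The determinant $\det[\tfrac{1}{qw_i-w_j}]$, the factor $\prod_j w_j$, and the factors $\prod_j\tfrac{1}{1-w_j}$ and $G(\vec w)$ are all symmetric in $\vec w$; only $\psil_{\vec w}(\vec n)$ is not. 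Hence, relabelling the integration variables in each summand of $\psil$, I can collapse $\sum_{\sigma}$ into $k!$ copies of the $\sigma=\mathrm{id}$ term, which cancels the $1/k!$ coming from $m_1!$. It then remains to combine the Cauchy determinant evaluation
\begin{equation*}
\det\Big[\frac{1}{qw_i-w_j}\Big]_{i,j=1}^{k}=\frac{\prod_{i<j}(qw_i-qw_j)(w_j-w_i)}{\prod_{i,j}(qw_i-w_j)}
\end{equation*}
with the single product $\prod_{1\le B<A\le k}\frac{w_A-qw_B}{w_A-w_B}$ left over from the identity term; after the evident cancellations this collapses to $\frac{q^{k(k-1)/2}}{(q-1)^k\prod_i w_i}\prod_{A<B}\frac{w_A-w_B}{w_A-qw_B}$. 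Multiplying by the prefactor $(1-q)^k(-1)^kq^{-\frac{k^2}{2}}q^{\frac{k}{2}}$ of $d\mu_{(1)^k}$ and by the explicit factor $\prod_j\frac{1}{1-w_j}$, the powers of $q$ cancel, the sign and $(1-q)^k/(q-1)^k$ cancel to give overall constant $1$, and $\frac{1}{\prod_i w_i}\prod_j w_j\cdot\prod_j\frac{1}{1-w_j}$ combines with $\prod_j(1-w_j)^{-n_j}$ to yield $\prod_j(1-w_j)^{-n_j-1}$. This reproduces exactly the integrand of $I(\vec n)$. The step is mechanical, its only subtlety being the bookkeeping that verifies the overall constant equals $1$.

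For the deformation identity $I(\vec n)=(\JqBoson G)(\vec n)$ I would deform the nested contours to the common contour one variable at a time, in the order $z_1,\dots,z_k$ supplied by the hypotheses: at stage $j$ the variables $z_1,\dots,z_{j-1}$ already lie on $\gamma$ and $z_{j+1},\dots,z_k$ still lie on $\gamma_{j+1},\dots,\gamma_k$, and I deform $\gamma_j$ to $\gamma$ along the prescribed $D_j$. Writing the integrand as $\V(\vec z)G(\vec z)\big/\prod_{A<B}(z_A-qz_B)$ times $\prod_j(1-z_j)^{-n_j-1}$, its only poles are at $z_j=1$ (from $G$ and the negative powers) and at $z_A=qz_B$. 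The hypothesis that $\V(\vec z)G(\vec z)$ is analytic on the region swept by $D_j$ forbids crossing the first family, while the nesting conditions ($\gamma_A$ encloses $1$ and $q\gamma_B$, and $\gamma$ encloses $q\gamma$ and hence $q$) keep the poles $z_j=qz_B$ ($B>j$) enclosed and the poles $z_j=z_A/q$ ($A<j$, $z_A\in\gamma$) outside throughout, so no residues are collected. The main obstacle is precisely this analytic step: one must check that the $D_j$ can simultaneously avoid both pole families, and the ordering is essential, since a pole $z_j=qz_B$ that is harmlessly enclosed while $z_B\in\gamma_B$ would be swept across if $z_B$ had already been moved to $\gamma$. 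Once this is verified, $I(\vec n)=(\JqBoson G)(\vec n)$ and the lemma follows.
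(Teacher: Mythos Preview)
Your proof is correct and follows essentially the same approach as the paper: deform the nested contours to the common contour $\gamma$ (no residues collected), and relate the resulting integral to the right-hand side via the Cauchy determinant identity and symmetrization. The paper runs the argument in the opposite direction---it first deforms, then symmetrizes the integrand by writing $\prod_{A<B}\frac{z_A-z_B}{z_A-qz_B}$ as the symmetric factor $\prod_{A\neq B}\frac{z_A-z_B}{z_A-qz_B}$ times $\prod_{B<A}\frac{z_A-qz_B}{z_A-z_B}$, and observes that averaging the latter over $S_k$ produces $\psil_{\vec w}(\vec n)$ while the Cauchy determinant gives $d\mu_{(1)^k}$---whereas you start from the right-hand side and undo the symmetrization. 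Your treatment of the pole bookkeeping during the deformation is in fact more explicit than the paper's, which simply asserts the step ``due to the hypothesis on $G$ and the contours.''
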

\begin{proof}
Due to the hypothesis on $G$ and the contours, the nested contours in (\ref{JqBosontrans}) can be sequentially (starting with $\gamma_1$ to $\gamma_k$) deformed to the single contour $\gamma$ without changing the value of the integral. Since all contours are now the same, we may symmetrize the integrand. Towards this end we rewrite the product
$$ \prod_{1\leq A<B\leq k} \frac{z_A-z_B}{z_A-qz_B} = \prod_{1\leq A\neq B\leq k} \frac{z_A-z_B}{z_A-qz_B}\, \prod_{1\leq B\neq A\leq k} \frac{z_A-q z_B}{z_A-z_B}$$
where the first product on the right-hand side is symmetric. The eigenfunction $\psil_{\vec{z}}(\vec{n})$ comes from this symmetrization, and the Cauchy determinant identity yields the term $d\mu_{(1)^k}(\vec{z})$, thus completing the proof.
\end{proof}

It is useful (especially for later asymptotic purposes) to also record how the operator $\JqBoson$ can be expressed in terms of contour integrals along the single contour $\gamma_k$. The proof of this result is more involved that that above, and given in the appendix.

\begin{lemma}\label{expandlem}
Consider a symmetric function $G:\C^k\to \C$ and positively oriented, closed contours $\gamma_1,\ldots, \gamma_k$ such that:
\begin{itemize}
\item The contour $\gamma_k$ is a circle around 1 and small enough so as not to contain $q$;
\item For all $1\leq A<B\leq k$, the interior of $\gamma_A$ contains $1$ and the image of $\gamma_B$ multiplied by $q$;
\item For all $1\leq j\leq k$, there exist deformations $D_j$ of $\gamma_j$ to $\gamma_k$ so that for all $z_1,\ldots, z_{j-1},z_{j+1},\ldots, z_k$ with $z_i\in \gamma_i$ for $1\leq i<j$, and $z_i\in \gamma_k$ for $j<i\leq k$, the function $z_j\mapsto \V(\vec{z}) G(z_1,\ldots ,z_j,\ldots, z_k)$ is analytic in a neighborhood of the area swept out by the deformation $D_j$.
\end{itemize}
Then,
\begin{equation*}
\big(\JqBoson G\big)(\vec{n}) = \sum_{\lambda\vdash k}\, \oint_{\gamma_k} \cdots \oint_{\gamma_k} d\mu_{\lambda}(\vec{w}) \prod_{j=1}^{\ell(\lambda)} \frac{1}{(w_j;q)_{\lambda_j}}  \psil_{\vec{w}\circ\lambda}(\vec{n}) \, G(\vec{w}\circ \lambda),
\end{equation*}
where the definition of the notation $\vec{w}\circ \lambda$, and $(w_j;q)_{\lambda_j}$ is given in Section \ref{notations} and where
$d\mu_{\lambda}$ is defined in (\ref{dmulambda}).
\end{lemma}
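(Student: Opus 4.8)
The plan is to collapse the nested contours $\gamma_1,\ldots,\gamma_k$ in (\ref{JqBosontrans}) onto the single innermost contour $\gamma_k$, deforming one variable at a time and recording the residues crossed. Writing the rational prefactor as $\prod_{1\le A<B\le k}\tfrac{z_A-z_B}{z_A-qz_B}=\V(\vec z)\big/\prod_{A<B}(z_A-qz_B)$, the analyticity hypothesis on $z_j\mapsto\V(\vec z)G$ shows that the only poles of the integrand lying in the region swept out while shrinking $\gamma_A$ to $\gamma_k$ are at $z_A=qz_B$ with $A<B$; the pole at $z_j=1$ of $(1-z_j)^{-n_j-1}$ (and of $G$) is never crossed, since $1$ lies inside every $\gamma_i$. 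Hence each deformation splits the integral into a term with all contours on $\gamma_k$ plus residue terms at the points $z_A=qz_B$.

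The mechanism driving the reduction is that taking the residue at $z_A=qz_B$ eliminates $z_A$ by setting it equal to $qz_B$; iterating a chain $z_{i_1}=qz_{i_2},\,z_{i_2}=qz_{i_3},\ldots$ collapses a block of variables onto a single free base variable $w$ taking the values $w,qw,\ldots,q^{m-1}w$, which is exactly a string $\vec w\circ(m)$. Crucially, the factors $(1-z_j)^{-n_j-1}$ are \emph{not} symmetric in $j$, so the various ways of distributing the indices $n_1,\ldots,n_k$ over the slots of a string are what will later assemble into the $S_k$-symmetrization. I would package the collapse as a general proposition on nested contour integrals of the form $\oint_{\gamma_1}\cdots\oint_{\gamma_k}\prod_{A<B}\tfrac{z_A-z_B}{z_A-qz_B}\,G(\vec z)\prod_{j}(1-z_j)^{-n_j-1}\tfrac{dz_j}{2\pi\i}$, with $G$ symmetric and analytic on the relevant region, asserting that shrinking all contours to $\gamma_k$ yields a sum over partitions of $\{1,\ldots,k\}$ into strings of single-contour integrals over the free base variables, with the data evaluated along the strings. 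This is the proposition proved in the appendix.

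It then remains to match this set-partition sum with the stated formula. Grouping the terms by the multiset of string lengths gives the sum over partitions $\lambda\vdash k$, the overcounting of orderings among strings of equal length producing the factor $\tfrac{1}{m_1!m_2!\cdots}$. Symmetrizing the free base variables and applying the Cauchy determinant identity to the surviving between-string factors $\tfrac{z_A-z_B}{z_A-qz_B}$ produces $\det\big[\tfrac{1}{w_iq^{\lambda_i}-w_j}\big]$, exactly as in the proof of Lemma \ref{321prime}, while the within-string factors and the residue constants supply $w_j^{\lambda_j}$, the powers $q^{\lambda_j^2/2}$, and the scalars $(1-q)^k(-1)^kq^{-k^2/2}$ appearing in (\ref{dmulambda}). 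The summed-up distribution of the indices $n_j$ over the string slots reconstitutes $\psil_{\vec w\circ\lambda}(\vec n)$, and the extra denominators $\prod_{j}(w_j;q)_{\lambda_j}^{-1}$, being a product of $(1-\cdot)^{-1}$ over all string entries, absorb precisely the mismatch between the exponents $-n_j-1$ in (\ref{JqBosontrans}) and the exponents $-n_j$ appearing in $\psil$; a direct check of the $k=2$, $\lambda=(2)$ summand confirms the overall normalization.

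The main obstacle is this last identification. The contour deformations themselves are routine given the analyticity hypotheses, but one must verify that the iterated residues factor cleanly into separate within-string and between-string pieces, match the between-string piece to the Cauchy determinant and the symmetry factor $\tfrac{1}{m_1!m_2!\cdots}$, and track the signs and half-integer powers of $q$ carried through the nested residues. The most delicate point is confirming that summing over the set partitions together with the assignments of $n_1,\ldots,n_k$ to string slots reproduces the \emph{full} $S_k$-symmetrization defining $\psil_{\vec w\circ\lambda}$, rather than merely a symmetrization of the $\ell(\lambda)$ free base variables.
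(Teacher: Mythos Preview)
Your proposal is correct and follows essentially the same route as the paper: the lemma is deduced from Proposition~\ref{321} in the appendix, whose proof deforms the contours to~$\gamma_k$, indexes the resulting string residues by partitions $\lambda\vdash k$ and lists $I\in S(\lambda)$, and then completes the sum over $I$ to a full sum over $S_k$ (the extra $\sigma$ contribute zero residue, which is exactly your ``most delicate point''). The one organizational device you may find useful is the factorization $\prod_{A<B}\tfrac{y_{\sigma(A)}-y_{\sigma(B)}}{y_{\sigma(A)}-qy_{\sigma(B)}}=\prod_{A\neq B}\tfrac{y_A-y_B}{y_A-qy_B}\cdot\prod_{B<A}\tfrac{y_{\sigma(A)}-qy_{\sigma(B)}}{y_{\sigma(A)}-y_{\sigma(B)}}$, which isolates an $S_k$-invariant factor carrying all the poles (its residue is computed in Lemma~\ref{reslemma} via the Cauchy determinant, giving $d\mu_\lambda$) from a pole-free remainder that, summed over $\sigma$, produces $E^q$ and hence $\psil_{\vec w\circ\lambda}(\vec n)\,G(\vec w\circ\lambda)\prod_j(w_j;q)_{\lambda_j}^{-1}$.
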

\begin{proof}
This lemma follows immediately from Proposition \ref{321}.
\end{proof}

It follows from Lemma \ref{321prime} that for $f\in \CP{k}$
\begin{equation}\label{Kexpformsmall}
\big(\KqBoson f\big)(\vec{n}) = \oint_{\gamma} \cdots \oint_{\gamma} d\mu_{(1)^k}(\vec{w}) \prod_{j=1}^{k} \frac{1}{1-w_j}  \psil_{\vec{w}}(\vec{n}) \llangle f,\psir_{\vec{w}}\rrangle,
\end{equation}
and likewise it follows from Lemma \ref{expandlem} that
\begin{equation}\label{Kexpform}
\big(\KqBoson f\big)(\vec{n}) = \sum_{\lambda\vdash k}\, \oint_{\gamma_k} \cdots \oint_{\gamma_k} d\mu_{\lambda}(\vec{w}) \prod_{j=1}^{\ell(\lambda)} \frac{1}{(w_j;q)_{\lambda_j}}  \psil_{\vec{w}\circ\lambda}(\vec{n}) \llangle f,\psir_{\vec{w}\circ \lambda}\rrangle.
\end{equation}

\begin{definition}\label{bilinearprime}
Define the (symmetric) bilinear pairing $\llangle \cdot ,\cdot \rranglesmall$ on functions $F,G\in \LP{k}$ via
\begin{eqnarray*}
\llangle F,G\rranglesmall &=& \oint_{\gamma} \cdots \oint_{\gamma} d\mu_{(1)^k}(\vec{w}) \prod_{j=1}^{k} \frac{1}{1-w_j}  F(\vec{w}) G(\vec{w})\\
 &=& \sum_{\lambda\vdash k}\, \oint_{\gamma_k} \cdots \oint_{\gamma_k} d\mu_{\lambda}(\vec{w}) \prod_{j=1}^{\ell(\lambda)} \frac{1}{(w_j;q)_{\lambda_j}}  F(\vec{w}\circ\lambda) G(\vec{w}\circ \lambda).
\end{eqnarray*}
\end{definition}
The equivalence of the two expressions on the right-hand sides above is due to Lemmas \ref{321prime} and \ref{expandlem}. Furthermore they imply that we can write
\begin{equation*}
\big(\JqBoson G\big)(\vec{n}) = \llangle \psil(\vec{n}),G\rranglesmall,
\end{equation*}
where $\psil(\vec{n})$ is the function which maps $\vec{z}\mapsto \psil_{\vec{z}}(\vec{n})$.

\begin{remark}
The operator $\JqBoson$ applied to $G$ amounts to integrating $\psil_{\vec{z}}(\vec{n}) G(\vec{z})$ against a measure which is supported on a disjoint sum of subspaces (or contours and strings of specializations) indexed by $\lambda\vdash k$. This measure may be called the Plancherel measure, though it is most certainly complex valued. This should be compared to the case of the Heisenberg XXZ quantum spin chain on $\Z$ \cite{BabThom, BabGut, Gut}, or the continuum delta Bose gas \cite{Oxford,HO} in which the underlying Hamiltonian is Hermitian and the Plancherel measure is positive, and where one can hope to prove $L^2$ isometries with respect to this measure.
\end{remark}

\subsection{Main results}

We may now state and prove the main results of this work.

\begin{theorem}\label{KqBosonId}
The $q$-Boson transform $\FqBoson$ induces an isomorphism between the space $\CP{k}$ and its image with inverse given by $\JqBoson$. Equivalently, $\KqBoson$ acts as the identity operator on $\CP{k}$.
\end{theorem}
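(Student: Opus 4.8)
The plan is to reduce the statement to an identity for the kernel of $\KqBoson$ and then to establish that identity by playing the PT-invariance of the model against a residue (contour shift) computation. Since a residue calculation shows $\FqBoson$ maps $\CP{k}$ into $\LP{k}$ and $\JqBoson$ maps $\LP{k}$ into $\CP{k}$, the claim that $\FqBoson$ is an isomorphism onto its image with inverse $\JqBoson$ is equivalent to the assertion that $\KqBoson=\JqBoson\FqBoson$ acts as the identity on $\CP{k}$. For $f\in\CP{k}$ the sum defining $\llangle f,\psir_{\vec z}\rrangle$ is finite, so I may exchange it with the contour integrals in (\ref{KqBosondef}) and write $(\KqBoson f)(\vec n)=\sum_{\vec m\in\Weyl{k}}K(\vec n,\vec m)f(\vec m)$ with
\begin{equation*}
K(\vec n,\vec m)=\oint_{\gamma_1}\frac{dz_1}{2\pi\i}\cdots\oint_{\gamma_k}\frac{dz_k}{2\pi\i}\prod_{1\leq A<B\leq k}\frac{z_A-z_B}{z_A-qz_B}\,\prod_{j=1}^{k}(1-z_j)^{-n_j-1}\,\psir_{\vec z}(\vec m).
\end{equation*}
Everything then comes down to proving $K(\vec n,\vec m)=\bfone_{\vec n=\vec m}$ for $\vec n,\vec m\in\Weyl{k}$.

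First I would carry out the symmetry step. Collapsing the nested contours onto $\gamma_k$ by Lemma \ref{expandlem} rewrites the kernel in its spectral form
\begin{equation*}
K(\vec n,\vec m)=\sum_{\lambda\vdash k}\oint_{\gamma_k}\cdots\oint_{\gamma_k}d\mu_{\lambda}(\vec w)\prod_{j=1}^{\ell(\lambda)}\frac{1}{(w_j;q)_{\lambda_j}}\,\psil_{\vec w\circ\lambda}(\vec n)\,\psir_{\vec w\circ\lambda}(\vec m).
\end{equation*}
Into this I substitute the two relations in (\ref{psisym}), which encode the PT-invariance (\ref{ptsym}) of Remark \ref{commrem}: rearranging them using $R^2=\Id$ and $C_q\circ R=C_q$ gives $\psil_{\vec z}(\vec n)=q^{\frac{k(k-1)}{2}}C_q(\vec n)\,\psir_{\vec z}(R\vec n)$ and $\psir_{\vec z}(\vec m)=q^{-\frac{k(k-1)}{2}}C_q^{-1}(\vec m)\,\psil_{\vec z}(R\vec m)$. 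Since the measures $d\mu_{\lambda}$ and the string structure are untouched and the powers of $q$ cancel, the sum reassembles into the same kernel with $\vec n,\vec m$ replaced by $R\vec m,R\vec n$, yielding the exact symmetry
\begin{equation*}
K(\vec n,\vec m)=\frac{C_q(\vec n)}{C_q(\vec m)}\,K(R\vec m,R\vec n),
\end{equation*}
which is the non-Hermitian surrogate for self-adjointness and is consistent with the target $\bfone_{\vec n=\vec m}$.

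The second step is the residue computation. Working from the nested-contour form, I would shrink $\gamma_1,\dots,\gamma_k$ inward toward $\gamma_k$: the only singularities met are the poles of the factors $(z_A-qz_B)^{-1}$, and the residues taken there at $z_A=qz_B$ are exactly what builds the $q$-strings $\vec w\circ\lambda$ and the measures $d\mu_{\lambda}$ above (the purely deformed contribution being the $\lambda=(1)^k$ term). On the innermost small circle $\gamma_k$ about $1$ these factors are regular, so the integral is controlled entirely by the residue at $z_j=1$; after the substitution $w_j=1-z_j$ each evaluation reduces to the one-variable fact $\oint(1-z)^{m-n-1}\tfrac{dz}{2\pi\i}=\bfone_{n=m}$ dressed by the string combinatorics, the $k=1$ instance of which already integrates the single $\lambda=(1)$ term to $1$. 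A degree count in the $w_j$, using that $\frac{z_A-z_B}{z_A-qz_B}$ vanishes to first order as $z_A\to z_B$, then forces $K(\vec n,\vec m)=0$ for $\vec n\neq\vec m$ in the range of configurations directly reachable by the inward deformation; the symmetry of the first step reflects this vanishing onto the complementary range, and the diagonal normalization $K(\vec n,\vec n)=1$ follows from the explicit residue.

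The step I expect to be the main obstacle is the residue bookkeeping in the second step: organizing the poles crossed during the simultaneous inward deformation of the $k$ nested contours into the partition-indexed string terms, and then checking that the off-diagonal residues at $z_j=1$ genuinely cancel while the diagonal produces exactly $1$. The non-Hermiticity of $\Abwd$ is what makes this delicate, since there is no positivity or orthogonality to lean on; the whole argument hinges on the PT-symmetry of the first step standing in for self-adjointness, exactly as Hermiticity is used in the Heckman-Opdam treatment of the continuum delta Bose gas.
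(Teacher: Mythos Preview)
Your first step --- extracting the kernel symmetry $K(\vec n,\vec m)=\tfrac{C_q(\vec n)}{C_q(\vec m)}K(R\vec m,R\vec n)$ from the PT-relations (\ref{psisym}) --- matches the paper exactly; this is precisely (\ref{claimedkbos}) written at the kernel level.

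The second step, however, does not line up with the paper's argument, and as written it has a genuine gap. You propose to deform all contours \emph{inward} to $\gamma_k$, pass to the string expansion, and then run a ``degree count'' at the pole $z_j=1$ on the small circle. But the powers of $(1-z_j)$ coming from $\psir_{\vec z}(\vec m)$ are distributed across a sum over $\sigma\in S_k$ and, after the string expansion, across all partitions $\lambda$; the claimed reduction to the one-variable residue $\oint(1-z)^{m-n-1}\tfrac{dz}{2\pi\i}=\bfone_{n=m}$ ``dressed by string combinatorics'' is not justified and would require exactly the delicate bookkeeping you flag as the obstacle. More importantly, inward deformation alone gives inequalities in only one direction; you lean on the symmetry to reflect to the complementary range, but never explain why the two ranges together exhaust all $\vec n\neq\vec m$ or why no overlap survives.

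The paper avoids the string expansion entirely for the vanishing argument and instead runs a coordinate-by-coordinate \emph{bidirectional} contour shift on the nested integral itself. In the original nested form (\ref{z1res}) one \emph{expands} $\gamma_1$ outward to infinity: the only pole is at $z_1=\infty$, and its residue can be nonzero only if $n_1\le m_1$. In the PT-reflected nested form (\ref{zkres}) one \emph{shrinks} $\gamma_k$ inward to $1$: now the only pole is at $z_k=1$, forcing $m_1\le n_1$. Hence $n_1=m_1$; one then evaluates that residue and repeats the argument on the next pair of contours $\gamma_2,\gamma_{k-1}$, obtaining $n_2=m_2$, and so on. This is the ``contour shift argument'' the paper refers to, and the outward-to-infinity half is the ingredient your proposal is missing.

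For the diagonal $K(\vec n,\vec n)$ the paper again expands all contours outward to infinity in the nested form. Only those $\sigma\in S_k$ that permute within clusters of $\vec n$ survive; summing over each cluster via the identity $\sum_{\sigma\in S_m}\prod_{B<A}\tfrac{z_{\sigma(A)}-q^{-1}z_{\sigma(B)}}{z_{\sigma(A)}-z_{\sigma(B)}}=m!_{q^{-1}}$ produces exactly $C_q(\vec n)^{-1}\cdot(-1)^k q^{-\frac{k(k-1)}{2}}\prod_i(c_i)!_q=1$, which is where the normalization constant in $C_q$ is fixed. Your sketch gestures at this but does not isolate the mechanism.
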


\begin{proof}
In order to prove this theorem we must show that on functions in $\CP{k}$ the operator $\KqBoson=\Id$. Before proving that we show the following property (which follows from the PT-invariance of the eigenfunctions, cf. Remark \ref{commrem}) of $\KqBoson$: For any functions $f,g\in \CP{k}$,
\begin{equation}\label{claimedkbos}
\llangle \KqBoson f,g\rrangle = \llangle f ,(C_q R)^{-1} \KqBoson (C_q R g)\rrangle.
\end{equation}
In order to prove this claim we utilize the formula (\ref{Kexpformsmall}) for $\big(\KqBoson f\big)(\vec{z})$ (we could just as well use (\ref{Kexpform} instead with immediately modifications to the next set of equations). From this as well as (\ref{bilinflip}) and (\ref{psisym}) it follows that
\begin{eqnarray*}
\llangle \KqBoson f,g\rrangle &=& \oint_{\gamma} \cdots \oint_{\gamma} d\mu_{(1)^k}(\vec{w}) \prod_{j=1}^{k} \frac{1}{1-w_j}  \llangle \psil_{\vec{w}},g\rrangle\, \llangle f,\psir_{\vec{w}}\rrangle\\
&=&   \oint_{\gamma} \cdots \oint_{\gamma} d\mu_{(1)^k}(\vec{w}) \prod_{j=1}^{k} \frac{1}{1-w_j}  \llangle q^{\frac{k(k-1)}{2}}C_q \psir_{\vec{w}},R g\rrangle\, \llangle Rf, q^{-\frac{k(k-1)}{2}} C_q^{-1} \psil_{\vec{w}}\rrangle\\
&=&  \oint_{\gamma} \cdots \oint_{\gamma} d\mu_{(1)^k}(\vec{w}) \prod_{j=1}^{k} \frac{1}{1-w_j} \llangle \psil_{\vec{w}}, C_q^{-1} Rf \rrangle \,  \llangle C_q R g, \psir_{\vec{w}}\rrangle\\
&=& \llangle C_q^{-1} R f , \KqBoson(C_q Rg)\rrangle.
\end{eqnarray*}
Moving the $C_q^{-1}R$ operator to the right-hand side and using the fact that $R=R^{-1}$, yields (\ref{claimedkbos}).

We now show that for any $\vec{x}\in \Weyl{k}$, if $f(\vec{n}) = \bfone_{\vec{n}=\vec{x}}$ then for $\vec{y}\in \Weyl{k}$
\begin{equation*}
\big(\KqBoson f\big)(\vec{y}) = \bfone_{\vec{y}=\vec{x}}.
\end{equation*}
This claim (which by linearity of $\KqBoson$ completes the proof of the theorem) is shown in two parts: First we show that $\big(\KqBoson f\big)(\vec{y})$ is zero unless $\vec{y}=\vec{x}$, and then we compute the value of $\big(\KqBoson f\big)(\vec{x})$.

Let $g(\vec{n}) = \bfone_{\vec{n}=\vec{y}}$. Then showing the first part is equivalent to showing that $\llangle \KqBoson f ,g\rrangle$ may only be non-zero if $\vec{y}=\vec{x}$.
With these choices for $f$ and $g$ we have that
\begin{equation}\label{z1res}
\llangle \KqBoson f ,g\rrangle =  \oint_{\gamma_1} \frac{dz_1}{2\pi \i} \cdots \oint_{\gamma_k} \frac{dz_k}{2\pi \i}  \prod_{1\leq A<B\leq k} \frac{z_A-z_B}{z_A-qz_B}\, \prod_{j=1}^{k} (1-z_{j})^{-y_j-1} \psir_{\vec{z}}(\vec{x}).
\end{equation}
Since we may (without crossing poles) expand the $\gamma_1$ contour to infinity, the integral in $z_1$ can be evaluated by taking the residue at $z_1=\infty$. From this and the definition of $\psir$ we find that this integral may be non-zero only if $y_1\leq x_1$.
On the other hand, by (\ref{claimedkbos}),
\begin{align}\label{zkres}
&\llangle \KqBoson f ,g\rrangle = \llangle f ,(C_q R)^{-1} \KqBoson (C_q R g)\rrangle \\
\nonumber&=\frac{C_q(\vec{y})}{C_q(\vec{x})}\oint_{\gamma_1} \frac{dz_1}{2\pi \i} \cdots \oint_{\gamma_k} \frac{dz_k}{2\pi \i}  \prod_{1\leq A<B\leq k} \frac{z_A-z_B}{z_A-qz_B}\, \prod_{j=1}^{k} (1-z_{j})^{x_{k-j+1}-1} \psir_{\vec{z}}(-y_k,\ldots,-y_1).
\end{align}
Since we may (without crossing poles) shrink the $\gamma_k$ contour to $1$, the integral in $z_k$ can be evaluated by taking the residue at $z_k=1$. From this and the definition of $\psir$ we find that this integral may be non-zero only if $x_1\leq y_1$.

Combining the above two observations, we find that for $\llangle \KqBoson f ,g\rrangle$ to be non-zero, we must have $x_1=y_1$. Now assuming that $x_1=y_1$, we may evaluate the residue at $z_1=\infty$ in (\ref{z1res}) and proceed to expand (without crossing poles) the $\gamma_2$ contour to infinity. From this we find that this integral may be non-zero only if $y_2\leq x_2$. Likewise, we may evaluate the residue at $z_k=1$ in (\ref{zkres}) and proceed to shrink the $\gamma_{k-1}$ contour to 1. From this we find that this integral may be non-zero only if $x_2\leq y_2$. Thus $x_2=y_2$. Proceeding in this manner we arrive at the first part of the claim, that $\big(\KqBoson f\big)(\vec{y})$ is zero unless $\vec{y}=\vec{x}$.

Now we evaluate $\big(\KqBoson f\big)(\vec{x})$. Observe that by sequentially deforming the contours $\gamma_1$ through $\gamma_k$ to infinity, we find that
\begin{align*}
&\big(\KqBoson f\big)(\vec{x}) =C_q^{-1}(\vec{x}) (-1)^k \\
& \times \Res{z_k=\infty}\cdots \Res{z_1=\infty} \prod_{1\leq A<B\leq k} \frac{z_A-z_B}{z_A-qz_B}\, \sum_{\sigma\in S_k} \prod_{1\leq B<A\leq k} \frac{z_{\sigma(A)}-q^{-1}z_{\sigma(B)}}{z_{\sigma(A)}-z_{\sigma(B)}} \prod_{j=1}^{k} (1-z_{j})^{-x_j+x_{\sigma^{-1}(j)}-1}.
\end{align*}
The residue will be non-zero only for those terms in the summation over $S_k$ in which $\sigma$ only permutes within the clusters of $\vec{x}$ (i.e., if $k=5$ and $x_1=x_2=x_3>x_4=x_5$, then $\sigma$ should stabilize the set $(1,2,3)$ and the set $(4,5)$). Let $\vec{c}=(c_1,\ldots, c_M)$ record the cluster sizes of $\vec{x}$ (see Section \ref{notations}), and  define the set $C(\vec{c},i)=\big\{c_1+\cdots +c_{i-1}+1, \ldots, c_{1}+\cdots + c_{i}\big\}$. We may write such a $\sigma\in S_k$ as a product $\sigma_1\cdots \sigma_M$ where each $\sigma_i$ permutes the elements of $C(\vec{c},i)$ and fixes $\{1,\ldots, k\}\setminus C(\vec{c},i)$. We denote the set of all such permutations $\sigma_i \in S(\vec{c},i)$.

Owing to this decomposition (and the fact that for such $\sigma$, $x_j = x_{\sigma^{-1}(j)}$) we may rewrite
\begin{align*}
&\sum_{\sigma\in S_k} \prod_{1\leq B<A\leq k} \frac{z_{\sigma(A)}-q^{-1}z_{\sigma(B)}}{z_{\sigma(A)}-z_{\sigma(B)}} \prod_{j=1}^{k} (1-z_{j})^{-x_j+x_{\sigma^{-1}(j)}-1}\\
&=\prod_{j=1}^{k} \frac{1}{1-z_j} \left( \prod_{i=1}^{M} \sum_{\sigma_i \in S(\vec{c},i)} \prod_{\substack{1\leq B\leq A\leq k\\A,B\in C(\vec{c},i)}}  \frac{z_{\sigma_i(A)}-q^{-1}z_{\sigma_i(B)}}{z_{\sigma_i(A)}-z_{\sigma_i(B)}} \right)
\left( \prod_{1\leq j<i\leq M} \prod_{\substack{A\in C(\vec{c},i)\\B\in C(\vec{c},j)}}  \frac{z_{A}-q^{-1}z_{B}}{z_{A}-z_{B}}\right).
\end{align*}

Note the identity \cite[III.1(1.4)]{M} that for any $m\geq 1$,
\begin{equation}\label{mqinverse}
\sum_{\sigma \in S_m} \prod_{1\leq B\leq A\leq m} \frac{z_{\sigma(A)}-q^{-1} z_{\sigma(B)}}{z_{\sigma(A)}-z_{\sigma(B)}} =m!_{q},
\end{equation}

It follows from the above identity that
\begin{align}\label{kqbosonthmend}
&\big(\KqBoson f\big)(\vec{x}) = C_q^{-1}(\vec{x}) (-1)^k \\
&\nonumber\times \prod_{i=1}^{M} (c_i)!_{q^{-1}}\,  \Res{z_k=\infty}\cdots \Res{z_1=\infty} \prod_{1\leq A<B\leq k} \frac{z_A-z_B}{z_A-qz_B}\, \prod_{j=1}^{k} \frac{1}{1-z_j} \prod_{1\leq j<i\leq M} \prod_{\substack{A\in C(\vec{c},i)\\B\in C(\vec{c},j)}}  \frac{z_{A}-q^{-1}z_{B}}{z_{A}-z_{B}}.
\end{align}
The residue is now easily computed since the expression has simple poles at $\infty$ coming from the $(1-z_j)^{-1}$ terms: It equals
\begin{equation}\label{starstars}
(-1)^k \Lim{z_k=\infty}\cdots \Lim{z_1=\infty} \prod_{1\leq A<B\leq k} \frac{z_A-z_B}{z_A-qz_B}\, \prod_{1\leq j<i\leq M} \prod_{\substack{A\in C(\vec{c},i)\\B\in C(\vec{c},j)}}  \frac{z_{A}-q^{-1}z_{B}}{z_{A}-z_{B}}.
\end{equation}
The limit of the first product is $1$ due to the order in which the limits are taken. For the second multiplicative term, since $j<i$ it implies that $B<A$, each factor of $\frac{z_A-q^{-1}z_B}{z_A-z_B}$ limits to $q^{-1}$. For each $j<i$ there are a total of $c_ic_j$ such factors. Thus (\ref{starstars}) is equal to
$$
(-1)^k \prod_{i=1}^{M} (c_i)!_{q^{-1}} \prod_{1\leq j<i\leq M} q^{-c_i c_j}
$$
Note that
$
(c)!_{q^{-1}} = q^{-\frac{c(c-1)}{2}} (c)!_q
$
and, since $k=\sum_{i=1}^{M} c_i$, that
$$
\prod_{i=1}^{M} q^{-\frac{c_i(c_i-1)}{2}} \prod_{1\leq j<i\leq M} q^{-c_i c_j} = q^{-\frac{k(k-1)}{2}}.
$$
Combining these observations with (\ref{kqbosonthmend}) shows that $\big(\KqBoson f\big)(\vec{x})=1$, and thus completes the proof of the theorem.
\end{proof}

\begin{remark}\label{wrtbwdremark}
The fact that $\KqBoson=\Id$ obviously implies that
\begin{equation}\label{altplan}
(C_q R)^{-1} \KqBoson C_q R = \Id
\end{equation}
as well. This alternative form of the Plancherel formula effectively switches the roles of $\psil$ and $\psir$.
\end{remark}

We turn now to a dual Plancherel formula.

\begin{theorem}\label{KqBosonIdDual}
The inverse $q$-Boson transform $\JqBoson$ induces an isomorphism between the space $\LP{k}$ and its image with inverse given by $\FqBoson$. Equivalently, $\KqBosonDual$ acts as the identity operator on $\LP{k}$.
\end{theorem}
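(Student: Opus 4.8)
The plan is to show directly that $\KqBosonDual=\FqBoson\JqBoson$ acts as the identity on $\LP{k}$, by unfolding the composition into a \emph{spatial} pairing of the two families of eigenfunctions and recognizing the resulting kernel as a Hall--Littlewood Cauchy kernel. First I would record a purely formal reduction. Since Theorem \ref{KqBosonId} gives $\JqBoson\FqBoson=\Id$ on $\CP{k}$, the operator $Q:=\FqBoson\JqBoson$ satisfies $Q^2=\FqBoson(\JqBoson\FqBoson)\JqBoson=\FqBoson\JqBoson=Q$ and $Q\FqBoson=\FqBoson(\JqBoson\FqBoson)=\FqBoson$; hence $Q$ is the idempotent projection of $\LP{k}$ onto $\mathrm{Im}(\FqBoson)$, and $\KqBosonDual=\Id$ is \emph{equivalent} to the surjectivity of $\FqBoson$ onto $\LP{k}$. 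This already isolates the only missing ingredient.

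To establish that ingredient I would compute $\FqBoson\JqBoson G$ explicitly. Writing $(\JqBoson G)(\vec{n})=\llangle \psil(\vec{n}),G\rranglesmall$ and $(\FqBoson f)(\vec{z})=\sum_{\vec{n}\in\Weyl{k}}f(\vec{n})\psir_{\vec{z}}(\vec{n})$, one gets
\[
\big(\KqBosonDual G\big)(\vec{z}) = \sum_{\vec{n}\in\Weyl{k}} \psir_{\vec{z}}(\vec{n}) \oint_{\gamma'}\!\cdots\!\oint_{\gamma'} d\mu_{(1)^k}(\vec{w}) \prod_{j=1}^{k}\frac{1}{1-w_j}\,\psil_{\vec{w}}(\vec{n})\,G(\vec{w}).
\]
The essential device is to place $\vec{z}$ on the inner contour $\gamma$ and the $\vec{w}$ variables on the outer contour $\gamma'$, so that $|1-w_i|>|1-z_j|$; this is precisely the regime in which the summand behaves like $\prod_j\big((1-z)/(1-w)\big)^{n_j}$ and the spatial sum is controlled, allowing the sum over $\vec{n}$ to be exchanged with the $\vec{w}$-integration. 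One is then left to evaluate the spatial kernel $\sum_{\vec{n}\in\Weyl{k}}\psir_{\vec{z}}(\vec{n})\psil_{\vec{w}}(\vec{n})$, which is exactly the spectral orthogonality statement (Proposition \ref{specorth}). Because the factor $C_q^{-1}(\vec{n})$ inside $\psir$ together with the cluster normalization in $C_q$ is engineered to match the Hall--Littlewood structure, this sum (after resolving the two permutation sums and the ordering $n_1\ge\cdots\ge n_k$) collapses via the Cauchy--Littlewood identity for Hall--Littlewood symmetric functions (with $t=q$) to an explicit rational kernel $K(\vec{z},\vec{w})$ of product form, with simple poles along $w_i=z_j$.

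With $K(\vec{z},\vec{w})$ in hand, the final step is to see that integrating $K(\vec{z},\vec{w})G(\vec{w})$ against $d\mu_{(1)^k}(\vec{w})\prod_j(1-w_j)^{-1}$ reproduces $G(\vec{z})$. Since $G\in\LP{k}$ is a symmetric Laurent polynomial in the $(1-w_j)$, I would deform the $\vec{w}$-contours from $\gamma'$ back inward toward $\gamma$; the only singularities crossed are the simple poles of $K$ at $w_i=z_j$, and the residues collapse the $k$-fold integral onto the diagonal $\vec{w}=\vec{z}$, returning $G(\vec{z})$ (the symmetric structure guaranteeing the output again lies in $\LP{k}$). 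Since $\JqBoson G$ is compactly supported, the original finite-support definition of $\FqBoson\JqBoson G$ agrees with this $(\gamma,\gamma')$ computation by contour deformation and analytic continuation, so nothing is lost in passing to the convergent regime. As a cross-check, this simultaneously confirms the surjectivity demanded by the formal reduction above, since it exhibits $G=\FqBoson(\JqBoson G)\in\mathrm{Im}(\FqBoson)$.

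The main obstacle is Proposition \ref{specorth}: making the a priori divergent bilateral spatial sum rigorous by pinning down the exact $(\gamma,\gamma')$ convergence regime, and then recognizing it as a Hall--Littlewood Cauchy sum so that the Cauchy--Littlewood identity applies. The bookkeeping of the $C_q$-weights, the cluster sizes $c_i$, and the translation invariance of $\Weyl{k}$ is delicate, and it is this step---rather than the residue extraction or the formal idempotent reduction---that carries the genuine content. A more economical but less explicit alternative would be to prove surjectivity of $\FqBoson$ directly, by checking that the functions $\vec{z}\mapsto\psir_{\vec{z}}(\vec{x})=\big(\FqBoson\bfone_{\vec{x}}\big)(\vec{z})$, $\vec{x}\in\Weyl{k}$, span $\LP{k}$ via a triangularity argument against the monomial symmetric basis; but verifying that triangularity ultimately invokes the same Hall--Littlewood input.
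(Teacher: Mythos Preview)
Your overall architecture matches the paper's: reduce $\KqBosonDual=\Id$ to the spectral orthogonality of Proposition \ref{specorth}, then extract $G(\vec{z})$ by a residue computation. The idempotent reduction $Q^2=Q$ is a clean extra observation the paper does not make, though it is not needed once one proves $\KqBosonDual G=G$ directly.

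The genuine gap is your proposed proof of Proposition \ref{specorth}. You claim that the spatial sum $\sum_{\vec{n}\in\Weyl{k}}\psir_{\vec{z}}(\vec{n})\psil_{\vec{w}}(\vec{n})$ ``collapses via the Cauchy--Littlewood identity for Hall--Littlewood symmetric functions''. It does not, because $\psir_{\vec{z}}$ and $\psil_{\vec{w}}$ are \emph{not} Hall--Littlewood polynomials. The plane-wave factors $(1-z_{\sigma(j)})^{\pm n_j}$ are functions of $1-z_j$, while the scattering coefficients $\tfrac{z_{\sigma(A)}-q^{\pm 1}z_{\sigma(B)}}{z_{\sigma(A)}-z_{\sigma(B)}}$ are functions of $z_j$; no change of variables aligns both with the Hall--Littlewood form simultaneously. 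The paper flags this explicitly (end of Section \ref{afortiori} and the introduction): the $q$-Boson eigenfunctions only \emph{degenerate} to Hall--Littlewood polynomials in a limit where $(1-z_j)$ is replaced by $(-z_j)$.

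The paper's proof of Proposition \ref{specorth} is built precisely to bridge this mismatch. It embeds the identity in a one-parameter family (the $\e$-deformation of Section \ref{epdef}, with the $q$-Boson system at $\e=1$ and the genuine Hall--Littlewood system at $\e=0$), proves the $\e=0$ endpoint via Cauchy--Littlewood (Proposition \ref{epzeroorth}), and then shows by an explicit combinatorial telescoping (Lemmas \ref{shiftlem}--\ref{llem}) that the $\e$-derivative of the orthogonality relation vanishes identically. So Cauchy--Littlewood is indeed the ultimate input, as you correctly intuited, but the route to it is indirect and constitutes the real content of the argument; your sketch skips exactly this bridge. Your alternative triangularity suggestion would face the same obstruction.
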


\begin{proof}
Observe that for any $G\in \LP{k}$, $\big(\KqBosonDual G\big)(\vec{z})$ can be written (due to Lemma \ref{321prime}) as
\begin{equation}\label{Gid}
\big(\KqBosonDual G\big)(\vec{z}) =  \sum_{\vec{n}\in \Weyl{k}} \psir_{\vec{z}}(\vec{n}) \oint_{\gamma}\frac{dw_1}{2\pi \i}\cdots \oint_{\gamma}\frac{dw_k}{2\pi \i} d\mu_{(1)^k}(\vec{w}) \prod_{j=1}^{k} \frac{1}{1-w_j} \psil_{\vec{w}}(\vec{n}) G(\vec{w}).
\end{equation}
Thus it suffices to prove that the right-hand side above is equal to $G(\vec{z})$.

Towards this aim, consider any complex valued Laurent polynomial $F(\vec{z})$ in the variables $1-z_1,\ldots, 1-z_k$. In order to finish the proof of the lemma it suffices to show the following integrated version holds for all such $F$:
\begin{align}\label{Gidint}
&\oint_{\gamma}\frac{dz_1}{2\pi \i}\cdots \oint_{\gamma}\frac{dz_k}{2\pi \i}  F(\vec{z})G(\vec{z})  \\
\nonumber &= \oint_{\gamma}\frac{dz_1}{2\pi \i}\cdots \oint_{\gamma}\frac{dz_k}{2\pi \i} F(\vec{z})\sum_{\vec{n}\in \Weyl{k}} \psir_{\vec{z}}(\vec{n}) \oint_{\gamma}\frac{dw_1}{2\pi \i}\cdots \oint_{\gamma}\frac{dw_k}{2\pi \i} d\mu_{(1)^k}(\vec{w}) \prod_{j=1}^{k} \frac{1}{1-w_j} \psil_{\vec{w}}(\vec{n}) G(\vec{w}).
\end{align}

From the Cauchy determinant identity
\begin{equation}\label{23star}
d\mu_{(1)^k}(\vec{w})= (-1)^{\frac{k(k-1)}2}\frac{1}{k!}\, \frac{\V(\vec{w})^2}	{\prod_{A\ne B}(w_A-qw_B)} \,\prod_{j=1}^{k} \frac{dw_j}{2\pi \i}.
\end{equation}
Interchanging the $\vec{z}$ integration and $\vec{n}$ summation the right-hand side of (\ref{Gidint}) can be brought to
\begin{align*}
&(-1)^{\frac{k(k-1)}2}\frac{1}{k!}\sum_{\vec{n}\in \Weyl{k}} \oint_{\gamma}\frac{dz_1}{2\pi \i}\cdots \oint_{\gamma}\frac{dz_k}{2\pi \i}\oint_{\gamma}\frac{dw_1}{2\pi \i}\cdots \oint_{\gamma}\frac{dw_k}{2\pi \i} \frac{\V(\vec{w})\V(\vec{w})}{\prod_{A\ne B}(w_A-qw_B)}\prod_{j=1}^{k} \frac{1}{1-w_j} \psir_{\vec{z}}(\vec{n}) \psil_{\vec{w}}(\vec{n}) F(\vec{z})G(\vec{w})
\\&=\oint_{\gamma}\frac{dw_1}{2\pi \i}\cdots \oint_{\gamma}\frac{dw_k}{2\pi \i}	F(\vec{w}) \frac{1}{k!}\sum_{\sigma\in S_k}G(\sigma \vec{w}) = \oint_{\gamma}\frac{dw_1}{2\pi \i}\cdots \oint_{\gamma}\frac{dw_k}{2\pi \i} F(\vec{w})G(\vec{w}).
\end{align*}
The equality between the first and second lines is an application of Proposition \ref{specorth} (given below and proved independently in Section \ref{semidisc}). This matches the left-hand side of (\ref{Gidint}) and completes the proof of this theorem.
\end{proof}

\begin{remark}\label{conjrem}
It should be possible to extend the dual Plancherel formula to more degenerate classes of functions such as those of the form $G(\vec{z}'\circ\lambda)$ where $\lambda\vdash k$ and $\vec{z}'$ has $\ell(\lambda)$ variables. In this direction we conjecture that the following holds when the contour $\gamma$ and class of functions is chosen suitably:
$$
	G(\vec{z}'\circ\lambda)= \sum_{\vec{n}\in \Weyl{k}} \psir_{\vec{z}'\circ\lambda}(\vec{n})
	\oint_{\gamma}\frac{dw'_1}{2\pi \i}\ldots\oint_{\gamma}\frac{dw'_{\ell(\lambda)}}{2\pi \i} d\mu_{\lambda}(\vec{w}')\prod_{j=1}^{\ell}\frac{1}{(w'_j,q)_{\lambda_j}}
	\psil_{\vec{w}'\circ\lambda}(\vec{n}) G(\vec{w}'\circ\lambda).
$$
This result is related to the conjectured spectral orthogonality given in Remark \ref{orthconj}.
\end{remark}

We may combine Theorems \ref{KqBosonId} and \ref{KqBosonIdDual} we arrive to the following result.

\begin{theorem}\label{isothm}
The $q$-Boson transform $\FqBoson$ induces an isomorphism between $\CP{k}$ and $\LP{k}$ with inverse given by $\JqBoson$. Moreover, for any $f,g\in \CP{k}$
\begin{equation}\label{res1}
\llangle f,g \rrangle = \llangle \FqBoson(\mathcal{P} f),\FqBoson g\rranglesmall,
\end{equation}
and for any $F,G\in \LP{k}$
\begin{equation}\label{res2}
\llangle \mathcal{P}^{-1}(\JqBoson F),\JqBoson G \rrangle = \llangle F,G\rranglesmall.
\end{equation}
Here $\mathcal{P}:\CP{k}\to \CP{k}$ is defined via its action $(\mathcal{P}g)(\vec{n}) = (-1)^k C_q(\vec{n})(R g)(\vec{n})$ and is the operator which maps right eigenfunctions to left eigenfunctions.
\end{theorem}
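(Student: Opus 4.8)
The plan is to treat Theorem \ref{isothm} as the clean assembly of the two preceding Plancherel statements together with two elementary symmetry properties of the operator $\mathcal{P}$; there is no deep new content, so the work is in bookkeeping. First I would dispatch the isomorphism claim. The remark after Definition \ref{KqTASEPdef} already gives that $\FqBoson$ maps $\CP{k}$ into $\LP{k}$ and $\JqBoson$ maps $\LP{k}$ into $\CP{k}$. Theorem \ref{KqBosonId} asserts $\KqBoson=\JqBoson\FqBoson=\Id$ on $\CP{k}$, and Theorem \ref{KqBosonIdDual} asserts $\KqBosonDual=\FqBoson\JqBoson=\Id$ on $\LP{k}$. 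These two identities say precisely that $\FqBoson\colon\CP{k}\to\LP{k}$ and $\JqBoson\colon\LP{k}\to\CP{k}$ are mutually inverse bijections, so $\FqBoson$ is an isomorphism with inverse $\JqBoson$, and nothing more is needed for the first assertion.

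Next I would isolate the two facts about $\mathcal{P}$ used to produce \eqref{res1}. First, $\mathcal{P}$ is self-adjoint for $\llangle\cdot,\cdot\rrangle$, i.e. $\llangle\mathcal{P}f,g\rrangle=\llangle f,\mathcal{P}g\rrangle$; this follows from the substitution $\vec{n}\mapsto R\vec{n}$ in the defining sum, using that $C_q$ depends only on cluster sizes and is therefore reflection invariant, that $R^2=\Id$, and the flip identity \eqref{bilinflip}. Second, $\mathcal{P}$ carries right eigenfunctions to left eigenfunctions: applying $(\mathcal{P}g)(\vec{n})=(-1)^kC_q(\vec{n})(Rg)(\vec{n})$ to $\psir_{\vec{z}}$ and invoking the PT-symmetry \eqref{psisym} gives $\mathcal{P}\psir_{\vec{z}}=\psil_{\vec{z}}$ (the accompanying power of $q$ and sign are a routine constant that I would absorb into the normalization of $\mathcal{P}$, matching its description as the map sending right to left eigenfunctions).

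With these in hand, the proof of \eqref{res1} is a substitution into the completeness expansion. Using $\KqBoson=\Id$ in the form \eqref{Kexpformsmall} (equivalently Corollary \ref{compl}),
\[
f(\vec{n}) = \oint_{\gamma}\cdots\oint_{\gamma} d\mu_{(1)^k}(\vec{w}) \prod_{j=1}^{k}\frac{1}{1-w_j}\, \psil_{\vec{w}}(\vec{n})\, (\FqBoson f)(\vec{w}).
\]
I would then compute $\llangle f,g\rrangle=\sum_{\vec{n}\in\Weyl{k}}f(\vec{n})g(\vec{n})$, insert the completeness expansion for $g$, and interchange the sum over $\vec{n}$ with the $\vec{w}$-contour integral. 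The inner sum $\sum_{\vec{n}}f(\vec{n})\psil_{\vec{w}}(\vec{n})=\llangle f,\psil_{\vec{w}}\rrangle$ is rewritten, via self-adjointness and $\mathcal{P}\psir_{\vec{w}}=\psil_{\vec{w}}$, as $\llangle\mathcal{P}f,\psir_{\vec{w}}\rrangle=(\FqBoson(\mathcal{P}f))(\vec{w})$. What survives is exactly the spectral pairing of $\FqBoson(\mathcal{P}f)$ against $\FqBoson g$ from Definition \ref{bilinearprime}, which is \eqref{res1}. For \eqref{res2}, since $\FqBoson$ is now invertible with inverse $\JqBoson$, I substitute $f=\mathcal{P}^{-1}\JqBoson F$ and $g=\JqBoson G$ into \eqref{res1}; then $\FqBoson(\mathcal{P}f)=\FqBoson\JqBoson F=F$ and $\FqBoson g=\FqBoson\JqBoson G=G$, yielding $\llangle\mathcal{P}^{-1}(\JqBoson F),\JqBoson G\rrangle=\llangle F,G\rranglesmall$.

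The main obstacle is not analytic. Because $f$ and $g$ are compactly supported, the $\vec{n}$-sum appearing after inserting the expansion is finite, so the interchange of summation and integration is free of convergence issues. The genuine care is in two pieces of bookkeeping: normalizing $\mathcal{P}$ so that $\mathcal{P}\psir_{\vec{w}}=\psil_{\vec{w}}$ holds on the nose (tracking the $(-1)^k$ and the power $q^{\pm k(k-1)/2}$ coming through \eqref{psisym}), and confirming that the pairing $\llangle\cdot,\cdot\rranglesmall$ of Definition \ref{bilinearprime} is literally the object manufactured by the completeness integral — this last point is exactly the equivalence of the two forms established in Lemmas \ref{321prime} and \ref{expandlem}, which are already available. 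Thus all the substantive difficulty sits in Theorems \ref{KqBosonId} and \ref{KqBosonIdDual}, and Theorem \ref{isothm} is their symmetric repackaging.
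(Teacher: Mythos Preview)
Your proposal is correct and follows essentially the same approach as the paper: the isomorphism is assembled from Theorems \ref{KqBosonId} and \ref{KqBosonIdDual}, identity \eqref{res1} comes from the completeness expansion together with the relation $\mathcal{P}\psir_{\vec{z}}=\psil_{\vec{z}}$ (the paper phrases this via indicator functions $\bfone_{\vec{m}}$ and the identity $\llangle \psil(\vec{m}),\FqBoson g\rranglesmall=(\JqBoson\FqBoson g)(\vec{m})=g(\vec{m})$, while you expand $g$ directly, but the computation is the same), and \eqref{res2} is then obtained by substitution using $\FqBoson\JqBoson=\Id$. Your explicit flagging of the constant $(-1)^k q^{-k(k-1)/2}$ in $\mathcal{P}\psir_{\vec{z}}$ versus $\psil_{\vec{z}}$ is the one bookkeeping point the paper glosses over.
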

\begin{proof}
The isomorphism follows immediately by combining Theorems \ref{KqBosonId} and \ref{KqBosonIdDual}. To show (\ref{res1}) and (\ref{res2}) consider an arbitrary $g\in \CP{k}$, and let $\bfone_{\vec{m}}\in \CP{k}$ (for $\vec{m}\in \Weyl{k}$) be the function $\bfone_{\vec{m}}(\vec{n}) = \bfone_{\vec{m}=\vec{n}}$. Also, let $\psil(\vec{n})\in \LP{k}$ be the function $\psil(\vec{n}) (\vec{z}) = \psil_{\vec{z}}(\vec{n})$. Then,
$$
\llangle \psil(\vec{m}) , \big(\FqBoson g\big)\rranglesmall = \big(\JqBoson \FqBoson g\big)(\vec{m}) = g(\vec{m}) = \llangle \bfone_{\vec{m}},g\rrangle.
$$
The first equality follows from Definition \ref{bilinearprime} as well as Lemma \ref{expandlem}. The second equality follows from Theorem \ref{KqBosonId} and the final equality follows immediately from the definition of the bilinear pairing. Any function $f\in \CP{k}$ is a (finite) linear combination of the functions $\bfone_{\vec{m}}$ (over $\vec{m}\in \Weyl{k}$), hence the desired result (\ref{res1}) follows by linearity of the bilinear pairings and the fact that $\mathcal{P}\psir(\vec{m}) =\psil(\vec{m})$.

The result (\ref{res2}) follows due to an application of Theorem \ref{KqBosonIdDual}.
\end{proof}

\medskip
Theorem \ref{KqBosonId} has two immediate corollaries. The first (Corollary \ref{completenesscor}) is the  completeness of the coordinate Bethe ansatz for the $q$-Boson particle system (with respect to a certain complex Plancherel measure), and the second (Corollary \ref{north}) is the orthogonality of $\psir_{\vec{z}}(\vec{n})$ and $\psil_{\vec{z}}(\vec{m})$ when integrated against the Plancherel measure. We call this a {\it spatial} orthogonality since it corresponds to an orthogonality in the variables $\vec{n}$.

There is a second orthogonality which we call {\it spectral} as it is in the variables $\vec{z}$. The spectral orthogonality (given as Proposition \ref{specorth} below) is not a consequence of Theorem \ref{KqBosonId} and, as we have just seen, is the main input in the proof of the dual Plancherel formula of Theorem \ref{KqBosonIdDual}. It follows by taking $\e=1$ in Proposition \ref{trueorththm}, which, in turn, is proved by ultimately appealing to the Cauchy-Littlewood identity for Hall-Littlewood symmetric polynomials.

For what follows, recall the contours $\gamma_1,\ldots,\gamma_k,\gamma,\gamma'$ from Definition \ref{KqTASEPdef}.

\subsection{Completeness}\label{compsec}

\begin{corollary}\label{completenesscor}
Any function $f\in\CP{k}$ can be expanded as
\begin{equation}\label{completenessfwd}
f(\vec{n}) = \sum_{\lambda\vdash k}\, \oint_{\gamma_k} \cdots \oint_{\gamma_k} d\mu_{\lambda}(\vec{w}) \prod_{j=1}^{\ell(\lambda)} \frac{1}{(w_j;q)_{\lambda_j}}  \psil_{\vec{w}\circ\lambda}(\vec{n}) \llangle f,\psir_{\vec{w}\circ \lambda}\rrangle,
\end{equation}
and also as
\begin{equation}\label{completenessbwd}
f(\vec{n}) = \sum_{\lambda\vdash k}\, \oint_{\gamma_k} \cdots \oint_{\gamma_k} d\mu_{\lambda}(\vec{w}) \prod_{j=1}^{\ell(\lambda)} \frac{1}{(w_j;q)_{\lambda_j}}  \psir_{\vec{w}\circ\lambda}(\vec{n}) \llangle \psil_{\vec{w}\circ \lambda},f \rrangle.
\end{equation}
\end{corollary}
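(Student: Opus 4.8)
The plan is to read off both expansions directly from the identity $\KqBoson=\Id$ on $\CP{k}$ established in Theorem \ref{KqBosonId}, combined with the explicit contour-integral form of $\KqBoson$ recorded in (\ref{Kexpform}).

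For the first expansion (\ref{completenessfwd}) there is essentially nothing to do. Theorem \ref{KqBosonId} gives $\big(\KqBoson f\big)(\vec{n})=f(\vec{n})$, while formula (\ref{Kexpform}) expresses $\big(\KqBoson f\big)(\vec{n})$ as precisely the right-hand side of (\ref{completenessfwd}). Equating the two yields the claim.

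For the second expansion (\ref{completenessbwd}) I would start from the alternative form of the Plancherel formula noted in Remark \ref{wrtbwdremark}, namely $(C_q R)^{-1}\KqBoson (C_q R)=\Id$ (equation (\ref{altplan})). Applying this to $f$ and then inserting (\ref{Kexpform}) for $\KqBoson(C_q R f)$ gives
\[
f(\vec{n}) = \sum_{\lambda\vdash k} \oint_{\gamma_k}\cdots\oint_{\gamma_k} d\mu_{\lambda}(\vec{w}) \prod_{j=1}^{\ell(\lambda)} \frac{1}{(w_j;q)_{\lambda_j}} \big[(C_q R)^{-1}\psil_{\vec{w}\circ\lambda}\big](\vec{n})\, \llangle C_q R f, \psir_{\vec{w}\circ\lambda}\rrangle,
\]
where I use that $(C_q R)^{-1}$ acts only in the spatial variable $\vec{n}$ and therefore commutes with the $\lambda$-sum and the $\vec{w}$-integrals, leaving the scalar pairing untouched.

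The heart of the argument is then to simplify the two reflected quantities. Using the symmetry relations (\ref{psisym}) together with $R=R^{-1}$, the commutation of $C_q$ with $R$, and the pairing identities $\llangle Rf,Rg\rrangle=\llangle f,g\rrangle$ from (\ref{bilinflip}) and $\llangle fh,gh^{-1}\rrangle=\llangle f,g\rrangle$, a direct computation yields $(C_q R)^{-1}\psil_{\vec{w}\circ\lambda}=q^{\frac{k(k-1)}{2}}\psir_{\vec{w}\circ\lambda}$, while the pairing transforms as $\llangle C_q R f,\psir_{\vec{w}\circ\lambda}\rrangle=q^{-\frac{k(k-1)}{2}}\llangle \psil_{\vec{w}\circ\lambda},f\rrangle$. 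The two powers of $q$ cancel, and (\ref{completenessbwd}) follows at once. The only place demanding care is the bookkeeping of the prefactors $q^{\pm k(k-1)/2}$ and of the conjugating operator $C_q$ inside the pairing; I expect this to be routine rather than a genuine obstacle, since every ingredient is already in place and the cancellation is exact.
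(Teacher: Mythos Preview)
Your proposal is correct and follows essentially the same approach as the paper: the first expansion is immediate from $\KqBoson=\Id$ and (\ref{Kexpform}), and the second expansion is obtained either from the first via the symmetry relations (\ref{psisym}) or equivalently from the conjugated identity (\ref{altplan}), which is exactly what you do. Your bookkeeping of the $q^{\pm k(k-1)/2}$ factors and the $C_q$ conjugation is accurate.
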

\begin{proof}
The expansion in (\ref{completenessfwd}) follows from applying $\KqBoson$ to the function $f$. On the one hand, Theorem \ref{KqBosonId} shows that this returns $f$, while (\ref{Kexpform}) provides the above expansion.

The expression in (\ref{completenessbwd}) comes from the previous expansion as well as the relations (\ref{psisym}). Equivalently, it is related to the expansion of the nested contour integral corresponding to the left-hand side of (\ref{altplan}).
\end{proof}

Due to Lemma \ref{321prime} we also have the expansion
\begin{equation*}
f(\vec{n}) = \oint_{\gamma} \cdots \oint_{\gamma} d\mu_{(1)^k}(\vec{w}) \prod_{j=1}^{k} \frac{1}{1-w_j}  \psil_{\vec{w}}(\vec{n}) \llangle f,\psir_{\vec{w}}\rrangle,
\end{equation*}
and similarly when $r$ and $\ell$ are switched.

\subsection{Biorthogonality}\label{bisec}

The following spatial orthogonality is also an immediate consequence of the Plancherel formula.

\begin{corollary}\label{north}
For $\vec{n},\vec{m}\in \Weyl{k}$, regarding $\psil(\vec{n})$ and $\psil(\vec{m})$ as functions $\big(\psil(\vec{n})\big)(\vec{z})=\psil_{\vec{z}}(\vec{n})$ and $\big(\psir(\vec{m})\big)(\vec{z})=\psir_{\vec{z}}(\vec{m})$ we have that
\begin{equation*}
\llangle \psil(\vec{n}), \psir(\vec{m}) \rranglesmall = \bfone_{\vec{n}=\vec{m}}.
\end{equation*}
\end{corollary}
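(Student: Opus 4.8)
The plan is to recognize the claimed identity as nothing more than the statement $\KqBoson=\Id$ from Theorem \ref{KqBosonId}, read through the two pairings. The two ingredients I would assemble first are the following reinterpretations. On the one hand, for fixed $\vec{m}\in\Weyl{k}$ the function $\vec{z}\mapsto\psir_{\vec{z}}(\vec{m})$ is a symmetric Laurent polynomial in $1-z_1,\dots,1-z_k$ (as noted in Definition \ref{eigdefn}), hence an element of $\LP{k}$; and it is exactly the $q$-Boson transform of the indicator $\bfone_{\vec{m}}\in\CP{k}$, where $\bfone_{\vec{m}}(\vec{n})=\bfone_{\vec{n}=\vec{m}}$. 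Indeed, the defining sum collapses to a single term:
\begin{equation*}
\big(\FqBoson \bfone_{\vec{m}}\big)(\vec{z}) = \llangle \bfone_{\vec{m}}, \psir_{\vec{z}}\rrangle = \sum_{\vec{n}\in\Weyl{k}} \bfone_{\vec{n}=\vec{m}}\,\psir_{\vec{z}}(\vec{n}) = \psir_{\vec{z}}(\vec{m}) = \big(\psir(\vec{m})\big)(\vec{z}).
\end{equation*}
On the other hand, by the reformulation of the inverse transform established just after Definition \ref{bilinearprime}, the pairing against $\psil(\vec{n})$ is exactly evaluation of $\JqBoson$ at $\vec{n}$: for any $G\in\LP{k}$ one has $\big(\JqBoson G\big)(\vec{n}) = \llangle \psil(\vec{n}),G\rranglesmall$.

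With these two observations in hand the proof is immediate. Taking $G=\psir(\vec{m})$ and chaining the identities gives
\begin{equation*}
\llangle \psil(\vec{n}), \psir(\vec{m})\rranglesmall = \big(\JqBoson \psir(\vec{m})\big)(\vec{n}) = \big(\JqBoson \FqBoson \bfone_{\vec{m}}\big)(\vec{n}) = \big(\KqBoson \bfone_{\vec{m}}\big)(\vec{n}).
\end{equation*}
Since $\bfone_{\vec{m}}\in\CP{k}$, Theorem \ref{KqBosonId} asserts that $\KqBoson$ acts as the identity, so the right-hand side equals $\bfone_{\vec{m}}(\vec{n})=\bfone_{\vec{n}=\vec{m}}$, which is precisely the claim.

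There is essentially no analytic obstacle here, since all the hard work has already been discharged inside Theorem \ref{KqBosonId} (the contour-shift and residue argument). The only points requiring care are bookkeeping rather than substance: one must confirm that $\psir(\vec{m})$ genuinely lies in $\LP{k}$ so that the pairing $\llangle\cdot,\cdot\rranglesmall$ and the operator $\JqBoson$ apply to it, and that the two alternative descriptions of $\JqBoson$ (as a nested contour integral and as the pairing $\llangle\psil(\vec{n}),\cdot\rranglesmall$) agree, which is exactly the content of Lemmas \ref{321prime} and \ref{expandlem}. Thus the ``main obstacle'' is simply to invoke the already-proved Plancherel formula in the right functional framework rather than to prove anything new.
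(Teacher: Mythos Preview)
Your proof is correct and follows essentially the same route as the paper: the paper applies the completeness expansion (\ref{completenessfwd}) to $f=\bfone_{\vec{m}}$, which is exactly your computation $\llangle \psil(\vec{n}),\psir(\vec{m})\rranglesmall = (\JqBoson\FqBoson\bfone_{\vec{m}})(\vec{n}) = \bfone_{\vec{n}=\vec{m}}$ unwound through the definition of the pairing. Your version is slightly more explicit in isolating the two identifications $\psir(\vec{m})=\FqBoson\bfone_{\vec{m}}$ and $\JqBoson G(\vec{n})=\llangle\psil(\vec{n}),G\rranglesmall$, but the substance is identical.
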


\begin{proof}
This follows by applying (\ref{completenessfwd}) to the function $f(\vec{n}) = \bfone_{\vec{n}=\vec{m}}$.
\end{proof}

The following spectral orthogonality does not seem to follow from the Plancherel formula (Theorem \ref{KqBosonId}). It is, in fact, the key input in proving the dual Plancherel formula (Theorem \ref{KqBosonIdDual}), so we provide an independent proof of it. The spaces of functions with which we deal are more general than necessary for the dual Plancherel formula.

\begin{proposition}\label{specorth}
Consider a function $F(\vec{z})$ such that for $M$ large enough, $\prod_{i=1}^{k}(1-z_i)^{-M} \V(\vec{z})F(\vec{z})$ is analytic in the closed exterior of $\gamma$, and consider another function $G(\vec{w})$ such that $\V(\vec{w})G(\vec{w})$ is analytic in the closed region between $\gamma$ and $\gamma'$. Then we have that
\begin{align*}
&\sum_{\vec{n}\in \Weyl{k}}\left( \oint_{\gamma}\frac{dz_1}{2\pi \i}\cdots  \oint_{\gamma}\frac{dz_k}{2\pi \i} \psir_{\vec{z}}(\vec{n})\V(\vec{z})F(\vec{z}) \right) \left( \oint_{\gamma}\frac{dw_1}{2\pi \i}\cdots  \oint_{\gamma}\frac{dw_k}{2\pi \i} \psil_{\vec{w}}(\vec{n}) \V(\vec{w}) G(\vec{w})\right)\\
&= \oint_{\gamma}\frac{dw_1}{2\pi \i}\cdots  \oint_{\gamma}\frac{dw_k}{2\pi \i}  (-1)^{\frac{k(k-1)}{2}} \prod_{j=1}^{k} (1 - w_j) \prod_{A\neq B} (w_A-qw_B) \sum_{\sigma\in S_k} \sgn(\sigma) F(\sigma \vec{w}) G(\vec{w}).
\end{align*}
\end{proposition}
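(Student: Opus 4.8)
The plan is to collapse the double lattice-and-contour sum on the left into a single contour integral by first symmetrizing, then evaluating one ``reproducing kernel'' sum over the Weyl chamber via the Hall--Littlewood Cauchy--Littlewood identity.

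\emph{Step 1 (symmetrization).} Since every $z_j$ and every $w_j$ runs over the \emph{same} contour $\gamma$, and $\V(\vec z)$, $\V(\vec w)$ are antisymmetric, I would first absorb the Weyl denominators of $\psir$ and $\psil$ into the Vandermondes. Expanding $\psil_{\vec w}(\vec n)$ over $S_k$ and cancelling $\V(\vec w)$ against $\prod_{B<A}(w_{\tau(A)}-w_{\tau(B)})=(-1)^{\binom k2}\sgn(\tau)\V(\vec w)$ gives, after relabelling $w_{\tau(j)}\mapsto w_j$ (legitimate because the contours coincide),
\[
\oint_\gamma\!\cdots\!\oint_\gamma \psil_{\vec w}(\vec n)\V(\vec w)G(\vec w)\prod_j\tfrac{dw_j}{2\pi\i}=(-1)^{\binom k2}\oint_\gamma\!\cdots\!\oint_\gamma \prod_{B<A}(w_A-qw_B)\prod_j(1-w_j)^{-n_j}\,\widetilde G(\vec w)\prod_j\tfrac{dw_j}{2\pi\i},
\]
with $\widetilde G(\vec w)=\sum_{\tau\in S_k}\sgn(\tau)G(\tau\vec w)$. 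The identical manipulation on $\psir$ (which carries $q^{-1}$ in place of $q$ and the prefactor $C_q^{-1}(\vec n)$) turns the $\vec z$-integral into $C_q^{-1}(\vec n)(-1)^{\binom k2}\oint \prod_{B<A}(z_A-q^{-1}z_B)\prod_j(1-z_j)^{n_j}\widetilde F(\vec z)$. Substituting both and interchanging the (absolutely convergent, see Step 2) $\vec n$-sum with the integrals reduces the left-hand side to
\[
\oint_\gamma\!\cdots\!\oint_\gamma\prod_{B<A}(z_A-q^{-1}z_B)\prod_{B<A}(w_A-qw_B)\,\widetilde F(\vec z)\widetilde G(\vec w)\,\Sigma(\vec z,\vec w),\qquad \Sigma(\vec z,\vec w)=\sum_{\vec n\in\Weyl{k}}C_q^{-1}(\vec n)\prod_j\Big(\tfrac{1-z_j}{1-w_j}\Big)^{n_j}.
\]

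\emph{Step 2 (convergence and interchange).} The chamber sum does not converge with both variable sets on $\gamma$. I would deform the $\vec w$-variables out to $\gamma'$, which is allowed because $\V(\vec w)G(\vec w)$ is analytic in the closed region between $\gamma$ and $\gamma'$; for $z\in\gamma$, $w\in\gamma'$ one has $|1-z_j|<|1-w_j|$, so the telescoped geometric series in the variables $m_i=n_i-n_{i+1}\ge0$ converges absolutely and the interchange is justified, the bi-infinite direction ultimately being accounted for by the residue at $\prod_j\tfrac{1-z_j}{1-w_j}=1$. The growth hypothesis on $F$ in the exterior of $\gamma$ is what then permits the residue extraction in Step 3.

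\emph{Step 3 (evaluating the kernel via Hall--Littlewood).} The heart is to show that $\prod_{B<A}(z_A-q^{-1}z_B)\,\Sigma(\vec z,\vec w)$ acts against antisymmetric test functions as a reproducing kernel localizing $\vec z$ onto $\vec w$. Grouping the chamber sum by the cluster type of $\vec n$, the within-cluster permutations generate $q$-factorials through identity \eqref{mqinverse}, and these cancel the factors $\prod_i 1/(c_i)!_q$ hidden in $C_q^{-1}(\vec n)$; what is left is precisely a Hall--Littlewood expansion, so the Cauchy--Littlewood identity collapses $\Sigma$ to a product of the form $\prod_{i,j}\frac{\text{(linear)}}{1-(1-z_i)/(1-w_j)}$. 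Taking residues at $z_{\sigma(j)}=w_j$ over $\sigma\in S_k$ then reproduces $\sum_{\sigma}\sgn(\sigma)F(\sigma\vec w)G(\vec w)$ together with the Jacobian weight $\prod_j(1-w_j)$ and the symmetric factor $\prod_{A\neq B}(w_A-qw_B)$, matching the claimed right-hand side once the sign $(-1)^{\binom k2}$ and the $k!$ from symmetrizing $G$ are reconciled.

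\emph{Main obstacle.} The difficulty lies entirely in Step 3: because $C_q^{-1}(\vec n)$ depends on the cluster structure, $\Sigma$ does \emph{not} split into independent geometric series, and its evaluation is exactly where the Hall--Littlewood Cauchy--Littlewood identity is indispensable (this is why the paper isolates the statement and proves it separately, via the $\e$-deformation of Proposition \ref{trueorththm}). A secondary but real nuisance is tracking the constants and signs---$(-1)^{\binom k2}$, the $k!$, and the $q$-powers from $(c_i)!_{q^{-1}}=q^{-\binom{c_i}2}(c_i)!_q$---so that they assemble into exactly $(-1)^{\binom k2}\prod_j(1-w_j)\prod_{A\neq B}(w_A-qw_B)$; since the eigenfunctions are only close relatives of Hall--Littlewood polynomials, this matching must be done carefully.
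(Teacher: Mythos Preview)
Your Step 3 does not go through as written. After the symmetrization in Step 1, the kernel $\Sigma(\vec z,\vec w)=\sum_{\vec n}C_q^{-1}(\vec n)\prod_j\big(\tfrac{1-z_j}{1-w_j}\big)^{n_j}$ depends on $\vec z,\vec w$ only through the diagonal ratios $u_j=(1-z_j)/(1-w_j)$; it therefore cannot carry the cross-term poles at $z_i=w_j$ for $i\neq j$ that you invoke when ``taking residues at $z_{\sigma(j)}=w_j$ over $\sigma\in S_k$'', nor can it collapse to a product of the form $\prod_{i,j}(\cdots)/(1-(1-z_i)/(1-w_j))$. The Cauchy--Littlewood identity acts on the \emph{unsymmetrized} sum $\sum_{\vec n}\psir_{\vec z}(\vec n)\psil_{\vec w}(\vec n)$, and it only applies when these are honest Hall--Littlewood polynomials. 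At $\e=1$ they are not: under $\zeta_j=1-z_j$ the factor $\tfrac{z_A-qz_B}{z_A-z_B}$ becomes $\tfrac{(1-q)+q\zeta_B-\zeta_A}{\zeta_B-\zeta_A}$, and the additive $(1-q)$ obstructs any identification with Hall--Littlewood polynomials in the new variables. (The ``within-cluster permutations'' you appeal to were already consumed in producing $\widetilde F,\widetilde G$ in Step 1; there is nothing left to generate the $q$-factorials you need.)

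This is precisely why the paper's proof is indirect. The spectral orthogonality is established as the $\e=1$ case of Proposition~\ref{trueorththm}, whose proof has two independent parts neither of which your outline replaces: (i) the $\e=0$ base case (Proposition~\ref{epzeroorth}), where the eigenfunctions \emph{are} Hall--Littlewood polynomials and the Cauchy--Littlewood identity applied to the unsymmetrized sum yields the genuine cross-term product $\prod_{i,j}\tfrac{w_j-qz_i}{w_j-z_i}$, whose residues at $z_i=w_{\sigma(i)}$ give the right-hand side; and (ii) a computation showing $\tfrac{d}{d\e}H_{\vec z,\vec w}(\e)\equiv 0$, which rests on explicitly decomposing $\partial_\e\psiretilde$ and $\partial_\e\psile$ over the lattice (Lemmas~\ref{rlem}, \ref{llem}) and verifying the cancellation (\ref{crl}). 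Your proposal correctly anticipates that Hall--Littlewood Cauchy is the ultimate input, but you are attempting to invoke it directly at $\e=1$ where it is unavailable; without a substitute for step (ii), the argument does not transport from $\e=0$ to $\e=1$.
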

\begin{proof}
This is the $\e=1$ case of Proposition \ref{trueorththm}.
\end{proof}

\begin{remark}\label{orthconj}
The above identity may be formally rewritten as
\begin{equation}
\llangle \psil_{\vec{w}},\psir_{\vec{z}}\rrangle \V(\vec{z}) \V(\vec{w}) = (-1)^{\frac{k(k-1)}{2}} \prod_{j=1}^{k} (1 - z_j) \prod_{A\neq B} (z_A-qz_B) \det\big[\delta_{z_i,w_j}\big]_{i,j=1}^{k}
\end{equation}
where $\delta_{z,w}$ is the Dirac delta function for $z=w$. Owing to (\ref{23star}), this can be further (even more formally) rewritten as
$$
\llangle \psil_{\vec{w}},\psir_{\vec{z}}\rrangle   = \det\big[\delta_{z_i,w_j}\big]_{i,j=1}^{k} \frac{\prod_{i=1}^{k} \frac{1}{1-z_i} dz_i}{d\mu_{(1)^k}(\vec{z})}.
$$

It is natural to try to extend the above spectral orthogonality to all eigenfunctions which arise in the completeness results of Corollary \ref{completenesscor}. This prompts the conjecture that for any two partitions $\lambda,\mu\vdash k$, if $\vec{z}= \vec{z}'\circ \lambda$ and $\vec{w} = \vec{w}'\circ \mu$ (for $\vec{z}'$ of length $\ell(\lambda)$ and $\vec{w}'$ of length $\ell(\mu)$) then
\begin{equation}
\llangle \psil_{\vec{w}},\psir_{\vec{z}}\rrangle \V(\vec{z}) \V(\vec{w}) =\bfone_{\lambda,\mu} (-1)^{\frac{k(k-1)}{2}} \prod_{j=1}^{k} (1 - z_j) \prod^{\sim}_{A\neq B} (z_A-qz_B) \det\big[\delta_{z_i,w_j}\big]_{i,j=1}^{k},
\end{equation}
where the product ${\prod\limits_{A\ne B}^{\sim}}(z_A-qz_B)$ means that we omit factors that are zero (by virtue of the definitions of $\vec{z}$ and $\vec{w}$). Presumably, the above formal identity should be understood in a similar manner as Proposition \ref{specorth}. Note that the above may be expressed via $d\mu_{\lambda}(\vec{w})$ owing to the fact that
\begin{equation}\label{24star}
d\mu_\lambda(\vec{w}')=(-1)^{\frac{k(k-1)}2}\frac{1}{m_1!m_2!\ldots}\, \frac{\V(\vec{w}'\circ\lambda)^{2}}{\prod\limits_{A\ne B}^{\sim}\big((\vec{w}'\circ\lambda)_{A}-q(\vec{w}'\circ\lambda)_{B}\big)}\prod_{j=1}^{\ell(\lambda)}\frac{dw'_j}{2\pi \i}.
\end{equation}
%

In particular, for $k=2$, we expect that
\begin{align*}
	\sum_{n_1\ge n_2}
	\psir_{z',qz'}(n_1,n_2)
	\psil_{w',qw'}(n_1,n_2)
	(z'-qz')(w'-qw')
	=
	-
	(1-z')(1-qz')
	(z'-q^{2}z')
	\delta_{z',w'}.
\end{align*}
Note that in the determinant $\det[\delta_{z_i,w_j}]_{i,j=1}^{2}$, only one summand survives, namely, $\delta_{z',w'}\delta_{qz',qw'} =\delta_{z',w'}$. We have checked the above $k=2$ version of the conjecture for test functions of the form $F(z_1,z_2)= \prod_{i=1}^{2} (1-z_i)^{m_i}$ and $G(w_1,w_2)= \prod_{i=1}^{2} (1-w_i)^{r_i}$. For such functions, any integration contours which enclose $1$ suffice (since there is only a pole at 1 to consider).
\end{remark}

\section{Applications of Plancherel formulas}\label{appsec}

In this section we show how the Plancherel formula (in particular the completeness result of Corollary \ref{completenesscor}) provides means to solve the Kolmogorov backward and forward equations for the $q$-Boson particle system. In the case of the backward equation in Section \ref{qtasepsec} we explain how, through a duality with another particle system called $q$-TASEP, this enables us to calculate exact formulas for expectations of certain observables of $q$-TASEP with general initial data (thus extending special cases of initial data studied earlier in \cite{BorCor,BCS}). This should be thought as parallel to the work of Dotsenko \cite{Dot} and Calabrese, Le Doussal and Rosso \cite{CDR} in which they use the eigenfunction decomposition of the delta Bose gas to calculate exact formulas for joint moments of the solution of the stochastic heat equation (cf. Section \ref{deltabosesec}).

\subsection{Solving the backward and forward equations}\label{bwdfwdsec}
The Kolmogorov backward and forward equations can be readily solved when the initial data is expressed as a sum over  eigenfunctions. The Plancherel formula (specifically the completeness result of Corollary \ref{completenesscor}) provides such a decomposition. For the forward generator $\psir_{\vec{z}}(\vec{n})$ is the right eigenfunction, whereas for the backward generator $\psil_{\vec{z}}(\vec{n})$ is the right eigenfunction. The following solutions of the backward and forward equations for general (compactly supported) initial data are consequences of Corollary \ref{completenesscor}.

\begin{corollary}\label{bwdapp}
For any $f_0\in \CP{k}$ the backward equation
\begin{equation*}
\frac{d}{dt} f(t;\vec{n}) = \big(\Abwd f\big)(t;\vec{n})
\end{equation*}
with $f(0;\vec{n}) = f_0(\vec{n})$ is uniquely solved by
\begin{eqnarray}\label{ftvecn}
\nonumber f(t;\vec{n}) &=& \big(e^{t\Abwd} f_0\big)(\vec{n}) = \sum_{\lambda\vdash k}\, \oint_{\gamma_k} \cdots \oint_{\gamma_k} d\mu_{\lambda}(\vec{w}) \prod_{j=1}^{\ell(\lambda)} \frac{1}{(w_j;q)_{\lambda_j}} \,e^{tE(\vec{w}\circ \lambda)} \psil_{\vec{w}\circ\lambda}(\vec{n}) \llangle f_0,\psir_{\vec{w}\circ \lambda}\rrangle \\
 &=& \oint_{\gamma_1} \frac{dz_1}{2\pi \i} \cdots \oint_{\gamma_k} \frac{dz_k}{2\pi \i}  \prod_{1\leq A<B\leq k} \frac{z_A-z_B}{z_A-qz_B}\, \prod_{j=1}^{k} (1-z_{j})^{-n_j-1}\, e^{tE(\vec{z})} \,\llangle f_0,\psir_{\vec{z}}\rrangle
\end{eqnarray}
where $E(\vec{z}) = (q-1) (z_1+\cdots +z_k)$ and where the contours $\gamma_1,\ldots, \gamma_k$ are as in Definition \ref{KqTASEPdef}.
\end{corollary}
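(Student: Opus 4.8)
The plan is to recognize the right-hand side of (\ref{ftvecn}) as $e^{t\Abwd}f_0$ and to deduce both existence and uniqueness from the boundedness of the generator. First I would observe that the $k$-particle backward generator is a bounded operator on the Banach space $\ell^{\infty}(\Weyl{k})$ of bounded functions: since from any state $\vec{n}$ the total exit rate is $\sum_{i=1}^{M}(1-q^{c_i})\le M\le k$, the defining formula gives $\|\Abwd f\|_{\infty}\le 2k\,\|f\|_{\infty}$. Consequently the series $e^{t\Abwd}=\sum_{m\ge 0}\frac{t^m}{m!}\Abwd^m$ converges in operator norm, the map $t\mapsto e^{t\Abwd}f_0$ is the unique bounded solution of the backward equation with initial data $f_0$ (standard linear ODE theory in $\ell^{\infty}(\Weyl{k})$, or equivalently the Markov semigroup generated by the $k$-particle $q$-Boson process), and it remains only to identify this semigroup action with the stated formula.

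To do so I would start from the completeness expansion of Corollary \ref{completenesscor}, which writes $f_0(\vec{n})$ as the sum over $\lambda\vdash k$ of contour integrals of $\psil_{\vec{w}\circ\lambda}(\vec{n})\,\llangle f_0,\psir_{\vec{w}\circ\lambda}\rrangle$ against $d\mu_\lambda$. Applying $\Abwd^{m}$ term by term is legitimate because $\Abwd$ is a finite-range difference operator in the $\vec{n}$ variable — a finite linear combination of evaluations at $\vec{n}$ and the shifted points $\vec{n}^{-}_{j}$ — so it passes inside both the contour integrals and the finite sum over $\lambda$. By Proposition \ref{prop211} (recalling $\psil=\psibwd$), each $\psil_{\vec{w}\circ\lambda}$ is an eigenfunction of $\Abwd$ with eigenvalue $E(\vec{w}\circ\lambda)=(q-1)\sum_{j}(\vec{w}\circ\lambda)_{j}$, hence $\Abwd^{m}\psil_{\vec{w}\circ\lambda}=E(\vec{w}\circ\lambda)^{m}\psil_{\vec{w}\circ\lambda}$. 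Summing $\sum_{m\ge 0}\frac{t^m}{m!}E(\vec{w}\circ\lambda)^{m}=e^{tE(\vec{w}\circ\lambda)}$ then produces the first line of (\ref{ftvecn}); the interchange of the sum over $m$ with the contour integrals is justified by uniform convergence, since $E(\vec{w}\circ\lambda)$ is bounded on the compact contours $\gamma_k$.

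Finally I would pass from the first (spectral) form to the second (nested contour) form of (\ref{ftvecn}) by invoking Lemma \ref{expandlem} with $G(\vec{z})=e^{tE(\vec{z})}\,\llangle f_0,\psir_{\vec{z}}\rrangle$. The one point requiring care — and the main obstacle — is that this $G$ is \emph{not} an element of $\LP{k}$, since the factor $e^{tE(\vec{z})}=e^{t(q-1)(z_1+\cdots+z_k)}$ is not a Laurent polynomial. However, $e^{tE(\vec{z})}$ is entire, so it introduces no new poles and leaves the analyticity hypotheses of Lemma \ref{expandlem} intact: the function $\V(\vec{z})\,G(\vec{z})$ differs from the admissible $\V(\vec{z})\,\llangle f_0,\psir_{\vec{z}}\rrangle$ only by this entire factor, so the required deformations $D_j$ of the nested contours still sweep out regions of analyticity and the lemma applies verbatim. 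This establishes the equality of the two displayed expressions, and together with the uniqueness coming from boundedness of $\Abwd$ it completes the proof. The identical bounded-generator argument, combined with the forward completeness expansion (\ref{completenessbwd}) and the eigenfunction property of $\psir_{\vec{z}}$ for $\Afwd$, will serve for the parallel statement about the forward equation.
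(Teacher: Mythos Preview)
Your argument is correct and follows essentially the same route as the paper's proof: both use the eigenfunction relation $\Abwd\psil_{\vec z}=E(\vec z)\psil_{\vec z}$ from Proposition \ref{prop211} together with the completeness expansion of Corollary \ref{completenesscor} to identify the spectral formula, and both invoke Lemma \ref{expandlem}/Proposition \ref{321} (noting, as you do, that the entire factor $e^{tE(\vec z)}$ does not disturb the analyticity hypotheses) for the equivalence with the nested contour integral. The only minor difference is in the uniqueness step: you argue via boundedness of $\Abwd$ on $\ell^\infty(\Weyl{k})$ and the resulting operator exponential, whereas the paper appeals to the triangular structure of $\Abwd$ (each $(\Abwd f)(\vec n)$ involves only $f$ at $\vec n$ and at points $\vec n_j^{-}$ that are smaller in a suitable partial order); both are valid and lead to the same conclusion.
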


\begin{proof}
The uniqueness of solutions to the backward equation with initial data in $\CP{k}$ is evident because of the triangular nature of the generator $\Abwd$. From Definition \ref{leftrighteig} and Proposition \ref{prop211}, $\psil_{\vec{w}\circ \lambda}$ is a right eigenfunction for $\Abwd$ with eigenvalue $E(\vec{w}\circ \lambda)$. This implies that the right-hand side of the first line of (\ref{ftvecn}) solves the desired backward equation. The fact that it satisfies the initial data follows from Corollary \ref{completenesscor}, equation (\ref{completenessfwd}). The equivalence of the first and second lines of (\ref{ftvecn}) is a consequence of Proposition \ref{321} (see also Lemma \ref{expandlem}).
\end{proof}

\begin{corollary}\label{fwdapp}
For any $f_0\in \CP{k}$ the forward equation
\begin{equation*}
\frac{d}{dt} f(t;\vec{n}) = \big(\Afwd f\big)(t;\vec{n})
\end{equation*}
with $f(0;\vec{n}) = f_0(\vec{n})$ is uniquely solved by
\begin{eqnarray}\label{ftvecnfwd}
f(t;\vec{n}) &=& \big(e^{t\Afwd} f_0\big)(\vec{n}) = \sum_{\lambda\vdash k}\, \oint_{\gamma_k} \cdots \oint_{\gamma_k} d\mu_{\lambda}(\vec{w}) \prod_{j=1}^{\ell(\lambda)} \frac{1}{(w_j;q)_{\lambda_j}} e^{tE(\vec{w}\circ \lambda)}  \psir_{\vec{w}\circ\lambda}(\vec{n}) \llangle \psil_{\vec{w}\circ \lambda} , f_0 \rrangle \\
\nonumber &=& q^{-\frac{k(k-1)}{2}} C_q^{-1}(\vec{n})\, \oint_{\gamma_1} \frac{dz_1}{2\pi \i} \cdots \oint_{\gamma_k} \frac{dz_k}{2\pi \i}  \prod_{1\leq A<B\leq k} \frac{z_A-z_B}{z_A-qz_B}\, \prod_{j=1}^{k} (1-z_{j})^{n_{k-j+1}-1} e^{tE(\vec{z})} \,\llangle \psil_{\vec{z}}, f_0\rrangle
\end{eqnarray}
where $E(\vec{z}) = (q-1) (z_1+\cdots +z_k)$ and the contours $\gamma_1,\ldots, \gamma_k$ are as in Definition \ref{KqTASEPdef}.
\end{corollary}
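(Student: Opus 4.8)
The plan is to mirror the proof of Corollary \ref{bwdapp}, interchanging the roles of $\psil$ and $\psir$. There are four things to establish: that the first line of (\ref{ftvecnfwd}) solves the forward equation, that it reduces to $f_0$ at $t=0$, that solutions are unique, and that the two displayed expressions agree. The first three are direct analogues of the backward argument; the equivalence of the two representations is where the genuine bookkeeping lives.

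First I would check that the first line solves $\tfrac{d}{dt}f = \Afwd f$. By Proposition \ref{prop211} together with Definition \ref{leftrighteig}, each $\psir_{\vec{w}\circ\lambda}(\vec{n})$ (restricted to $\Weyl{k}$) is a right eigenfunction of $\Afwd$ with eigenvalue $E(\vec{w}\circ\lambda)=(q-1)(w\circ\lambda)_{\text{sum}}$. Since $\Afwd$ acts locally (finite range) in $\vec{n}$, and the expansion is a finite sum over $\lambda\vdash k$ of integrals over fixed compact contours with integrand entire in $t$, I may differentiate in $t$ and apply $\Afwd$ under the sum and integral. Differentiating $e^{tE(\vec{w}\circ\lambda)}$ produces the factor $E(\vec{w}\circ\lambda)$, which matches the eigenvalue produced by applying $\Afwd$ to $\psir_{\vec{w}\circ\lambda}$, so the equation is satisfied termwise. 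Setting $t=0$ collapses $e^{tE}$ to $1$, and what remains is precisely the completeness expansion (\ref{completenessbwd}) of Corollary \ref{completenesscor} applied to $f_0$; hence the initial data is matched.

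For uniqueness I would avoid re-running the triangularity argument of Corollary \ref{bwdapp} and instead reduce to the backward case already handled there. By the PT-invariance relation (\ref{ptsym}) (see Remark \ref{commrem}) one has $\Afwd=(RC_q)^{-1}\Abwd(RC_q)$, and $RC_q$ is an invertible operator preserving $\CP{k}$ (a reflection composed with multiplication by the nonvanishing function $C_q$). Thus if $f$ solves the forward equation with zero initial data, then $g=(RC_q)f$ solves the backward equation with zero initial data, so $g\equiv0$ by the backward uniqueness, whence $f\equiv0$.

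The step I expect to require the most care is the equivalence of the two displayed formulas. Up to the $\psil\leftrightarrow\psir$ switch, the first line is the operator $\JqBoson$ applied to $G(\vec{z})=e^{tE(\vec{z})}\llangle\psil_{\vec{z}},f_0\rrangle$, i.e.\ the nested-residue form of the alternative Plancherel expansion (\ref{altplan}) (cf.\ Remark \ref{wrtbwdremark}). To pass to a single nested contour integral I would use the symmetry (\ref{psisym}) to write $\psir_{\vec{w}\circ\lambda}(\vec{n})=q^{-\frac{k(k-1)}{2}}C_q^{-1}(\vec{n})\,(R\psil_{\vec{w}\circ\lambda})(\vec{n})$, pull the $\vec{w}$-independent prefactor $q^{-\frac{k(k-1)}{2}}C_q^{-1}(\vec{n})$ out of the integral, and recognize $(R\psil_{\vec{w}\circ\lambda})(\vec{n})=\psil_{\vec{w}\circ\lambda}(-n_k,\ldots,-n_1)$. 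The sum over $\lambda$ then collapses to the single nested contour integral by Lemma \ref{expandlem} (equivalently Proposition \ref{321}) applied at the reflected configuration $(-n_k,\ldots,-n_1)$, which produces the exponent $(1-z_j)^{-(-n_{k-j+1})-1}=(1-z_j)^{n_{k-j+1}-1}$ appearing on the second line. The main hazard is exactly this bookkeeping: tracking the reflection together with the $C_q^{-1}(\vec{n})$ prefactor and confirming that the extra $(1-z_j)^{-1}$ from the measure normalization lands so as to yield the exponent $n_{k-j+1}-1$. The analytic input needed for Lemma \ref{expandlem} is routine, since $G(\vec{z})$ is an entire factor $e^{tE(\vec{z})}$ times a symmetric Laurent polynomial in the $1-z_j$ (the pairing with $f_0\in\CP{k}$), so $\V(\vec{z})G(\vec{z})$ is analytic away from $z_j=1$, a point enclosed by every contour and never crossed during the deformations.
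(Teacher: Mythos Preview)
Your proposal is correct and follows essentially the same route as the paper. The paper's proof is a one-liner referring back to Corollary \ref{bwdapp} with $\psil\leftrightarrow\psir$ and invoking (\ref{completenessbwd}) for the initial data; your detailed account of the equivalence of the two displayed formulas via the reflection symmetry (\ref{psisym}) and Lemma \ref{expandlem} at the reflected point $(-n_k,\ldots,-n_1)$ is exactly what that one-liner unpacks to, and your uniqueness argument via PT-invariance is a harmless variant of the paper's appeal to triangularity.
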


\begin{proof}
This is proved in the same manner as  Corollary \ref{bwdapp} with the replacements of $\Abwd$ with $\Afwd$ and $\psil_{\vec{w}\circ \lambda}$ with $\psir_{\vec{w}\circ \lambda}$. The initial data is satisfied by virtue of Corollary \ref{completenesscor}, equation (\ref{completenessbwd}).
\end{proof}

\subsection{Transition probabilities}\label{transprobsec}
The Kolmogorov forward equation governs the evolution of the transition probabilities of a Markov process. In particular, for the $q$-Boson particle system $\vec{n}(t)$ (recall from Section \ref{stochrepsec}) define
$$P_{\vec{y}}(t;\vec{x}) = \PP\big(\vec{n}(t) = \vec{x} \big\vert \vec{n}(0)=\vec{y}\big),$$
for $\vec{x},\vec{y}\in \Weyl{k}$ and $t\geq 0$. Then $P_{\vec{y}}(t;\vec{x})$ solves the Kolmogorov forward equation
$$
\frac{d}{dt} P_{\vec{y}}(t;\vec{x}) = \big(\Afwd P_{\vec{y}}\big)(t;\vec{x}), \qquad P_{\vec{y}}(0;\vec{x}) = \bfone_{\vec{x}=\vec{y}},
$$
where $\Afwd$ (Definition \ref{bwdgendef}) acts in the $\vec{x}$ variables.

We may directly utilize Corollary \ref{fwdapp} to show:
\begin{corollary}\label{transprobcor}
The $q$-Boson particle system transition probability $P_{\vec{y}}(t;\vec{x})$ from initial state $\vec{y}\in \Weyl{k}$ to terminal state $\vec{x}\in \Weyl{k}$ in time $t\geq 0$ is given by both the right-hand side of the first and second line of (\ref{ftvecnfwd}) with $f_0(\vec{n}) = \bfone_{\vec{n}=\vec{y}}$.
\end{corollary}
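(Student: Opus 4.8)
The plan is to obtain this as an immediate consequence of Corollary \ref{fwdapp}. The first observation is that, by the general theory of continuous-time Markov jump processes, the transition probability $P_{\vec{y}}(t;\cdot)$ is the unique solution of the master (forward Kolmogorov) equation $\tfrac{d}{dt}P_{\vec{y}}(t;\vec{x}) = (\Afwd P_{\vec{y}})(t;\vec{x})$ with initial data $P_{\vec{y}}(0;\vec{x}) = \bfone_{\vec{x}=\vec{y}}$. Since the $q$-Boson particle system consists of finitely many ($k$) particles whose jump rates $1-q^{c}$ are uniformly bounded by $1$, the process is non-explosive and this characterization is valid.

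Next I would note that $f_0(\vec{n}) = \bfone_{\vec{n}=\vec{y}}$ is supported on the single point $\vec{y}$ and hence lies in $\CP{k}$. Corollary \ref{fwdapp} therefore applies verbatim: it produces a function $f(t;\vec{n})$, given explicitly by the two right-hand sides of \eqref{ftvecnfwd}, which solves the forward equation with initial condition $f(0;\vec{n}) = \bfone_{\vec{n}=\vec{y}}$. It then remains only to identify $f(t;\vec{n})$ with $P_{\vec{y}}(t;\vec{n})$.

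This identification is via uniqueness, and here lies the one point requiring care: although the initial data lies in $\CP{k}$, the transition probability $P_{\vec{y}}(t;\cdot)$ is not compactly supported at positive time, since its support spreads arbitrarily far to the left. The resolution, exactly as in the uniqueness argument underlying Corollary \ref{bwdapp}, is the triangular structure of $\Afwd$: because particles only move left, grading configurations by the total number of leftward steps $\ell = \sum_i y_i - \sum_i x_i \geq 0$ renders the system of ODEs lower-triangular in $\ell$. At each fixed level $\ell$ the inflow terms (the $f(\vec{n}_j^+)$ contributions in Definition \ref{bwdgendef}) involve only level $\ell-1$, and only finitely many states at level $\ell$ are reachable from $\vec{y}$ (the weak ordering together with $x_1 \leq y_1$ and the fixed value of $\sum_i x_i$ bounds the finitely many admissible $\vec{x}$). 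Hence the solution is determined recursively and uniquely, level by level, with no appeal to compactness of the solution itself. Since both $P_{\vec{y}}(t;\cdot)$ and the formula \eqref{ftvecnfwd} solve the same forward equation with the same initial data $\bfone_{\vec{x}=\vec{y}}$, they must coincide.

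The main (and essentially only) obstacle is this mismatch between the compact-support hypothesis used to invoke Corollary \ref{fwdapp} and the non-compact support of $P_{\vec{y}}$; once one observes that the uniqueness underlying Corollary \ref{fwdapp} rests on the leftward-triangular grading rather than on compactness of the solution, the corollary follows with no further computation.
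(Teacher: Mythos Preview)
Your proposal is correct and follows the same route as the paper, which simply states that the result follows ``directly'' from Corollary~\ref{fwdapp} after noting (in the text preceding the corollary) that $P_{\vec{y}}(t;\vec{x})$ solves the forward Kolmogorov equation with initial data $\bfone_{\vec{x}=\vec{y}}$. Your additional care about the uniqueness argument---that the triangular grading by total leftward displacement suffices even though $P_{\vec{y}}(t;\cdot)$ is not compactly supported at positive times---is a point the paper leaves implicit in its appeal to the ``triangular nature'' of the generator, so your discussion is more thorough than, but entirely consistent with, what the paper does.
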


This result may be compared to Tracy-Widom's ASEP transition probability formulas \cite{TW1}. Those formulas involve $k!$, $k$-fold contour integrals. With some additional work, the above result (which is a single $k$-fold nested contour integral) can be brought closer to this form. In fact, in work of Korhonen-Lee \cite{Hyun} posted soon after the present work was first post, the authors utilize the Tracy-Widom approach to arrive directly at such a formulas as indicated above. In a companion paper \cite{BCPS2} we develop parallel results for ASEP and the Heisenberg XXZ quantum spin chain and remark on the relationship to Tracy-Widom's work on ASEP.

\subsection{Nested contour integral formulas for $q$-TASEP}\label{qtasepsec}
The Kolmogorov backward equation governs the evolution of the expectation of functions of a Markov process. In particular, for the $q$-Boson particle system $\vec{n}(t)$ and a function $f_0:\Weyl{k}\to \R$, we define
$$h(t;\vec{n})= \EE\big[f_0(\vec{n}(t))\big\vert \vec{n}(0)=\vec{n}\big],$$
which always stays well-defined (at least for bounded functions $f_0$).
Then $h(t;\vec{n})$ solves the Kolmogorov backward equation
$$
\frac{d}{dt} h(t;\vec{n}) = \Abwd h(t;\vec{n}), \qquad h(0;\vec{n}) = f_0(\vec{n}),
$$
where $\Abwd$ (Definition \ref{bwdgendef}) acts in the $\vec{n}$ variables.

We may directly utilize Corollary \ref{bwdapp} to show:
\begin{corollary}\label{exactthm}
For $f_0\in \CP{k}$, the expectation $h(t;\vec{n})$ is given by the right-hand side of the first and the second line of (\ref{ftvecn}).
\end{corollary}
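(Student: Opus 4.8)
The plan is to observe that $h(t;\vec{n})$ is, by its very definition as an expectation, a solution of the Kolmogorov backward equation with initial data $f_0$, and then to invoke the uniqueness and explicit solution already furnished by Corollary~\ref{bwdapp}. No new computation is required; the content of the statement is purely a matter of checking that $h$ meets the hypotheses of that corollary and then quoting it.

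First I would record that for any bounded $f_0:\Weyl{k}\to\R$ the quantity $h(t;\vec{n})=\EE[f_0(\vec{n}(t))\mid \vec{n}(0)=\vec{n}]$ is well-defined and bounded (by $\sup|f_0|$), and that by the standard theory of continuous-time Markov jump processes (\cite{Lig}) it is differentiable in $t$ and satisfies $\tfrac{d}{dt}h(t;\vec{n})=(\Abwd h)(t;\vec{n})$ with $h(0;\vec{n})=f_0(\vec{n})$, where the backward generator $\Abwd$ acts in the $\vec{n}$-variable. This is exactly the backward equation displayed immediately above the statement. Since a compactly supported $f_0$ is in particular bounded, $h$ is a bounded solution of the backward equation with initial data $f_0\in\CP{k}$.

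Next I would apply Corollary~\ref{bwdapp} directly: it asserts that the backward equation with such initial data has a \emph{unique} solution, given explicitly by both lines of (\ref{ftvecn}) (the equivalence of the two lines itself being part of that corollary). Hence $h(t;\vec{n})$ must coincide with those right-hand sides, which is precisely the claim. The only point requiring care, rather than a genuine obstacle, is to confirm that $h$ lies in the class in which Corollary~\ref{bwdapp} guarantees uniqueness. The uniqueness there rests on the triangular structure of $\Abwd$ under the grading $\vec{n}\mapsto\sum_i n_i$: since $\Abwd$ couples $h(t;\vec{n})$ only to values at configurations $\vec{n}'$ of strictly smaller grade (with $\sum_i n'_i=\sum_i n_i-1$), the difference $d$ of two solutions with identical initial data solves the homogeneous equation with $d(0;\cdot)\equiv0$, and a Duhamel iteration down the grading bounds $|d(t;\vec{n})|$ by $\tfrac{(kt)^m}{m!}\sup|d|$ for every $m$, forcing $d\equiv0$. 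Boundedness of $h$, which comes exactly from the compact support of $f_0$, is what makes this iteration applicable, so $h$ is admissible and the identification with (\ref{ftvecn}) follows.
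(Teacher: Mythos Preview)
Your proposal is correct and follows essentially the same approach as the paper: the paper simply states that one may directly utilize Corollary~\ref{bwdapp}, having just observed that $h(t;\vec{n})$ solves the backward equation with initial data $f_0$. Your elaboration of the uniqueness argument via the triangular grading structure is a reasonable expansion of what Corollary~\ref{bwdapp} already records in one sentence, but no new idea is involved.
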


In \cite[Section 2]{BCS} it is shown that there exists a Markov duality between the $q$-Boson particle system and $q$-TASEP. As a result, expectations of certain observables of $q$-TASEP can be computed by solving the above backward equation.

The $q$-deformed totally asymmetric simple exclusion process ($q$-TASEP) is a continuous time, discrete space interacting particle system $\vec{x}(t)$. It was first introduced and studied by Borodin-Corwin \cite{BorCor} (see also \cite{BCS,BCdiscrete,OConPei,BorPet,CorPet} for subsequent developments). Particles occupy sites of $\Z$ with at most one particle at any site at a given time. The location of particle $i$ at time $t$ is written as $x_i(t)$. Particles are ordered so that $x_i(t)>x_j(t)$ for $i<j$. The rate at which the value of $x_i(t)$ increases by one (i.e. the particle jumps right by one) is $1-q^{x_{i-1}(t)-x_i(t)-1}$ (here $x_{i-1}(t)-x_{i}(t)-1$ is the size of the gap between particle $x_i$ and $x_{i-1}$ at time $t$). All jumps occur independently of each other according to exponential clocks.

We will focus on $q$-TASEP with only $N\geq 1$ particles, though since particles only depend on those to their right (i.e., smaller index) this restriction can be weakened to include systems with a right-most particle (though possible infinite particles in total). Define  $\Weyl{k,N}$ as the set of all $\vec{n}\in \Weyl{k}$ such that $N\geq n_1\geq \cdots \geq n_k\geq 1$. For some possibly random initial data $\vec{x}(0)$ for $N$ particle $q$-TASEP define $h_0:\Weyl{k}\to \R$ as $0$ outside of the compact set $\Weyl{k,N}$ and otherwise
\begin{equation}\label{h0eqn}
h_0(\vec{n}) := \EE\left[\prod_{i=1}^{k} q^{x_{n_i}(0)+n_i}\right],
\end{equation}
where the expectation is over the possibly random initial data $x(0)$. Then it follows from Proposition 2.7 (and Definition 2.6) of \cite{BCS} that,
\begin{equation}\label{heqn}
h(t;\vec{n}):=\EE\Big[ \prod_{i=1}^{k} q^{x_{n_i}(t)+n_i}\Big]
\end{equation}
solves the $q$-Boson particle system backward equation with initial data $h_0$. Here the expectation is over both the possibly random initial data and the random evolution of $q$-TASEP. In particular, this, along with Corollary \ref{bwdapp} implies the following.
\begin{corollary}\label{abovecor}
For $q$-TASEP with $N$ particles and initial data $\vec{x}(0)$, $h(t;\vec{n})$ is given by the right-hand side of the first and the second lines of (\ref{ftvecn}) with $f_0(\vec{n})=h_0(\vec{n})$.
\end{corollary}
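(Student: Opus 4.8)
The plan is to recognize this statement as an immediate consequence of the machinery already in place, so that nearly all the work reduces to verifying that the hypotheses of Corollary \ref{bwdapp} are met. Concretely, I would argue in three moves: first, that the initial datum $h_0$ of (\ref{h0eqn}) lies in $\CP{k}$; second, that the Markov duality between $q$-TASEP and the $q$-Boson system forces $h(t;\vec{n})$ of (\ref{heqn}) to solve the $q$-Boson backward equation with this very initial datum; and third, that Corollary \ref{bwdapp}, which produces the unique solution of the backward equation for any $\CP{k}$ datum, therefore identifies $h(t;\vec{n})$ with both right-hand sides of (\ref{ftvecn}) upon setting $f_0=h_0$.

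For the first move I would simply observe that for $N$-particle $q$-TASEP the observable $\prod_{i=1}^{k} q^{x_{n_i}(0)+n_i}$ is only meaningful when each index satisfies $1\le n_i\le N$, and $h_0$ is defined to vanish otherwise; since $\Weyl{k,N}=\{\vec{n}\in\Weyl{k} : N\ge n_1\ge\cdots\ge n_k\ge 1\}$ is a finite set, $h_0$ has compact support and hence $h_0\in\CP{k}$. For the second move I would invoke the duality relation quoted from \cite{BCS} (Proposition 2.7 and Definition 2.6), which the text has already recorded as the assertion that $h(t;\vec{n})$ solves $\tfrac{d}{dt}h=\Abwd h$ with $h(0;\cdot)=h_0$; no further computation is needed beyond citing it.

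The genuinely delicate point, which I expect to be the main obstacle, is justifying that the probabilistically defined $h(t;\vec{n})$ is the \emph{same} solution produced by Corollary \ref{bwdapp}, rather than merely \emph{a} solution of the same equation. Corollary \ref{bwdapp} asserts uniqueness via the triangular structure of $\Abwd$, and I would make this explicit and check that $h(t;\vec{n})$ falls within its scope. Grading states by the total displacement $S(\vec{n})=n_1+\cdots+n_k$, every allowed jump lowers $S$ by one, so $\Abwd$ couples grade $s$ only to grades $s$ and $s-1$. Since $h_0$ vanishes below some minimal grade $s_0$ and the dynamics can only decrease $S$, the expectation $h(t;\vec{n})$ vanishes for $S(\vec{n})<s_0$ as well; this lower boundedness supplies the base case for an upward induction on $s$, at each grade leaving a finite inhomogeneous linear ODE system driven by the already-determined grade $s-1$ values, which pins the solution down uniquely. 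Verifying this lower-support property for $h(t;\cdot)$ is the one step that genuinely uses the structure of the model; granting it, the coincidence of $h(t;\vec{n})$ with both lines of (\ref{ftvecn}) is automatic from Corollary \ref{bwdapp}.
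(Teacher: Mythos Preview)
Your proposal is correct and follows the paper's route exactly: the paper simply records that the duality with $q$-TASEP from \cite{BCS} together with Corollary~\ref{bwdapp} gives the result, without further elaboration. The extra care you take with uniqueness is more than the paper spells out (it calls uniqueness ``evident'' from triangularity); one small correction is that the level sets $\{\vec{n}\in\Weyl{k}:S(\vec{n})=s\}$ are not finite, but since $(\Abwd f)(\vec{n})$ couples $f(\cdot,\vec{n})$ only to grade $s-1$ and not to other grade-$s$ points, each grade-$s$ equation is a scalar inhomogeneous ODE and your induction still goes through.
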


In order to apply Corollary \ref{abovecor} it is necessary to compute $\big(\FqBoson h_0\big)(\vec{z})= \llangle h_0,\psir_{\vec{z}}\rrangle$. For purposes of asymptotics it is desirable to concisely sum the series given by this bilinear pairing. The dual Plancherel formula given in Theorem \ref{KqBosonIdDual} shows how this can be achieved for a certain class of functions.

\begin{corollary}\label{abovecor2}
Consider any function $G\in\LP{k}$, if $f(\vec{n}) = \big(\JqBoson G\big)(\vec{n})$, then $\big(\FqBoson f\big)(\vec{z}) = G(\vec{z})$.
\end{corollary}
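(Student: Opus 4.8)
The plan is to recognize that Corollary \ref{abovecor2} is essentially a restatement of the dual Plancherel formula (Theorem \ref{KqBosonIdDual}), specialized to the given function $G$. All of the genuine work has already been done in establishing that $\KqBosonDual := \FqBoson\JqBoson$ acts as the identity on $\LP{k}$; the present statement merely reads off the consequence of this fact for the particular $f = \JqBoson G$.

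First I would check that the composition is well-defined in the stated function spaces. Since $\JqBoson$ maps $\LP{k}$ into $\CP{k}$ (as recorded in the remark following Definition \ref{KqTASEPdef}), the function $f = \JqBoson G$ indeed belongs to $\CP{k}$, so that $\FqBoson f$ is defined and lands back in $\LP{k}$. With this bookkeeping in place, I would simply unwind the definition of $\KqBosonDual$ from (\ref{KqBosondefDual}): by construction $\KqBosonDual = \FqBoson\JqBoson$, and therefore
$$
\big(\FqBoson f\big)(\vec{z}) = \big(\FqBoson \JqBoson G\big)(\vec{z}) = \big(\KqBosonDual G\big)(\vec{z}).
$$
Theorem \ref{KqBosonIdDual} asserts that $\KqBosonDual$ acts as the identity operator on $\LP{k}$, so $\big(\KqBosonDual G\big)(\vec{z}) = G(\vec{z})$, which is exactly the desired conclusion.

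I do not expect any real obstacle in this corollary: its entire substance is imported from Theorem \ref{KqBosonIdDual}, which itself rests on the spectral orthogonality of Proposition \ref{specorth}. The only point requiring attention is the verification that $f = \JqBoson G \in \CP{k}$, ensuring that $\FqBoson$ and $\JqBoson$ are being applied in the function spaces for which the isomorphism theorem was proved; everything else is a direct substitution.
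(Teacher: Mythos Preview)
Your proposal is correct and matches the paper's own proof, which simply states that the result follows immediately from Theorem \ref{KqBosonIdDual}. Your additional bookkeeping about $\JqBoson G \in \CP{k}$ is a harmless elaboration of the same one-line argument.
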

\begin{proof}
The follows immediately from the application of Theorem \ref{KqBosonIdDual}.
\end{proof}

To illustrate, let us apply this corollary to the two types of $q$-TASEP initial data studied previously in \cite{BorCor} and \cite{BCS}. Step initial data for $q$-TASEP is when $\vec{x}_i(0) = -i$ for $1\leq i\leq N$. Half stationary initial data with parameter $\alpha\in [0,1)$ is when $x_1(0)=-1-X_1$ and, for $i>1$, $x_i=x_{i-1}-1-X_i$. Here the $X_i$'s are independent random variables with common distribution
$$
\PP\big(X=k\big) = (\alpha;q)_{\infty} \frac{\alpha^k}{(q;q)_{k}}.
$$
See \cite[Section 3.3]{BorCor} for a justification of why this is half-stationary initial data. When $\alpha=0$, half-stationary reduces to step initial data.

For step initial data it is immediate to see that the corresponding initial data $h_0$ is given by
$\prod_{i=1}^{k} \bfone_{0<n_i\leq N}$. Due to the fact that the operator $\Abwd$ is triangular, the solution $h(t;\vec{n})$ for $\vec{n}\in \Weyl{k,N}$ does not depend on the initial data $h_0(\vec{n})$ for $\vec{n}$ such that $n_k>N$. Therefore in order to solve for $h(t;\vec{n})$ we may just as well work with initial data
$$h_{0}^{{\rm step}}(\vec{n}) := \prod_{i=1}^{k} \bfone_{n_i>0},$$
although observe that this function is no longer compactly supported.

For half-stationary initial data, finding $h_0$ requires a calculation involving the above common distribution of the $X_i$. This is done in equation (12) of \cite{BCS} and gives (after similar considerations involving the indicator functions) the (non-compactly supported) initial data
\begin{equation}\label{idh0}
h_{0}^{{\rm half}}(\vec{n}) :=\prod_{i=1}^{k}\bfone_{n_i>0} \prod_{j=1}^{i} \left(\frac{1}{1-\alpha/q^j}\right)^{n_{i}-n_{i+1}} = \prod_{j=1}^{k} \bfone_{n_i>0} \left(1-\frac{\alpha}{q^j}\right)^{-n_j},
\end{equation}
with the convention above that $n_{k+1} = 0$.

A particular instance of Corollary \ref{abovecor2} (in fact, an immediate extension in the function space to allow for non-compact support) implies the following.
\begin{lemma}\label{belowlemma}
Consider any $0\leq \alpha< q^k$. Assume that there exist positively oriented closed contours $\gamma_1,\ldots,\gamma_k$ and $\tilde\gamma_1,\ldots,\tilde\gamma_k$ which both satisfy the conditions of Definition \ref{KqTASEPdef} but also do not include $\alpha/q$ and are such that for all $z\in \gamma_i$ and $w\in \tilde\gamma_i$, $|1-z|<|1-w|$. Then, for $\vec{z}$ with $z_i\in \gamma_i$ for $1\leq i\leq k$,
\begin{equation}\label{halflem}
\big(\FqBoson h_{0}^{{\rm half}}\big)(\vec{z}) = (-1)^k  q^{\frac{k(k-1)}{2}} \prod_{j=1}^{k} \frac{1-z_j}{z_j-\alpha/q}.
\end{equation}
\end{lemma}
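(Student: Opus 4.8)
The plan is to realise the right-hand side of (\ref{halflem}) as $G(\vec z)$ for a specific symmetric function $G$, and then invoke the dual Plancherel formula (Theorem \ref{KqBosonIdDual}, in the guise of Corollary \ref{abovecor2}) to identify $\FqBoson h_{0}^{{\rm half}}$ with $G$. Concretely, set
$$
G(\vec z)=(-1)^k q^{\frac{k(k-1)}{2}}\prod_{j=1}^{k}\frac{1-z_j}{z_j-\alpha/q}.
$$
Since $\FqBoson$ and $\JqBoson$ are mutually inverse, it suffices to show $\JqBoson G=h_{0}^{{\rm half}}$; then $\FqBoson h_{0}^{{\rm half}}=\FqBoson\JqBoson G=G$, which is exactly (\ref{halflem}). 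The one caveat is that $G\notin\LP{k}$ (it has simple poles at $z_j=\alpha/q$) and $h_{0}^{{\rm half}}\notin\CP{k}$ (it is not compactly supported), so Corollary \ref{abovecor2} does not apply verbatim; the two contour families $\gamma_i$ and $\tilde\gamma_i$ with $|1-z|<|1-w|$ in the hypothesis are precisely what is needed to push the argument through in this enlarged function space.

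First I would carry out the reduction to $\JqBoson G=h_{0}^{{\rm half}}$. Substituting $G$ into the definition (\ref{JqBosontrans}) and cancelling $(1-z_j)^{-n_j-1}(1-z_j)=(1-z_j)^{-n_j}$ gives
$$
\big(\JqBoson G\big)(\vec n)=(-1)^kq^{\frac{k(k-1)}{2}}\oint_{\gamma_1}\frac{dz_1}{2\pi\i}\cdots\oint_{\gamma_k}\frac{dz_k}{2\pi\i}\prod_{1\le A<B\le k}\frac{z_A-z_B}{z_A-qz_B}\prod_{j=1}^{k}\frac{(1-z_j)^{-n_j}}{z_j-\alpha/q}.
$$
The evaluation is by residues. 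When some $n_j\le0$ the corresponding factor $(1-z_j)^{-n_j}$ is holomorphic at $z_j=1$ and the only remaining singularity in that variable, $z_j=\alpha/q$, lies outside every $\gamma_i$; collapsing that variable's contour shows the integral vanishes, matching $h_{0}^{{\rm half}}(\vec n)=0$ there. For $\vec n$ with all $n_j>0$, the poles in play are $z_j=1$ (of order $n_j$) together with the strings $z_A=qz_B$ coming from $\prod_{A<B}(z_A-qz_B)^{-1}$; collapsing the nested contours onto $\gamma_k$ as in Lemma \ref{expandlem}/Proposition \ref{321} and bookkeeping the residues, each residue at $z_A=qz_B$ multiplies a variable by $q$ and thereby converts a pole at $z_j=\alpha/q$ into one at $\alpha/q^{m}$ for the appropriate level $m$. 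After resummation the integral factorises as $\prod_{j}(1-\alpha/q^{j})^{-n_j}$, which is exactly $h_{0}^{{\rm half}}(\vec n)$; the $\alpha=0$ specialisation recovers the known step-initial-data computation (cf. Proposition \ref{dualiyresult}), so this is its one-parameter deformation.

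Finally I would justify the function-space extension, i.e.\ that $\FqBoson\JqBoson G=G$ continues to hold for this $G$ and $h_{0}^{{\rm half}}$. The hypothesis $\alpha<q^k$ guarantees that the series $\llangle h_{0}^{{\rm half}},\psir_{\vec z}\rrangle=\sum_{\vec n}h_{0}^{{\rm half}}(\vec n)\psir_{\vec z}(\vec n)$ converges, since the growth of $(1-\alpha/q^{j})^{-n_j}$ is dominated by the decay built into $(1-z_i)^{n_j}$ inside $\psir_{\vec z}$ when the $z_i$ lie on contours close enough to $1$, and the separation $|1-z|<|1-w|$ between the $\gamma_i$ (carrying the spectral variable) and the $\tilde\gamma_i$ (carrying the inversion integral) makes the interchange of this summation with the contour integrals in $\FqBoson\JqBoson$ legitimate via absolutely convergent geometric series; this is the only place the two contour families are used. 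The main obstacle is the residue bookkeeping in the middle step: tracking how the $z_A=qz_B$ residues interact with the order-$n_j$ poles at $z_j=1$ and the movable pole at $\alpha/q$ to produce the clean product $\prod_j(1-\alpha/q^{j})^{-n_j}$. This is straightforward for $k=1,2$ but requires the systematic contour-expansion machinery of Proposition \ref{321} in general, which is why the fully explicit combinatorial verification is best relegated to the appendix.
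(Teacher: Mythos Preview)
Your overall strategy is exactly the paper's: set $G$ equal to the right-hand side of (\ref{halflem}), verify $\JqBoson G=h_0^{\rm half}$ by residues, and then extend the dual Plancherel identity $\FqBoson\JqBoson G=G$ beyond $\LP{k}$ using the two contour families in the hypothesis. The function-space extension you sketch is also essentially what the paper does, though the paper phrases it as an approximation argument (choose $G_m\in\LP{k}$ converging uniformly to $G$ on the $\gamma_i$, apply Theorem~\ref{KqBosonIdDual} to each $G_m$, and use the separation $|1-z|<|1-w|$ to get a uniform geometric bound allowing passage to the limit).

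The one place you make life much harder than necessary is the evaluation of $\JqBoson G$ for $\vec n$ with all $n_j>0$. You propose to collapse the nested contours \emph{inward} to $\gamma_k$ via Proposition~\ref{321}, which forces you to track string residues $z_A=qz_B$ interacting with the higher-order poles at $1$ and the moving pole at $\alpha/q$; as you admit, the bookkeeping is opaque. The paper instead expands the contours \emph{outward} to infinity, sequentially starting with $\gamma_1$. For $n_1\ge1$ the integrand is $O(z_1^{-n_1-1})$ at infinity, so the large-circle integral vanishes; the poles at $1$ and at $z_1=qz_B$ are all inside $\gamma_1$, so the only pole crossed is the simple one at $z_1=\alpha/q$ (which lies outside $\gamma_1$ since $\alpha/q<q^{k-1}$). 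The key is that the cross factor evaluates to
\[
\left.\frac{z_1-z_2}{z_1-qz_2}\right|_{z_1=\alpha/q}=\frac{z_2-\alpha/q}{q\,(z_2-\alpha/q^2)},
\]
which cancels the $z_2=\alpha/q$ pole coming from $G$ and replaces it by a simple pole at $z_2=\alpha/q^2$. Thus the $z_2$-integral has exactly the same structure with $\alpha/q\mapsto\alpha/q^2$, and iterating gives $\prod_{j=1}^k(1-\alpha/q^j)^{-n_j}=h_0^{\rm half}(\vec n)$ directly, with no recourse to Proposition~\ref{321}. This outward deformation is the missing simplification in your middle step.
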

\begin{proof}
Define $G(\vec{z}) = \prod_{i=1}^{k} (z_i-\alpha/q)^{-1}$. We claim that $h_{0}^{{\rm half}}(\vec{n}) = \big(\JqBoson G\big)(\vec{n})$, where the definition of $\JqBoson$ is extended to this function $G$ (which is not in $\LP{k}$) by assuming that the contours $\gamma_1,\ldots,\gamma_k$ have the additional condition that they do not include $\alpha/q$. Checking this is easily achieved through residue considerations. Indeed, when $n_k\leq 0$, the $z_k$ integral in $\big(\JqBoson G\big)(\vec{n})$ has no residue at $1$ and hence evaluates to 0, enforcing the condition $n_i>0$, for $1\leq i\leq k$. Otherwise, we find the desired equality by sequentially deforming the $\gamma_1$ through $\gamma_k$ contours to infinity and evaluating the integral $\big(\JqBoson G\big)(\vec{n})$ via its residue at infinity.

If Corollary  \ref{abovecor2} were valid for $G$ as above, then the proof of the lemma would be complete. Though we have not proved that this is the case, we may still complete the proof via an approximation argument. It is possible to find a sequence of functions $G_{m}\in \LP{k}$ so that as $m\to \infty$, $G_{m}(\vec{z})$ converges uniformly over $\vec{z}$ with $z_i\in \gamma_i$ for $1\leq i\leq k$, to $G(\vec{z})$.

In order to extend the result of Corollary  \ref{abovecor2}, we must show that $\big(\FqBoson \JqBoson G\big)(\vec{z}) = G(\vec{z})$ for all $\vec{z}$ with $z_i\in \gamma_i$ for $1\leq i\leq k$. Theorem \ref{KqBosonIdDual} implies that this holds for all $G_{m}$. Consider the $\vec{n}$ term arising in the summation when $\FqBoson$ is applied to $\big(\JqBoson G_m\big)(\vec{n})$. Call $\vec{w}$ the integration variables involved in the formula for $\big(\JqBoson G_m\big)(\vec{n})$. We can choose the integration contours for $\JqBoson$ to be  $\tilde\gamma_1,\ldots, \tilde\gamma_k$. In that case, the $\vec{n}$ term we are considering can be uniformly (in $m$) bounded by a constant times $\delta^{n_1+\cdots+n_k}$ for some $\delta<1$ (which is the maximal ratio of $|1-z_i|/|1-w_i|$ over $z_i\in \gamma_i$, $w_i\in\tilde\gamma_i$ and $1\leq i\leq k$). Since $G$ does not have a pole at $1$, $\big(\JqBoson G_m\big)(\vec{n})$ is supported on the non-negative $\vec{n}$. The above bound along with uniformity in the convergence of $G_m$ to $G$ implies that we can take the $m\to \infty$ limit inside the dual Plancherel formula relation and conclude that $\big(\FqBoson \JqBoson G\big)(\vec{z}) = G(\vec{z})$ as desired.
\end{proof}

The method of the above proof shows one way to try to extend the Plancherel and dual Plancherel formulas to apply to larger classes of functions.

\begin{remark}
Corollary \ref{abovecor2} (in fact Theorem \ref{KqBosonIdDual}) immediately yields many combinatorial formulas like the one in Lemma \ref{belowlemma} above. We provide, in addition to the above proof, a direct inductive proof of Lemma \ref{belowlemma} in Section \ref{inductiveproof} 
The particular identity of Lemma \ref{belowlemma} is akin to one which arises as \cite[Lemma 5]{ImSa} in considering the half-Brownian KPZ equation, as well as bares similarity to the identity in \cite[Section III]{TWBern}.

In applying Corollary \ref{abovecor2}, it is not a priori so clear if given $f$, how to find $G$ such that $f(\vec{n}) = \big(\JqBoson G\big)(\vec{n})$. On strategy is to try to explicitly compute $\big(\FqBoson f\big)(\vec{z})$ for $k=1,2$ and then guess $G$ for general $k$. In order to check such a guess, one need only compute the integrals in applying $\JqBoson$ to $G$ (a opposed to implementing an inductive proof).

There is a class of initial data for $q$-TASEP which arises as marginal distributions of Macdonald processes for which $G$ is clear (the notation and concepts below are explained in detail in \cite{BorCor}, see also \cite{BCGS,BCdiscrete}). Consider $q$-TASEP initialized according to $x_{i}(0) = \lambda^{(i)}_{i} -i$ where $\lambda$ is a Gelfand-Tselin pattern distributed according to an ascending (Macdonald parameter $t=0$ and $q$ the same in the $q$-TASEP) Macdonald process with parameters $a_i\equiv 1$, and $\rho$ a Macdonald non-negative specialization characterized by (non-negative) parameters $\{\alpha_i\}$, $\{\beta_i\}$ and $\gamma$. Then
$$
h_0(\vec{n}) = \frac{D_{n_k}\cdots D_{n_1} \Pi(a_1,\ldots, a_N;\rho)}{\Pi(a_1,\ldots, a_N;\rho)} \Big\vert_{a_i\equiv 1}
$$
where the $D_j$ are Macdonald first difference operators and $\Pi$ is the normalizing constant for the Macdonald process.
These difference operators are naturally encoded via nested contour integrals, from which one readily identities
$h_0(\vec{n}) = \big(\JqBoson G\big) (\vec{n})$ where
$$
G(\vec{z}) = (-1)^k q^{\frac{k(k-1)}{2}} \prod_{i=1}^{k} \frac{\Pi(qz_i;\rho)}{\Pi(z_i;\rho)} \frac{1-z_i}{z_i}.
$$
This implies that $\big( \FqBoson h_0\big)(\vec{z}) = G(\vec{z})$.

Step initial data for $q$-TASEP is a special case of a marginal of a Macdonald process (in which all $\alpha_i=\beta_i=\gamma=0$). Half stationary should arise from a ``two-sided'' generalization of Macdonald processes  (cf. \cite{twosides} for the Schur version) which has yet to be fully developed. Flat or half-flat initial data for $q$-TASEP do not seem likely to arise from Macdonald processes. However, as they have been (non-rigorously) treated at the KPZ equation level, one might hope to explicitly sum the associated series $\big(\FqBoson h_0\big)(\vec{z})$ and proceed to asymptotics for corresponding $q$-TASEP initial data.
\end{remark}

\begin{proposition}\label{dualiyresult}
For step initial data,
\begin{equation}\label{FdefqTASEP}
\EE\Big[ \prod_{i=1}^{k} q^{x_{n_i}(t)+n_i}\Big]  = (-1)^k q^{\frac{k(k-1)}{2}} \oint_{\gamma_1} \frac{dz_1}{2\pi \i}\cdots \oint_{\gamma_k} \frac{dz_k}{2\pi \i} \prod_{1\leq A<B\leq k} \frac{z_A-z_B}{z_A-qz_B} \prod_{j=1}^{k} (1-z_j)^{-n_j} \frac{e^{(q-1)tz_j}}{z_j},
\end{equation}
where $\gamma_1,\ldots, \gamma_k$ are as in  Definition \ref{KqTASEPdef}, with the additional condition that they do not include $0$.
\vskip.1in
For half-stationary initial data with parameter $\alpha\geq 0$ such that $\alpha <q^{k}$,
\begin{equation}\label{FdefqTASEPhalf}
\EE\Big[ \prod_{i=1}^{k} q^{x_{n_i}(t)+n_i}\Big]  = (-1)^k q^{\frac{k(k-1)}{2}} \oint_{\gamma_1} \frac{dz_1}{2\pi \i}\cdots \oint_{\gamma_k}\frac{dz_k}{2\pi \i}\prod_{1\leq A<B\leq k}  \frac{z_A-z_B}{z_A-qz_B} \prod_{j=1}^{k}  (1-z_j)^{-n_j}  \frac{e^{(q-1)tz_j}}{z_j-\alpha/q},
\end{equation}
where $\gamma_1,\ldots, \gamma_k$ are as in Definition \ref{KqTASEPdef} with the additional condition that they do not include $\alpha/q$ (this is possible due to the restriction on the value of $\alpha$).
\end{proposition}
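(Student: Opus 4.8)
The plan is to obtain both formulas by inserting the explicit evaluation of the $q$-Boson transform of the initial data, supplied by Lemma \ref{belowlemma}, into the spectral solution of the backward equation recorded in Corollary \ref{abovecor} (equivalently the second line of \eqref{ftvecn}). Since Lemma \ref{belowlemma} already does the analytic heavy lifting, the argument for the proposition is essentially an algebraic substitution followed by contour bookkeeping.

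First I would treat the half-stationary case. By the Markov duality recalled around \eqref{heqn}, the observable $h(t;\vec n)=\EE[\prod_i q^{x_{n_i}(t)+n_i}]$ solves the $q$-Boson backward equation with initial data $h_0$. For terminal states $\vec n\in\Weyl{k,N}$ this data may, by the triangularity of the generator $\Abwd$ (as exploited for the step data above), be replaced by the non-compactly supported $h_0^{\mathrm{half}}$ without affecting $h(t;\vec n)$. Corollary \ref{abovecor} then gives
\begin{equation*}
h(t;\vec n)=\oint_{\gamma_1}\frac{dz_1}{2\pi\i}\cdots\oint_{\gamma_k}\frac{dz_k}{2\pi\i}\prod_{1\le A<B\le k}\frac{z_A-z_B}{z_A-qz_B}\prod_{j=1}^k(1-z_j)^{-n_j-1}e^{tE(\vec z)}\,\big(\FqBoson h_0^{\mathrm{half}}\big)(\vec z),
\end{equation*}
with $E(\vec z)=(q-1)(z_1+\cdots+z_k)$. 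Substituting $\big(\FqBoson h_0^{\mathrm{half}}\big)(\vec z)=(-1)^k q^{k(k-1)/2}\prod_j \frac{1-z_j}{z_j-\alpha/q}$ from \eqref{halflem}, pulling out the constant prefactor, writing $e^{tE(\vec z)}=\prod_j e^{(q-1)tz_j}$, and collapsing $(1-z_j)^{-n_j-1}(1-z_j)=(1-z_j)^{-n_j}$ yields exactly \eqref{FdefqTASEPhalf}. Since $h_0^{\mathrm{half}}$ at $\alpha=0$ is precisely the step data $h_0^{\mathrm{step}}=\prod_i\bfone_{n_i>0}$, and the hypothesis $0\le\alpha<q^k$ permits $\alpha=0$, the step formula \eqref{FdefqTASEP} is the $\alpha=0$ specialization, with the pole of $\FqBoson h_0^{\mathrm{step}}$ now sitting at $z_j=0$ rather than at $z_j=\alpha/q$.

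The point requiring care is that $h_0^{\mathrm{step}}$ and $h_0^{\mathrm{half}}$ lie outside $\CP{k}$, so neither the defining series for $\FqBoson$ nor the formula of Corollary \ref{abovecor} applies verbatim. This is exactly what the approximation argument inside the proof of Lemma \ref{belowlemma} resolves: one realizes $h_0^{\mathrm{half}}=\JqBoson G$ for the appropriate non-polynomial $G$, approximates $G$ by functions $G_m\in\LP{k}$ for which Theorem \ref{KqBosonIdDual} (via Corollary \ref{abovecor2}) applies, and passes to the limit using the geometric bound $\delta^{n_1+\cdots+n_k}$ coming from nested contours $\gamma_i,\tilde\gamma_i$ with $|1-z|<|1-w|$. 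I expect the only genuine obstacle to be the contour bookkeeping: one must check that contours satisfying Definition \ref{KqTASEPdef} can be chosen to additionally avoid the new pole at $\alpha/q$ (respectively at $0$), which is where $\alpha<q^k$ enters, guaranteeing $\alpha/q$ lies outside a sufficiently small nest of circles around $1$. Once the contours are placed consistently, the substitution above is purely algebraic and the two displayed formulas follow.
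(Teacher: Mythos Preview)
Your proposal is correct and follows essentially the same approach as the paper: reduce to the half-stationary case (step being the $\alpha=0$ specialization), invoke Lemma~\ref{belowlemma} to evaluate $\FqBoson h_0^{\mathrm{half}}$, and substitute into the contour-integral solution of the backward equation from Corollary~\ref{abovecor}. The paper's proof is terser but structurally identical, including the final remark that once the formula is established on contours satisfying the hypotheses of Lemma~\ref{belowlemma}, one may freely deform to any admissible contours avoiding $\alpha/q$.
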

\begin{proof}
It suffices to prove the half-stationary result. Assume that $\gamma_1,\ldots,\gamma_k$ are such that there exist $\tilde\gamma_1,\ldots,\tilde\gamma_k$ so that the hypotheses of Lemma \ref{belowlemma} are satisfied. Then, applying Lemma \ref{belowlemma} and Corollary \ref{abovecor} we arrive at the desired result. Observe that the contours in (\ref{FdefqTASEPhalf}) can be freely deformed now to any choice of $\gamma_1,\ldots,\gamma_k$ as in the statement of the proposition.
\end{proof}

The step initial data version of this result with all $n_i\equiv n$ was proved in \cite[Section 3.3]{BorCor} using the theory of Macdonald processes. General $\vec{n}$ step and half-stationary result was proved in \cite[Theorem 2.11]{BCS}. The Macdonald process approach was then extended in \cite{BCGS} to also cover general $\vec{n}$.

\section{Complex $q$}\label{complexq}

Though we have assumed $q\in (0,1)$ in this paper, many of the results can either be immediately extended (as well as their proofs) or easily modified to accommodate general $q\in \C^{\prime}_k$ where we define
$$\C^{\prime}_k = \big\{z\in \C\setminus \{0\}: \forall 1\leq j<k, z^j\neq 1\big\}.$$
From the perspective of stochastic processes (like the $q$-Boson particle system or $q$-TASEP), having $q$ outside $(-1,1)$ does not seem to be meaningful. However, from the perspective of quantum mechanics such an extension is natural. In fact, a version of the $q$-Boson Hamiltonian (corresponding with the $\e=0$ limit discussed in Section \ref{vandiejen}) with $q$ on the complex unit circle has been used as an integrable regularization of the real time delta Bose gas in the study of quantum quenches (cf. \cite{quench} and references therein). Taking $q$ on the complex unit circle is akin to taking the ratio $\tau=p/q$ in ASEP to be on the complex unit circle or  taking the coupling constant $\Delta$ in the Heisenberg XXZ quantum spin chain (on $\Z$) to be real with $|\Delta|<1$. The Plancherel formulas and completeness / biorthogonality for ASEP and XXZ will be the subject of future work \cite{BCPS2} (see also \cite{BabThom, BabGut, Gut}).


In Section \ref{qBosonsyst}, though the probabilistic interpretation for the backward and forward generators is not valid when $q$ is general, the equivalence of the generators to free generators with $(k-1)$ boundary conditions, and the Bethe ansatz eigenfunctions all stay valid. In short, all results, definitions and formulas hold for general $q\in \C^{\prime}_k$.

In Section \ref{plansec}, the definitions of $\FqBoson$, $\JqBoson$ and $\KqBoson$ remain valid for general $q$. Recall the conditions on the contours $\gamma_1,\ldots, \gamma_k$  that they be positively oriented, closed contours chosen so that they all contain $1$, so that the $\gamma_A$ contour contains the image of $q$ times the $\gamma_B$ contour for all $B>A$, and so that $\gamma_k$ is a small enough circle around 1 that does not contain $q$. For general $q\in \C^{\prime}_k$, these conditions lead to a rather different picture for the contours than that of Figure \ref{circontours}. For instance, for $q$ on the complex unit circle (but not a $j^{th}$ root of unity for $1\leq j<k$), Figure \ref{unitcircontours} shows contours which satisfy the conditions.
\begin{figure}
\begin{center}
\includegraphics[scale=1]{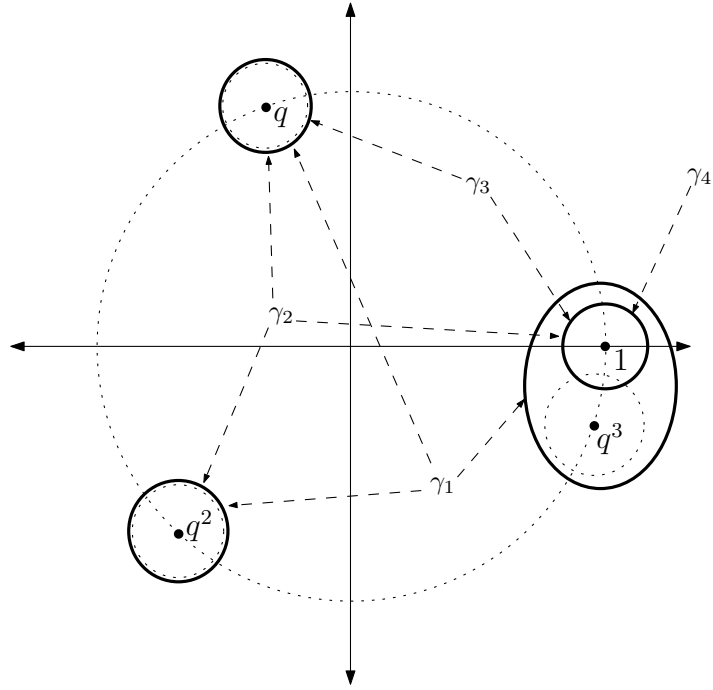}
\end{center}
\caption{Possible $\gamma_1,\ldots,\gamma_k$ contours when $k=4$ and $q$ is on the unit complex circle (approximately $q\approx e^{\frac{5}{8}\pi \i}$). The contour $\gamma_4$ is a small circle around $1$. The contour $\gamma_3$ is the union of the small circle around $1$ and the slightly larger circle around $q$ (so that it contains the image of $\gamma_4$ when multiplied by $q$). The contour $\gamma_2$ is the union of the small circle around $1$, the slightly larger circle around $q$ and the yet larger circle around $q^2$ (so that it contains the image of $\gamma_3$ and $\gamma_4$ when multiplied by $q$). Because the image of $\gamma_2$ when multiplied by $q$ intersects the original small circle around $1$, when defining $\gamma_1$ we take it to be the union of the larger ellipse around $1$ and $q^3$, as well as the circles around $q$ and $q^2$.}\label{unitcircontours}
\end{figure}

For what follows, fix $\gamma_k$ as a circle around 1, small enough so as not to contain $q$. For general $q\in \C^{\prime}_k$, it is possible that the image of $\gamma_k$ when multiplied by a power of $q$ will intersect itself. This will affect the way that a nested contour integral (such as in $\JqBoson$) expands when contours are deformed to $\gamma_k$. Indeed, certain residues previously encountered may not contribute.

With this in mind, define the contour (see Figure \ref{overlap})
$$
\Gamma_{(n)}(q) = \Big\{z\in \gamma_k: \forall 1\leq j< n, q^j z \textrm{ is outside }\gamma_k\Big\}.
$$
These contours are similar in nature (and origin, see \cite{BCPS2}) to the so-called Chebyshev circles considered in \cite{BabGut,Gut} in the context of the XXZ model's Plancherel formula for $|\Delta|<1$.

\begin{figure}
\begin{center}
\includegraphics[scale=1]{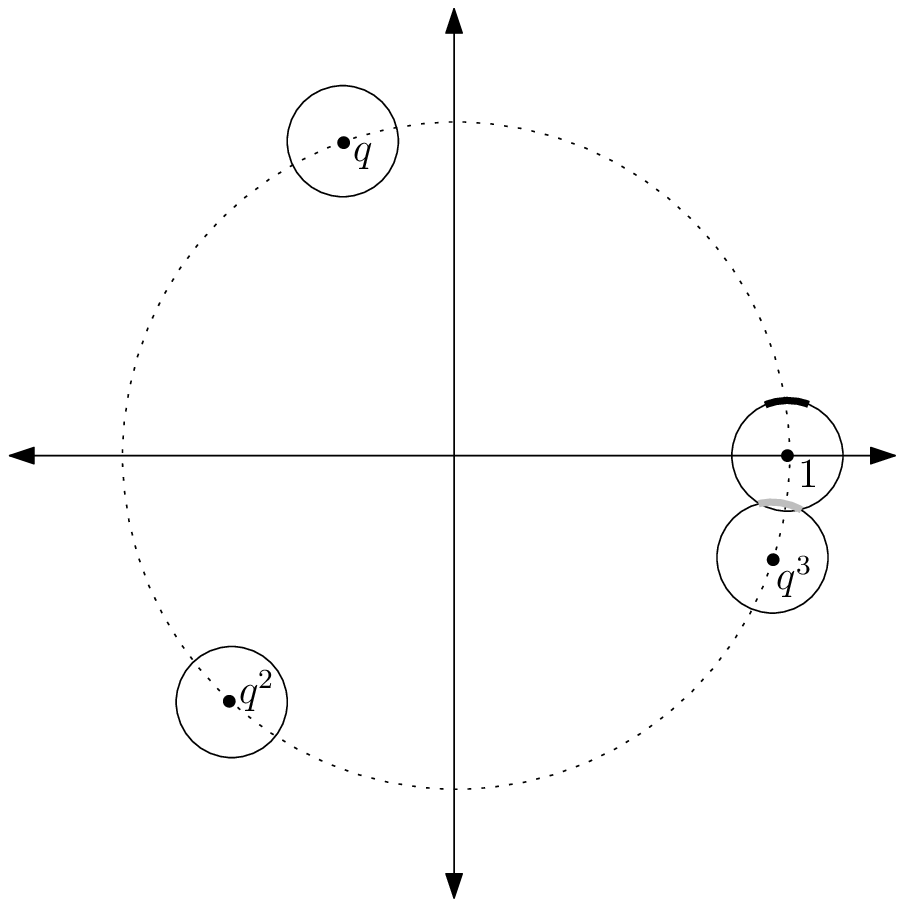}
\end{center}
\caption{For $k=4$, $q\approx e^{\frac{5}{8}\pi \i}$ and a fixed circle $\gamma_4$ around $1$, the contours $\Gamma_{(1)}(q),\Gamma_{(2)}(q)$ and $\Gamma_{(3)}(q)$ are all equal to $\gamma_4$. However, $\Gamma_{(4)}(q)$ is only the portion of $\gamma_4$ which is the complement of the dark black arc. This arc is the set of $z$ such that $q^3 z$ lies inside of $\gamma_4$ (grey arc in the figure).}\label{overlap}
\end{figure}

We may now state the general $q$ version of Lemma \ref{expandlem}.

\begin{lemma}\label{expandlemgeneral}
Fix $q\in \C^{\prime}_k$ and consider a symmetric function $G:\C^k\to \C$ and positively oriented, closed contours $\gamma_1,\ldots, \gamma_k$ such that:
\begin{itemize}
\item The contour $\gamma_k$ is a circle around 1, small enough so as not to contain $q$;
\item For all $1\leq A<B\leq k$, the interior of $\gamma_A$ contains $1$ and the image of $\gamma_B$ multiplied by $q$;
\item For all $1\leq j\leq k$, there exist deformations $D_j$ of $\gamma_j$ to $\gamma_k$ so that for all $z_1,\ldots, z_{j-1},z_{j+1},\ldots, z_k$ with $z_i\in \gamma_i$ for $1\leq i<j$, and $z_i\in \gamma_k$ for $j<i\leq k$, the function $z_j\mapsto \V(\vec{z}) G(z_1,\ldots ,z_j,\ldots, z_k)$ is analytic in a neighborhood of the area swept out by the deformation $D_j$.
\end{itemize}
Then,
\begin{equation*}
\big(\JqBoson G\big)(\vec{n}) = \sum_{\lambda\vdash k}\, \oint_{\Gamma_{(\lambda_1)}(q)} \cdots \oint_{\Gamma_{(\lambda_{\ell(\lambda)})}(q)} d\mu_{\lambda}(\vec{w}) \prod_{j=1}^{\ell(\lambda)} \frac{1}{(w_j;q)_{\lambda_j}}  \psil_{\vec{w}\circ\lambda}(\vec{n}) \, G(\vec{w}\circ \lambda).
\end{equation*}
\end{lemma}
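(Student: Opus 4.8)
The plan is to re-run the residue-expansion argument that underlies Proposition~\ref{321} (and hence Lemma~\ref{expandlem}), paying attention only to the contour geometry, which is the sole feature sensitive to $q$ being complex rather than real. Recall that the nested integral defining $\big(\JqBoson G\big)(\vec{n})$ in (\ref{JqBosontrans}) is evaluated by sequentially contracting the outer contours $\gamma_1,\ldots,\gamma_{k-1}$ onto $\gamma_k$. The analyticity hypotheses imposed on $G$ (identical to those of Lemma~\ref{expandlem}) guarantee that the only poles encountered during these deformations are the simple poles $z_A=qz_B$ coming from the factors $\tfrac{z_A-z_B}{z_A-qz_B}$, together with the pole of $(1-z_j)^{-n_j-1}$ at $z_j=1$, which always remains enclosed by $\gamma_k$. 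Thus the combinatorics of which residues can in principle occur -- the grouping of variables into $q$-strings $(w,qw,\ldots,q^{\lambda_j-1}w)$ that assemble into the specializations $\vec{w}\circ\lambda$ -- is governed by the same algebra as in the real case, and produces the same residue weights, Cauchy-determinant factors, and measure $d\mu_\lambda$ from (\ref{dmulambda}). The hypothesis $q\in\C^{\prime}_k$ keeps all these poles simple and prevents the $q$-strings from closing up, so that every algebraic manipulation in the proof of Proposition~\ref{321} carries over verbatim.

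The single point at which the complex case departs from the real case is the determination of which residues are actually crossed. When a string of length $n$ is built with head $w\in\gamma_k$, its formation requires that, at each step $1\le j<n$, the pole located at $q^{j}w$ be swept across as the relevant contour is contracted to $\gamma_k$; equivalently, the residue contributes precisely when $q^{j}w$ lies \emph{outside} $\gamma_k$ for every $1\le j<n$. For $q\in(0,1)$ and $\gamma_k$ a small circle about $1$ not containing $q$, each image $q^{j}w$ lies strictly between $0$ and $1$ and hence outside $\gamma_k$, so the head ranges over all of $\gamma_k$ and Lemma~\ref{expandlem} is recovered. For general $q\in\C^{\prime}_k$, however, the images $q^{j}w$ may re-enter $\gamma_k$ (for instance when $q$ is near a root of unity), in which case the pole stays inside and no residue is collected. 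The admissible heads are therefore exactly those $w$ with $q^{j}w$ outside $\gamma_k$ for all $1\le j<n$, which is by definition the contour $\Gamma_{(n)}(q)$. Applying this to each of the $\ell(\lambda)$ strings of respective lengths $\lambda_1,\ldots,\lambda_{\ell(\lambda)}$ replaces the integration $\oint_{\gamma_k}$ over each string head $w_j$ by $\oint_{\Gamma_{(\lambda_j)}(q)}$, yielding the claimed formula.

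I expect the main obstacle to be verifying that the bookkeeping of crossed versus uncrossed poles is controlled purely by the membership condition defining $\Gamma_{(n)}(q)$, with no double-counting or spurious higher-order contributions arising from self-intersections of the images $q^{i}\gamma_k$. Concretely, one must check that during the sequential contraction the residues can be organized so that the head of each string traverses $\Gamma_{(\lambda_j)}(q)$ exactly once, and that the partial-fraction and symmetrization step producing $d\mu_\lambda$ is unaffected by the fact that these truncated contours are no longer full circles. This amounts to confirming that the proof of Proposition~\ref{321} never uses positivity of $q$ or the precise shape of $\gamma_k$ -- only the inside/outside dichotomy for the poles $q^{j}w$ -- and that $q\in\C^{\prime}_k$ rules out the degenerate coincidences $q^{j}w=w$ that would otherwise merge distinct string heads or create higher-order poles. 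Once this is established the identity follows, and specializing to $q\in(0,1)$ reduces it to Lemma~\ref{expandlem} as a consistency check.
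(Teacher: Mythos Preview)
Your proposal is correct and follows essentially the same approach as the paper's own proof: both argue that the residue-expansion of Proposition~\ref{321} goes through verbatim for $q\in\C^{\prime}_k$, with the single modification that a string of length $n$ headed at $w\in\gamma_k$ is actually picked up only when every $q^{j}w$ (for $1\le j<n$) lies outside $\gamma_k$, which is exactly the defining condition for $\Gamma_{(n)}(q)$. The paper's proof is terser and does not spell out your concerns about double-counting or the symmetrization step, but the underlying idea is identical.
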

\begin{proof}
This follows immediately from a general $q$ version of Proposition \ref{321}, using the symmetry of $G$ to remove it from the expression for $E^{q}$. The general $q$ version of Proposition \ref{321} has essentially the same proof. Let us illustrate the only modification which is required. As we deformed $\gamma_{k-1}$ through $\gamma_1$ to $\gamma_k$, we picked strings of residues of the form $z_{i_{1}}=qz_{i_{2}}, z_{i_2} = q z_{i_3}, \ldots, z_{i_{\lambda_1-1}}=q z_{i_{\lambda_1}}$ where  $i_{1}< i_{2} < \cdots < i_{\lambda_1}$. However, if $q^j z_{i_{\lambda_1}}$ is contained inside of $\gamma_k$ for some $1\leq j\leq \lambda_1-1$, then the deformation will not cross one of the poles which corresponds with this residue string. Thus, the string will be excluded from the residue expansion. This is the source of the restriction to the $\Gamma_{(n)}(q)$ contours in the above result.
\end{proof}

When $q$ is a root of unity more care must be taken due to the fact that $q^j z$ and $z$ may lie upon the same contour. In the course of proving the above result we encounter denominators which contain terms like $z-q^j z$. Thus, some additional regularization is needed to deal with this case. We do not pursue this here.

Just as with (\ref{Kexpform}), we may use the above lemma to record the analogous general $q$ formula for $\KqBoson$:
\begin{equation}
\big(\KqBoson f\big)(\vec{n}) =\sum_{\lambda\vdash k}\, \oint_{\Gamma_{(\lambda_1)}(q)} \cdots \oint_{\Gamma_{(\lambda_{\ell(\lambda)})}(q)}  d\mu_{\lambda}(\vec{w}) \prod_{j=1}^{\ell(\lambda)} \frac{1}{(w_j;q)_{\lambda_j}}  \psil_{\vec{w}\circ\lambda}(\vec{n}) \llangle f,\psir_{\vec{w}\circ \lambda}\rrangle.
\end{equation}

Theorem \ref{KqBosonId} holds exactly as stated with $\FqBoson$ and $\JqBoson$ as above, and $q\in \C^{\prime}_k$.

This in turn shows that for general $q\in \C^{\prime}_k$, Corollaries \ref{completenesscor} and \ref{north} hold under the one replacement:
$$
\oint_{\gamma_k} \cdots \oint_{\gamma_k} \longrightarrow \oint_{\Gamma_{(\lambda_1)}(q)} \cdots \oint_{\Gamma_{(\lambda_{\ell(\lambda)})}(q)}.
$$
While we expect that suitable modifications are needed for the spectral orthogonality (Proposition \ref{specorth}),  and consequently the dual Plancherel formula (Theorem \ref{KqBosonIdDual}) and the Plancherel isomorphism (Theorem \ref{isothm}), we do not pursue this here.

\section{Two semi-discrete degenerations}\label{semidisc}
This section deals with two different limits of the results contained in Sections \ref{qBosonsyst}, \ref{plansec} and \ref{appsec}. For a parameter $\e>0$ define the operator $M_{\e}$ that acts on functions $f:\Weyl{k}\to \C$ as
\begin{equation}\label{Me}
\big(M_{\e}f\big)(\vec{n}) =  \e^{n_1+\cdots + n_k} f(\vec{n}).
\end{equation}

The first limit (Sections \ref{epdef} and \ref{vandiejen}) involves keeping $q$ fixed but conjugating $\Abwd$ by the operator $M_{\e}$, and then taking $\e\to 0$. The limiting system is equivalent to the delta Bose gas considered by van Diejen \cite{vd} (for root systems of type $A$). It is also related to the $\gamma=0$ version of the $q$-Boson Hamiltonian discussed earlier in Section \ref{afortiori}. For this limiting system we show how the spectral orthogonality of the left and right eigenfunctions follows from the Cauchy-Littlewood identity for Hall-Littlewood symmetric polynomials. This $\e=0$ orthogonality is, in fact, the basis for a general $\e\geq 0$ orthogonality which is proved in Proposition \ref{trueorththm} and implies Proposition \ref{specorth} when $\e=1$.

The second limit (Section \ref{semidiscsec}) involves taking $q=e^{-\e}\to 1$ and also performing an $\e$-dependent conjugation of $\Abwd$ (involving $M_{\e}$ as well). The limiting system here is equivalent to the discrete delta Bose gas considered in \cite[Section 6]{BorCor} and \cite[Section 6]{BCS}. This system arises naturally from studying moments of a probabilistic system called the semi-discrete stochastic heat equation or equivalently the O'Connell-Yor semi-discrete directed polymer partition function (see Section \ref{sdapps}). There is yet a third limit to the continuum delta Bose gas. This is briefly discussed in Section \ref{deltabosesec}.

All other results of this section -- with the exception of Propositions \ref{trueorththm}, \ref{epzeroorth} and the lemmas involved in their proofs --  can either be proved via suitable limits of our earlier $q$-Boson particle system results, or can be proved directly for the limiting system (via the same methods as our earlier results). As such, we do no include these proofs.

In what follows we recall notation from Section \ref{notations} and write $\vec{c} = \vec{c}(\vec{n})$, $M=M(\vec{n})$, and $\vec{g}= \vec{g}(\vec{n})$ (thus suppressing the $\vec{n}$ dependence).

\subsection{An $\e$-deformed $q$-Boson particle system}\label{epdef}

We now develop an $\e$-deformation of the work of Sections \ref{qBosonsyst}, \ref{plansec} and \ref{appsec} (which at $\e=1$ corresponds to these earlier results). The main purpose of doing this is to provide a proof of the spectral orthogonality result of Proposition \ref{specorth}. In Section \ref{orthsecz} we prove that if the spectral orthogonality holds true for some $\e$, then it holds for all $\e$ (by essentially showing the derivative of the relation in $\e$ is zero). It turns out (cf. Section \ref{vandiejenorth}) that the $\e\to 0$ limit of this $\e$-deformation is implied by the Cauchy-Littlewood identity for Hall-Littlewood symmetric polynomials. This limit also makes contact with earlier work of van Diejen \cite{vd}. The general $\e$ deformation corresponds (via setting $\gamma=-\e$) to the $q$-Boson Hamiltonian discussed in Section \ref{afortiori}.

\subsubsection{Coordinate Bethe ansatz}
Fix $q\in (0,1)$ and recall the operator $M_{\e}$ from (\ref{Me}). For $\e>0$ define the operator $\Abwde$ by
$$
\Abwde :=\e  M_{\e}^{-1} \Abwd M_{\e}.
$$
Define $\Afwde$ (the matrix transpose of $\Abwde$) as $\Afwde = \e M_{\e} \Afwd M_{\e}^{-1}$ and $\Amfwde = C_q \Afwde C_{q}^{-1}$.

It is straightforward to see that
\begin{eqnarray}
\big(\Abwde f \big)(\vec{n}) &=& \sum_{i=1}^{M} (1-q^{c_i}) \big( f(\vec{n}_{c_1+\cdots +  c_{i}}^{-}) - \e f(\vec{n})\big)\\
\big(\Amfwde f\big)(\vec{n}) &=& \sum_{i=1}^{M} (1-q^{c_i}) \big(f(\vec{n}_{c_1+\cdots+c_{i-1}+1}^{+}) -\e f(\vec{n})\big).
\end{eqnarray}

Define an $\e$-deformed backward difference operator $\difbwde$ and forward difference operator $\diffwde$ which act on functions $f:\Z\to \C$ as
\begin{equation*}
\big(\difbwde f \big)(n) = f(n-1)-\e f(n),\qquad \big(\diffwde f \big)(n) =f(n+1)- \e f(n).
\end{equation*}

Define the $\e$-deformed $q$-Boson backward free generator $\Freebwde$ in the same way as in (\ref{freebwddef}), with $\difbwd$ replaced by $\difbwde$. Likewise define the $(k-1)$ $\e$-deformed $q$-Boson backward two-body boundary conditions in the same way as in (\ref{star1}), with $\difbwd$ replaced by $\difbwde$. Similarly define the $\e$-deformed $q$-Boson forward free generator $\Freebwde$ and the $(k-1)$ $\e$-deformed $q$-Boson forward two-body boundary conditions.

The result of Proposition \ref{freetrueequiv} then holds with all terms replaced by their $\e$-deformations. Our notation of an eigenfunction remains as in Definition \ref{eigdeffree} and the result of Corollary \ref{eigcorgen} applies under this $\e$-deformation.

Using the coordinate Bethe ansatz (cf. Section \ref{corbetherev}) we construct the below eigenfunctions. For all $z_1,\ldots, z_k\in \C\setminus \{\e\}$, set
\begin{eqnarray*}
\psibwde_{\vec{z}}(\vec{n})  &=& \sum_{\sigma\in S_k} \prod_{1\leq B<A\leq k} \frac{z_{\sigma(A)}-q z_{\sigma(B)}}{z_{\sigma(A)}- z_{\sigma(B)}} \, \prod_{j=1}^{k} (\e-z_{\sigma(j)})^{-n_j}\\
\psimfwde_{\vec{z}}(\vec{n}) &=& \sum_{\sigma\in S_k} \prod_{1\leq B<A\leq k} \frac{z_{\sigma(A)}-q^{-1} z_{\sigma(B)}}{z_{\sigma(A)}- z_{\sigma(B)}} \, \prod_{j=1}^{k} (\e-z_{\sigma(j)})^{n_j}\\
\psifwde_{\vec{z}}(\vec{n}) &=& C_q^{-1}(\vec{n}) \psimfwde_{\vec{z}}(\vec{n}).
\end{eqnarray*}
For fixed $\vec{n}\in \Weyl{k}$, these are symmetric Laurent polynomials in the variables $\e-z_1,\ldots, \e-z_k$.

The result of Proposition \ref{prop211} then holds with all terms  replaced by their $\e$-deformations, and the same eigenvalues $(q-1)(z_1+\cdots +z_k)$ as before. We define the right and left eigenfunctions for the forward generator $\Afwde$ as
$$\psire_{\vec{z}}(\vec{n})=  \psifwde_{\vec{z}}(\vec{n})\qquad \textrm{and}\qquad \psile_{\vec{z}}(\vec{n})=  \psibwde_{\vec{z}}(\vec{n}).$$

\subsubsection{Plancherel formulas}
The Plancherel formulas proved in Section \ref{plansec} are readily adapted to this $\e$-deformation. In order to facilitate this, we must suitably modify Definition \ref{KqTASEPdef}.

\begin{definition}\label{KqTASEPdefe}
Define the space $\LPe{k}$ of symmetric Laurent polynomials in $\e-z_1,\ldots, \e-z_k$.

Fix any set of positively oriented, closed contours $\gamma_1(\e),\ldots, \gamma_k(\e)$ chosen so that they all contain $\e$, so that the $\gamma_A(\e)$ contour contains the image of $q$ times the $\gamma_B(\e)$ contour for all $B>A$, and so that $\gamma_k(\e)$ is a small enough circle around $\e$ so as not to contain $q\e$. For use later, let us also fix contours $\gamma(\e)$ and $\gamma'(\e)$ where $\gamma(\e)$ is a positively oriented closed contour which contains $\e$ and its own image under multiplication by $q$, and $\gamma'(\e)$ contains $\gamma(\e)$ and is such that for all $z\in \gamma(\e)$ and $w\in \gamma'(\e)$, $|\e-w|>|\e-z|$.

We define $\FqBosone$ and $\JqBosone$ (as well as their compositions) in the same manner as $\FqBoson$ and $\JqBoson$ (see Definition \ref{KqTASEPdef}) by replacing $\psir,\psil, \gamma_1,\ldots,\gamma_k$ by their $\e$-deformations, and replacing $1-z_i$ by $\e-z_i$ for $1\leq i\leq k$.
\end{definition}

The Plancherel formula (Theorem \ref{KqBosonId}), the dual Plancherel formula (Theorem \ref{KqBosonIdDual}) and the Plancherel isomorphism (Theorem \ref{isothm}) all hold under the above $\e$-deformation. Defining an $\e$-deformed $q$-Pochhammer symbol $(a;q)^{\e}_{n} := \prod_{i=0}^{n-1} (\e-q^i a)$, we likewise find that Corollary \ref{completenesscor} holds with $(w_j;q)_{\lambda}$ also replaced by $(w_j;q)_{\lambda}^{\e}$. Modifying in the same manner Definition \ref{bilinearprime}, it is immediate that Corollary \ref{north} continues to hold as well.

\subsubsection{Spectral orthogonality}\label{orthsecz}
We turn here to the $\e$-deformation of the spectral orthogonality. We will provide a direct proof of this result here.

\begin{proposition}\label{trueorththm}
Consider a function $F(\vec{z})$ such that for $M$ large enough, $\prod_{i=1}^{k}(\e-z_i)^{-M} \V(\vec{z})F(\vec{z})$ is analytic in the closed exterior of $\gamma(\e)$, and consider another function $G(\vec{w})$ such that $\V(\vec{w})G(\vec{w})$ is analytic in the closed region between $\gamma(\e)$ and $\gamma'(\e)$. Then we have that
\begin{align*}
&\sum_{\vec{n}\in \Weyl{k}}\left( \oint_{\gamma(\e)}\frac{dz_1}{2\pi \i}\cdots  \oint_{\gamma(\e)}\frac{dz_k}{2\pi \i} \psire_{\vec{z}}(\vec{n})\V(\vec{z})F(\vec{z}) \right) \left( \oint_{\gamma(\e)}\frac{dw_1}{2\pi \i}\cdots  \oint_{\gamma(\e)}\frac{dw_k}{2\pi \i} \psile_{\vec{w}}(\vec{n}) \V(\vec{w}) G(\vec{w})\right)\\
&= \oint_{\gamma(\e)}\frac{dw_1}{2\pi \i}\cdots  \oint_{\gamma(\e)}\frac{dw_k}{2\pi \i}  (-1)^{\frac{k(k-1)}{2}} \prod_{j=1}^{k} (\e - w_j) \prod_{A\neq B} (w_A-qw_B) \sum_{\sigma\in S_k} \sgn(\sigma) F(\sigma \vec{w}) G(\vec{w}).
\end{align*}
\end{proposition}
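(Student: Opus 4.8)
The plan is to regard both sides as functions of $\e>0$ for fixed test functions $F,G$, to show that their difference has vanishing $\e$-derivative, and then to pin down the resulting $\e$-independent constant by degenerating to $\e=0$, where the eigenfunctions become Hall--Littlewood polynomials and the statement reduces to the Cauchy--Littlewood identity. This realizes the two-part scheme announced in the introduction: an $\e$-propagation argument together with the $\e=0$ base case (Proposition \ref{epzeroorth}). To begin, I would give the left-hand side meaning as a convergent double integral of a single sum. Using the analyticity hypothesis on $F$ I deform the $\vec z$-contours slightly outward while keeping the $\vec w$-contours on $\gamma(\e)$, so that $|\e-z_i|<|\e-w_j|$ on the contours. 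Since $\psire_{\vec z}(\vec n)$ carries the factors $(\e-z)^{n_j}$ and $\psile_{\vec w}(\vec n)$ the factors $(\e-w)^{-n_j}$, this ratio bound together with the ordering $n_1\geq\cdots\geq n_k$ makes $\sum_{\vec n\in\Weyl{k}}\psire_{\vec z}(\vec n)\psile_{\vec w}(\vec n)$ absolutely and uniformly convergent on the contours, so that the $\Weyl{k}$-summation and the $\vec z,\vec w$ integrations may be freely interchanged and differentiated in $\e$.

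For the propagation step I would fix a single family of admissible contours valid for $\e$ in a small interval, so that all $\e$-dependence lies in the integrand. The key observation is that on the plane-wave factors $\prod_j(\e-z_{\sigma(j)})^{n_j}$ and $\prod_j(\e-w_{\sigma(j)})^{-n_j}$ one has $\partial_\e=-\sum_i\partial_{z_i}$ and $\partial_\e=-\sum_j\partial_{w_j}$ respectively, and integrating each resulting total $z_i$- or $w_j$-derivative around its closed contour contributes nothing. Carrying this out on the left-hand side and comparing with the $\e$-derivative of the right-hand side, whose only explicit $\e$-dependence is the prefactor $\prod_j(\e-w_j)$, I would show $\partial_\e(\mathrm{LHS}-\mathrm{RHS})=0$, so that $\mathrm{LHS}-\mathrm{RHS}$ is constant in $\e$.

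It then remains to evaluate this constant, which I would do by letting $\e\to0$. There $(\e-z_{\sigma(j)})^{\pm n_j}$ collapses to $(-z_{\sigma(j)})^{\pm n_j}$, and after absorbing $C_q^{-1}(\vec n)$ the restricted eigenfunctions $\psire,\psile$ become (mutually dual) Hall--Littlewood polynomials with parameter $q$, indexed by $\lambda=-\vec n$. The sum over $\Weyl{k}$ is then a Hall--Littlewood Cauchy--Littlewood sum, which collapses to a Cauchy-type product kernel in $\vec z,\vec w$; taking residues of this kernel against $F$ and $G$ reproduces exactly $(-1)^{k(k-1)/2}\prod_j(\e-w_j)\prod_{A\neq B}(w_A-qw_B)\sum_{\sigma}\sgn(\sigma)F(\sigma\vec w)G(\vec w)$ in the limit, forcing the constant to vanish. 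Specializing $\e=1$ then yields Proposition \ref{specorth}.

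The main obstacle is the propagation step. The identity $\partial_\e=-\sum_i\partial_{z_i}$ is clean only on the plane-wave factors; it fails on the $\e$-independent Bethe amplitudes $\prod_{1\leq B<A\leq k}\frac{z_{\sigma(A)}-q^{-1}z_{\sigma(B)}}{z_{\sigma(A)}-z_{\sigma(B)}}$ and on the Vandermondes, which are differentiated by $\partial_{z_i}$ but not by $\partial_\e$. One must therefore show that all such correction terms cancel after summation over $\Weyl{k}$ and integration, a cancellation that hinges on the precise form of the amplitudes, the weight $C_q$, and the matching factor $\prod_j(\e-w_j)$ on the right. Justifying the interchange of the infinite $\Weyl{k}$-sum with the $\e$-differentiation and the contour integrals, and controlling the degeneration of the contours $\gamma(\e)$ around the shrinking point $\e$ as $\e\to0$, are the accompanying technical points.
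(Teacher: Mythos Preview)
Your high-level strategy matches the paper exactly: prove the $\e=0$ case via the Hall--Littlewood Cauchy identity (Proposition \ref{epzeroorth}), and then show that the $\e$-derivative of the identity vanishes. The base case is handled just as you describe. The gap is in the propagation step, and you have correctly located it yourself.

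Your proposed mechanism, $\partial_\e=-\sum_i\partial_{z_i}$ on the plane waves followed by integration by parts on the $z$-contours, does not close. As you note, $\partial_{z_i}$ also hits the Bethe amplitudes $\prod_{B<A}\frac{z_{\sigma(A)}-q^{-1}z_{\sigma(B)}}{z_{\sigma(A)}-z_{\sigma(B)}}$ and the Vandermonde $\V(\vec z)F(\vec z)$; the resulting correction terms are not total derivatives and there is no evident symmetry that kills them after summation over $\Weyl{k}$. The paper does \emph{not} pursue this route. Instead it first absorbs the factor $\prod_j(\e-w_j)$ on the right by shifting $\vec n\mapsto\vec n-1$ inside $\psire$, so that the identity becomes $H_{\vec z,\vec w}(\e)=\text{($\e$-independent)}$; then it computes $\partial_\e\psiretilde_{\vec z}(\vec n)$ and $\partial_\e\psile_{\vec w}(\vec n)$ as \emph{finite linear combinations in the spatial variable} $\vec n$,
\[
\partial_\e\psiretilde_{\vec z}(\vec n)=\sum_{\vec m}C^r(\vec n,\vec m)\,\psiretilde_{\vec z}(\vec m),\qquad
\partial_\e\psile_{\vec w}(\vec n)=\sum_{\vec m}C^\ell(\vec n,\vec m)\,\psile_{\vec w}(\vec m),
\]
with $\e$-independent coefficients $C^{r},C^{\ell}$ supported on $\vec m$ obtained from $\vec n$ by decreasing (resp.\ increasing) a block of coordinates within one cluster by $1$. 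Deriving these coefficients is the technical heart: it uses a ``shift lemma'' (if $n_i=n_{i+1}$ then $(c+c'z_{\sigma(i+1)})f\stackrel{\vec n}{\approx}(c+qc'z_{\sigma(i)})f$) to push factors $(\e-z_{\sigma(s)})$ across a cluster, yielding explicit $q$-combinatorial coefficients $D_\e(c,p)=\e^{c-p-1}(1-q)^{c-p-1}\frac{(c)!_q(c-p-1)!_q}{(c-p)!_q(p)!_q}$. The derivative then vanishes because one checks the purely combinatorial identity
\[
\frac{C^r(\vec n-1,\vec m)}{C_q(\vec n)}=-\frac{C^\ell(\vec m+1,\vec n)}{C_q(\vec m+1)},
\]
which is where $C_q$ actually enters. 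In short, the cancellation you are looking for does not happen in the spectral variables via integration by parts; it happens in the spatial variables via an explicit difference-operator identity. (A minor point: to get $|\e-z_i|<|\e-w_j|$ you should deform the $\vec w$-contour \emph{outward} to $\gamma'(\e)$, not the $\vec z$-contour; that is what the analyticity hypothesis on $G$ is for.)
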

\begin{remark}
The above identity may be formally rewritten as
\begin{equation}
\sum_{\vec{n}\in \Weyl{k}}\psire_{\vec{z}}(\vec{n}) \psile_{\vec{w}}(\vec{n}) \V(\vec{z}) \V(\vec{w}) = (-1)^{\frac{k(k-1)}{2}} \prod_{j=1}^{k} (\e - z_j) \prod_{A\neq B} (z_A-qz_B) \det\big[\delta_{z_i,w_j}\big]_{i,j=1}^{k}.
\end{equation}
where $\delta_{z,w}$ is the Dirac delta function for $z=w$.
\end{remark}

\begin{proof}
We start with the simple example of $k=1$. The general $k$ proof is more involved and is given after this example. When $k=1$, $\psire_{z}(n) = -(\e-z)^n$ and $\psile_{z}(n) = (\e-z)^{-n}$. Thus, the above orthogonality identity states that
\begin{equation}\label{pgfouronestar}
-\sum_{n\in \Z} \left(\oint_{\gamma(\e)} (\e-z)^n F(z) dz\right)\, \left(\oint_{\gamma(\e)} (\e-w)^{-n} G(w) dw\right) =  \oint_{\gamma(\e)} (\e-w) F(w)G(w) dw.
\end{equation}
Consider the left-hand side of (\ref{pgfouronestar}). Since for $M$ large enough, $(\e-z)^{-M} F(z)$ is analytic outside $\gamma(\e)$ if $n< -M$ then there is no residue at infinity and the integral in $z$ vanishes. Therefore the summation can be restricted to $n\geq -M$. By Cauchy's theorem we can deform the contour of integration for $w$ so as to lie on the larger contour $\gamma'(\e)$, so that $|\e-z|<|\e-w|$ for all $z\in \gamma(\e)$ and $w\in \gamma'(\e)$. The summation and integration can now be interchanged since $\sum_{n\geq -M} (\e-z)^{n} (\e-w)^{-n}$ is uniformly absolutely convergent. Evaluating this sum shows that the left-hand side of (\ref{pgfouronestar}) equals
$$
\oint_{\gamma(\e)} dz \oint_{\gamma'(\e)} dw \frac{(\e-w)^{M+1}}{(\e-z)^{M}(w-z)} F(z)G(w).
$$
We can apply the residue theorem to evaluate the $z$ integral. Due to the conditions on $F(z)$, outside $\gamma(\e)$ there is only a residue at $z=w$ (and no singularity at infinity), hence we immediately arrive at the right-hand side of (\ref{pgfouronestar}).

This method of proof does not appear so well adapted to $k>1$. Let us introduce (again for $k=1$) the method which we will employ for the general $k>1$ case. In what follows we will work formally, though all manipulations are easily justified by integrating against test functions. Rewrite the orthogonality identity (formally) as
$$
H_{z,w}(\e):=-\sum_{n\in \Z} (\e-z)^{n-1} (\e-w)^{-n} = \delta_{z,w},
$$
so that now the left-hand side depends on $\e$, whereas the right-hand side does not. Here we have called the above left-hand side $H_{z,w}(\e)$. It suffices then to show that $H_{z,w}(0)= \delta_{z,w}$ and that $\frac{d}{d\e} H_{z,w}(\e) = 0$ for all $\e\geq 0$. For $k=1$ the proof that $H_{z,w}(0)= \delta_{z,w}$ is just as above (see Section \ref{vandiejenorth} for the $k>1$ version of this result). As far as the derivative, due to telescoping
$$
\frac{d}{d\e} H_{z,w}(\e) =- \sum_{n\in \Z}\left( (n-1) (\e-z)^{n-2} (\e-w)^{-n} + (-n) (\e-z)^{n-1} (\e-w)^{-n-1}\right) =0.
$$

The general $k>1$ proof splits into two steps. The identity we wish to prove can be rewritten formally as
\begin{equation}\label{sixeight}
H_{\vec{z},\vec{w}}(\e):= \sum_{\vec{n}\in \Weyl{k}}C^{-1}_{q}(\vec{n}) \psiretilde_{\vec{z}}(\vec{n}-1) \psile_{\vec{w}}(\vec{n}) \V(\vec{z}) \V(\vec{w}) = (-1)^{\frac{k(k-1)}{2}} \prod_{A\neq B} (z_A-qz_B) \det\big[\delta_{z_i,w_j}\big]_{i,j=1}^{k},
\end{equation}
where we define $\psiretilde_{\vec{z}}(\vec{n}) := C_{q} \psire_{\vec{z}}(\vec{n})$ and $\vec{n}\pm 1= (n_1\pm 1,\ldots, n_k\pm 1)$. Here we have called the left-hand side $H_{\vec{z},\vec{w}}(\e)$. The purpose of rewriting the desired identity as above is that $\e$ only arises on the left-hand side, and not on the right.

The first step is to prove (\ref{sixeight}) for $\e=0$. This is proved as Proposition \ref{epzeroorth} and follows by identifying $\psirezero_{\vec{z}}(\vec{n})$ and $\psilezero_{\vec{z}}(\vec{n})$ with Hall-Littlewood polynomials (\ref{pgstarstar}) and (\ref{pgstarstarstar}) and using the Cauchy-Littlewood identity (\ref{pgfoursevenstarstar}).

The second step is to prove that $\frac{d}{d\e} H_{\vec{z},\vec{w}}(\e)\equiv 0$. By the Leibnitz rule the derivative either applies to $\psiretilde$ or $\psile$. Let us calculate how each of these terms transforms under differentiation. We will show that the derivatives can be written as
\begin{eqnarray}\label{decomp}
\frac{d}{d\e} \psiretilde_{\vec{z}}(\vec{n}) &=& \sum_{\vec{m}\in \Weyl{k}} C^{r}(\vec{n}, \vec{m}) \psiretilde_{\vec{z}}(\vec{m})\\
\nonumber \frac{d}{d\e} \psile_{\vec{w}}(\vec{n}) &=& \sum_{\vec{m}\in \Weyl{k}} C^{\ell}(\vec{n}, \vec{m}) \psile_{\vec{w}}(\vec{m})
\end{eqnarray}
where $C^r$ and $C^{\ell}$ are matrices independent of $\e$ (and $\vec{z}$ and $\vec{w}$) such that for each fixed $\vec{n}$, there are only finitely many $\vec{m}$ such that the $(\vec{n},\vec{m})$ entry is non-zero. Assuming the above decomposition, we can rewrite
$$
\frac{d}{d\e} H_{\vec{z},\vec{w}}(\e)= \sum_{\vec{n},\vec{m}\in \Weyl{k}} \psiretilde_{\vec{z}}(\vec{m})\psile_{\vec{w}}(\vec{n}) \left(\frac{C^r(\vec{n}-1, \vec{m})}{C_q(\vec{n})} + \frac{C^{\ell}(\vec{m}+1, \vec{n})}{C_q(\vec{m}+1)}\right).
$$
To show that the above is identically zero, it suffices to  show that for all $\vec{m},\vec{n}\in \Weyl{k}$
\begin{equation}\label{crl}
\frac{C^r(\vec{n}-1, \vec{m})}{C_q(\vec{n})} =-  \frac{C^{\ell}(\vec{m}+1, \vec{n})}{C_q(\vec{m}+1)}.
\end{equation}

Therefore, to complete this proof we must first justify the decomposition (\ref{decomp}) and then check that the matrices which arise from that decomposition satisfy the relation (\ref{crl}).
We will explicitly compute the entries of the matrix $C^r$ (and likewise those of $C^{\ell}$). Observe that
\begin{align}\label{expressab}
&\frac{d}{d\e} \psiretilde_{\vec{z}}(\vec{n}) =\\
\nonumber&\sum_{\sigma\in S_k} \prod_{1\leq B<A\leq k} \frac{z_{\sigma(A)}-q^{-1} z_{\sigma(B)}}{z_{\sigma(A)}- z_{\sigma(B)}} \, \prod_{j=1}^{k} (\e-z_{\sigma(j)})^{n_j-1} \,\left( \sum_{s=1}^{k} n_s (\e-z_{\sigma(1)})\cdots \widehat{(\e-z_{\sigma(s)})}\cdots (\e-z_{\sigma(k)}) \right),
\end{align}
where the term $\widehat{(\e-z_{\sigma(s)})}$ is excluded from the product.

For $\vec{n}$ such that $n_{i}>n_{i+1}$ for all $1\leq i\leq k-1$ (i.e., $\vec{n}$ away from the boundary of $\Weyl{k}$) the above expansion can be easily reexpressed through $\psiretilde_{\vec{z}}(\vec{m})$ with coefficients
\begin{equation*}
C^{r}(\vec{n}, \vec{m}) = \begin{cases}  n_i, & \vec{m} = \vec{n}_{i}^{-} \textrm{ for some }1\leq i\leq k-1,\\ 0, &\textrm{else}. \end{cases}
\end{equation*}

When there are clusters in $\vec{n}$ of equal values, $\vec{n}_{i}^{-}$ will lie outside of $\Weyl{k}$ for some $i$. In order to express the derivative in terms of $\psiretilde_{\vec{z}}(\vec{m})$ for $\vec{m}$ inside $\Weyl{k}$ we will use the following property.

\begin{definition}
For $\vec{n}\in \Weyl{k}$ and two functions $f,g:\C^k\to \C$, we write $f\stackrel{\vec{n}}{\approx} g$ if
\begin{equation}\label{diffeq}
\sum_{\sigma\in S_k} \prod_{1\leq B<A\leq k} \frac{z_{\sigma(A)}-q^{-1} z_{\sigma(B)}}{z_{\sigma(A)}- z_{\sigma(B)}} \, \prod_{j=1}^{k} (\e-z_{\sigma(j)})^{n_j} \big(f(z_{\sigma(1)},\ldots,z_{\sigma(k)})-g(z_{\sigma(1)},\ldots,z_{\sigma(k)})\big) = 0.
\end{equation}
\end{definition}

\begin{lemma}\label{shiftlem}
If $n_i=n_{i+1}$ then for any constants $c,c'$ and any function $f:\C^{k}\to \C$ symmetric in the $i$ and $i+1$ entries, we have that
$$
(c + c'z_{\sigma(i+1)})f(\vec{z}) \stackrel{\vec{n}}{\approx} (c + q c' z_{\sigma(i)}) f(\vec{z}).
$$
\end{lemma}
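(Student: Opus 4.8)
The plan is to establish the $\stackrel{\vec n}{\approx}$ relation by the classical Bethe-ansatz device of pairing a permutation $\sigma$ with its composition $\sigma\tau$, where $\tau=(i\ i+1)$ is the transposition swapping the positions $i$ and $i+1$. First I would form the difference of the two sides entering the definition (\ref{diffeq}): since
$$
(c+c'z_{\sigma(i+1)})-(c+qc'z_{\sigma(i)})=c'\,(z_{\sigma(i+1)}-qz_{\sigma(i)}),
$$
and the constant $c'$ factors out, the assertion reduces to showing that $\sum_{\sigma\in S_k}T(\sigma)=0$, where
$$
T(\sigma)=\mathbf{A}(\sigma)\,\prod_{j=1}^{k}(\e-z_{\sigma(j)})^{n_j}\,\big(z_{\sigma(i+1)}-qz_{\sigma(i)}\big)\,f(z_{\sigma(1)},\ldots,z_{\sigma(k)}),\qquad \mathbf{A}(\sigma)=\prod_{1\leq B<A\leq k}\frac{z_{\sigma(A)}-q^{-1}z_{\sigma(B)}}{z_{\sigma(A)}-z_{\sigma(B)}}.
$$
I would read this as an identity of rational functions in $\vec z$, group the $k!$ summands into $k!/2$ pairs $\{\sigma,\sigma\tau\}$, and prove that each pair sums to zero identically.

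Next, fixing $\sigma$ and writing $a=z_{\sigma(i)}$, $b=z_{\sigma(i+1)}$, I would verify that the ``spectator'' data is $\tau$-invariant. Because $n_i=n_{i+1}$, the two exchanged factors in $\prod_j(\e-z_{\sigma(j)})^{n_j}$ are $(\e-a)^{n_i}(\e-b)^{n_{i+1}}$ and $(\e-b)^{n_i}(\e-a)^{n_{i+1}}$, which coincide; and since $f$ is symmetric in its $i$-th and $(i+1)$-th slots, $f(z_{\sigma\tau(1)},\ldots,z_{\sigma\tau(k)})=f(z_{\sigma(1)},\ldots,z_{\sigma(k)})$. Hence the product $\prod_j(\e-z_{\sigma(j)})^{n_j}\,f(z_{\sigma(1)},\ldots,z_{\sigma(k)})$ is unchanged under $\sigma\mapsto\sigma\tau$. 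For $\mathbf{A}$ I would note that only the factor indexed by $(B,A)=(i,i+1)$ changes: any factor pairing $i$ or $i+1$ with an external index $j$ is, since $i,i+1$ are consecutive, one of a symmetric pair (both $i,i+1$ lie on the same side of $j$), which simply recombines with no net change, while factors disjoint from $\{i,i+1\}$ are fixed. Thus
$$
\frac{\mathbf{A}(\sigma\tau)}{\mathbf{A}(\sigma)}=\frac{(a-q^{-1}b)/(a-b)}{(b-q^{-1}a)/(b-a)}=-\,\frac{a-q^{-1}b}{b-q^{-1}a},
$$
so the two paired summands differ only through this ratio and the linear factor, giving $T(\sigma)+T(\sigma\tau)=\mathbf{A}(\sigma)\big[\prod_j(\e-z_{\sigma(j)})^{n_j}f\big]$ times
$$
(b-qa)-\frac{a-q^{-1}b}{b-q^{-1}a}\,(a-qb).
$$
The cancellation then hinges on the elementary identity $(b-qa)(b-q^{-1}a)=(a-qb)(a-q^{-1}b)$, both sides equalling $a^2+b^2-(q+q^{-1})ab$; this forces the bracket to vanish identically, so $T(\sigma)+T(\sigma\tau)\equiv0$.

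The only genuinely delicate point is the bookkeeping for $\mathbf{A}(\sigma\tau)/\mathbf{A}(\sigma)$: one must check carefully that the external-index factors recombine with no change, isolating the single $(i,i+1)$ factor, and that the apparent poles at $a=b$ create no difficulty. I would dispose of the latter by working in the field of rational functions (equivalently, noting that the full symmetrized sum is a Laurent polynomial in the $\e-z_j$, so the spurious poles cancel). Summing the pairwise identity $T(\sigma)+T(\sigma\tau)\equiv0$ over the $k!/2$ pairs then yields the claim; all remaining manipulations are the routine algebra indicated above.
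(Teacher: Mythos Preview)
Your proof is correct and follows essentially the same approach as the paper: pair each $\sigma$ with $\sigma\tau$ (where $\tau=(i\ i+1)$) and show the two summands cancel. The paper packages this slightly more succinctly by observing that the difference factor $z_{\sigma(i+1)}-qz_{\sigma(i)}$ combines with the $(i,i+1)$ numerator factor $z_{\sigma(i+1)}-q^{-1}z_{\sigma(i)}$ of $\mathbf{A}(\sigma)$ into something symmetric in $z_{\sigma(i)},z_{\sigma(i+1)}$, so that the full summand is antisymmetric in those two variables and hence the symmetrization over $\sigma$ vanishes --- this is exactly your pairing argument, with your algebraic identity $(b-qa)(b-q^{-1}a)=(a-qb)(a-q^{-1}b)$ replaced by the equivalent symmetry observation.
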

\begin{proof}
Consider the summand (for a given $\sigma\in S_k$) corresponding to the difference in (\ref{diffeq}) between the left and right sides of the above identity. All that differs between these two sides of the identity is the terms $(c + c' z_{\sigma(i+1)})$ and  $(c + qc'z_{\sigma(i)})$. Taking the difference thus yields the factor $c'q(z_{\sigma(i)}- q^{-1} z_{\sigma(i+1)})$. Combined with the term $(z_{\sigma(i+1)}- q^{-1} z_{\sigma(i)})$ coming from the product over $A>B$, this is symmetric in $z_{\sigma(i)}$ and $z_{\sigma(i+1)}$. Since $n_i=n_{i+1}$, and $F(z_{\sigma(1)},\ldots, z_{\sigma(k)})$ is symmetric in $z_{\sigma(i)}$ and $z_{\sigma(i+1)}$, it follows that everything except for the denominator of the product over $A>B$ is symmetric in $z_{\sigma(i)}$ and $z_{\sigma(i+1)}$. The denominator is antisymmetric, thus, we conclude that the difference between the left and right sides of the desired identity is equal to the symmetrization (summation in $\sigma$) of a summand which is antisymmetric in two of its variables. This implies the symmetrization is zero, and thus so is the difference.
\end{proof}

\begin{lemma}\label{cepslem}
Consider $\vec{n}=(n,\ldots, n)$, then
\begin{equation*}
(\e-z_2)\cdots (\e-z_k) \stackrel{\vec{n}}{\approx} \sum_{i=0}^{k-1} c_{\e}(k,i)(\e -z_1)\cdots (\e - z_i)
\end{equation*}
where
\begin{equation}\label{checked}
c_{\e}(k,i) = \e^{k-i-1}q^{i} \frac{(q;q)_{k-1}}{(q;q)_i}.
\end{equation}
\end{lemma}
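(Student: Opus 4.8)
The plan is to prove the equivalence by pushing the single ``missing'' leading factor to the front through a sequence of adjacent applications of Lemma~\ref{shiftlem}, which produces a clean recursion that I then solve by a lattice-path ($q$-binomial) count. Throughout write $\vec{n}=(n,\dots,n)$, and for $0\le a<b\le k$ set $P_a=\prod_{l=1}^{a}(\e-z_l)$, $T_b=\prod_{l=b+1}^{k}(\e-z_l)$ and $V_{a,b}=P_aT_b$ (so $T_k=1$). Then $V_{a,b}$ is the product of all factors except those indexed by the ``gap'' $a+1,\dots,b$; in particular the left-hand side of the lemma is $V_{0,1}$ and each target product $(\e-z_1)\cdots(\e-z_i)$ is $V_{i,k}$. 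Since $\stackrel{\vec{n}}{\approx}$ is the relation ``difference lies in the kernel of the weighted symmetrization,'' it is an equivalence relation compatible with taking linear combinations, so I may freely chain the steps below.

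First I would establish the cross-gap shift
\[
V_{a,b}\stackrel{\vec{n}}{\approx}\bigl(\e-q^{\,b-a}z_{a+1}\bigr)\,P_a\,T_{b+1}.
\]
Starting from $V_{a,b}=P_a(\e-z_{b+1})T_{b+1}$, I move the factor $(\e-z_{b+1})$ one slot to the left at a time: for $1\le s\le b-a$, the $s$-th step applies Lemma~\ref{shiftlem} at the adjacent pair $(b-s+1,\,b-s+2)$ with $c=\e$, $c'=-q^{\,s-1}$, turning $(\e-q^{\,s-1}z_{b-s+2})$ into $(\e-q^{\,s}z_{b-s+1})$. At every step the multiplier is $P_aT_{b+1}$, whose variables are indexed by $\{1,\dots,a\}\cup\{b+2,\dots,k\}$; the indices touched throughout the chain all lie in $\{a+1,\dots,b+1\}$, so this multiplier is symmetric in the relevant adjacent pair and the hypothesis of Lemma~\ref{shiftlem} genuinely holds. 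After $b-a$ steps the factor arrives at position $a+1$ as $(\e-q^{\,b-a}z_{a+1})$. Writing $(\e-q^{\,b-a}z_{a+1})=q^{\,b-a}(\e-z_{a+1})+\e(1-q^{\,b-a})$ and using $(\e-z_{a+1})P_a=P_{a+1}$ yields the recursion
\[
V_{a,b}\stackrel{\vec{n}}{\approx}q^{\,b-a}\,V_{a+1,b+1}+\e(1-q^{\,b-a})\,V_{a,b+1}\qquad(1\le b\le k-1),
\]
with terminal values $V_{i,k}=P_i$.

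Next I would unfold this recursion from $V_{0,1}$ until the second index reaches $k$. Each resulting term corresponds to a lattice path built from diagonal steps $(a,b)\mapsto(a+1,b+1)$ of weight $q^{\,b-a}$ and horizontal steps $(a,b)\mapsto(a,b+1)$ of weight $\e(1-q^{\,b-a})$; along any path $b-a$ equals $1$ plus the number of horizontal steps taken so far, and the total number of steps to a terminal node is $k-1$. A path ending at $V_{i,k}=P_i$ therefore uses $i$ diagonal and $m:=k-1-i$ horizontal steps; its horizontal weights multiply, independently of the path, to $\e^{m}(q;q)_{m}$, while its diagonal weights multiply to $q^{\,i+\mathrm{inv}}$, where $\mathrm{inv}$ counts the horizontal steps preceding each diagonal step. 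Summing $q^{\mathrm{inv}}$ over all shuffles of $i$ diagonal and $m$ horizontal steps gives the Gaussian binomial $\binom{k-1}{i}_q=(q;q)_{k-1}/\bigl((q;q)_i(q;q)_{m}\bigr)$, so the coefficient of $P_i$ is
\[
\e^{m}(q;q)_{m}\,q^{i}\binom{k-1}{i}_q=\e^{\,k-1-i}\,q^{i}\,\frac{(q;q)_{k-1}}{(q;q)_i}=c_{\e}(k,i),
\]
which is exactly \eqref{checked}.

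The main obstacle is bookkeeping rather than conceptual: one must check that the symmetry hypothesis of Lemma~\ref{shiftlem} holds at every link of the cross-gap chain, which is precisely why the argument moves a single factor across a \emph{contiguous} block of absent variables (keeping the multiplier independent of the two exchanged slots), and one must correctly identify the diagonal-step statistic with a Gaussian binomial coefficient. A routine alternative to the explicit path count is to verify the closed form for $c_{\e}(k,i)$ by induction on $k$ directly from the recursion, using $c_{\e}(k,i)=\e(1-q^{k-1})c_{\e}(k-1,i)$ for $i\le k-2$ together with $c_{\e}(k,k-1)=q^{k-1}$.
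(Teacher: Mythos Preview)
Your proof is correct. Both you and the paper rely on the same basic mechanism: repeated application of Lemma~\ref{shiftlem} to slide a single factor $(\e-z_{b+1})$ across a contiguous block of ``missing'' indices, producing a linear recursion for the coefficients. The paper organizes this as an induction on $k$: it invokes the $(k-1)$-variable result on $(\e-z_2)\cdots(\e-z_{k-1})$, multiplies by $(\e-z_k)$, slides $z_k$ forward, and derives the recursion $c_\e(k,i)=q^{k-i}c_\e(k-1,i-1)+\e(1-q^{k-1-i})c_\e(k-1,i)$, which it then checks against the closed form.

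Your two-parameter recursion $V_{a,b}\stackrel{\vec n}{\approx}q^{b-a}V_{a+1,b+1}+\e(1-q^{b-a})V_{a,b+1}$ is the same recursion in different coordinates (condition on the last step to recover the paper's version), but your organization has two mild advantages. First, by working entirely inside the fixed $k$-variable setting you avoid the implicit ``lifting'' step in the paper's proof---passing from a $(k-1)$-variable $\stackrel{\vec n'}{\approx}$ identity to a $k$-variable one after multiplying by $(\e-z_k)$---which is true but not completely trivial from the definition. Second, your lattice-path unfolding gives a direct combinatorial explanation of the closed form via the Gaussian binomial, rather than merely verifying it. One small quibble: the parenthetical at the end cites the identity $c_\e(k,i)=\e(1-q^{k-1})c_\e(k-1,i)$, which is indeed true for the closed form but is \emph{not} the recursion your lattice paths produce; if you want to offer an inductive alternative, you should check the closed form against the actual path recursion $c_\e(k,i)=q^{k-i}c_\e(k-1,i-1)+\e(1-q^{k-1-i})c_\e(k-1,i)$, which is exactly what the paper does.
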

\begin{proof}
We will proceed by induction on $k$. For $k=1$, the decomposition is clearly true with $c_{\e}(1,0)=1$. Fix that for all $k$, $c_{\e}(k,-1)=c_{\e}(k,k)=0$.  Assume the inductive hypothesis holds for $k-1$. Then
\begin{eqnarray*}
(\e-z_2)\cdots (\e-z_k) &\stackrel{\vec{n}}{\approx}& \sum_{i=0}^{k-2} c_{\e}(k-1,i)(\e -z_1)\cdots (\e - z_i) (\e-z_k)\\
&\stackrel{\vec{n}}{\approx}& \sum_{i=0}^{k-2} c_{\e}(k-1,i)(\e -z_1)\cdots (\e - z_i) \big( \e - q^{k-i-1} z_{i+1}\big)\\
&\stackrel{\vec{n}}{\approx}& \sum_{i=0}^{k-2} c_{\e}(k-1,i)(\e -z_1)\cdots (\e - z_i) \Big( \e(1-q^{k-i-1}) + q^{k-i-1}(\e-z_{i+1})\Big)\\
&\stackrel{\vec{n}}{\approx}& \sum_{i=0}^{k-1} \Big(c_{\e}(k-1,i-1) q^{k-i} + c_{\e}(k-1,i)\e (1-q^{k-i-1})\Big) (\e -z_1)\cdots (\e - z_i).
\end{eqnarray*}
The equality in the first line is from the inductive hypothesis; the equality between the first and second line is from a repeated application of Lemma \ref{shiftlem}; the equality between the second and third line is from rewriting $\e - q^{k-i-1} z_{i+1} = \e(1-q^{k-i-1}) + q^{k-i-1}(\e-z_{i+1})$; and the equality between the third and fourth line is from gathering terms with the same factor of
$(\e -z_1)\cdots (\e - z_i)$, and using the notational assumption $c_{\e}(k,-1)=c_{\e}(k,k)=0$.

The above reasoning shows that the $c_{\e}(k,i)$ are the solutions to the following recurrence relation:
\begin{equation*}
c_{\e}(k,i) = c_{\e}(k-1,i-1) q^{k-i} + c_{\e}(k-1,i) \e (1-q^{k-i-1})
\end{equation*}
with $c_{\e}(k,-1)=c_{\e}(k,k)=0$ and $c_{\e}(1,0)=1$.
It is immediate to check that the right-hand side of (\ref{checked}) uniquely satisfies this relation.
\end{proof}

These $c_{\e}(k,i)$ coefficients combine to form the entries of $C^{r}$. Define
\begin{equation}\label{dexp}
D_{\e}(k,p) = \sum_{j=0}^{p} c_{\e}(k-p+j,j) = \e^{k-p-1} (1-q)^{k-p-1} \frac{(k)!_q (k-p-1)!_q}{(k-p)!_q (p)!_q}.
\end{equation}
Then it is immediate from Lemma \ref{cepslem} that for $\vec{n}=(n,\ldots, n)$
\begin{equation*}
\sum_{\ell=1}^{k} (\e-z_{1})\cdots \widehat{(\e-z_{s})}\cdots (\e-z_{k}) \stackrel{\vec{n}}{\approx} \sum_{p=0}^{k-1} D_{\e}(k,p) (\e-z_{1})\cdots (\e-z_{p}).
\end{equation*}

This shows that for $\vec{n}=(n,\ldots, n)$, $C^{r}(\vec{n},\vec{m}) = n D_{\e}(k,p)$ for all $\vec{m}=(n,\ldots, n, n-1,\ldots,n-1)$ with exactly $1\leq p\leq k$ entries equal to $n$ (and the remaining equal to $n-1$). For all other $\vec{m}$, $C^{r}(\vec{n},\vec{m})=0$.

To state the general $\vec{n}$ result, define
$$\vec{n}_{[a,b]}^{\pm} = (n_1,\ldots,n_{a-1}, n_{a}\pm 1,n_{a+1}\pm 1,\ldots, n_{b}\pm 1,n_{b+1},\ldots, n_{k}).$$

\begin{lemma}\label{rlem}
For $\vec{n}\in \Weyl{k}$, set $(c_1,\ldots, c_M) = \vec{c}(\vec{n})$ (cf. Section \ref{notations}). Then for all $1\leq i\leq M$ and $0\leq p\leq c_{i}-1$,
if $\vec{m} = \vec{n}_{[c_{1}+\cdots +c_{i-1}+p+1,c_{1}+\cdots +c_{i}]}^{-}$ then
\begin{equation}
C^{r}(\vec{n}, \vec{m}) =  n_{c_1+\cdots +c_{i}} D_{\e}(c_i,p),
\end{equation}
and for all other $\vec{m}$, $C^{r}(\vec{n}, \vec{m})=0$.
\end{lemma}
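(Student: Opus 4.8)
The plan is to differentiate the explicit expansion (\ref{expressab}) and then fold the resulting non-admissible summands back into $\Weyl{k}$ using exactly the cluster reduction already packaged in Lemmas \ref{shiftlem} and \ref{cepslem}. Extending $\psiretilde_{\vec z}(\vec m)$ to all $\vec m\in\Z^k$ by its defining formula, a direct inspection of (\ref{expressab}) gives
\begin{equation*}
\frac{d}{d\e}\psiretilde_{\vec z}(\vec n) = \sum_{s=1}^{k} n_s\, \psiretilde_{\vec z}(\vec n_s^-),
\end{equation*}
because, for each $\sigma$ and each $s$, the weight $\prod_j(\e - z_{\sigma(j)})^{n_j-1}$ multiplied by the $s$-th summand $\prod_{l\ne s}(\e - z_{\sigma(l)})$ reconstitutes precisely the exponents of $\vec n_s^-$. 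The summands with $\vec n_s^-\notin\Weyl{k}$ are the ones that must be re-expressed.

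First I would group the sum over $s$ by the clusters of $\vec n$. Writing $a_i = c_1+\cdots+c_{i-1}+1$ and $b_i = c_1+\cdots+c_i$ for the endpoints of the $i$-th cluster (so $n_{a_i}=\cdots=n_{b_i}=n_{b_i}$), the contribution of cluster $i$ is $n_{b_i}\sum_{s=a_i}^{b_i}\psiretilde_{\vec z}(\vec n_s^-)$. For $s$ inside this cluster one factors the defining polynomial as $\prod_{l\notin[a_i,b_i]}(\e - z_{\sigma(l)})$ times $\sum_{s=a_i}^{b_i}\prod_{\substack{l\in[a_i,b_i]\\ l\ne s}}(\e - z_{\sigma(l)})$; the external product is inert and the internal sum is exactly the ``omit one factor'' expression of the $D_\e$-identity, now read in the $c_i$ cluster variables. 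All of this is carried out relative to the weight $\vec n-\vec 1$, whose cluster structure coincides with that of $\vec n$.

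The key step is to invoke Lemma \ref{cepslem}, and the consequent $D_\e$-identity (\ref{dexp}), locally inside this single cluster. This transplants verbatim because the proof of Lemma \ref{cepslem} uses nothing beyond the adjacent-transposition moves of Lemma \ref{shiftlem} for indices of equal $\vec n$-value, i.e. indices within one cluster; under those transpositions the external product, the external exponents, and the symmetric functions propagated through the induction are all symmetric, hence may be absorbed into the function $f$ of Lemma \ref{shiftlem}. Thus, relative to $\stackrel{\vec n}{\approx}$,
\begin{equation*}
\sum_{s=a_i}^{b_i}\psiretilde_{\vec z}(\vec n_s^-) = \sum_{p=0}^{c_i-1} D_\e(c_i,p)\,\psiretilde_{\vec z}\big(\vec n^-_{[a_i+p,\,b_i]}\big),
\end{equation*}
since restoring the weight turns $\prod_{l=a_i}^{a_i+p-1}(\e - z_{\sigma(l)})$ into the eigenfunction whose exponents agree with those of $\vec n$ except on the positions $a_i+p,\ldots,b_i$, which are lowered by one. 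Summing over clusters and using $n_{b_i}=n_{c_1+\cdots+c_i}$ produces the asserted coefficients $C^r(\vec n,\vec m)=n_{c_1+\cdots+c_i}D_\e(c_i,p)$; the points $\vec m$ arising from distinct pairs $(i,p)$ are pairwise distinct (different $i$ perturb different clusters, different $p$ lower different numbers of entries), so no cancellation occurs and every other $\vec m\in\Weyl{k}$ has coefficient zero. The one genuinely delicate point, which I flag as the main obstacle, is this localization: one must be sure that the external factors are invariant under the within-cluster transpositions that drive Lemma \ref{cepslem}, so that the single-cluster computation carries over intact to the multi-cluster configuration.
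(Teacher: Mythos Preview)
Your proof is correct and follows essentially the same approach as the paper: the paper's proof is the single sentence ``the same reasoning as was applied in the $\vec{n}=(n,\ldots,n)$ case may be applied to each cluster of $\vec{n}$,'' and you have carefully unpacked exactly what this means, including the localization point that the external factors are symmetric under within-cluster transpositions and hence absorb into the $f$ of Lemma~\ref{shiftlem}. The main obstacle you flag is precisely the content the paper leaves implicit.
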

\begin{proof}
This same reasoning as was applied in the $\vec{n}= (n,\ldots, n)$ case may be applied to each cluster of $\vec{n}$, thus resulting in the statement of the lemma.
\end{proof}

We likewise find that
\begin{lemma}\label{llem}
For $\vec{n}\in \Weyl{k}$, set $(c_1,\ldots, c_M) = \vec{c}(\vec{n})$ (cf. Section \ref{notations}). Then for all $1\leq i\leq M$ and $1\leq p\leq c_{i}$,
if $\vec{m} = \vec{n}_{[c_{1}+\cdots +c_{i-1}+1,c_{1}+\cdots +c_{i-1}+p]}^{+}$ then
\begin{equation}
C^{\ell}(\vec{n}, \vec{m}) =  -n_{c_1+\cdots +c_{i}} D_{\e}(c_i,c_i-p),
\end{equation}
and for all other $\vec{m}$, $C^{\ell}(\vec{n}, \vec{m})=0$.
\end{lemma}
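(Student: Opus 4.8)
The plan is to bypass re-deriving the combinatorial recursions of Lemmas~\ref{shiftlem}--\ref{rlem} and instead obtain Lemma~\ref{llem} directly from the already-proved formula for $C^{r}$ (Lemma~\ref{rlem}) via the space-reflection symmetry relating the left and right eigenfunctions. Write $\vec{n}^{R}=(-n_{k},\ldots,-n_{1})$, so that $(Rf)(\vec{n})=f(\vec{n}^{R})$ and $\vec{n}^{R}\in\Weyl{k}$ whenever $\vec{n}\in\Weyl{k}$. The input I would first record is the $\e$-deformed analogue of (\ref{psisym}),
\begin{equation*}
\psile_{\vec{z}}(\vec{n})=q^{\frac{k(k-1)}{2}}\,\psiretilde_{\vec{z}}(\vec{n}^{R}).
\end{equation*}
This is verified by the same reindexing used for (\ref{psisym}): substituting $\sigma\mapsto\sigma\circ w_{0}$ (with $w_{0}$ the order-reversing permutation) turns $\prod_{B<A}\frac{z_{\sigma(A)}-qz_{\sigma(B)}}{z_{\sigma(A)}-z_{\sigma(B)}}$ into $q^{k(k-1)/2}\prod_{B<A}\frac{z_{\sigma(A)}-q^{-1}z_{\sigma(B)}}{z_{\sigma(A)}-z_{\sigma(B)}}$ and simultaneously sends the exponents $-n_{j}$ to $n_{k-j+1}$, which is exactly the passage from $\psile_{\vec{z}}(\vec{n})$ to $q^{k(k-1)/2}\psimfwde_{\vec{z}}(\vec{n}^{R})=q^{k(k-1)/2}\psiretilde_{\vec{z}}(\vec{n}^{R})$; the only change from the $\e=1$ case is the replacement of $1-z$ by $\e-z$.

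Granting this identity, I would differentiate it in $\e$. Since $q^{k(k-1)/2}$ is constant and $\vec{n}^{R}$ does not depend on $\e$, applying Lemma~\ref{rlem} at the point $\vec{n}^{R}$ gives
\begin{equation*}
\frac{d}{d\e}\psile_{\vec{z}}(\vec{n})=q^{\frac{k(k-1)}{2}}\sum_{\vec{m}'\in\Weyl{k}}C^{r}(\vec{n}^{R},\vec{m}')\,\psiretilde_{\vec{z}}(\vec{m}')=\sum_{\vec{m}\in\Weyl{k}}C^{r}(\vec{n}^{R},\vec{m}^{R})\,\psile_{\vec{z}}(\vec{m}),
\end{equation*}
where in the last step I insert $\psiretilde_{\vec{z}}(\vec{m}')=q^{-k(k-1)/2}\psile_{\vec{z}}((\vec{m}')^{R})$ and reindex $\vec{m}=(\vec{m}')^{R}$. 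Comparing with the defining decomposition (\ref{decomp}) (the coefficients so produced are $\vec{z}$-independent and finitely supported in $\vec{m}$, inherited from $C^{r}$) yields the compact identity
\begin{equation*}
C^{\ell}(\vec{n},\vec{m})=C^{r}(\vec{n}^{R},\vec{m}^{R}).
\end{equation*}

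It then remains to feed Lemma~\ref{rlem} through the reflection. If $\vec{n}$ has cluster sizes $(c_{1},\ldots,c_{M})$ then $\vec{n}^{R}$ has the reversed sizes $(c_{M},\ldots,c_{1})$; cluster $i$ of $\vec{n}$ corresponds to cluster $M-i+1$ of $\vec{n}^{R}$, has the same size $c_{i}$, and has common value $-n_{c_{1}+\cdots+c_{i}}$. Under reflection, decreasing the last $c_{i}-p'$ positions of that cluster of $\vec{n}^{R}$ becomes increasing the first $c_{i}-p'$ positions of cluster $i$ of $\vec{n}$, so the $\vec{m}$ of Lemma~\ref{rlem} with parameter $p'$ maps to $\vec{m}=\vec{n}_{[c_{1}+\cdots+c_{i-1}+1,\,c_{1}+\cdots+c_{i-1}+p]}^{+}$ with $p=c_{i}-p'$, and the range $0\le p'\le c_{i}-1$ becomes $1\le p\le c_{i}$. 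Substituting $C^{r}(\vec{n}^{R},\vec{m}^{R})=(-n_{c_{1}+\cdots+c_{i}})\,D_{\e}(c_{i},p')$ then gives exactly $-n_{c_{1}+\cdots+c_{i}}\,D_{\e}(c_{i},c_{i}-p)$, which is the asserted value, while all remaining $\vec{m}$ give $0$ because the corresponding $C^{r}$-entry vanishes.

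The main obstacle is not analytic but bookkeeping: one must track carefully that reflection reverses the cluster order and negates the common cluster value (this produces both the minus sign in $C^{\ell}$ and the replacement of a $\,^{-}$ shift by a $\,^{+}$ shift), and that it converts ``last $c_{i}-p'$ positions, decreased'' into ``first $p=c_{i}-p'$ positions, increased'' (this is the origin of the argument $c_{i}-p$, rather than $p$, inside $D_{\e}$). As an alternative consistent with the paper's ``we likewise find'' phrasing, one can instead repeat the argument of Lemmas~\ref{shiftlem}--\ref{rlem} verbatim for $\psile$: differentiation lowers $(\e-w_{\sigma(s)})^{-n_{s}}$ to $(\e-w_{\sigma(s)})^{-n_{s}-1}$, Lemma~\ref{shiftlem} holds with $q$ replaced by $q^{-1}$ (since $\psile$ carries the product $\frac{w_{\sigma(A)}-qw_{\sigma(B)}}{w_{\sigma(A)}-w_{\sigma(B)}}$), and the same $c_{\e}$/$D_{\e}$ recursion reorganizes the omitted-factor products into end-of-cluster segments; this reproduces the identical coefficients but requires re-deriving the recursion that the symmetry argument avoids.
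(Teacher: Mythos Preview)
Your proof is correct and takes a genuinely different route from the paper. The paper's one-line proof (``completely analogous to the $C^{r}$ case'') means re-running the combinatorics of Lemmas~\ref{shiftlem}--\ref{cepslem} directly for $\psile$: differentiating $(\e-w_{\sigma(s)})^{-n_s}$ produces the extra minus sign, and the same omitted-factor expansion is reorganized cluster by cluster into the $D_{\e}$ coefficients. You instead transport the already-established Lemma~\ref{rlem} through the $\e$-deformed reflection identity $\psile_{\vec{z}}(\vec{n})=q^{k(k-1)/2}\,\psiretilde_{\vec{z}}(\vec{n}^{R})$, obtaining the compact relation $C^{\ell}(\vec{n},\vec{m})=C^{r}(\vec{n}^{R},\vec{m}^{R})$ and then reading off the explicit entries by tracking how reflection acts on clusters.

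What your approach buys is economy and conceptual clarity: no recursion is repeated, and the two features that distinguish $C^{\ell}$ from $C^{r}$ are explained structurally rather than computationally --- the minus sign is the negation of the common cluster value under $\vec{n}\mapsto\vec{n}^{R}$, and the argument $c_i-p$ (rather than $p$) inside $D_{\e}$ is the exchange of ``last $c_i-p'$ positions, decreased'' with ``first $p=c_i-p'$ positions, increased.'' The paper's approach, by contrast, is more self-contained at this point and does not require having first recorded the $\e$-deformed version of~(\ref{psisym}); it also makes the parallel with Lemma~\ref{rlem} literal, which is convenient when one later verifies~(\ref{crl}). Either route is fine; yours is the slicker of the two.
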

\begin{proof}
This is completely analogous to the $C^{r}$ case. Note that the negative sign comes from differentiation of terms of the type $(1-z_j)^{-n_j}$.
\end{proof}

\begin{lemma}
The matrices defined in Lemmas \ref{rlem} and \ref{llem} satisfy (\ref{crl}).
\end{lemma}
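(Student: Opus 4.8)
The plan is to fix $\vec{m},\vec{n}\in\Weyl{k}$ and verify the scalar identity (\ref{crl}) by first matching the supports of its two sides and then evaluating both coefficients explicitly there. Write $\vec{c}(\vec{n})=(c_1,\dots,c_M)$ and set $v_i:=n_{c_1+\cdots+c_i}$ for the common value of the $i$-th cluster of $\vec{n}$; since a global shift preserves cluster sizes, $\vec{c}(\vec{n}-1)=\vec{c}(\vec{n})$. By Lemma \ref{rlem}, $C^r(\vec{n}-1,\vec{m})\neq0$ exactly when $\vec{m}=(\vec{n}-1)^{-}_{[a,b]}$ for a cluster tail $a=c_1+\cdots+c_{i-1}+p+1$, $b=c_1+\cdots+c_i$ of $\vec{n}$ with $0\le p\le c_i-1$; equivalently $\vec{m}+1=\vec{n}^{-}_{[a,b]}$ and $\vec{n}=(\vec{m}+1)^{+}_{[a,b]}$.

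First I would check that this support coincides with that of the right-hand side. In $\vec{m}+1=\vec{n}^{-}_{[a,b]}$ the entries $a,\dots,b$ all equal $v_i-1$, while entry $a-1$ (when present) equals $v_i$ if $p\ge1$ and $v_{i-1}$ if $p=0$; in either case it strictly exceeds $v_i-1$, so the block $[a,b]$ is the head of a cluster of $\vec{m}+1$. Hence $\vec{n}=(\vec{m}+1)^{+}_{[a,b]}$ is exactly a move of the type in Lemma \ref{llem}, giving $C^{\ell}(\vec{m}+1,\vec{n})\neq0$; reading the same bookkeeping backwards yields the converse, so the two coefficients vanish simultaneously and it remains only to compare them on their common support.

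The comparison splits according to whether the lowered block merges with the next cluster of $\vec{n}$. If $i=M$ or the gap $g_{i+1}>1$ (no merge), then $\vec{m}+1$ has the cluster sizes of $\vec{n}$ with $c_i$ replaced by the pair $(p,\,c_i-p)$; if $g_{i+1}=1$ (merge), it replaces $(c_i,c_{i+1})$ by $(p,\,c_i-p+c_{i+1})$. Reading off $C^r$ and $C^{\ell}$ from Lemmas \ref{rlem} and \ref{llem} (both carry the factor $v_i-1$ and opposite signs, which cancel against the explicit minus sign in (\ref{crl})), and computing $C_q(\vec{n})/C_q(\vec{m}+1)$ from (\ref{Cqdef}) by cancelling the untouched clusters, reduces (\ref{crl}) in the two cases to
\begin{align*}
\frac{D_{\e}(c_i,p)}{D_{\e}(c_i-p,0)}&=\frac{(c_i)!_q}{(p)!_q\,(c_i-p)!_q},\\
\frac{D_{\e}(c_i,p)}{D_{\e}(c_i-p+c_{i+1},c_{i+1})}&=\frac{(c_i)!_q\,(c_{i+1})!_q}{(p)!_q\,(c_i-p+c_{i+1})!_q},
\end{align*}
respectively. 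Each is immediate from the closed form (\ref{dexp}): the powers of $\e$ and of $(1-q)$ cancel because the relevant differences of arguments agree (both equal $c_i-p-1$), and the surviving $q$-factorials rearrange into the displayed ratios.

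I expect the genuinely delicate part to be the combinatorics of the middle two steps rather than the final algebra — correctly tracking how the cluster decomposition of $\vec{m}+1$ differs from that of $\vec{n}$, and in particular isolating the merge case $g_{i+1}=1$, in which the lowered block fuses with cluster $i+1$ and changes which $q$-factorials occur in $C_q(\vec{m}+1)$. Once the support identification and the two cluster-size tabulations are in place, (\ref{crl}) follows from the single evaluation (\ref{dexp}), completing the proof that $C^r$ and $C^{\ell}$ satisfy the required relation.
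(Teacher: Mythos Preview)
Your proof is correct and follows essentially the same approach as the paper: match the supports of the two sides, read off $C^r$, $C^{\ell}$, $C_q(\vec{n})$, $C_q(\vec{m}+1)$ from Lemmas \ref{rlem}, \ref{llem} and (\ref{Cqdef}), and then verify (\ref{crl}) using the closed form (\ref{dexp}). The paper carries this out only on the illustrative example $\vec{n}=(n^a,(n-1)^b)$ and asserts the general case is similar, whereas you treat general $\vec{n}$ directly and make explicit the merge/no-merge dichotomy; your version is thus a more complete execution of the same idea rather than a different argument.
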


\begin{proof}
Let us consider the example of $\vec{n} = (n,\ldots, n,n-1,\ldots, n-1)$ with $a>0$ entries equal to $n$ and $b\geq 0$ entries  equal to $n-1$ (thus a total length of $a+b$). Then, the only $\vec{m}$ for which $C^r(\vec{n}-1, \vec{m})$ and $C^{\ell}(\vec{m}+1,\vec{n})$ are non-zero is $\vec{m} = (n-1,\ldots, n-1,n-2,\ldots, n-2)$ with $p<a$ entries of $n-1$ and $b+p$ entries of $n-2$. For these $\vec{n}$ and $\vec{m}$ we compute
\begin{align*}
&C^{r}(\vec{n}-1, \vec{m}) = (n-1) D_{\e}(a,p), \qquad C_q(\vec{n}) = (-1)^k q^{-\frac{k(k-1)}{2}} (a)!_q (b)!_q,\\ &C^{\ell}(\vec{m}+1, \vec{n}) =-(n-1) D_{\e}(a+b-p,b),\qquad C_{q}(\vec{m}) =  (-1)^k q^{-\frac{k(k-1)}{2}} (p)!_q (a+b-c)!_q.
\end{align*}
From the explicit formula (\ref{dexp}) for the $D_{\e}$ we immediately confirm the  (\ref{crl}) is satisfied. The general $\vec{n}\in \Weyl{k}$ case follows similarly from the above calculation.
\end{proof}

We return to complete the proof of Proposition \ref{trueorththm}. As we have now justified the decomposition (\ref{decomp}) and checked that the matrices $C^r$ and $C^{\ell}$ satisfy the relation (\ref{crl}), it follows that $\frac{d}{d\e} H_{\vec{z},\vec{w}}(\e)=0$. Since Proposition \ref{epzeroorth} (yet to be proved below) shows that (\ref{sixeight}) holds in the limit $\e\to 0$, these two facts together complete the proof of this proposition.
\end{proof}

\subsection{Van Diejen's delta Bose gas}\label{vandiejen}

We consider the $\e\to 0$ limit of the $\e$-deformed $q$-Boson particle system considered in Section \ref{epdef}. Our primary purpose is to prove Proposition \ref{trueorththm} for $\e=0$, as well as to make contact with earlier work of van Diejen \cite{vd}. We also briefly remark on the limiting versions of the Plancherel formula.

The $\e\to 0$ limits of the generators $\Abwde$, $\Amfwde$ and $\Afwde$ are straightforward (set $\e=0$), and likewise for $\difbwde, \diffwde, \Freebwde, \Freefwde$ and the boundary conditions. In particular the limits of the backward and forward generators ($\Freebwde$ and $\Freefwde$) applied to a function $u(\vec{n})$ give $\sum_{i=1}^{k} u(\vec{n}_{i}^{-})$ and $\sum_{i=1}^{k} u(\vec{n}_{i}^{+})$ (respectively). The backward two-body boundary conditions limit to:
\begin{equation*}
\textrm{for all } 1\leq i\leq k-1\qquad \big(u(\vec{n}_{i}^{-}) - qu(\vec{n}_{i+1}^{-})\big)\big\vert_{\vec{n}:n_i=n_{i+1}} \equiv 0,
\end{equation*}
and the forward two-body boundary conditions limit to:
\begin{equation*}
\textrm{for all } 1\leq i\leq k-1\qquad \big(q u(\vec{n}_{i}^{+}) - u(\vec{n}_{i+1}^{+})\big)\big\vert_{\vec{n}:n_i=n_{i+1}} \equiv 0.
\end{equation*}

This limiting system arose earlier in work of van Diejen \cite{vd}, where he additionally worked in the context of general root systems (the above system corresponds with type $A$). In that work, van Diejen uses Bethe ansatz to construct eigenfunctions for these systems and proves a Plancherel formula (for general root systems) by appealing to earlier work of Macdonald \cite{M1,M3}. Taking the $\e\to 0$ limit of our results also yields these eigenfunctions and such a Plancherel formula (for the type $A$ case).

Let us take the $\e\to 0$ limit of the left and right eigenfunctions of $\Afwde$ and call these $\psilezero_{\vec{z}}(\vec{n})$ and $\psirezero_{\vec{z}}(\vec{n})$. For all $z_1,\ldots, z_k\in \C\setminus \{0\}$, we then have
\begin{eqnarray*}
\psilezero_{\vec{z}}(\vec{n})  &=& \sum_{\sigma\in S_k} \prod_{1\leq B<A\leq k} \frac{z_{\sigma(A)}-q z_{\sigma(B)}}{z_{\sigma(A)}- z_{\sigma(B)}} \, \prod_{j=1}^{k} (-z_{\sigma(j)})^{-n_j}\\
\psirezero_{\vec{z}}(\vec{n}) &=& C_q^{-1}(\vec{n}) \sum_{\sigma\in S_k} \prod_{1\leq B<A\leq k} \frac{z_{\sigma(A)}-q^{-1} z_{\sigma(B)}}{z_{\sigma(A)}- z_{\sigma(B)}} \, \prod_{j=1}^{k} (-z_{\sigma(j)})^{n_j}.
\end{eqnarray*}
We identify these with Hall-Littlewood symmetric polynomials $P_{\lambda}(x;t)$ and $Q_{\lambda}(y;t)$ \cite[Chapter III]{M}: For $\vec{n}$ such that $n_k\geq 1$,
\begin{equation}\label{pgstarstar}
\psilezero_{\vec{z}}(\vec{n}) = (-1)^{\sum_{i=1}^{k} n_i} (1-q)^{-k} Q_{\vec{n}}(z_1^{-1},\ldots, z_{k}^{-1};q),
\end{equation}
while for $\vec{n}$ such that $n_k\geq 0$,
\begin{equation}\label{pgstarstarstar}
\psirezero_{\vec{z}}(\vec{n}) = (-1)^{\sum_{i=1}^{k} n_i} (-1)^k P_{\vec{n}}(z_1,\ldots, z_{k};q).
\end{equation}

The $\e\to 0$ limits of $\FqBosone$ and $\JqBosone$ (as well as their compositions) may thus be expressed in terms of Hall-Littlewood symmetric polynomials. In taking this $\e\to 0$ limit of $\JqBosone$, there is no clear way to take the limit of the nested contours $\gamma_1(\e),\ldots\gamma_k(\e)$. However, we may just as well take all contours to be $\gamma(\e)$ and then there is a clear limit of the integrand and the contour. Indeed, at $\e=0$, we may use any contour $\gamma(0)$ which encloses $0$ and its own image under multiplication by $q$.
In this way we can show $\e\to 0$ limits of the Plancherel formula (Theorem \ref{KqBosonId}), the dual Plancherel formula (Theorem \ref{KqBosonIdDual}) and the Plancherel isomorphism (Theorem \ref{isothm}).

\subsubsection{Proof of $\e=0$ spectral orthogonality}\label{vandiejenorth}
We state and prove the $\e=0$ case of Proposition \ref{trueorththm} by directly appealing to the Cauchy-Littlewood identity for Hall-Littlewood symmetric polynomials. This result then serves as the first step in proving the general $\e$ result of Proposition \ref{trueorththm}, and consequently Proposition \ref{specorth} as well.

\begin{proposition}\label{epzeroorth}
Fix a positively oriented circle $\gamma(0)\subset \C$ which contains 0 and the image of $\gamma(0)$ under multiplication by $q$, and let $\gamma'(0)$ be a circle which contains $\gamma(0)$ and is such that $|z|<|w|$ for all $z\in \gamma(0)$ and $w\in \gamma'(0)$. Consider a function $F(\vec{z})$ such that for $M$ large enough, $\prod_{i=1}^{k}(-z_i)^{-M} \V(\vec{z})F(\vec{z})$ is analytic in the closed exterior of $\gamma(0)$, and consider another function $G(\vec{w})$ such that $\V(\vec{w})G(\vec{w})$ is analytic in the closed region between $\gamma(0)$ and $\gamma'(0)$. Then we have that
\begin{align}\label{epzeroortheqn}
&\sum_{\vec{n}\in \Weyl{k}}\left( \oint_{\gamma(0)}\frac{dz_1}{2\pi \i}\cdots  \oint_{\gamma(0)}\frac{dz_k}{2\pi \i} \psirezero_{\vec{z}}(\vec{n})\V(\vec{z})F(\vec{z}) \right) \left( \oint_{\gamma(0)}\frac{dw_1}{2\pi \i}\cdots  \oint_{\gamma(0)}\frac{dw_k}{2\pi \i} \psilezero_{\vec{w}}(\vec{n}) \V(\vec{w}) G(\vec{w})\right)\\
\nonumber &= \oint_{\gamma(0)}\frac{dw_1}{2\pi \i}\cdots  \oint_{\gamma(0)}\frac{dw_k}{2\pi \i}  (-1)^{\frac{k(k+1)}{2}} \prod_{j=1}^{k} w_j \prod_{A\neq B} (w_A-qw_B) \sum_{\sigma\in S_k} \sgn(\sigma) F(\sigma \vec{w}) G(\vec{w}).
\end{align}
\end{proposition}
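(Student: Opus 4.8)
The plan is to reduce the left-hand side of (\ref{epzeroortheqn}) to a genuine sum over partitions and then collapse that sum with the Cauchy--Littlewood identity for Hall--Littlewood polynomials. First I would deform the $\vec w$-contours from $\gamma(0)$ outward to $\gamma'(0)$, which is permissible because $\V(\vec w)G(\vec w)$ is analytic in the closed annulus between the two circles; this arranges $|z_i|<|w_j|$ for all $i,j$, the regime in which the Cauchy--Littlewood series converges absolutely. Next, using the hypothesis that $\prod_i(-z_i)^{-M}\V(\vec z)F(\vec z)$ is analytic in the closed exterior of $\gamma(0)$, I would argue that the $\vec z$-integral of $\psirezero_{\vec z}(\vec n)\V(\vec z)F(\vec z)$ vanishes once $\vec n$ is sufficiently negative (no residue survives at infinity), so the outer sum over $\vec n\in\Weyl{k}$ is effectively bounded below.

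The structural input that makes the reduction work is the shift-covariance of the eigenfunctions: since adding $\vec 1=(1,\dots,1)$ to $\vec n$ does not change the cluster structure, one has $\psirezero_{\vec z}(\vec n+\vec 1)=\big(\prod_j(-z_j)\big)\psirezero_{\vec z}(\vec n)$ and $\psilezero_{\vec w}(\vec n+\vec 1)=\big(\prod_j(-w_j)\big)^{-1}\psilezero_{\vec w}(\vec n)$. I would substitute $\vec n=\vec m-M\vec 1$ so that $\vec m$ ranges over partitions ($m_k\ge 0$), absorbing the factors $\prod_j(-z_j)^{-M}$ and $\prod_j(-w_j)^{M}$ into new test functions $\tilde F(\vec z)=\prod_j(-z_j)^{-M}F(\vec z)$ and $\tilde G(\vec w)=\prod_j(-w_j)^{M}G(\vec w)$, which retain the required analyticity. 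Having reduced to the partition range, I would interchange the (now absolutely convergent) summation with the integrals, use (\ref{pgstarstar}) and (\ref{pgstarstarstar}) to write $\psirezero_{\vec z}(\vec m)\,\psilezero_{\vec w}(\vec m)=(-1)^k(1-q)^{-k}P_{\vec m}(\vec z;q)\,Q_{\vec m}(\vec w^{-1};q)$, and apply the Cauchy--Littlewood identity (\ref{pgfoursevenstarstar}) to obtain $\sum_{\vec m}P_{\vec m}(\vec z;q)\,Q_{\vec m}(\vec w^{-1};q)=\prod_{i,j}\tfrac{w_j-qz_i}{w_j-z_i}$.

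It then remains to evaluate the $\vec z$-integrals of
\[
(-1)^k(1-q)^{-k}\,\V(\vec z)\tilde F(\vec z)\,\V(\vec w)\tilde G(\vec w)\prod_{i,j}\frac{w_j-qz_i}{w_j-z_i}
\]
by residues. Since $\V(\vec z)\tilde F(\vec z)=\prod_j(-z_j)^{-M}\V(\vec z)F(\vec z)$ is analytic in the closed exterior of $\gamma(0)$ including at infinity, each $z_i$-contour can be pushed outward, collecting only the poles $z_i=w_j$ of the Cauchy kernel; the Vandermonde $\V(\vec z)$ forces distinct $z_i$ to localize at distinct $w_j$, so only the assignments $z_i=w_{\sigma(i)}$ for $\sigma\in S_k$ survive, each carrying a sign $\sgn(\sigma)$ from $\V(\vec z)$. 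Evaluating the surviving kernel factors at these points yields $\prod_{A\ne B}(w_A-qw_B)$, the residue Jacobians together with the diagonal factors produce $\prod_j w_j$, and the shift cancels because $\tilde F(\sigma\vec w)\tilde G(\vec w)=F(\sigma\vec w)G(\vec w)$; tracking the accumulated signs should assemble the prefactor $(-1)^{\frac{k(k+1)}{2}}$ appearing on the right-hand side. I expect the main obstacle to be precisely this residue computation for general $k$ --- identifying which residue strings survive, excluding spurious contributions at infinity, and reconciling all the signs --- together with the careful justification of the interchange of summation and integration; by contrast, the conceptual heart, namely the reduction to a partition sum and the single application of Cauchy--Littlewood, is comparatively clean.
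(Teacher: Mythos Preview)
Your proposal is correct and follows essentially the same route as the paper's proof: truncate the sum via the analyticity hypothesis on $F$, deform the $\vec w$-contours to $\gamma'(0)$, shift into the partition range (you phrase this as the substitution $\vec n=\vec m-M\vec 1$ using shift-covariance, the paper equivalently invokes homogeneity of the Hall--Littlewood polynomials), apply the Cauchy--Littlewood identity (\ref{pgfoursevenstarstar}), and then evaluate the $\vec z$-integrals by residues at $z_i=w_{\sigma(i)}$ with the Vandermonde killing repeated assignments. The only small point you glossed over is that the Hall--Littlewood identification (\ref{pgstarstar}) for $\psilezero$ is stated for $n_k\ge 1$ rather than $n_k\ge 0$, which the paper handles by a one-line residue remark; otherwise your outline and the paper's proof coincide step for step.
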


\begin{proof}
Consider a single term in the left-hand side of (\ref{epzeroortheqn}) with fixed $\vec{n}\in \Weyl{k}$. Due to the analyticity condition on $F(\vec{z})$, if $n_i< -M$ then the corresponding term integrates to zero (this follows from the lack of singularity at infinity). Thus we can restrict the summation to $n_1\geq \cdots \geq n_{k}\geq -M$. Now we can deform the integration contour for $w$ from $\gamma(0)$ to $\gamma'(0)$ and we may appeal to the Hall-Littlewood polynomial representations for $\psirezero$ and $\psilezero$ given in (\ref{pgstarstar}) and (\ref{pgstarstarstar}). Using homogeneity of the Hall-Littlewood polynomials, and taking the integration in $z$ and $w$ outside of the summation, we are left with (for $M$ sufficiently positive) the left-hand side of (\ref{epzeroortheqn}) equal to
\begin{align*}
&(1-q)^{-k} (-1)^k\oint_{\gamma(0)}\frac{dz_1}{2\pi \i}\cdots \oint_{\gamma(0)}\frac{dz_k}{2\pi \i} \oint_{\gamma'(0)} \frac{dw_1}{2\pi \i}\cdots \oint_{\gamma'(0)} \frac{dw_k}{2\pi \i} F(\vec{z}) G(\vec{w})\\
&\left(\frac{w_1\cdots w_k}{z_1\cdots z_k}\right)^{M+1} \sum_{n_1\geq \cdots \geq n_k\geq 1} P_{\vec{n}}(z_1,\ldots, z_k;q)Q_{\vec{n}}(w_1^{-1},\ldots, w_k^{-1};q) \V(\vec{z})\V(\vec{w}).
\end{align*}
The fact that the summation and the two integrations may be interchanged follows from the fact that for $z_j\in \gamma(0)$ and $w_j\in \gamma'(0)$ the summation is uniformly absolutely convergent. By residue considerations we may change the summation over $n_1\geq \cdots \geq n_k\geq 1$ above to $n_1\geq \cdots \geq n_k\geq 0$ without changing the value of the integrals. We now utilize the Cauchy-Littlewood identity for Hall-Littlewood polynomials \cite[III (4.4)]{M}  to explicitly evaluate the summation
\begin{equation}\label{pgfoursevenstarstar}
\sum_{n_1\geq \cdots \geq n_k\geq 0} P_{\vec{n}}(z_1,\ldots, z_k;q)Q_{\vec{n}}(w_1^{-1},\ldots, w_k^{-1};q)  = \prod_{i,j=1}^{k} \frac{w_j- q z_i}{w_j - z_i}.
\end{equation}

Therefore, the left-hand side of (\ref{epzeroortheqn}) is equal to
\begin{align*}
&(1-q)^{-k} (-1)^k\oint_{\gamma(0)}\frac{dz_1}{2\pi \i}\cdots \oint_{\gamma(0)}\frac{dz_k}{2\pi \i} \oint_{\gamma'(0)} \frac{dw_1}{2\pi \i}\cdots \oint_{\gamma'(0)} \frac{dw_k}{2\pi \i}\\
&\left(\frac{w_1\cdots w_k}{z_1\cdots z_k}\right)^{M+1} \prod_{i,j=1}^{k} \frac{w_j- q z_i}{w_j - z_i} \V(\vec{z})\V(\vec{w}) F(\vec{z})G(\vec{w}).
\end{align*}
We now apply the residue theorem to evaluate the $z$ integrals. For each $z_i$ there are residues outside of $\gamma(0)$ at $w_1,\ldots, w_k$ (there are no singularities at $z_j=\infty$). However, due to the term $\V(\vec{z})$, no two of these $z$ variables can pick the same $w$ variable (otherwise the Vandermonde gives zero). Thus, the $z$ integrals expand as a sum over residues from $z_i= w_{\sigma(i)}$ for $1\leq i\leq k$ and $\sigma\in S_k$:
\begin{align*}
&\oint_{\gamma(0)}\frac{dz_1}{2\pi \i}\cdots \oint_{\gamma(0)}\frac{dz_k}{2\pi \i} \left(\frac{w_1\cdots w_k}{z_1\cdots z_k}\right)^{M+1} \prod_{i,j=1}^{k} \frac{w_j- q z_i}{w_j - z_i} \V(\vec{z})\V(\vec{w}) F(\vec{z}) \\
& = (1-q)^{k} (-1)^{-\frac{k(k-1)}{2}} \prod_{j=1}^{k} w_j \prod_{A\neq B} (w_A-qw_B)  \sum_{\sigma\in S_k} \sgn(\sigma) F(w_{\sigma(1)},\ldots, w_{\sigma(k)}).
\end{align*}
The integral of this expression times $(1-q)^{-k} (-1)^k G(\vec{w})$ is readily matched to the integral of the right-hand side of (\ref{epzeroortheqn}).
\end{proof}
%
%
%
%
%

\subsection{Another discrete delta Bose gas}\label{semidiscsec}
Let $q=e^{-\e}$ and recall the operator $M_{\e}$ from (\ref{Me}). Define the operator $\Abwdsd_{\e}$ by
$$
\Abwdsd_{\e} = \e^{-2} M_{\e}^{-1} \Abwd M_{\e} + k\big(\e^{-1}  -\tfrac{3}{2}\big) \Id.
$$
Then, as $\e\to 0$, $\Abwdsd_{\e} \to \Abwdsd$ where we call this limit the semi-discrete backward generator (because of the connection with the semi-discrete stochastic heat equation, see below). Below we record (without proof) scaling limits (under the above scaling) of various results we have proved earlier in Sections \ref{qBosonsyst}, \ref{plansec} and \ref{appsec}.

\subsubsection{Coordinate Bethe ansatz}

The semi-discrete backward generator\footnote{This is not a Markov generator since it is not stochastic.} $\Abwdsd$ acts on functions $f:\Weyl{k}\to \C$ as
\begin{equation*}
\big(\Abwdsd f \big)(\vec{n}) = \sum_{i=1}^{M} \left(c_i \big(\difbwd_{c_1+\cdots + c_{i}}f\big)(\vec{n}) + \frac{c_i(c_i-1)}{2} f(\vec{n})\right).
\end{equation*}
The semi-discrete forward generator is the matrix transpose of $\Abwdsd$ and acts on functions $f:\Weyl{k}~\to~\C$ as
\begin{equation*}
\big(\Afwdsd f \big)(\vec{n}) = \sum_{i=1}^{M} \left(\big( (c_{i-1}+1) \bfone_{g_i=1} + \bfone_{g_i>1}\big) f(\vec{n}_{c_1+\cdots+c_{i-1}+1}^{+})  - c_i f(\vec{n}) +  \frac{c_i(c_i-1)}{2} f(\vec{n})\right)
\end{equation*}
where by convention, for $i=1$ we set $c_{i-1}+1=c_1+\cdots + c_{i-1}+1=1$.

The function $C:\Weyl{k}\to \R$ depends only on the list $\vec{c}(\vec{n})=(c_1,\ldots, c_M)$ of cluster sizes for $\vec{n}$ via
\begin{equation}\label{Cn}
C(\vec{n}) = (-1)^k \prod_{i=1}^{M} (c_i)!.
\end{equation}
We will use the notation of $C$ and $C^{-1}$ for multiplication operators by $C(\vec{n})$ and $\big(C(\vec{n})\big)^{-1}$. It is clear that these operators commute with $R$ (see Section \ref{notations}).

The semi-discrete conjugated forward generator is defined by
$$\Amfwdsd = C \Afwdsd C^{-1},$$
and it acts as
\begin{equation*}
\big(\Amfwdsd f\big)(\vec{n}) =  \sum_{i=1}^{M}\left( c_i \big( \diffwd_{c_1+\cdots + c_{i-1}+1} f\big)(\vec{n}) + \frac{c_i(c_i-1)}{2} f(\vec{n})\right).
\end{equation*}

Using the backward and forward difference operators (Section \ref{notations}) we define the semi-discrete backward free generator $\Freebwdsd$ which acts on functions $u:\Z^k\to \C$ as
\begin{equation*}
\big(\Freebwdsd u \big)(\vec{n}) = \sum_{i=1}^{k} \big(\difbwd_i u\big)(\vec{n})
\end{equation*}
where $\difbwd_i$ acts as $\difbwd$ in the variable $n_i$.
We say that the function $u:\Z^k\to \C$  satisfies the $(k-1)$ semi-discrete backward two-body boundary conditions if
\begin{equation*}
\textrm{for all } 1\leq i\leq k-1\qquad \big(\difbwd_i - \difbwd_{i+1} -1\big)u\big\vert_{\vec{n}:n_i=n_{i+1}} \equiv 0.
\end{equation*}

Similarly, the semi-discrete forward free generator $\Freefwdsd$ acts on functions $u:\Z^k\to \C$ as
\begin{equation*}
\big(\Freefwdsd u\big)(\vec{n}) =\sum_{i=1}^{k} \diffwd_i u(\vec{n})
\end{equation*}
where $\diffwd_i$ acts as $\diffwd$ in the variable $n_i$. We say that the function $u:\Z^k\to \C$  satisfies the $(k-1)$ semi-discrete forward two-body boundary conditions if
\begin{equation*}
\textrm{for all } 1\leq i\leq k-1\qquad \big(1+\diffwd_i - \diffwd_{i+1}\big)u\big\vert_{\vec{n}:n_i=n_{i+1}} \equiv 0.
\end{equation*}

\begin{proposition}\label{freetrueequivsd}
If $u:\Z^k\to \C$ satisfies the $(k-1)$ semi-discrete backward (respectively, forward) two-body boundary conditions, then for $\vec{n}\in \Weyl{k}$,
\begin{equation*}
\big(\Freebwdsd u\big)(\vec{n}) = \big(\Abwdsd u\big)(\vec{n})\qquad  \big(\textrm{respectively, } \big(\Freefwdsd u\big)(\vec{n}\big) = \big(\Amfwdsd u\big)(\vec{n})\big).
\end{equation*}
\end{proposition}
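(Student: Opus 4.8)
The plan is to mirror the proof of Proposition~\ref{freetrueequiv}, replacing the geometric telescoping (which produced $q$-factors there) by an arithmetic telescoping adapted to the semi-discrete boundary conditions. I would begin with the backward case. Fix $\vec{n}\in\Weyl{k}$ with cluster data $\vec{c}=(c_1,\ldots,c_M)$, and suppose $u$ satisfies the semi-discrete backward two-body boundary conditions, which for consecutive indices $i,i+1$ lying in a common cluster (so that $n_i=n_{i+1}$) read $\big(\difbwd_i u\big)(\vec{n}) = \big(\difbwd_{i+1}u\big)(\vec{n}) + u(\vec{n})$.

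Within the first cluster $n_1=\cdots=n_{c_1}$ I would iterate this relation from the top of the cluster downward to obtain $\big(\difbwd_i u\big)(\vec{n}) = \big(\difbwd_{c_1}u\big)(\vec{n}) + (c_1-i)\,u(\vec{n})$ for $1\le i\le c_1$. Summing over $i$ and using $\sum_{i=1}^{c_1}(c_1-i)=\tfrac{c_1(c_1-1)}{2}$ gives
\begin{equation*}
\sum_{i=1}^{c_1}\big(\difbwd_i u\big)(\vec{n}) = c_1\big(\difbwd_{c_1}u\big)(\vec{n}) + \tfrac{c_1(c_1-1)}{2}\,u(\vec{n}),
\end{equation*}
which is precisely the $i=1$ summand of $\Abwdsd$. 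The identical argument applied to each remaining cluster (with the role of $\difbwd_{c_1}$ taken by $\difbwd_{c_1+\cdots+c_i}$, the backward difference in the last index of cluster $i$), followed by summation over all clusters, then yields $\big(\Freebwdsd u\big)(\vec{n})=\big(\Abwdsd u\big)(\vec{n})$.

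For the forward case I would argue symmetrically. The semi-discrete forward boundary condition gives, for $i,i+1$ in a common cluster, $\big(\diffwd_{i+1}u\big)(\vec{n}) = \big(\diffwd_i u\big)(\vec{n}) + u(\vec{n})$, so that within cluster $i$ the forward differences increase by $u(\vec{n})$ as the index runs upward from the bottom of the cluster; telescoping from the bottom index $c_1+\cdots+c_{i-1}+1$ and summing produces $c_i\big(\diffwd_{c_1+\cdots+c_{i-1}+1}u\big)(\vec{n}) + \tfrac{c_i(c_i-1)}{2}u(\vec{n})$, which matches the $i$-th summand of $\Amfwdsd$ exactly as written. Unlike in the proof of Proposition~\ref{freetrueequiv}, no appeal to a separate lemma (the analog of Lemma~\ref{equalAfwdlem}) is needed, since $\Amfwdsd$ is already presented in the form summed over clusters with the bottom index highlighted.

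I do not expect any genuine obstacle: the only thing to be careful about is the direction of the telescoping (downward in the backward case, upward in the forward case) and the bookkeeping of which index within each cluster survives as the lone difference operator. The combinatorial identity $\sum_{j=0}^{c-1}j=\tfrac{c(c-1)}{2}$ is exactly the degeneration of the $q$-count $\sum_{j=1}^{c}q^{j-1}=\frac{1-q^c}{1-q}$ appearing in Proposition~\ref{freetrueequiv}, so one could alternatively derive the statement by taking the $q=e^{-\e}\to1$ scaling limit defining $\Abwdsd$; the direct telescoping is cleaner and avoids tracking the conjugation by $M_\e$ and the additive constant $k(\e^{-1}-\tfrac{3}{2})\Id$.
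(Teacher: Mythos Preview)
Your proposal is correct and follows exactly the approach the paper intends: the paper does not write out a proof of Proposition~\ref{freetrueequivsd}, stating instead that such results ``can either be proved via suitable limits of our earlier $q$-Boson particle system results, or can be proved directly for the limiting system (via the same methods as our earlier results).'' Your direct argument mirroring Proposition~\ref{freetrueequiv}, with the geometric telescoping $\sum_{j=1}^{c}q^{j-1}=\frac{1-q^c}{1-q}$ replaced by the arithmetic one $\sum_{j=0}^{c-1}j=\tfrac{c(c-1)}{2}$, is precisely the latter option and is carried out correctly.
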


\begin{remark}\label{commremsd}
Proposition \ref{freetrueequivsd} implies that
$R^{-1} \Abwdsd R = \Amfwdsd = C \Afwdsd C^{-1}$ or equivalently
$$
\Abwdsd = (R C) \Afwdsd (R C)^{-1}
$$
showing that $\Abwdsd$ and $\Afwdsd$ are related via a similarity transform.
\end{remark}

\begin{definition}\label{eigdeffreesd}
A function $\psibwdsd:\Z^k\to \C$ is an eigenfunction of the semi-discrete backward free generator with $(k-1)$ two-body boundary conditions if $\psibwdsd$ is an eigenfunction for the semi-discrete backward free generator that satisfies the $(k-1)$ semi-discrete backward two-body boundary conditions. We likewise define what it means for a function $\psimfwdsd:\Z^k\to \C$ to be an eigenfunction of the semi-discrete forward free generator with $(k-1)$ two-body boundary conditions.
\end{definition}

The following is a corollary of Proposition \ref{freetrueequivsd}.

\begin{corollary}\label{eigcorgensd}
Any eigenfunction $\psibwdsd:\Z^k\to \C$ for the semi-discrete backward free generator with $(k-1)$ two-body boundary conditions is, when restricted to $\vec{n}\in \Weyl{k}$, an eigenfunction for the semi-discrete backward generator $\Abwdsd$ with the same eigenvalue.

Similarly, any eigenfunction $\psimfwdsd:\Z^k\to \C$ for the semi-discrete forward free evolution equation with $(k-1)$ two-body boundary conditions is, when restricted to $\vec{n}\in \Weyl{k}$, an eigenfunction for the semi-discrete conjugated forward generator $\Amfwdsd$ with the same eigenvalue. In turn, $C^{-1}\psimfwdsd$ is an eigenfunction for the semi-discrete forward generator $\Afwdsd$ with the same eigenvalue.
\end{corollary}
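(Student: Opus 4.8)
The plan is to mirror verbatim the argument behind Corollary \ref{eigcorgen}, since the semi-discrete setup is structurally identical and Proposition \ref{freetrueequivsd} supplies exactly the bridge between the free generators and the true generators. First I would treat the backward case. Let $\psibwdsd:\Z^k\to\C$ be an eigenfunction of $\Freebwdsd$ with some eigenvalue $E$ which satisfies the $(k-1)$ semi-discrete backward two-body boundary conditions. By definition of an eigenfunction of the free generator, the relation $(\Freebwdsd\psibwdsd)(\vec{n}) = E\,\psibwdsd(\vec{n})$ holds for all $\vec{n}\in\Z^k$, and in particular for $\vec{n}\in\Weyl{k}$. Proposition \ref{freetrueequivsd} then rewrites the left-hand side: for $\vec{n}\in\Weyl{k}$ one has $(\Abwdsd\psibwdsd)(\vec{n}) = (\Freebwdsd\psibwdsd)(\vec{n}) = E\,\psibwdsd(\vec{n})$. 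Hence the restriction of $\psibwdsd$ to $\Weyl{k}$ is an eigenfunction of $\Abwdsd$ with the same eigenvalue $E$.

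The forward case runs identically. For an eigenfunction $\psimfwdsd$ of $\Freefwdsd$ with eigenvalue $E$ satisfying the $(k-1)$ semi-discrete forward two-body boundary conditions, the free eigenrelation again holds on all of $\Z^k$, and Proposition \ref{freetrueequivsd} gives $(\Amfwdsd\psimfwdsd)(\vec{n}) = (\Freefwdsd\psimfwdsd)(\vec{n}) = E\,\psimfwdsd(\vec{n})$ for $\vec{n}\in\Weyl{k}$, so the restriction is an eigenfunction of $\Amfwdsd$. For the last assertion I would invoke the conjugation relation $\Amfwdsd = C\,\Afwdsd\,C^{-1}$ defining the conjugated forward generator. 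Applying $C^{-1}$ to both sides of $\Amfwdsd\psimfwdsd = E\,\psimfwdsd$ yields $\Afwdsd(C^{-1}\psimfwdsd) = E\,(C^{-1}\psimfwdsd)$, so $C^{-1}\psimfwdsd$ is an eigenfunction of $\Afwdsd$ with eigenvalue $E$.

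The one point that deserves a word of care---and the only genuine obstacle, though a mild one---is that the statement restricts a function defined on all of $\Z^k$ to $\Weyl{k}$ and then applies $\Abwdsd$ (respectively $\Amfwdsd$), which a priori acts only on functions on $\Weyl{k}$. This is legitimate because $\Abwdsd$ evaluates its argument only at $\vec{n}$ and at the shifted points $\vec{n}_{c_1+\cdots+c_i}^{-}$, each of which again lies in $\Weyl{k}$: decreasing the rightmost entry of a cluster preserves the weak ordering, since the adjacent cluster has strictly smaller integer value. The analogous remark holds for $\Amfwdsd$ with the upward shifts of leftmost cluster entries. Thus $(\Abwdsd\psibwdsd)(\vec{n})$ depends only on the restriction of $\psibwdsd$ to $\Weyl{k}$, and the computation above is well posed. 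No estimates or constructions beyond Proposition \ref{freetrueequivsd} are needed, so the corollary follows at once.
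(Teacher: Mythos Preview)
Your proposal is correct and follows exactly the approach the paper takes: the paper simply states that Corollary \ref{eigcorgensd} is a consequence of Proposition \ref{freetrueequivsd} (mirroring Corollary \ref{eigcorgen} from Proposition \ref{freetrueequiv}) without spelling out the details. Your additional remark that the shifted points $\vec{n}_{c_1+\cdots+c_i}^{-}$ and $\vec{n}_{c_1+\cdots+c_{i-1}+1}^{+}$ remain in $\Weyl{k}$, so that the restricted function suffices to evaluate $\Abwdsd$ and $\Amfwdsd$, is a valid clarification that the paper leaves implicit.
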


We may apply the coordinate Bethe ansatz of Section \ref{corbetherev} to the semi-discrete backward and forward free generators with $(k-1)$ two-body boundary conditions. The backward generator eigenfunction $\psibwdsd_{\vec{z}}(\vec{n})$ given below arose in earlier work of \cite{takeyama}.

\begin{definition}\label{eigdefnsd}
For all $z_1,\ldots, z_k\in \C\setminus \{0\}$, set
\begin{eqnarray*}
\psibwdsd_{\vec{z}}(\vec{n})  &=& \sum_{\sigma\in S_k} \prod_{1\leq B<A\leq k} \frac{z_{\sigma(A)}-z_{\sigma(B)}-1}{z_{\sigma(A)}- z_{\sigma(B)}} \, \prod_{j=1}^{k} (z_{\sigma(j)})^{-n_j}\\
\psimfwdsd_{\vec{z}}(\vec{n}) &=& \sum_{\sigma\in S_k} \prod_{1\leq B<A\leq k} \frac{z_{\sigma(A)}-z_{\sigma(B)}+1}{z_{\sigma(A)}- z_{\sigma(B)}} \, \prod_{j=1}^{k} (z_{\sigma(j)})^{n_j}\\
\psifwdsd_{\vec{z}}(\vec{n}) &=& C^{-1}(\vec{n}) \psimfwdsd_{\vec{z}}(\vec{n}).
\end{eqnarray*}
These are symmetric Laurent polynomials in the variables $z_1,\ldots, z_k$.
\end{definition}

\begin{proposition}\label{prop211sd}
For all $z_1,\ldots, z_k\in \C\setminus \{0\}$, $\psibwdsd_{\vec{z}}(\vec{n})$
is an eigenfunction for the semi-discrete backward free generator with $(k-1)$ two-body boundary conditions with eigenvalue $\sum_{i=1}^{k} (z_i-1)$. The restriction of $\psibwdsd_{\vec{z}}(\vec{n})$ to $\vec{n}\in \Weyl{k}$ is consequently an eigenfunction for the semi-discrete backward generator $\Abwdsd$ with the same eigenvalue.

Similarly, for all $z_1,\ldots, z_k\in \C\setminus \{0\}$, $\psimfwdsd_{\vec{z}}(\vec{n})$
is an eigenfunction  for the semi-discrete forward free generator with $(k-1)$ two-body boundary conditions with eigenvalue $\sum_{i=1}^{k} (z_i-1)$. The restriction of $\psimfwdsd_{\vec{z}}(\vec{n})$ to $\vec{n}\in \Weyl{k}$ is consequently an eigenfunction for the semi-discrete conjugated forward generator $\Amfwdsd$ with the same eigenvalue. The restriction of $\psifwdsd_{\vec{z}}(\vec{n})$ to $\vec{n}\in \Weyl{k}$ is likewise an eigenfunction for the semi-discrete forward generator $\Afwdsd$ with the same eigenvalue.
\end{proposition}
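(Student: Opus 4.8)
The plan is to follow the proof of Proposition \ref{prop211} essentially verbatim, since the semi-discrete system has exactly the same structure: a separable, constant-coefficient free generator subject to $(k-1)$ two-body boundary conditions, which is precisely the setting to which the coordinate Bethe ansatz of Section \ref{corbetherev} applies. By Corollary \ref{eigcorgensd} it is enough to exhibit eigenfunctions of the free generators $\Freebwdsd$ and $\Freefwdsd$ satisfying the respective boundary conditions, so I would carry out the whole argument at the level of free generators and only invoke Corollary \ref{eigcorgensd} at the very end to transfer the conclusions to $\Abwdsd$, $\Amfwdsd$ and $\Afwdsd$.

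First I would treat the backward case. The one-body building block is $\psi_z(n) = z^{-n}$, and since $\big(\difbwd \psi_z\big)(n) = z^{-(n-1)} - z^{-n} = (z-1)z^{-n}$, it is an eigenfunction of $\difbwd$ with eigenvalue $(z-1)$; summing over the $k$ coordinates gives the eigenvalue $\sum_{i=1}^{k}(z_i-1)$ stated in the proposition. At this point I would flag the same bookkeeping subtlety noted at the start of Section \ref{applytoq}: the eigenvalue attached to the label $\vec{z}$ is $\sum_i(z_i-1)$, not $\sum_i z_i$. Next I would identify the two-body boundary operator $B = \difbwd_1 - \difbwd_2 - 1$ and compute the function $S$ from Section \ref{corbetherev}; a one-line calculation gives $B\big(\psi_{z_1}\otimes \psi_{z_2}\big)(n_1,n_2) = z_1^{-n_1}z_2^{-n_2}(z_1 - z_2 - 1)$, hence $S(z_1,z_2) = z_1 - z_2 - 1$. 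Feeding this into Lemma \ref{Abetheconst} yields a Bethe eigenfunction satisfying the boundary conditions.

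The only point requiring care — and it is the same point that arises in Proposition \ref{prop211} — is reconciling the coefficients $\mathbf{A}_{\sigma}$ coming from Lemma \ref{Abetheconst} with the explicit form of $\psibwdsd_{\vec{z}}$. I would observe that $\prod_{1\le B<A\le k}(z_{\sigma(A)} - z_{\sigma(B)}) = \sgn(\sigma)\,\V(\vec{z})$, so that replacing the denominator $\prod_{B<A}S(z_A,z_B)$ in $\mathbf{A}_{\sigma}$ by $\prod_{B<A}(z_{\sigma(A)} - z_{\sigma(B)})$ alters the entire sum only by the overall, $\sigma$-independent constant $\prod_{B<A}S(z_A,z_B)/\V(\vec{z})$. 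Since scaling by a constant preserves both the eigenfunction property and the boundary conditions, $\psibwdsd_{\vec{z}}$ is itself such an eigenfunction, and Corollary \ref{eigcorgensd} then gives the claim for $\Abwdsd$.

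Finally I would dispatch the forward case by the identical template, now with $\psi_z(n) = z^{n}$ (so that $\diffwd\psi_z = (z-1)\psi_z$) and boundary operator $B = 1 + \diffwd_1 - \diffwd_2$, which gives $S(z_1,z_2) = z_1 - z_2 + 1$, matching the numerator factors of $\psimfwdsd_{\vec{z}}$; Corollary \ref{eigcorgensd} then yields the statements for $\Amfwdsd$ and, through $\psifwdsd = C^{-1}\psimfwdsd$, for $\Afwdsd$. I do not anticipate any genuine obstacle here: every step is a direct specialization of the $q$-Boson argument, with the $q$-deformed shift parameters collapsing to the additive $\pm 1$ of the semi-discrete boundary conditions, and all the computations are one-liners. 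The closest thing to a pitfall is the constant-factor and eigenvalue bookkeeping described above, which is a matter of care rather than difficulty.
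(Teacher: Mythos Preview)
Your proposal is correct and follows exactly the approach the paper indicates: the paper does not write out a proof of Proposition~\ref{prop211sd} but states that results in this section can be proved directly ``via the same methods as our earlier results,'' and your argument is precisely the semi-discrete replay of the proof of Proposition~\ref{prop211} via Lemma~\ref{Abetheconst} and Corollary~\ref{eigcorgensd}. The computations of $S(z_1,z_2)=z_1-z_2-1$ (backward) and $S(z_1,z_2)=z_1-z_2+1$ (forward), and the constant-factor reconciliation using $\prod_{B<A}(z_{\sigma(A)}-z_{\sigma(B)})=\sgn(\sigma)\prod_{B<A}(z_A-z_B)$, are all correct.
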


\begin{remark}\label{BosonHamiltoniansymsd}
We may extend the eigenfunctions of Definition \ref{eigdefnsd} so as to be defined for all of $\Z^k$ (rather than $\Weyl{k}$) by fixing that the value for a general $\vec{n}\in \Z^k$ is the same as the value of $\sigma \vec{n}\in \Weyl{k}$ where $\sigma\in S_k$ is a permutation of the elements of $\vec{n}$ taking it into $\Weyl{k}$. It is possible to write down an operator on all of $\Z^k$ for which these extensions (which we write with the same notation) are still eigenfunctions. One finds (cf. \cite[Proposition 6.3, (C)]{BCS}) that
$$
\left[\sum_{i=1}^{k} \difbwd_i + \sum_{1\leq i<j\leq k} \bfone_{n_i=n_j}\difbwd_i\right] \psibwdsd_{\vec{z}}(\vec{n}) = \left(\sum_{i=1}^{k} (z_i-1)\right)\, \psibwdsd_{\vec{z}}(\vec{n}).
$$
This can be thought of as a discrete version of the delta Bose gas considered later in Section \ref{deltabosesec}.

\end{remark}

Proposition \ref{prop211sd} along with the fact that $\Afwdsd$ is the transpose of $\Abwdsd$ implies that
\begin{equation}\label{eqn8sd}
\Afwdsd \psifwdsd_{\vec{z}}(\vec{n}) = \left(\sum_{i=1}^{k} (z_i-1)\right) \psifwdsd_{\vec{z}}(\vec{n}), \qquad \psibwdsd_{\vec{z}}(\vec{n}) \Afwdsd  = \psibwdsd_{\vec{z}}(\vec{n}) \left(\sum_{i=1}^{k} (z_i-1)\right),
\end{equation}
showing the $\psifwdsd_{\vec{z}}(\vec{n})$ and $\psibwdsd_{\vec{z}}(\vec{n})$ are (respectively) right and left eigenfunctions for $\Afwdsd$ with eigenvalue $\sum_{i=1}^{k} (z_i-1)$. This motivates the following.

\begin{definition}\label{leftrighteigsd}
For any $\vec{z}=(z_1,\ldots, z_k)\in \big(\C\setminus\{0\}\big)^k$ define
\begin{equation*}
\psirsd_{\vec{z}}(\vec{n})=  \psifwdsd_{\vec{z}}(\vec{n})=  C^{-1}(\vec{n}) \, \sum_{\sigma\in S_k} \prod_{1\leq B<A\leq k} \frac{z_{\sigma(A)}- z_{\sigma(B)}+1}{z_{\sigma(A)}- z_{\sigma(B)}} \, \prod_{j=1}^{k} (z_{\sigma(j)})^{n_j}.
\end{equation*}
Likewise define
\begin{equation*}
\psilsd_{\vec{z}}(\vec{n})=  \psibwdsd_{\vec{z}}(\vec{n})= \sum_{\sigma\in S_k} \prod_{1\leq B<A\leq k} \frac{z_{\sigma(A)}-z_{\sigma(B)}-1}{z_{\sigma(A)}- z_{\sigma(B)}} \, \prod_{j=1}^{k} (z_{\sigma(j)})^{-n_j}.
\end{equation*}
\end{definition}

These eigenfunctions are limits of $\psil_{\vec{z}}$ and $\psir_{\vec{z}}$ of Definition \ref{leftrighteig} when $q=e^{-\e}\to 1$ and the $z_j$ variables in Definition \ref{leftrighteig} are replaced by $q^{z_j}$.

We could have just as well defined the right and left eigenfunctions with respect to the operator $\Abwd$. Finally, observe the symmetry of $\psilsd$ and $\psirsd$ with respect to the space-reflection operator $R$
\begin{equation}\label{psisymsd}
\big(R \psilsd_{\vec{z}}\big)(\vec{n}) =C(\vec{n}) \psirsd_{\vec{z}}(\vec{n}), \qquad\qquad
\big(R\psirsd_{\vec{z}}\big)(\vec{n}) = C^{-1}(\vec{n}) \psilsd_{\vec{z}}(\vec{n}).
\end{equation}

\subsubsection{Plancherel formulas}

\begin{definition}\label{KqTASEPdefsd}
Define the space $\LPsd{k}$ of symmetric Laurent polynomials in the variables $z_1,\ldots, z_k$.

The semi-discrete transform $\FqBosonsd$ takes functions $f\in \CP{k}$ into functions $\FqBosonsd f\in\LPsd{k}$ via
\begin{equation*}
\big(\FqBosonsd f\big)(\vec{z}) = \llangle f, \psirsd_{\vec{z}}\rrangle.
\end{equation*}

Fix any set of positively oriented, closed contours $\tilde{\gamma}_1,\ldots, \tilde{\gamma}_k$ chosen so that they all contain $0$, so that the $\tilde{\gamma}_A$ contour contains the image of $1$ plus the $\tilde{\gamma}_B$ contour for all $B>A$, and so that $\tilde{\gamma}_k$ is a small enough circle around 0 so as not to contain $1$.

The (candidate) semi-discrete inverse transform $\JqBosonsd$ takes functions $G\in\LPsd{k}$ into functions $\JqBosonsd G\in \CP{k}$ via
\begin{equation}\label{JqBosontranssd}
\big(\JqBosonsd G\big)(\vec{n}) = (-1)^k \oint_{\tilde{\gamma}_1} \frac{dz_1}{2\pi \i} \cdots \oint_{\tilde{\gamma}_k} \frac{dz_k}{2\pi \i}  \prod_{1\leq A<B\leq k} \frac{z_A-z_B}{z_A-z_B -1}\, \prod_{j=1}^{k} (z_{j})^{-n_j-1}\, G(\vec{z}).
\end{equation}

The composition of the transform and (candidate) inverse transform takes functions $f\in \CP{k}$ into functions $\KqBosonsd f\in \CP{k}$ via
\begin{eqnarray}\label{KqBosondefsd}
\big(\KqBosonsd f \big)(\vec{n}) &=& \big(\JqBosonsd\FqBosonsd f\big)(\vec{n}) \\
\nonumber &=&  (-1)^k\oint_{\tilde{\gamma}_1} \frac{dz_1}{2\pi \i} \cdots \oint_{\tilde{\gamma}_k} \frac{dz_k}{2\pi \i}  \prod_{1\leq A<B\leq k} \frac{z_A-z_B}{z_A-z_B-1}\, \prod_{j=1}^{k} (z_{j})^{-n_j-1} \, \llangle f,\psirsd_{\vec{z}}\rrangle.
\end{eqnarray}

The composition of the (candidate) inverse transform and the transform takes functions $G\in \LPsd{k}$ into functions $\KqBosonsdDual G\in \LPsd{k}$ via
\begin{eqnarray}\label{KqBosonsddefDual}
\big(\KqBosonsdDual G \big)(\vec{n}) &=& \big(\FqBosonsd \JqBosonsd G\big)(\vec{z}) \\
\nonumber &=& (-1)^k \sum_{\vec{n}\in \Weyl{k}} \psirsd_{\vec{z}}(\vec{n}) \oint_{\tilde\gamma_1} \frac{dz_1}{2\pi \i} \cdots \oint_{\tilde\gamma_k} \frac{dz_k}{2\pi \i}  \prod_{1\leq A<B\leq k} \frac{z_A-z_B}{z_A-z_B-1}\, \prod_{j=1}^{k} (z_{j})^{-n_j-1} G(\vec{z}).
\end{eqnarray}
\end{definition}

\begin{lemma}\label{expandlemsd}
Consider a symmetric function $F:\C^k\to \C$ and positively oriented, closed contours $\tilde{\gamma}_1,\ldots, \tilde{\gamma}_k$ such that:
\begin{itemize}
\item The contour $\tilde{\gamma}_k$ is a circle around 0, small enough so as not to contain $1$;
\item For all $1\leq A<B\leq k$, the interior of $\tilde{\gamma}_A$ contains $0$ and the image of $\tilde{\gamma}_B$ plus 1;
\item For all $1\leq j\leq k$, there exist deformations $D_j$ of $\tilde{\gamma}_j$ to $\tilde{\gamma}_k$ so that for all $z_1,\ldots, z_{j-1},z_{j+1},\ldots, z_k$ with $z_i\in \tilde{\gamma}_i$ for $1\leq i<j$, and $z_i\in \tilde{\gamma}_k$ for $j<i\leq k$, the function $z_j\mapsto \V(\vec{z}) F(z_1,\ldots ,z_j,\ldots, z_k)$ is analytic in a neighborhood of the area swept out by the deformation $D_j$.
\end{itemize}
Then,
\begin{equation*}
\big(\JqBosonsd G\big)(\vec{n}) = \sum_{\lambda\vdash k}\, \oint_{\tilde{\gamma}_k} \cdots \oint_{\tilde{\gamma}_k} d\tilde{\mu}_{\lambda}(\vec{w}) \prod_{j=1}^{\ell(\lambda)} \frac{1}{(w_j)_{\lambda_j}}  \psilsd_{\vec{w}\tilde{\circ}\lambda}(\vec{n}) \, G(\vec{w}\tilde{\circ} \lambda),
\end{equation*}
where the definition of the notation  $\vec{w}\tilde{\circ}\lambda$ and $(w)_n$ is given in Section \ref{notations} and where, for a partition  $\lambda = 1^{m_1}2^{m_2}\cdots$,
\begin{equation*}
d\tilde{\mu}_{\lambda}(\vec{w}) = \frac{1}{m_1! m_2!\cdots} \det\left[\frac{1}{w_i +\lambda_i -w_j}\right]_{i,j=1}^{\ell(\lambda)} \prod_{j=1}^{\ell(\lambda)} \frac{dw_j}{2\pi \i}.
\end{equation*}
\end{lemma}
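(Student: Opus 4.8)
The plan is to obtain this lemma as the $q = e^{-\e} \to 1$ degeneration of Lemma \ref{expandlem}, or, what amounts to the same thing, by rerunning the residue-expansion argument behind Proposition \ref{321} in the additive setting. I would present the direct residue argument, since it is self-contained, and record the limit only as the conceptual source of the formula.

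First I would make precise the change of variables linking the two statements. Setting $q = e^{-\e}$ and substituting $z_j \mapsto q^{z_j} = e^{-\e z_j}$ in Definition \ref{leftrighteig}, one checks that $\frac{z_{\sigma(A)} - q z_{\sigma(B)}}{z_{\sigma(A)} - z_{\sigma(B)}} \to \frac{z_{\sigma(A)} - z_{\sigma(B)} - 1}{z_{\sigma(A)} - z_{\sigma(B)}}$ and $(1 - z_j)^{-n_j} \sim (\e z_j)^{-n_j}$ as $\e \to 0$, so that after removing the overall power of $\e$ the eigenfunction $\psil_{\vec{z}}$ degenerates to $\psilsd_{\vec{z}}$ of Definition \ref{leftrighteigsd}. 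Under the same substitution the pole locus $z_A = q z_B$ becomes $z_A = z_B + 1$, the small circle $\gamma_k$ around $1$ not enclosing $q$ becomes the small circle $\tilde{\gamma}_k$ around $0$ not enclosing $1$, the nesting condition on the $\gamma_A$ becomes the stated nesting of the $\tilde{\gamma}_A$, the $q$-Pochhammer $(w_j;q)_{\lambda_j}$ becomes $(w_j)_{\lambda_j}$, and $d\mu_\lambda$ passes to $d\tilde{\mu}_\lambda$ (the Cauchy determinant $\det[1/(w_i q^{\lambda_i} - w_j)]$ going over to $\det[1/(w_i + \lambda_i - w_j)]$). This justifies the assertion following Definition \ref{leftrighteigsd} that $\psilsd,\psirsd$ are such limits, and exhibits the desired formula as the limit of Lemma \ref{expandlem}.

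For a direct proof I would mimic the argument behind Lemma \ref{expandlem}: in the nested integral defining $\JqBosonsd G$, sequentially deform the outer contours $\tilde{\gamma}_1, \ldots, \tilde{\gamma}_{k-1}$ inward to $\tilde{\gamma}_k$. Under the hypotheses on $G$ and the contours, the only singularities crossed come from the kernel factors $\frac{z_A - z_B}{z_A - z_B - 1}$, i.e.\ simple poles at $z_A = z_B + 1$. Collecting these residues, each term in the resulting expansion is indexed by a decomposition of the variables into \emph{strings} on which the surviving integration variables form arithmetic progressions $w, w+1, \ldots, w + \lambda_j - 1$; this is precisely the additive string $\vec{w} \tilde{\circ} \lambda$. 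Taking the residue along such a string converts the factors $\prod_j (z_j)^{-n_j-1}$ into the prefactor $\prod_j (w_j)_{\lambda_j}^{-1}$, while the symmetrization over the strings together with the Cauchy determinant identity produces $\psilsd_{\vec{w}\tilde{\circ}\lambda}(\vec{n})$ and the measure $d\tilde{\mu}_\lambda$, exactly as the multiplicative version produced $\psil$ and $d\mu_\lambda$ in Lemma \ref{321prime}.

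The main obstacle is the bookkeeping of this residue expansion: establishing that the poles crossed during the sequential deformation organize themselves into arithmetic-progression strings indexed by partitions $\lambda \vdash k$, that no spurious contributions survive (the Vandermonde-type numerators kill configurations in which two strings would share a variable, just as two $z$'s cannot pick the same residue in Lemma \ref{321prime}), and that the combinatorial factors $1/(m_1! m_2! \cdots)$, the determinant $\det[1/(w_i + \lambda_i - w_j)]$, and the overall sign $(-1)^k$ emerge correctly. This is the additive analogue of Proposition \ref{321}, and I expect its proof to be essentially a line-by-line transcription of that argument with each $q^{(\cdot)}$ replaced by $(\cdot) + \mathrm{integer}$. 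The one point demanding genuine care is verifying that the stated analyticity and deformation hypotheses are exactly what is needed for each intermediate deformation to avoid the $z_A = z_B + 1$ poles of the still-unintegrated variables, which is the additive counterpart of the nesting hypothesis imposed on the $\gamma_A$ in Lemma \ref{expandlem}.
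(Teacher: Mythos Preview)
Your proposal is correct and aligns with the paper's intended approach. The paper explicitly omits the proof of this lemma, noting at the start of Section \ref{semidisc} that all results in Section \ref{semidiscsec} (apart from the spectral orthogonality propositions) ``can either be proved via suitable limits of our earlier $q$-Boson particle system results, or can be proved directly for the limiting system (via the same methods as our earlier results)''---precisely the two routes you outline. One small imprecision: the factor $\prod_j (w_j)_{\lambda_j}^{-1}$ arises from substituting the string into the $\prod_j z_j^{-1}$ part of $\prod_j z_j^{-n_j-1}$, while the $\prod_j z_j^{-n_j}$ part combines with the symmetrization to produce $\psilsd_{\vec{w}\tilde{\circ}\lambda}(\vec{n})$; your wording slightly conflates these. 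Also, your reference to Lemma \ref{321prime} for the appearance of $\psil$ and $d\mu_\lambda$ should be to Proposition \ref{321}, which is the actual residue-expansion engine behind Lemma \ref{expandlem}.
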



We may now record the scaling limits of the Plancherel formula (Theorem \ref{KqBosonId}), the dual Plancherel formula (Theorem \ref{KqBosonIdDual}) and the Plancherel isomorphism (Theorem \ref{isothm}).

We start with the Plancherel formula.

\begin{theorem}\label{KqBosonIdsd}
The semi-discrete transform $\FqBosonsd$ induces an isomorphism between the space $\CP{k}$ and its image with inverse given by $\JqBosonsd$.  Equivalently, $\KqBosonsd$ acts as the identity operator on $\CP{k}$.
\end{theorem}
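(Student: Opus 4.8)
The plan is to establish Theorem~\ref{KqBosonIdsd} by transporting the proof of Theorem~\ref{KqBosonId} to the semi-discrete system; the two arguments are structurally identical once one applies the degeneration dictionary $q\,z_B\rightsquigarrow z_B+1$, $1-z_j\rightsquigarrow z_j$, and $C_q\rightsquigarrow C$ (the factor $q^{-k(k-1)/2}$ in $C_q$ tends to $1$ while the sign $(-1)^k$ persists), with the marked point $1$ replaced throughout by $0$. Concretely this is the $q=e^{-\e}\to1$ limit in which the $z_j$ of Definition~\ref{leftrighteig} are written $q^{z_j}$, so one may alternatively derive the theorem as a genuine limit of Theorem~\ref{KqBosonId}; I will instead give the self-contained direct proof since it avoids interchange-of-limit technicalities. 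By linearity of $\KqBosonsd$ it suffices to show that for each $\vec{x}\in\Weyl{k}$, the function $f=\bfone_{\vec{x}}$ satisfies $(\KqBosonsd f)(\vec{y})=\bfone_{\vec{y}=\vec{x}}$.

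First I would record the semi-discrete symmetry of $\KqBosonsd$. Using the single-contour representation of $\JqBosonsd$ furnished by the semi-discrete analogue of Lemma~\ref{321prime} (proved by the same Cauchy-determinant symmetrization, producing the measure $d\tilde\mu_{(1)^k}$ of Lemma~\ref{expandlemsd}), together with the reflection relations~\eqref{psisymsd} and the invariance~\eqref{bilinflip}, one obtains, exactly as in the derivation of~\eqref{claimedkbos}, the identity
\begin{equation*}
\llangle \KqBosonsd f,g\rrangle = \llangle f,(CR)^{-1}\KqBosonsd(CRg)\rrangle
\end{equation*}
for all $f,g\in\CP{k}$; here the cancellation is cleaner than in the $q$-Boson case because the powers of $q^{k(k-1)/2}$ appearing in~\eqref{psisymsd} are absent.

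Next I would run the two-sided residue argument. Taking $g=\bfone_{\vec{y}}$ and writing out $\llangle\KqBosonsd f,g\rrangle$ from~\eqref{KqBosondefsd}, I expand the contour $\tilde\gamma_1$ to infinity; the residue at $z_1=\infty$ is nonzero only if $y_1\le x_1$. Applying the symmetry just established and instead shrinking $\tilde\gamma_k$ to $0$, the residue at $z_k=0$ forces $x_1\le y_1$, whence $x_1=y_1$; iterating the two operations coordinate by coordinate (as in the passage between~\eqref{z1res} and~\eqref{zkres}) yields $\vec{x}=\vec{y}$, so $\KqBosonsd f$ vanishes off the diagonal. For the diagonal value I deform all contours to infinity and compute $\Res{z_k=\infty}\cdots\Res{z_1=\infty}$; only permutations $\sigma$ stabilizing each cluster of $\vec{x}$ survive, the sum over $S_k$ factors into within-cluster and cross-cluster products, and the within-cluster sums are evaluated by the $q\to1$ degeneration of~\eqref{mqinverse}, namely $\sum_{\sigma\in S_m}\prod_{1\le B\le A\le m}\frac{z_{\sigma(A)}-z_{\sigma(B)}+1}{z_{\sigma(A)}-z_{\sigma(B)}}=m!$. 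The cross-cluster factors $\frac{z_A-z_B+1}{z_A-z_B}$ and the prefactor $\prod_{A<B}\frac{z_A-z_B}{z_A-z_B-1}$ each tend to $1$ in the nested limit, so the computation collapses to $C^{-1}(\vec{x})\prod_i(c_i)!$ times the sign contributed by the $k$ simple poles at infinity; by the choice of constant in~\eqref{Cn} this equals $1$.

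The main obstacle I anticipate is the same one as in Theorem~\ref{KqBosonId}: verifying that the contours may be expanded to (respectively shrunk toward) $\infty$ and $0$ without crossing poles, and carefully tracking the sign together with the cross-cluster limits in the iterated residue at infinity so that the constant in~\eqref{Cn} produces exactly $1$ rather than a nonzero multiple of the identity. A subsidiary point requiring care is establishing the semi-discrete single-contour form (the analogue of Lemma~\ref{321prime}) with the correct measure $d\tilde\mu$, since this is what powers the symmetry step; everything else is a routine transcription of the $q$-Boson computation.
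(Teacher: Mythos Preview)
Your proposal is correct and follows precisely the approach the paper indicates: the paper does not write out a separate proof of Theorem~\ref{KqBosonIdsd} but states that it ``follows in the exact same manner as Theorem~\ref{KqBosonId}'' and records the key symmetry~\eqref{claimedkbossd}, which is exactly the PT-invariance relation you derive and then combine with the two-sided contour-shifting residue argument and the diagonal evaluation via the $q\to1$ specialization of~\eqref{mqinverse}. Your identification of the delicate points (contour deformability, sign bookkeeping in the iterated residues at infinity, and the single-contour form underlying the symmetry step) matches what the paper's $q$-Boson proof actually hinges on.
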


A key fact in the proof of this (which follows in the exact same manner as Theorem \ref{KqBosonId}) is the the following property of $\KqBosonsd$: For any functions $f,g\in\CP{k}$,
\begin{equation}\label{claimedkbossd}
\llangle \KqBosonsd f,g\rrangle = \llangle f ,(C R)^{-1} \KqBosonsd (C R g)\rrangle.
\end{equation}
Lemma \ref{expandlemsd} yields the following expansion
\begin{equation}\label{Kexpformsd}
\big(\KqBosonsd f\big)(\vec{n}) = \sum_{\lambda\vdash k}\, \oint_{\tilde{\gamma}_k} \cdots \oint_{\tilde{\gamma}_k} d\tilde{\mu}_{\lambda}(\vec{w}) \prod_{j=1}^{\ell(\lambda)} \frac{1}{(w_j)_{\lambda_j}}  \psilsd_{\vec{w}\tilde{\circ}\lambda}(\vec{n}) \llangle f,\psirsd_{\vec{w}\tilde{\circ} \lambda}\rrangle.
\end{equation}

We have shown in Theorem \ref{KqBosonIdsd} that $\KqBoson=\Id$ when restricted to functions in $\CP{K}$. Similarly,
\begin{equation*}
(C R)^{-1} \KqBosonsd (C R) = \Id.
\end{equation*}

Let us turn now to the dual Plancherel formula.

\begin{theorem}\label{KqBosonIdDualsd}
The inverse semi-discrete transform $\JqBosonsd$ induces an isomorphism between the domain $\LPsd{k}$ and its image with inverse given by $\FqBosonsd$. Equivalently, $\KqBosonsdDual$ acts as the identity operator on $\LPsd{k}$.
\end{theorem}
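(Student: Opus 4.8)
The plan is to transcribe the proof of the $q$-Boson dual Plancherel formula (Theorem \ref{KqBosonIdDual}) into the additive, semi-discrete setting, with a spectral orthogonality of $\psirsd$ and $\psilsd$ playing the role that Proposition \ref{specorth} played there. First I would record the single-contour form of $\JqBosonsd$: exactly as in Lemma \ref{321prime} (whose argument is symmetrization of the nested integrand followed by the Cauchy determinant identity), one deforms the nested contours $\tilde{\gamma}_1,\ldots,\tilde{\gamma}_k$ to a common contour $\tilde{\gamma}$ enclosing $0$ and its shift by $1$, obtaining
\[
\big(\JqBosonsd G\big)(\vec{n}) = (-1)^k\oint_{\tilde{\gamma}}\cdots\oint_{\tilde{\gamma}} d\tilde{\mu}_{(1)^k}(\vec{w})\prod_{j=1}^k \frac{1}{w_j}\,\psilsd_{\vec{w}}(\vec{n})\,G(\vec{w}).
\]
Substituting this into $\KqBosonsdDual G = \FqBosonsd\JqBosonsd G$ reduces the claim $\KqBosonsdDual=\Id$ to showing, for every test Laurent polynomial $F$, the integrated identity obtained by pairing both sides against $F(\vec{z})$ over $\tilde{\gamma}^{\,k}$, which is the exact analogue of the reduction \eqref{Gidint}.

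The key input is the semi-discrete spectral orthogonality, the additive counterpart of Proposition \ref{specorth}, which (understood in the integrated sense) reads
\[
\sum_{\vec{n}\in\Weyl{k}}\psirsd_{\vec{z}}(\vec{n})\,\psilsd_{\vec{w}}(\vec{n})\,\V(\vec{z})\V(\vec{w}) = (-1)^{\frac{k(k-1)}{2}}\prod_{j=1}^k z_j\prod_{A\neq B}(z_A-z_B-1)\,\det\big[\delta_{z_i,w_j}\big]_{i,j=1}^k .
\]
I would obtain this in one of two equivalent ways. Either as the $q=e^{-\e}\to1$ scaling limit (with the substitution $z_j\mapsto q^{z_j}$) of Proposition \ref{specorth}, tracking the leading-order behaviors $1-q^{z_j}\sim\e z_j$ and $q^{z_A}-q^{\,z_B+1}\sim-\e(z_A-z_B-1)$ through the eigenfunctions; or directly, by repeating the two-step argument of Proposition \ref{trueorththm}, namely establishing the base identity via the Hall--Littlewood realizations \eqref{pgstarstar}--\eqref{pgstarstarstar} and the Cauchy--Littlewood identity \eqref{pgfoursevenstarstar} (as in Proposition \ref{epzeroorth}) and then showing a one-parameter family has vanishing derivative through shift relations. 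Since the semi-discrete eigenfunctions are literal $q\to1$ limits of $\psil,\psir$, the scaling-limit route is the cleanest.

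With the orthogonality in hand the remaining computation is routine: I would use the additive Cauchy determinant identity $\det[(w_i+1-w_j)^{-1}] = (-1)^{\frac{k(k-1)}{2}}\V(\vec{w})^2/\prod_{A\neq B}(w_A+1-w_B)$ to unpack $d\tilde{\mu}_{(1)^k}$, interchange the summation over $\vec{n}\in\Weyl{k}$ with the $\vec{z}$-integration, apply the spectral orthogonality to collapse the inner sum to $\det[\delta_{z_i,w_j}]$, evaluate the resulting $\vec{z}$-residues, and symmetrize (using that $G$ is symmetric) to recover $\oint\cdots\oint F(\vec{w})G(\vec{w})\prod_j dw_j/(2\pi\i)$, matching the left-hand side. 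The matching of the $(-1)^k$ prefactors, of the weights $\prod_j z_j$ against $\prod_j w_j^{-1}$, and of the denominators $\prod(z_A-z_B-1)$ against $\prod(w_A+1-w_B)$ all go through because $k(k-1)$ is even.

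The main obstacle will be analytic rather than algebraic. One must choose the contours $\tilde{\gamma}\subset\tilde{\gamma}'$ (now nested by integer shifts, controlling $|w|$ relative to $|z|$ in the relevant region, in the same spirit as the condition $|1-z|<|1-w|$ in Proposition \ref{specorth}) so that the sum over $\vec{n}\in\Weyl{k}$ is uniformly absolutely convergent and may be exchanged with the contour integrals. Establishing this convergence for the additive eigenfunctions, and confirming that the single-contour reduction of $\JqBosonsd$ is valid on the relevant function class so that only the $\lambda=(1)^k$ term is needed (and none of the degenerate strings $\vec{w}\tilde{\circ}\lambda$ of Lemma \ref{expandlemsd} contribute), is the step requiring genuine care; everything else is a faithful transcription of the proof of Theorem \ref{KqBosonIdDual}.
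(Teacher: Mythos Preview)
Your overall plan---transcribe the proof of Theorem~\ref{KqBosonIdDual} using a semi-discrete spectral orthogonality---is one of the two routes the paper indicates, and the logical scaffolding is correct. But your first step contains a structural gap, not merely an analytic subtlety. The single-contour form of $\JqBosonsd$ you claim cannot be obtained by the argument of Lemma~\ref{321prime}: that lemma needs a closed bounded contour containing its own image under the relevant shift, and in the additive setting no bounded contour contains $\tilde{\gamma}+1$. Concretely, once $z_1$ has been expanded to a large circle $\tilde{\gamma}$, deforming $z_2$ outward from $\tilde{\gamma}_2$ to $\tilde{\gamma}$ crosses the pole at $z_2=z_1-1$ for a nonempty arc of $z_1\in\tilde{\gamma}$. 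This is precisely the difference between the multiplicative setting ($|q|<1$ allows $q\gamma\subset\gamma$ for a circle about the origin) and the additive one, and it is why Lemma~\ref{expandlemsd} records only the shrinking expansion over all $\lambda\vdash k$, with no single-contour analogue of Lemma~\ref{321prime} stated for this degeneration.

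The paper does not write out a proof here, but its remark following Proposition~\ref{specorthsd} signals that the intended route for this theorem is a direct limit transition from Theorem~\ref{KqBosonIdDual} under $q=e^{-\e}\to1$, $z_j\mapsto q^{z_j}$, applied to the \emph{identity} $\KqBosonDual=\Id$ rather than to its proof; this sidesteps the contour issue entirely since one never needs an additive single-contour form. If you want to make the transcription route work, you must either (a) work with the full $\lambda$-expansion of Lemma~\ref{expandlemsd} and supply the general string spectral orthogonality conjectured in Remark~\ref{orthconj}, which the paper does not prove, or (b) reorganize so that the only integrals requiring contour collapse are those in Proposition~\ref{specorthsd} itself, whose integrands $\psirsd_{\vec{z}}\V(\vec z)F(\vec z)$ and $\psilsd_{\vec w}\V(\vec w)G(\vec w)$ carry no $(w_A-w_B-1)^{-1}$ poles and hence can be freely moved to a common small contour about $0$. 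Your last paragraph correctly locates the difficulty but understates it: it is not a matter of care in choosing $\tilde{\gamma}\subset\tilde{\gamma}'$; rather, no such bounded choice exists.
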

It should also be possible (as conjectured in Remark \ref{conjrem}) to extend this dual Plancherel formula to a more degenerate class of functions than presently considered.

We may combine Theorems \ref{KqBosonIdsd} and \ref{KqBosonIdDualsd} to arrive at the following result.
\begin{definition}\label{bilinearprimesd}
Define a bilinear pairing acting on two functions $F(\vec{z})$ and $G(\vec{z})$  by
\begin{equation*}
\llangle F,G\rranglesd = \sum_{\lambda\vdash k}\, \oint_{\tilde{\gamma}_k} \cdots \oint_{\tilde{\gamma}_k} d\tilde{\mu}_{\lambda}(\vec{w}) \prod_{j=1}^{\ell(\lambda)} \frac{1}{(w_j)_{\lambda_j}}  F(\vec{w}\tilde{\circ}\lambda) G(\vec{w}\tilde{\circ} \lambda).
\end{equation*}
\end{definition}

\begin{theorem}\label{isothmsd}
The semi-discrete transform $\FqBosonsd$ induces an isomorphism between $\CP{k}$ and $\LPsd{k}$ with inverse given by $\JqBosonsd$. Moreover, for any $f,g\in \CP{k}$
\begin{equation}\label{res1sd}
\llangle f,g \rrangle = \llangle \FqBosonsd f,\FqBosonsd g\rranglesd,
\end{equation}
and for any $F,G\in \LPsd{k}$
\begin{equation}\label{res2sd}
\llangle \JqBosonsd F,\JqBosonsd G \rrangle = \llangle F,G\rranglesd.
\end{equation}
\end{theorem}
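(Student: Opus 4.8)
The plan is to deduce Theorem \ref{isothmsd} from the two semi-discrete Plancherel statements already recorded, namely Theorem \ref{KqBosonIdsd} (which gives $\JqBosonsd\FqBosonsd=\Id$ on $\CP{k}$) and Theorem \ref{KqBosonIdDualsd} (which gives $\FqBosonsd\JqBosonsd=\Id$ on $\LPsd{k}$), in exactly the way Theorem \ref{isothm} was deduced from Theorems \ref{KqBosonId} and \ref{KqBosonIdDual}. The isomorphism claim is then immediate: $\FqBosonsd:\CP{k}\to\LPsd{k}$ and $\JqBosonsd:\LPsd{k}\to\CP{k}$ are mutually inverse bijections. It remains to establish the two pairing identities (\ref{res1sd}) and (\ref{res2sd}), for which I would port the indicator-function argument used in the proof of Theorem \ref{isothm}.

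For (\ref{res1sd}) I would first record the representation $\big(\JqBosonsd G\big)(\vec{n}) = \llangle \psilsd(\vec{n}),G\rranglesd$ coming from Lemma \ref{expandlemsd} and Definition \ref{bilinearprimesd}, where $\psilsd(\vec{n})$ denotes the function $\vec{z}\mapsto\psilsd_{\vec{z}}(\vec{n})$. Testing against the indicators $\bfone_{\vec{m}}\in\CP{k}$, $\bfone_{\vec{m}}(\vec{n})=\bfone_{\vec{m}=\vec{n}}$, Theorem \ref{KqBosonIdsd} gives $\llangle \psilsd(\vec{m}),\FqBosonsd g\rranglesd = \big(\JqBosonsd\FqBosonsd g\big)(\vec{m}) = g(\vec{m}) = \llangle \bfone_{\vec{m}},g\rrangle$. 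Since $\big(\FqBosonsd f\big)(\vec{z}) = \sum_{\vec{m}\in\Weyl{k}} f(\vec{m})\,\psirsd_{\vec{z}}(\vec{m})$ is a finite linear combination of the functions $\psirsd(\vec{m})$, the claim reduces by bilinearity to reconciling the $\psirsd$-form produced by $\FqBosonsd$ with the $\psilsd$-form against which the completeness relation pairs. This reconciliation is exactly what the reflection symmetry (\ref{psisymsd}) supplies — it relates $\psirsd_{\vec{z}}$ and $\psilsd_{\vec{z}}$ through $R$ and the multiplication operator $C$ — used together with the reflection invariance (\ref{bilinflip}) of $\llangle\cdot,\cdot\rrangle$; this plays the role that the operator $\mathcal{P}$ plays in Theorem \ref{isothm}, and substituting it yields (\ref{res1sd}).

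For (\ref{res2sd}) I would argue dually: given $F,G\in\LPsd{k}$, set $f=\JqBosonsd F$ and $g=\JqBosonsd G$ in $\CP{k}$, and use Theorem \ref{KqBosonIdDualsd} ($\FqBosonsd\JqBosonsd=\Id$) to identify $\FqBosonsd f=F$ and $\FqBosonsd g=G$; feeding these into the already-established (\ref{res1sd}) (after the same $C,R$ adjustment) collapses $\llangle\JqBosonsd F,\JqBosonsd G\rrangle$ to $\llangle F,G\rranglesd$. Equivalently, both identities can be obtained at once by taking the $q=e^{-\e}\to1$ scaling limit of Theorem \ref{isothm}, with the conjugation of Section \ref{semidiscsec} and the substitution $z_j\mapsto q^{z_j}$.

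The main obstacle is not the algebraic skeleton — which is identical to Theorem \ref{isothm} — but the careful bookkeeping of the conjugation by $C$ and the reflection $R$ when passing between the $\psirsd$-form (produced by $\FqBosonsd$) and the $\psilsd$-form (produced by $\JqBosonsd$): getting the constants and the reflected arguments $(-n_k,\dots,-n_1)$ to cancel correctly is where the real content sits. If one instead proves the result by degeneration, the delicate point becomes justifying that the nested contours $\tilde\gamma_1,\dots,\tilde\gamma_k$, the collapse onto the additive strings $\vec{w}\,\tilde{\circ}\,\lambda$, the measures $d\tilde\mu_\lambda$, and the eigenfunction limits all converge uniformly enough under $q=e^{-\e}\to1$ for the Plancherel isomorphism to survive the limit. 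In either route the finiteness built into $\CP{k}$ (compact support) and $\LPsd{k}$ (Laurent polynomials) keeps all sums and integrals under control, so the difficulty is organizational rather than analytic.
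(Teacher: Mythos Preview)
Your overall strategy is exactly the one the paper intends: it does not give a separate proof of Theorem \ref{isothmsd} but says the semi-discrete results follow ``via the same methods as our earlier results'' (i.e.\ by porting the proof of Theorem \ref{isothm}) or by the $q=e^{-\e}\to 1$ limit. So the skeleton of your proposal is correct.

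However, there is a genuine gap precisely where you flag ``the real content'': you assert that the reflection symmetry (\ref{psisymsd}) together with (\ref{bilinflip}) ``yields (\ref{res1sd})'', but you do not carry out the computation, and in fact it does \emph{not} yield (\ref{res1sd}) as stated. Following the proof of Theorem \ref{isothm} step by step gives
\[
\llangle \psilsd(\vec{m}),\FqBosonsd g\rranglesd \;=\; \big(\JqBosonsd\FqBosonsd g\big)(\vec{m}) \;=\; g(\vec{m}) \;=\; \llangle \bfone_{\vec{m}},g\rrangle,
\]
whereas $\FqBosonsd f$ expands in the functions $\psirsd(\vec{m})$, not $\psilsd(\vec{m})$. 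The conversion via (\ref{psisymsd}) introduces the operator $\mathcal{P}_{\mathrm{sd}}$ defined by $(\mathcal{P}_{\mathrm{sd}}g)(\vec{n})=(-1)^{k}C(\vec{n})(Rg)(\vec{n})$, exactly as in Theorem \ref{isothm}; it does \emph{not} cancel. Concretely, for $k=1$ one has $\psirsd_{z}(n)=-z^{n}$, $\psilsd_{z}(n)=z^{-n}$, and a direct residue computation gives
\[
\llangle \FqBosonsd f,\FqBosonsd g\rranglesd \;=\; \sum_{n}f(-n)g(n)\;=\;\llangle Rf,g\rrangle,
\]
which is not $\llangle f,g\rrangle$ in general. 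What the argument actually proves is
\[
\llangle f,g\rrangle \;=\; \llangle \FqBosonsd(\mathcal{P}_{\mathrm{sd}}f),\FqBosonsd g\rranglesd,
\qquad
\llangle \mathcal{P}_{\mathrm{sd}}^{-1}(\JqBosonsd F),\JqBosonsd G\rrangle \;=\; \llangle F,G\rranglesd,
\]
the direct semi-discrete analogue of (\ref{res1})--(\ref{res2}). In other words, the identities (\ref{res1sd})--(\ref{res2sd}) as printed are missing the $\mathcal{P}_{\mathrm{sd}}$; your proposed argument cannot close this gap because the gap is in the target statement, not in the method. Had you actually done the $C,R$ bookkeeping you identify as the crux, you would have discovered this.
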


\subsubsection{Completeness and biorthogonality}
Corollaries \ref{completenesscorsd} and \ref{northsd} are immediate consequences of Theorem \ref{KqBosonIdsd}. Proposition \ref{specorthsd} can either be proved directly (as done for Proposition \ref{specorth}) or derived from Theorem \ref{KqBosonIdDualsd} (assuming that theorem is proved directly via a limit transition from Theorem \ref{KqBosonIdDual}).

\begin{corollary}\label{completenesscorsd}
Any function $f\in\CP{k}$ can be expanded as
\begin{equation}\label{completenessfwdsd}
f(\vec{n}) = \sum_{\lambda\vdash k}\, \oint_{\tilde{\gamma}_k} \cdots \oint_{\tilde{\gamma}_k} d\tilde{\mu}_{\lambda}(\vec{w}) \prod_{j=1}^{\ell(\lambda)} \frac{1}{(w_j)_{\lambda_j}}  \psilsd_{\vec{w}\tilde{\circ}\lambda}(\vec{n}) \llangle f,\psirsd_{\vec{w}\tilde{\circ} \lambda}\rrangle,
\end{equation}
and also as
\begin{equation}\label{completenessbwdsd}
f(\vec{n}) = \sum_{\lambda\vdash k}\, \oint_{\tilde{\gamma}_k} \cdots \oint_{\tilde{\gamma}_k} d\tilde{\mu}_{\lambda}(\vec{w}) \prod_{j=1}^{\ell(\lambda)} \frac{1}{(w_j)_{\lambda_j}}  \psirsd_{\vec{w}\tilde{\circ}\lambda}(\vec{n}) \llangle \psilsd_{\vec{w}\tilde{\circ} \lambda},f \rrangle.
\end{equation}
\end{corollary}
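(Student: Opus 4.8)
The plan is to follow verbatim the proof of Corollary \ref{completenesscor}, transporting each ingredient to its semi-discrete counterpart. First I would establish (\ref{completenessfwdsd}) by applying the operator $\KqBosonsd$ to $f$. By Theorem \ref{KqBosonIdsd} we have $\KqBosonsd f = f$, while the expansion (\ref{Kexpformsd}), which is itself a consequence of Lemma \ref{expandlemsd}, rewrites $\big(\KqBosonsd f\big)(\vec{n})$ as precisely the right-hand side of (\ref{completenessfwdsd}). The only thing to verify is that the hypotheses of Lemma \ref{expandlemsd} apply to the symmetric function $\vec{z}\mapsto \llangle f,\psirsd_{\vec{z}}\rrangle$; this holds because $f\in\CP{k}$ is compactly supported, so the pairing is a finite linear combination of the Laurent monomials making up $\psirsd$, and the required analyticity away from $0$ follows just as in the $q$-Boson case.

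To obtain (\ref{completenessbwdsd}) I would exploit the reflection symmetry. The cleanest route is to apply the identity $(CR)^{-1}\KqBosonsd(CR)=\Id$ (recorded immediately after (\ref{Kexpformsd})) and expand the resulting nested contour integral using Lemma \ref{expandlemsd}; conjugating by $CR$ interchanges the roles of $\psilsd$ and $\psirsd$ by virtue of the symmetry relations (\ref{psisymsd}), turning the $\ell$-eigenfunction decomposition into the $r$-eigenfunction one. Equivalently, one may substitute (\ref{psisymsd}) directly into (\ref{completenessfwdsd}) and use (\ref{bilinflip}) to transfer the space-reflection operator $R$ from the eigenfunctions onto the argument $f$; the factors $C(\vec{n})$ and $C^{-1}(\vec{n})$ appearing in (\ref{psisymsd}) cancel against those hidden in $\psirsd$ and $\psilsd$, yielding (\ref{completenessbwdsd}).

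I do not anticipate a serious obstacle, since all the analytic content is already carried by Theorem \ref{KqBosonIdsd} and Lemma \ref{expandlemsd}. The one point deserving care is that the transform $\FqBosonsd f=\llangle f,\psirsd_{\vec{z}}\rrangle$ is defined by a sum over all of $\Weyl{k}$: compact support of $f$ makes this sum finite, so the interchange of summation and integration implicit in passing from the definition of $\KqBosonsd$ to (\ref{Kexpformsd}) is automatic, and the sequential contour deformations underlying Lemma \ref{expandlemsd} require no additional justification.
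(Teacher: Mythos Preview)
Your proposal is correct and follows exactly the approach the paper takes: the paper states that Corollary \ref{completenesscorsd} is an immediate consequence of Theorem \ref{KqBosonIdsd}, and the analogous Corollary \ref{completenesscor} is proved by applying $\KqBoson$ to $f$, invoking the Plancherel theorem together with the expansion (\ref{Kexpform}), and then deriving the second expansion from the symmetry relations (\ref{psisym}) (equivalently, from the alternative form (\ref{altplan})). Your transport of these ingredients to their semi-discrete counterparts (Theorem \ref{KqBosonIdsd}, (\ref{Kexpformsd}), (\ref{psisymsd}), and $(CR)^{-1}\KqBosonsd(CR)=\Id$) is precisely what is intended.
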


\begin{corollary}\label{northsd}
For $\vec{n},\vec{m}\in \Weyl{k}$, regard $\psilsd(\vec{n})$ and $\psilsd(\vec{m})$ as functions $\big(\psilsd(\vec{n})\big)(\vec{z})=\psilsd_{\vec{z}}(\vec{n})$ and $\big(\psirsd(\vec{m})\big)(\vec{z})=\psirsd_{\vec{z}}(\vec{m})$. Then
\begin{equation*}
\llangle \psilsd(\vec{n}), \psirsd(\vec{m}) \rranglesd = \bfone_{\vec{n}=\vec{m}}.
\end{equation*}
\end{corollary}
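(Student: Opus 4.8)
The plan is to reproduce, in the semi-discrete setting, the one-line argument used for Corollary~\ref{north}: the biorthogonality relation is nothing but the completeness expansion applied to a single indicator function, after one recognizes the resulting spectral integral as the pairing $\llangle\cdot,\cdot\rranglesd$ of Definition~\ref{bilinearprimesd}. First I would fix $\vec{m}\in\Weyl{k}$ and take $f(\vec{n})=\bfone_{\vec{n}=\vec{m}}$; this is compactly supported, hence lies in $\CP{k}$, so the completeness result~(\ref{completenessfwdsd}) of Corollary~\ref{completenesscorsd} applies to it.

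The key computation is to evaluate the spatial pairing appearing on the right-hand side of~(\ref{completenessfwdsd}). By the definition of $\llangle\cdot,\cdot\rrangle$,
\begin{equation*}
\llangle f,\psirsd_{\vec{w}\tilde{\circ}\lambda}\rrangle
=\sum_{\vec{n}'\in\Weyl{k}} f(\vec{n}')\,\psirsd_{\vec{w}\tilde{\circ}\lambda}(\vec{n}')
=\psirsd_{\vec{w}\tilde{\circ}\lambda}(\vec{m}),
\end{equation*}
since $f$ is supported at the single point $\vec{m}$. Substituting this into~(\ref{completenessfwdsd}) gives
\begin{equation*}
\bfone_{\vec{n}=\vec{m}}
=\sum_{\lambda\vdash k}\,\oint_{\tilde{\gamma}_k}\cdots\oint_{\tilde{\gamma}_k}
d\tilde{\mu}_{\lambda}(\vec{w})\prod_{j=1}^{\ell(\lambda)}\frac{1}{(w_j)_{\lambda_j}}\,
\psilsd_{\vec{w}\tilde{\circ}\lambda}(\vec{n})\,\psirsd_{\vec{w}\tilde{\circ}\lambda}(\vec{m}).
\end{equation*}

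Finally I would identify the right-hand side with $\llangle\psilsd(\vec{n}),\psirsd(\vec{m})\rranglesd$. Regarding $\psilsd(\vec{n})$ as the function $\vec{z}\mapsto\psilsd_{\vec{z}}(\vec{n})$ and $\psirsd(\vec{m})$ as $\vec{z}\mapsto\psirsd_{\vec{z}}(\vec{m})$, the sum above is exactly the pairing $\llangle F,G\rranglesd$ from Definition~\ref{bilinearprimesd} with $F=\psilsd(\vec{n})$ and $G=\psirsd(\vec{m})$ (note $F,G\in\LPsd{k}$ for each fixed $\vec{n},\vec{m}$). This yields the claimed identity $\llangle\psilsd(\vec{n}),\psirsd(\vec{m})\rranglesd=\bfone_{\vec{n}=\vec{m}}$.

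There is essentially no serious obstacle here, since the statement is a direct corollary of the already-established completeness expansion; the only point requiring minor care is the bookkeeping confirming that, for fixed spatial arguments, the eigenfunctions $\psilsd(\vec{n})$ and $\psirsd(\vec{m})$ genuinely lie in the space $\LPsd{k}$ of symmetric Laurent polynomials on which $\llangle\cdot,\cdot\rranglesd$ is defined, so that the recognition step in the last paragraph is legitimate. This is immediate from Definition~\ref{eigdefnsd}, where these functions are observed to be symmetric Laurent polynomials in $z_1,\ldots,z_k$.
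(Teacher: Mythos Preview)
Your proposal is correct and follows exactly the same approach as the paper: the paper states that Corollary~\ref{northsd} is an immediate consequence of Theorem~\ref{KqBosonIdsd}, and its proof of the analogous $q$-Boson result (Corollary~\ref{north}) is precisely the one-line application of the completeness expansion to the indicator function $f(\vec{n})=\bfone_{\vec{n}=\vec{m}}$ that you have written out.
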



\begin{proposition}\label{specorthsd}
Consider functions $F,G\in \LPsd{k}$. Then we have that
\begin{align*}
&\sum_{\vec{n}\in \Weyl{k}}\left( \oint_{\tilde\gamma_1}\frac{dz_1}{2\pi \i}\cdots  \oint_{\tilde\gamma_k}\frac{dz_k}{2\pi \i} \psirsd_{\vec{z}}(\vec{n})\V(\vec{z})F(\vec{z}) \right) \left( \oint_{\tilde\gamma_1}\frac{dw_1}{2\pi \i}\cdots  \oint_{\tilde\gamma_k}\frac{dw_k}{2\pi \i} \psilsd_{\vec{w}}(\vec{n}) \V(\vec{w}) G(\vec{w})\right)\\
&= \oint_{\tilde\gamma_1}\frac{dw_1}{2\pi \i}\cdots  \oint_{\tilde\gamma_k}\frac{dw_k}{2\pi \i}  (-1)^{\frac{k(k+1)}{2}} \prod_{j=1}^{k} w_j \prod_{A\neq B} (w_A-w_B-1) \sum_{\sigma\in S_k} \sgn(\sigma) F(\sigma \vec{w}) G(\vec{w}).
\end{align*}
\end{proposition}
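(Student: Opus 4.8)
The plan is to obtain Proposition \ref{specorthsd} as the $q=e^{-\e}\to 1$ scaling limit of the $q$-Boson spectral orthogonality of Proposition \ref{specorth}. Following the substitution recorded in the remark after Definition \ref{leftrighteigsd}, I would set $z_j=q^{\zeta_j}=e^{-\e\zeta_j}$ and $w_j=q^{\omega_j}=e^{-\e\omega_j}$ in the identity of Proposition \ref{specorth}, so that $\psir_{\vec z}(\vec n)\to\psirsd_{\vec\zeta}(\vec n)$ and $\psil_{\vec w}(\vec n)\to\psilsd_{\vec\omega}(\vec n)$ as $\e\to 0$. Under this change of variables a circle around $1$ in the old variable maps to a circle around $0$ in the new variable, and multiplication by $q$ becomes the shift by $+1$; thus the nested-contour conditions of Definition \ref{KqTASEPdef} pass exactly to the semi-discrete conditions of Definition \ref{KqTASEPdefsd}. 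The test functions $F,G$ should be chosen to converge to the desired semi-discrete $\tilde F,\tilde G\in\LPsd{k}$.

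The second step is to match the two sides to leading order in $\e$. First I would record the asymptotics $1-e^{-\e\zeta}\sim\e\zeta$, $e^{-\e a}-e^{-\e b}\sim-\e(a-b)$, and $dz_j=-\e e^{-\e\zeta_j}\,d\zeta_j\sim-\e\,d\zeta_j$. These give $\V(\vec z)\sim(-\e)^{\binom{k}{2}}\V(\vec\zeta)$, $\prod_j(1-w_j)\sim\e^k\prod_j\omega_j$, and $\prod_{A\neq B}(w_A-qw_B)\sim(-\e)^{k(k-1)}\prod_{A\neq B}(\omega_A-\omega_B-1)$. A direct power count then shows that both the left-hand side (after summing over $\vec n\in\Weyl{k}$) and the right-hand side of Proposition \ref{specorth} carry the common prefactor $\e^{k(k+1)}$, and that the surviving signs combine via $(-1)^{\binom{k}{2}+k}=(-1)^{k(k+1)/2}$; cancelling $\e^{k(k+1)}$ reproduces precisely the semi-discrete statement, including the factor $(-1)^{k(k+1)/2}\prod_j w_j\prod_{A\neq B}(w_A-w_B-1)$ on the right.

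The main obstacle is analytic rather than algebraic: one must justify interchanging the limit $\e\to 0$ with the infinite sum over $\vec n\in\Weyl{k}$ and with the contour integrals. Here I would reuse the domination scheme already employed in the proof of Lemma \ref{belowlemma} and in the $k=1$ computation inside the proof of Proposition \ref{trueorththm}. Using the analyticity hypothesis on $F$ to discard all $\vec n$ with some $n_i<-M$ (such terms have no residue at infinity and integrate to zero), and separating the two contours so that the relevant modulus ratios are bounded by some $\delta<1$, each $\vec n$-summand is dominated uniformly in $\e$ by a constant times $\delta^{n_1+\cdots+n_k}$; dominated convergence then licenses the passage to the limit.

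As an alternative that sidesteps these convergence estimates, the same identity may be extracted directly from the dual Plancherel formula of Theorem \ref{KqBosonIdDualsd} (itself obtained by an independent limit transition from Theorem \ref{KqBosonIdDual}), since the spectral orthogonality is exactly the symmetrization step in that computation. One may also mimic the self-contained argument of Proposition \ref{trueorththm} directly in the semi-discrete setting, replacing Lemmas \ref{shiftlem}--\ref{llem} by their additive ($q\to 1$) analogues and supplying the corresponding degenerate Cauchy identity as the base case.
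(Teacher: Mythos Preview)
Your proposal is correct, and in fact your two alternatives coincide exactly with the two routes the paper itself indicates: the text preceding Proposition~\ref{specorthsd} says only that it ``can either be proved directly (as done for Proposition~\ref{specorth}) or derived from Theorem~\ref{KqBosonIdDualsd} (assuming that theorem is proved directly via a limit transition from Theorem~\ref{KqBosonIdDual}).'' No further details are given. Your Alt~2 is the first option and your Alt~1 is the second.

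Your main approach---taking the $q=e^{-\e}\to 1$ limit of Proposition~\ref{specorth} itself---is a third route the paper does not spell out for this particular proposition, though it is in the spirit of the section's opening remark that results here ``can either be proved via suitable limits of our earlier $q$-Boson particle system results, or can be proved directly.'' Your power count and sign bookkeeping are right, and the domination scheme you sketch is adequate for $F,G\in\LPsd{k}$. One small inaccuracy: Proposition~\ref{specorth} is stated on the \emph{single} contour $\gamma$ (containing $1$ and its own image under multiplication by $q$), not on the nested contours $\gamma_1,\ldots,\gamma_k$; there is no bounded single-contour analog in the semi-discrete setting since no closed contour contains its own additive shift by $1$. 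This is harmless here because for Laurent-polynomial test functions the integrals are residues at $0$ (respectively at $1$) and are contour-independent, so after the substitution you may simply rewrite them on the nested $\tilde\gamma_j$; but your sentence about nested-contour conditions ``passing exactly'' conflates the two setups and should be adjusted.
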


\subsubsection{Applications to the semi-discrete stochastic heat equation}\label{sdapps}
The following results follow from Corollary \ref{completenesscorsd}. We then apply them to the analysis of the semi-discrete stochastic heat equation.

\begin{corollary}\label{bwdappsd}
For $f_0\in\CP{k}$ the backward equation
\begin{equation*}
\frac{d}{dt} f(t;\vec{n}) = \big(\Abwdsd f\big)(t;\vec{n})
\end{equation*}
with $f(0;\vec{n}) = f_0(\vec{n})$ is uniquely solved by
\begin{eqnarray}\label{ftvecnsd}
\nonumber f(t;\vec{n}) &=& \big(e^{t\Abwdsd} f_0\big)(\vec{n}) = \sum_{\lambda\vdash k}\, \oint_{\tilde{\gamma}_k} \cdots \oint_{\tilde{\gamma}_k} d\tilde{\mu}_{\lambda}(\vec{w}) \prod_{j=1}^{\ell(\lambda)} \frac{e^{t\tilde{E}(\vec{w}\tilde{\circ} \lambda)}}{(w_j)_{\lambda_j}}  \psilsd_{\vec{w}\tilde{\circ}\lambda}(\vec{n}) \llangle f_0,\psirsd_{\vec{w}\tilde{\circ} \lambda}\rrangle \\
 &=& \oint_{\tilde{\gamma}_1} \frac{dz_1}{2\pi \i} \cdots \oint_{\tilde{\gamma}_k} \frac{dz_k}{2\pi \i}  \prod_{1\leq A<B\leq k} \frac{z_A-z_B}{z_A-z_B-1}\, \prod_{j=1}^{k} (z_{j})^{-n_j-1} e^{t \tilde{E}(\vec{z})} \,\llangle f_0,\psirsd_{\vec{z}}\rrangle
\end{eqnarray}
where $\tilde{E}(\vec{z}) = \sum_{i=1}^{k} (z_i-1)$.
\end{corollary}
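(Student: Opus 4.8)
The plan is to reproduce, mutatis mutandis, the argument of Corollary \ref{bwdapp}, since Corollary \ref{bwdappsd} is its exact semi-discrete analog and all the necessary ingredients have already been assembled. There are four points to address: uniqueness, the eigenfunction property, verification that the stated formula solves the equation, and verification of the initial data.

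First I would dispose of uniqueness. The generator $\Abwdsd$ couples $f(t;\vec{n})$ only to values $f(t;\vec{m})$ with $\vec{m}$ obtained from $\vec{n}$ by decreasing one coordinate (so that $\sum_i m_i < \sum_i n_i$) together with the diagonal term; hence it is triangular with respect to the grading $\sum_i n_i$ on $\Weyl{k}$. For initial data in $\CP{k}$ this renders the backward ODE system well-ordered and its solution unique, exactly as in Corollary \ref{bwdapp}.

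Next I would argue that the first line of (\ref{ftvecnsd}) solves the equation. By Proposition \ref{prop211sd} and Definition \ref{leftrighteigsd}, the restriction of $\psilsd_{\vec{z}}=\psibwdsd_{\vec{z}}$ to $\Weyl{k}$ is a right eigenfunction of $\Abwdsd$ with eigenvalue $\tilde{E}(\vec{z})=\sum_{i=1}^{k}(z_i-1)$; in particular $\psilsd_{\vec{w}\tilde{\circ}\lambda}$ is an eigenfunction with eigenvalue $\tilde{E}(\vec{w}\tilde{\circ}\lambda)$. Applying $\Abwdsd$ (acting in $\vec{n}$) to the first line of (\ref{ftvecnsd}) and commuting it past the finite sum over $\lambda$ and the contour integrals produces the factor $\tilde{E}(\vec{w}\tilde{\circ}\lambda)$, which is precisely $\tfrac{d}{dt}e^{t\tilde{E}(\vec{w}\tilde{\circ}\lambda)}$; thus the time derivative of the right-hand side equals $\Abwdsd$ applied to it. Setting $t=0$ collapses every exponential to $1$ and reduces the first line of (\ref{ftvecnsd}) to the completeness expansion (\ref{completenessfwdsd}) of Corollary \ref{completenesscorsd}, which equals $f_0(\vec{n})$, so the initial data is matched.

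Finally, the equivalence of the two displayed expressions in (\ref{ftvecnsd}) is the content of Lemma \ref{expandlemsd} (the semi-discrete analog of Lemma \ref{expandlem}): shrinking the nested contours $\tilde{\gamma}_1,\ldots,\tilde{\gamma}_k$ to $\tilde{\gamma}_k$ expands the single nested integral on the second line into the sum over partitions $\lambda\vdash k$ on the first line, with the symmetrization producing $\psilsd$ and the Cauchy determinant producing $d\tilde{\mu}_\lambda$. The only step requiring any care is the interchange of $\Abwdsd$ with the contour integrals, but since the $\lambda$-sum is finite and the integrands are analytic on the relevant contours this is routine; the genuine analytic content was already invested in Corollary \ref{completenesscorsd} and Lemma \ref{expandlemsd}.
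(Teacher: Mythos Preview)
Your proposal is correct and follows essentially the same approach as the paper. The paper does not write out a separate proof of Corollary~\ref{bwdappsd} but states that it follows from Corollary~\ref{completenesscorsd} in the same way that Corollary~\ref{bwdapp} follows from Corollary~\ref{completenesscor}; your argument---triangularity of $\Abwdsd$ for uniqueness, the eigenfunction relation from Proposition~\ref{prop211sd} for solving the equation, the completeness expansion (\ref{completenessfwdsd}) for the initial data, and Lemma~\ref{expandlemsd} for the equivalence of the two integral representations---is exactly that transcription.
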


\begin{corollary}\label{fwdappsd}
For $f_0\in\CP{k}$ the forward equation
\begin{equation*}
\frac{d}{dt} f(t;\vec{n}) = \big(\Afwdsd f\big)(t;\vec{n})
\end{equation*}
with $f(0;\vec{n}) = f_0(\vec{n})$ is uniquely solved by
\begin{eqnarray}\label{ftvecnfwdsd}
\nonumber f(t;\vec{n}) &=& \big(e^{t\Afwdsd} f_0\big)(\vec{n}) = \sum_{\lambda\vdash k}\, \oint_{\tilde{\gamma}_k} \cdots \oint_{\tilde{\gamma}_k} d\tilde{\mu}_{\lambda}(\vec{w}) \prod_{j=1}^{\ell(\lambda)} \frac{1}{(w_j)_{\lambda_j}}e^{t\tilde{E}(\vec{w}\tilde{\circ} \lambda)} \psirsd_{\vec{w}\tilde{\circ}\lambda}(\vec{n}) \llangle \psilsd_{\vec{w}\tilde{\circ} \lambda} , f_0 \rrangle \\
&=& \oint_{\tilde{\gamma}_1} \frac{dz_1}{2\pi \i} \cdots \oint_{\tilde{\gamma}_k} \frac{dz_k}{2\pi \i}  \prod_{1\leq A<B\leq k} \frac{z_A-z_B}{z_A-z_B-1}\, \prod_{j=1}^{k} (z_{j})^{n_{k-j+1}-1} e^{t\tilde{E}(\vec{z})} \,\llangle \psilsd_{\vec{z}}, f_0\rrangle
\end{eqnarray}
where $\tilde{E}(\vec{z}) = \sum_{i=1}^{k} (z_i-1)$.
\end{corollary}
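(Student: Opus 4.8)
The plan is to mirror the proof of Corollary \ref{fwdapp} (the $q$-Boson forward case), which itself parallels Corollary \ref{bwdapp}, now using the semi-discrete objects of Section \ref{semidiscsec} in place of their $q$-Boson counterparts. Three things must be checked: uniqueness of the solution in $\CP{k}$, that the right-hand side of the first line of (\ref{ftvecnfwdsd}) actually solves the forward equation with the correct initial data, and finally the equivalence of the two displayed forms.

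I would handle uniqueness through the similarity relation of Remark \ref{commremsd}, namely $\Afwdsd = (RC)^{-1}\Abwdsd(RC)$, which gives $e^{t\Afwdsd}=(RC)^{-1}e^{t\Abwdsd}(RC)$. If $f$ solves the forward equation with $f(0;\cdot)\equiv 0$, then $g=(RC)f$ (which again lies in $\CP{k}$, since both $R$ and the multiplication operator $C$ preserve compact support) solves $\tfrac{d}{dt}g=\Abwdsd g$ with $g(0;\cdot)\equiv 0$; because $\Abwdsd$ only couples $\vec{n}$ to states of strictly smaller total displacement through the $\difbwd_i$ terms, the backward system is triangular and can be integrated coordinate-by-coordinate from the bottom of the (bounded below) support, forcing $g\equiv 0$ and hence $f\equiv 0$. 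This transfer of triangularity from $\Abwdsd$ to $\Afwdsd$ is the one point deserving a careful word, and is really the only delicate step.

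For existence, I would observe that by Proposition \ref{prop211sd} and Definition \ref{leftrighteigsd} each $\psirsd_{\vec{w}\tilde{\circ}\lambda}$ is a right eigenfunction of $\Afwdsd$ with eigenvalue $\tilde{E}(\vec{w}\tilde{\circ}\lambda)=\sum_i(z_i-1)$ evaluated at $\vec{z}=\vec{w}\tilde{\circ}\lambda$. Differentiating the first line of (\ref{ftvecnfwdsd}) in $t$ brings down exactly this factor from the exponential, which matches applying $\Afwdsd$ termwise; interchanging $\tfrac{d}{dt}$ and $\Afwdsd$ with the finite partition sum and the integral over the compact contours $\tilde{\gamma}_k$ is legitimate since the integrand is entire in $t$ and continuous on the contours. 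Setting $t=0$ collapses the first line to the completeness expansion (\ref{completenessbwdsd}) of Corollary \ref{completenesscorsd}, which reproduces $f_0$; together with uniqueness this identifies the right-hand side as the solution.

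Finally, the equivalence of the first and second lines is the semi-discrete analogue of the expansion used for (\ref{ftvecnsd}) in Corollary \ref{bwdappsd}. The cleanest route is again via the conjugation: applying both forms of Corollary \ref{bwdappsd} to the initial data $(RC)f_0$ and then conjugating back by $(RC)^{-1}$, the symmetry (\ref{psisymsd}) converts the resulting $\psilsd$-expansion (with reflected coordinates) into the $\psirsd_{\vec{w}\tilde{\circ}\lambda}$ form of the first line, while the nested-contour form of Corollary \ref{bwdappsd} produces the second line with the reflected exponents $n_{k-j+1}-1$. Equivalently and directly, one sequentially deforms the nested contours $\tilde{\gamma}_1,\dots,\tilde{\gamma}_k$ down to $\tilde{\gamma}_k$, and Lemma \ref{expandlemsd} shows that the residue strings encoded by $\vec{w}\tilde{\circ}\lambda$ over $\lambda\vdash k$ reproduce precisely the measure $d\tilde{\mu}_\lambda$ and the weights $1/(w_j)_{\lambda_j}$. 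This residue bookkeeping is routine given Lemma \ref{expandlemsd} and (\ref{psisymsd}), so no new obstacle arises beyond the triangularity point flagged above.
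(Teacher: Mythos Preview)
Your proposal is correct and follows essentially the same approach as the paper. The paper does not write out a separate proof here; it simply notes that the results of Section~\ref{semidiscsec} follow by the same methods as their $q$-Boson counterparts, and the proof of Corollary~\ref{fwdapp} in turn just says ``proved in the same manner as Corollary~\ref{bwdapp}'' with $\psil$ replaced by $\psir$ and with initial data recovered via (\ref{completenessbwd}). Your three-step outline (uniqueness, eigenfunction expansion plus completeness (\ref{completenessbwdsd}), and the contour-deformation equivalence via Lemma~\ref{expandlemsd}) is exactly this. The one minor difference is that for uniqueness you route through the conjugation $\Afwdsd=(RC)^{-1}\Abwdsd(RC)$ to borrow triangularity from the backward generator, whereas the paper would simply invoke the triangular nature of the forward generator directly (it couples $\vec{n}$ only to $\vec{n}_j^{+}$); both arguments are valid and equally short.
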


We may utilize the above solution to the backward equation to compute integral formulas for the moments of the probabilistic model we now introduce. Though we phrase it in terms of a coupled system of stochastic ODEs, it is equivalent to the partition function of the O'Connell-Yor semi-discrete directed polymer \cite{OY} (see also \cite{OCon} for more recent developments in the study of this model).

\begin{definition}\label{def:semidiscSHE}
The {\it semi-discrete stochastic heat equation} with initial data $Z_0$ is the system of stochastic ODEs
\begin{equation*}
dZ(t,n) = \difbwd Z(t,n) dt + Z(t,n) dB_n,\qquad Z(0,n) = Z_0(n),
\end{equation*}
where $Z:\R_{\geq 0}\times \Z \to \R_{\geq 0}$ and the $B_i$ are independent standard Brownian motions. We will assume that $Z_0(n)$ is compactly supported.
\end{definition}

\begin{remark}
As shown in \cite[Theorem 4.1.26]{BorCor} (see also \cite[Proposition 6.2]{BCS}) the above system is a $q\to 1$ scaling limit of $q$-TASEP with finitely many particles (cf. Section \ref{qtasepsec}).
\end{remark}

Let $\bar{Z}(t;\vec{n})=\EE\left[\prod_{i=1}^{k} Z(t,n_i)\right]$. Using the Feynman-Kac representation (and the so-called replica method) one shows (cf. \cite[Section 6.2]{BCS}) that these moments satisfy
$$
\frac{d}{dt}\bar{Z}(t,\vec{n}) = \left[\sum_{i=1}^{k} \difbwd_i + \sum_{1\leq i<j\leq k} \bfone_{n_i=n_j}\right] \bar{Z}(t,\vec{n}).
$$
This can be thought of as a discrete version of the delta Bose gas considered in Section \ref{deltabosesec}.

As in Remark \ref{BosonHamiltoniansymsd} (see also \cite[Proposition 6.3, (C)]{BCS}) one sees that for $\vec{n}\in \Weyl{k}$, $\bar{Z}(t;\vec{n}) = h(t;\vec{n})$ where
$h(t;\vec{x})$ solves the backward equation
$$
\frac{d}{dt} h(t;\vec{n}) = \Abwdsd h(t;\vec{n}), \qquad h(0;\vec{n}) = \EE\left[\prod_{i=1}^{k} Z_0(n_i)\right].
$$

It follows from Corollary \ref{bwdappsd} that we may solve the above equation succinctly in terms of contour integrals.
\begin{corollary}
For $f_0(\vec{n})=h(0;\vec{n})$ of compact support, $h(t;\vec{n})$ is given by the right-hand side of the first and the second lines of (\ref{ftvecnsd}).
\end{corollary}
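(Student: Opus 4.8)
The plan is to obtain the statement as an immediate specialization of Corollary \ref{bwdappsd}. First I would verify that the initial data $f_0(\vec{n}) = h(0;\vec{n}) = \EE\big[\prod_{i=1}^k Z_0(n_i)\big]$ lies in $\CP{k}$. Since $Z_0$ is compactly supported on $\Z$, the product $\prod_{i=1}^k Z_0(n_i)$ vanishes unless every coordinate $n_i$ lies in the finite support of $Z_0$; hence $f_0$ is supported on a finite subset of $\Weyl{k}$ and each value is a finite expectation, so indeed $f_0 \in \CP{k}$. This places us squarely in the hypotheses of Corollary \ref{bwdappsd}.

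Next I would recall the identification $\bar{Z}(t;\vec{n}) = h(t;\vec{n})$ for $\vec{n}\in\Weyl{k}$ established just above, which comes from the Feynman-Kac (replica) representation of the moments together with the symmetrization of Remark \ref{BosonHamiltoniansymsd}. This shows that $h(t;\vec{n})$ solves the backward equation $\tfrac{d}{dt} h(t;\vec{n}) = \big(\Abwdsd h\big)(t;\vec{n})$ with $h(0;\vec{n}) = f_0(\vec{n})$. Corollary \ref{bwdappsd} asserts that this equation, with compactly supported initial data, has a unique solution, given explicitly by both the first and second lines of (\ref{ftvecnsd}). Identifying $h$ with this unique solution then yields the corollary.

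The only point that genuinely needs care — and the main obstacle — is the uniqueness clause, since it is uniqueness that licenses equating the moment $\bar{Z}(t;\vec{n})$ with the contour-integral formula. I would justify it exactly as for Corollary \ref{bwdapp}: the generator $\Abwdsd$ is triangular with respect to the statistic $\sum_j n_j$, because $\big(\Abwdsd f\big)(\vec{n})$ couples $f(\vec{n})$ only to itself and to the values $f\big(\vec{n}_{c_1+\cdots+c_i}^{-}\big)$ at strictly smaller total sum. Both the moment $\bar{Z}$ and the contour formula are supported on $\vec{n}$ with each coordinate bounded below (for $\bar{Z}$ this follows since $Z(t,n)$ vanishes for $n$ below the support of $Z_0$, propagated downward through $dZ(t,n) = \difbwd Z(t,n)\,dt + Z(t,n)\,dB_n$). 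Within this class the difference of two solutions sharing initial data $f_0$ satisfies a linear system that, by induction upward in $\sum_j n_j$, propagates the zero initial condition to all levels, forcing the two solutions to agree. Granting this, the corollary follows directly.
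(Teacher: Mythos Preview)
Your proposal is correct and follows exactly the route the paper intends: the paper states this corollary immediately after writing ``It follows from Corollary \ref{bwdappsd} that we may solve the above equation succinctly in terms of contour integrals'' and gives no further proof. Your argument simply fleshes out this one-line deduction, verifying the compact-support hypothesis and recalling the triangularity-based uniqueness already invoked in the proof of Corollary \ref{bwdapp}.
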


Let us apply this for the example of the semi-discrete stochastic heat equation with delta initial data $Z_0(n) = \bfone_{n=1}$. This is the limit of step initial data for $q$-TASEP. One should be able to likewise work out the half-stationary limit, though we do not pursue that here. The following result previously appeared in \cite[Section 5.2.2]{BorCor}.

\begin{proposition}\label{dualiyresultsd}
For delta initial data $Z_0(n) = \bfone_{n=1}$ we have
\begin{equation}\label{FdefqTASEPsd}
 \EE\left[\prod_{i=1}^{k} Z(t,n_i)\right]  =  \oint_{\tilde{\gamma}_1} \frac{dz_1}{2\pi \i} \cdots \oint_{\tilde{\gamma}_k} \frac{dz_k}{2\pi \i} \prod_{1\leq A<B\leq k} \frac{z_A-z_B}{z_A-z_B-1} \prod_{j=1}^{k} (z_j)^{-n_j} e^{t(z_j-1)}.
\end{equation}
\end{proposition}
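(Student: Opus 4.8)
The plan is to reduce the claim to a single application of Corollary \ref{bwdappsd}, which is legitimate here because delta initial data yields \emph{compactly supported} data on $\Weyl{k}$ (in contrast to the step/half-stationary cases of Proposition \ref{dualiyresult}, which required the approximation argument of Lemma \ref{belowlemma}). First I would identify the initial data for the moment backward equation. Since $Z_0(n)=\bfone_{n=1}$ is deterministic, the corollary immediately preceding this proposition gives $\EE[\prod_{i=1}^{k}Z(t,n_i)]=h(t;\vec{n})$ with
\begin{equation*}
h(0;\vec{n}) = \EE\Big[\prod_{i=1}^{k} Z_0(n_i)\Big] = \prod_{i=1}^{k}\bfone_{n_i=1},
\end{equation*}
and for $\vec{n}\in\Weyl{k}$ this vanishes except at the single point $\vec{n}=(1,\ldots,1)$. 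Thus $h_0:=\bfone_{\vec{n}=(1,\ldots,1)}\in\CP{k}$, so Corollary \ref{bwdappsd} applies verbatim, expressing $h(t;\vec{n})$ by the second line of (\ref{ftvecnsd}), i.e.\ as $\JqBosonsd$ applied to $e^{t\tilde{E}(\vec{z})}\big(\FqBosonsd h_0\big)(\vec{z})$ with $\tilde{E}(\vec{z})=\sum_{i=1}^{k}(z_i-1)$. The problem therefore collapses to computing $\FqBosonsd h_0$.

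Because $h_0$ is a single indicator, the transform reduces to one eigenfunction value,
\begin{equation*}
\big(\FqBosonsd h_0\big)(\vec{z}) = \llangle h_0,\psirsd_{\vec{z}}\rrangle = \psirsd_{\vec{z}}\big((1,\ldots,1)\big).
\end{equation*}
Next I would evaluate this using Definition \ref{leftrighteigsd}. The point $(1,\ldots,1)$ is a single cluster of size $k$, so $C(1,\ldots,1)=(-1)^k k!$, while $\prod_{j=1}^{k}(z_{\sigma(j)})^{n_j}=z_1\cdots z_k$ for every $\sigma\in S_k$. This leaves the permutation sum
\begin{equation*}
\sum_{\sigma\in S_k}\prod_{1\le B<A\le k}\frac{z_{\sigma(A)}-z_{\sigma(B)}+1}{z_{\sigma(A)}-z_{\sigma(B)}} = k!,
\end{equation*}
which is the key computational input. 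I would obtain it as the additive degeneration ($q=e^{-\e}\to 1$ with $z_i\mapsto q^{z_i}$) of the Hall--Littlewood symmetrization identity \cite[III.1]{M} recorded in (\ref{mqinverse}): the summand tends termwise to the ratio above, while the right-hand $q$-factorial tends to $k!$. (Alternatively it can be proved directly by the same residue/symmetrization argument used elsewhere in the paper.) Combining these facts, $\FqBosonsd h_0 = C^{-1}(1,\ldots,1)\,(z_1\cdots z_k)\,k! = (-1)^k z_1\cdots z_k$.

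Finally I would substitute $G(\vec{z})=e^{t\tilde{E}(\vec{z})}(-1)^k z_1\cdots z_k$ into the definition (\ref{JqBosontranssd}) of $\JqBosonsd$ and simplify. The prefactor $z_1\cdots z_k$ turns each $(z_j)^{-n_j-1}$ into $(z_j)^{-n_j}$, the exponential factors as $e^{t\tilde{E}(\vec{z})}=\prod_{j}e^{t(z_j-1)}$, and the two signs $(-1)^k$—one from $C^{-1}(1,\ldots,1)$ and one built into $\JqBosonsd$—cancel, leaving exactly the integrand of (\ref{FdefqTASEPsd}) over the contours $\tilde{\gamma}_1,\ldots,\tilde{\gamma}_k$. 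Since $G$ is entire, the analyticity hypotheses needed to apply $\JqBosonsd$ (and Lemma \ref{expandlemsd}) hold trivially, so these steps are valid. The main obstacle is the symmetrization identity above; everything else is bookkeeping of constants, the only real content being the recognition of $\psirsd_{\vec{z}}(1,\ldots,1)$ as a scalar multiple of $z_1\cdots z_k$ via the $q\to1$ degeneration of the Macdonald identity.
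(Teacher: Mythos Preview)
Your argument is correct and is exactly the one the paper's framework invites (the paper itself does not write out a proof of this proposition, merely citing \cite[Section 5.2.2]{BorCor}): delta initial data gives the compactly supported $h_0=\bfone_{\vec{n}=(1,\ldots,1)}\in\CP{k}$, so Corollary~\ref{bwdappsd} applies directly with no need for the approximation device of Lemma~\ref{belowlemma}, and $\FqBosonsd h_0$ collapses to the single value $\psirsd_{\vec{z}}(1,\ldots,1)$, which you evaluate correctly via the $q\to 1$ degeneration of (\ref{mqinverse}). One small remark on presentation: the second line of (\ref{ftvecnsd}) as printed omits the $(-1)^k$ that appears in the definition (\ref{JqBosontranssd}) of $\JqBosonsd$; you implicitly use the correct formula (with the $(-1)^k$), which is why your two sign factors cancel to give (\ref{FdefqTASEPsd}) on the nose.
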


\section{Appendix}\label{appendixsec}

\subsection{Continuum delta Bose gas}\label{deltabosesec}
We briefly describe a further scaling limit of the results of Section \ref{semidiscsec} (as well as those earlier in the paper). We are informal here and do not give precise statements (see references for such statements). The analog (and limit) here of the $q$-Boson particle system and the discrete delta Bose gas is the continuum delta Bose gas, and the analog (and limit) of $q$-TASEP and the semi-discrete stochastic heat equation is the continuum stochastic heat equation (or equivalently the Kardar-Parisi-Zhang equation), see \cite[Section 6]{BorCor}, \cite{BCF}, or \cite{QMR}. A rigorous treatment of this sort of limiting procedure (starting from the system considered in Section \ref{vandiejen}) is given in \cite{vd}.

\begin{definition}
The $1+1$ dimensional continuum stochastic heat equation with initial data $\mathcal{Z}_0$ is the solution to the stochastic partial differential equation:
\begin{equation}
\frac{d}{dt} \mathcal{Z} = \frac{1}{2} \frac{d^2}{dx^2} \mathcal{Z} + \xi \mathcal{Z}, \qquad \mathcal{Z}(0,x)=\mathcal{Z}_0(x),
\end{equation}
where $\xi$ is space-time Gaussian white noise (see \cite{ACQ,ICReview} for more details and background).
\end{definition}

In a similar manner as for the semi-discrete stochastic heat equation, the joint moments (time $t$ fixed, spatial variables $x_i$ varying) satisfy simple closed evolution equations. In particular (cf. \cite{BertiniCancrini} or \cite[Section 6]{BorCor}) for $\vec{x}\in \R^k$, defining $\bar{\mathcal{Z}}(t,\vec{x})=  \EE\left[\prod_{i=1}^{k} \mathcal{Z}(t,x_i)\right]$
$$
\frac{d}{dt}\bar{\mathcal{Z}}(t,\vec{x}) = \left[\sum_{i=1}^{k} \frac{1}{2}\frac{d^2}{dx_i^2} + \sum_{1\leq i<j\leq k} \delta(x_i-x_j)\right] \bar{\mathcal{Z}}(t,\vec{x}),
$$
where $\delta(x)$ is the Dirac delta function. This is often called the (imaginary time) delta Bose gas with attractive coupling constant, or the (imaginary time) Lieb-Liniger model with attractive delta interaction. Uniqueness of solutions to the above equation (within a class of solutions with suitable control on their growth in $\vec{x}$) should be attainable, but we are unaware of an exact reference in the literature.

It is standard in the physics literature (cf. \cite{Dot,CDR}) to postulate the following reduction: Let
$$\Weylct{k} = \Big\{\vec{x} = (x_1,\ldots,x_k)\in \R^k\big\vert x_1\leq \cdots \leq x_k\Big\}.$$
Then, for $\vec{x}\in \Weylct{k}$, $\bar{\mathcal{Z}}(t;\vec{x}) = u(t;\vec{x})$ where
$u(t;\vec{x})$ solves
$$
\frac{d}{dt} u(t;\vec{x}) = \sum_{i=1}^{k} \frac{1}{2} \frac{d^2}{dx_i^2} u(t;\vec{x})
$$
and satisfies the two-body boundary conditions that
\begin{equation*}
\textrm{for all } 1\leq i\leq k-1\qquad \left(\frac{d}{dx_i} - \frac{d}{dx_{i+1}} - 1\right)u\Big\vert_{\vec{x}:x_i+0=x_{i+1}} \equiv 0.
\end{equation*}
Here $x_i+0=x_{i+1}$ means the limit as $x_i\to x_{i+1}$ from below. To our knowledge, this reduction has not been rigorously justified.

In the work of Heckman and Opdam \cite{HO}, the above system (Laplacian plus two-body boundary conditions) is called the Yang system \cite{Yang1,Yang2} (see also \cite{Gaudin2,HO} for other root systems) corresponding to the type $A$ root system and an attractive coupling constant (of magnitude 1 in our setting).

\begin{definition}\label{conteig}
For any $\vec{z}\in \C^k$ define
\begin{equation*}
\psirct_{\vec{z}}(\vec{x})= \tilde{C}^{-1}(\vec{x})  \sum_{\sigma\in S_k} \prod_{1\leq B<A\leq k} \frac{z_{\sigma(A)}-z_{\sigma(B)}+1}{z_{\sigma(A)}- z_{\sigma(B)}} \, \prod_{j=1}^{k} e^{-x_j z_j}.
\end{equation*}
As in (\ref{Cn}), the function $\bar{C}:\Weylct{k}\to \R$ depends only on the list $\vec{c}(\vec{x})=(c_1,\ldots, c_M)$ of cluster sizes for $\vec{x}$ via
\begin{equation}\label{Cnsd}
\bar{C}(\vec{x}) = (-1)^k \prod_{i=1}^{M} (c_i)!.
\end{equation}

Likewise define
\begin{equation*}
\psilct_{\vec{z}}(\vec{x})=   \sum_{\sigma\in S_k} \prod_{1\leq B<A\leq k} \frac{z_{\sigma(A)}-z_{\sigma(B)}-1}{z_{\sigma(A)}- z_{\sigma(B)}} \, \prod_{j=1}^{k} e^{x_j z_j}.
\end{equation*}
\end{definition}

These are right and left eigenfunctions for the Yang system.

Following Heckman-Opdam \cite{HO}, we now define the analogs of $\FqBoson$ and $\JqBoson$, though without carefully specifying the domains on which they act. The Fourier-Yang transform $\FqBosonct$ takes functions $f:\Weylct{k}\to \C$ into symmetric functions $\FqBosonct f:\C^k\to \C$ via
\begin{equation*}
\big(\FqBosonct f\big)(\vec{z}) = \int_{\vec{x}\in \Weylct{k}} f(\vec{x}) \psirct_{\vec{z}}(\vec{x})d\vec{x}.
\end{equation*}
Note that the $\bar{C}^{-1}(\vec{x})$ factor in defining $\psirct$ plays no role in this integration as it is identically equal to $(-1)^k$ except on a set of measure 0.

The (candidate) Fourier-Yang inverse transform $\JqBosonct$ takes symmetric functions $F:\C^k\to \C$ into functions $\JqBosonct F:\Weylct{k}\to \C$ via
\begin{equation}\label{JqBosontransct}
\big(\JqBosonct F\big)(\vec{x}) = \int_{\alpha_1+\i\R} \frac{dz_1}{2\pi \i} \cdots \int_{\alpha_k+\i\R} \frac{dz_k}{2\pi \i}  \prod_{1\leq A<B\leq k} \frac{z_A-z_B}{z_A-z_B-1}\, \prod_{j=1}^{k} e^{x_jz_j}\, F(\vec{z})
\end{equation}
where $\alpha_1>\alpha_2+1>\alpha_3+2>\cdots>\alpha_k + k-1$.


\begin{lemma}\label{expandlemct}
Consider a symmetric function $F:\C^k\to \C$ and a set of real numbers $\alpha_1,\ldots,\alpha_k$ which satisfy
\begin{enumerate}
\item For all $1\leq j \leq k-1$, $\alpha_j>\alpha_{j+1}+1$;
\item For all $1\leq j \leq k$ and $z_1,\ldots,z_k$ such that $z_i\in \alpha_i+\i \R$ for $1\leq i<j$ and $z_i\in \alpha_k+\i \R$ for $j<i\leq k$, the function $z_j\mapsto \V(\vec{z}) F(z_1,\ldots ,z_j,\ldots, z_k)$ is analytic in the complex domain $\{z: \alpha_k\leq \Real(z) \leq \alpha_j\}$ and is bounded in modulus on that domain by $\const\, \Imag(z_j)^{-\delta}$ for some constants $\const,\delta>0$ (depending on $z_1,\ldots,z_{j-1},z_{j+1},\ldots,z_k$ but not $z_j$).
\end{enumerate}
Then,
\begin{equation*}
\big(\JqBosonct F\big)(\vec{n}) = \sum_{\lambda\vdash k}\, \int_{\alpha_k+\i\R} \cdots \int_{\alpha_k+\i\R} d\tilde{\mu}_{\lambda}(\vec{w}) \psilct_{\vec{w}\tilde{\circ}\lambda}(\vec{n}) \, F(\vec{w}\tilde{\circ} \lambda),
\end{equation*}
where $\vec{w}\tilde{\circ} \lambda$ is as in (\ref{wlambdasd}) and $d\tilde{\mu}_{\lambda}$ is as in Lemma \ref{expandlemsd}.
\end{lemma}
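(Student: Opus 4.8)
The plan is to run the same contour-shift residue expansion that proves Lemma \ref{expandlem} (via Proposition \ref{321}) and its semi-discrete counterpart Lemma \ref{expandlemsd}, adapted to the additive pole structure and to vertical-line contours. The crucial observation is that the rational prefactor $\prod_{1\le A<B\le k}\frac{z_A-z_B}{z_A-z_B-1}$, the target string specializations $\vec{w}\tilde{\circ}\lambda$, the determinantal measure $d\tilde{\mu}_\lambda$, and the left eigenfunction $\psilct$ have exactly the same combinatorial form as in Lemma \ref{expandlemsd}; only the geometry of the contours and the kernel ($\prod_j e^{x_jz_j}$ in place of $\prod_j(z_j)^{-n_j-1}$) differ. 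Consequently the entire residue bookkeeping carries over, and the one genuinely new point is the analytic justification of the contour deformations, which is precisely what hypotheses (1)--(2) are designed to supply.

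First I would perform the sequential deformation. Starting from the outermost variable $z_1$ on $\alpha_1+\i\R$ and proceeding through $z_{k-1}$, slide each contour horizontally to the left until it lies on $\alpha_k+\i\R$. By hypothesis (1), $\alpha_A>\alpha_{A+1}+1$, so the only poles of the integrand in $z_A$ crossed in the swept strip $\{\alpha_k\le\Real(z)\le\alpha_A\}$ come from the factors $\frac{z_A-z_B}{z_A-z_B-1}$, i.e.\ from $z_A=z_B+1$ with $B>A$; the analyticity half of hypothesis (2) guarantees that $\V(\vec z)F(\vec z)$ contributes no further singularities there. To see that the horizontal segments at $\Imag(z_A)=\pm T$ do not contribute as $T\to\infty$, note that $|e^{x_Az_A}|$ is constant on vertical lines and the rational prefactor is bounded there, while the decay bound $|\V(\vec z)F(\vec z)|\le\const\,\Imag(z_A)^{-\delta}$ of hypothesis (2) forces these fixed-length segments to vanish. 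Thus each deformation replaces the $z_A$ integral by the same integral over $\alpha_k+\i\R$ plus a sum of residues at $z_A=z_B+1$.

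The second step is to organize the residues into strings. Iterating a residue at $z_A=z_B+1$ ties variables into maximal arithmetic progressions of common difference $1$, i.e.\ strings $w,w+1,\dots,w+\lambda_i-1$; a complete collapse onto $\alpha_k+\i\R$ is therefore indexed by a partition $\lambda\vdash k$ recording the string lengths, with $\ell(\lambda)$ free head-variables $w_1,\dots,w_{\ell(\lambda)}$ each running over $\alpha_k+\i\R$, exactly as in the additive version of Proposition \ref{321} underlying Lemma \ref{expandlemsd}. The residues of $\prod_{1\le A<B\le k}(z_A-z_B-1)^{-1}$ along each string combine, via the Cauchy determinant identity, into $\det[(w_i+\lambda_i-w_j)^{-1}]$; the symmetrization over $S_k$ surviving on the free variables assembles precisely $\psilct_{\vec{w}\tilde{\circ}\lambda}(\vec x)$ (the factor $\bar{C}^{-1}$ being irrelevant off a measure-zero set, as already noted), and the $\frac{1}{m_1!m_2!\cdots}$ corrects the overcounting of equal-length strings. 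This reproduces $d\tilde{\mu}_\lambda$ in the form stated in Lemma \ref{expandlemsd}.

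The main obstacle is this second step: verifying that the iterated additive residues assemble exactly into the specialization $\vec{w}\tilde{\circ}\lambda$, the determinantal measure $d\tilde{\mu}_\lambda$, and the eigenfunction $\psilct$ with the correct multiplicities. Since this is the additive, vertical-contour transcription of the already-established combinatorics of Proposition \ref{321}, I would import that identity rather than recompute it; alternatively, one may derive the statement as the $q=e^{-\e}\to1$ scaling limit of Lemma \ref{expandlemsd} (replacing each $z_j$ by $q^{z_j}$), using the stated fact that $\psilct,\psirct$ and $d\tilde{\mu}_\lambda$ arise as limits of their semi-discrete counterparts, with hypotheses (1)--(2) furnishing the uniform control needed to pass the limit through the contour integrals.
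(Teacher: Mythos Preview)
The paper does not prove this lemma; Section \ref{deltabosesec} is explicitly informal (``We are informal here and do not give precise statements''), and the lemma is stated without proof. Your proposed strategy---sequential contour shifting with residue collection into additive strings indexed by partitions $\lambda\vdash k$, followed by the additive version of the combinatorics of Proposition \ref{321}---is the natural and correct approach, and it is exactly what the paper's machinery is set up to support.

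Two technical points need correction. First, your deformation order is backwards relative to hypothesis (2). That hypothesis fixes $z_i\in\alpha_i+\i\R$ for $i<j$ and $z_i\in\alpha_k+\i\R$ for $i>j$ when $z_j$ is being moved; this means you must deform $z_{k-1}$ first, then $z_{k-2}$, and so on down to $z_1$, exactly as in the proof of Proposition \ref{321}. Starting with $z_1$ as you do would require a differently stated analyticity hypothesis. Second, your parenthetical about $\bar C^{-1}$ is misplaced: it is $\psirct$, not $\psilct$, that carries this prefactor, and it is $\psilct$ that appears in the conclusion of the lemma. Finally, your proposed alternative route via ``the $q=e^{-\e}\to1$ scaling limit of Lemma \ref{expandlemsd}'' is confused: Lemma \ref{expandlemsd} is already the $q\to1$ (semi-discrete) result, and the passage from there to the continuum involves a further space-continuum limit, not the substitution $z_j\mapsto q^{z_j}$ (which is the $q$-Boson $\to$ semi-discrete step). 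The direct contour-shift argument you outline first is the right one; just reverse the order of the deformations to match hypothesis (2).
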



Heckman and Opdam \cite{HO} showed that when restricted to a suitable class of functions, the Fourier-Yang transform $\FqBosonct$ is an isomorphism onto its image with inverse $\JqBosonct$. Or, equivalently, on this set of functions $\KqBosonct=\Id$. This is the analog of Theorem \ref{KqBosonId} and the method of proof we employed herein is adapted from their work. Just as before, from this Plancherel formula and Lemma \ref{expandlemct}, one proves completeness of the Bethe ansatz (see \cite[Remark 6.2.5]{BorCor} for a detailed history of this question).

There should be an analog of the dual Plancherel formula (Theorem \ref{KqBosonIdDual}) and the Plancherel isomorphism (Theorem \ref{isothm}) in this setting, though we have not seen this in the mathematics literature. The physics literature contains some work in the direction of spectral orthogonality in this context \cite{CalCaux, Dot}.

In the past few years there have been many works in the physics literature using the above delta Bose gas to study the moments of the stochastic heat equation with various types of initial data (flat / half-flat \cite{CDprl,CDlong}, stationary \cite{ImSa,ImSaKPZ,ImSaKPZ2}, or more general \cite{CorwinQuastel}) or at different times \cite{Dot4}. These calculations are counterparts to the $q$-Boson Plancherel theorem discussed in Section \ref{qtasepsec}.


\subsection{Shrinking nested contours}

The following proposition is a variant of Proposition 3.2.1 of \cite{BorCor}. The proof therein is a verification type inductive proof. The proof we present below is direct and inspired by \cite{HO}.

\begin{proposition}\label{321}
Given a set of positively oriented, closed contours $\gamma_1,\ldots,\gamma_k$ and a function $F(z_1,\ldots, z_k)$ which satisfy
\begin{itemize}
\item The contour $\gamma_k$ is a circle around 1, small enough so as not to contain $q$;
\item For all $1\leq A<B\leq k$, the interior of $\gamma_A$ contains the image of $\gamma_B$ multiplied by $q$;
\item For all $1\leq j\leq k$, there exist deformations $D_j$ of $\gamma_j$ to $\gamma_k$ so that for all $z_1,\ldots, z_{j-1},z_{j+1},\ldots, z_k$ with $z_i\in \gamma_i$ for $1\leq i<j$, and $z_i\in \gamma_k$ for $j<i\leq k$, the function $z_j\mapsto \V(\vec{z}) F(z_1,\ldots ,z_j,\ldots, z_k)$ is analytic in a neighborhood of the area swept out by the deformation $D_j$.
\end{itemize}
Then we have the following residue expansion identity:
\begin{equation}\label{qAK}
\oint_{\gamma_1} \frac{dz_1}{2\pi \i} \cdots \oint_{\gamma_k} \frac{dz_k}{2\pi \i} \prod_{1\leq A<B\leq k} \frac{z_A-z_B}{z_A-q z_B} F(z_1,\ldots, z_k)= \sum_{\lambda\vdash k}\, \oint_{\gamma_k} \cdots \oint_{\gamma_k} d\mu_{\lambda}(\vec{w}) E^q(\vec{w}\circ\lambda),
\end{equation}
where $\vec{w}\circ\lambda$ was defined in (\ref{wlambda}). Here we define
\begin{equation*}
E^q(z_1,\ldots, z_k) =   \sum_{\sigma\in S_k} \prod_{1\leq B<A\leq k} \frac{z_{\sigma(A)}-qz_{\sigma(B)}}{z_{\sigma(A)}-z_{\sigma(B)}} F(z_{\sigma(1)},\ldots, z_{\sigma(k)})
\end{equation*}
and
\begin{equation*}
d\mu_{\lambda}(\vec{w}) = \frac{(1-q)^{k}(-1)^k q^{-\frac{k(k-1)}{2}}}{m_1! m_2!\cdots} \det\left[\frac{1}{w_i q^{\lambda_i} -w_j}\right]_{i,j=1}^{\ell(\lambda)} \prod_{j=1}^{\ell(\lambda)} w_j^{\lambda_j} q^{\frac{\lambda_j(\lambda_j-1)}{2}} \frac{dw_j}{2\pi \i}
\end{equation*}
where $\lambda= 1^{m_1}2^{m_2}\cdots$.
\end{proposition}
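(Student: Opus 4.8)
The plan is to establish (\ref{qAK}) by shrinking the nested contours $\gamma_1,\ldots,\gamma_{k-1}$ inward until they all coincide with $\gamma_k$, deforming one integration variable at a time and collecting the residues crossed en route. The integrand's only singularities are the simple poles of the kernel $\prod_{A<B}(z_A-z_B)/(z_A-qz_B)$ at $z_A=qz_B$ with $A<B$; hypothesis (3), namely the analyticity of $\V(\vec z)F(\vec z)$ across each deformation region $D_j$, guarantees that $F$ introduces no further poles in the swept regions. First I would deform the $z_1$-contour from $\gamma_1$ to $\gamma_k$, producing the integral with $z_1$ on $\gamma_k$ plus residues at $z_1=qz_i$ for those $i>1$ with $qz_i$ outside $\gamma_k$; iterating over $z_2,\ldots,z_{k-1}$ then generates, as the general residue term, a collection of \emph{strings}
\begin{equation*}
z_{i_1}=qz_{i_2},\quad z_{i_2}=qz_{i_3},\quad\ldots,\quad z_{i_{\ell-1}}=qz_{i_\ell},\qquad i_1<i_2<\cdots<i_\ell,
\end{equation*}
so that $z_{i_r}=q^{\ell-r}w$ where $w:=z_{i_\ell}\in\gamma_k$ is the single free variable of the string.

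Each complete set of strings partitions $\{1,\ldots,k\}$, and recording their lengths produces a partition $\lambda\vdash k$; the free variables, one per string, become the $\ell(\lambda)$ independent coordinates $w_1,\ldots,w_{\ell(\lambda)}$ integrated over $\gamma_k$, and the full list of specialized values is exactly $\vec w\circ\lambda$ as in (\ref{wlambda}). This is the step at which the index set $\{\lambda\vdash k\}$ on the right-hand side of (\ref{qAK}) appears, and the factor $1/(m_1!m_2!\cdots)$ in $d\mu_\lambda$ will absorb the permutations of equal-length strings that give rise to the same specialization.

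It then remains to evaluate the iterated residue of the kernel along a fixed string configuration and to resum. Each individual residue $z_{i_r}=qz_{i_{r+1}}$ contributes, through the numerator factor $z_{i_r}-z_{i_{r+1}}=(q-1)z_{i_{r+1}}$, a factor of $(q-1)$ together with a monomial in the surviving variables. The remaining kernel factors --- those coupling distinct strings and those effecting the ``closure'' of each string --- do not factor on their own, but after symmetrizing the integration over the common contour $\gamma_k$ (equivalently, over $S_k$, which converts $F$ into $E^q$ and supplies the denominators $\prod(z_{\sigma(A)}-z_{\sigma(B)})^{-1}$) they collapse, via the Cauchy determinant identity as used in (\ref{23star}), into $\det[1/(w_iq^{\lambda_i}-w_j)]_{i,j=1}^{\ell(\lambda)}$. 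Collecting the residue monomials, the determinant, and the symmetrization factors then reproduces the constant $(1-q)^k(-1)^kq^{-k(k-1)/2}$ and the weights $w_j^{\lambda_j}q^{\lambda_j(\lambda_j-1)/2}$, hence the full measure $d\mu_\lambda(\vec w)$ multiplying $E^q(\vec w\circ\lambda)$.

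The main obstacle will be precisely this last bookkeeping step: showing that the sum of iterated residues over all string configurations yielding a fixed $\lambda$ collapses exactly into $d\mu_\lambda(\vec w)E^q(\vec w\circ\lambda)$, with the determinantal form and all signs, multiplicity factors $m_i!$, and powers of $q$ matching. I would verify the mechanism directly in low rank (the cases $k\le 2$ already exhibit the determinant closing off a single string and cancelling the symmetrized kernel factors carried by $E^q$), and in general track the symmetry factors cluster by cluster, exactly as in the evaluation (\ref{starstars})--(\ref{kqbosonthmend}) carried out in the proof of Theorem \ref{KqBosonId}. By contrast, the legitimacy of the deformations and the absence of spurious residues from the zeros of $\V$ are routine consequences of the stated analyticity, so the essential difficulty is the purely algebraic residue-and-symmetrization identity.
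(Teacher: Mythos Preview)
Your overall strategy---shrink the nested contours to $\gamma_k$, collect string residues, index the resulting terms by $\lambda\vdash k$---is exactly the paper's approach. Two points deserve correction, one minor and one substantive.

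\smallskip
\textbf{Order of deformation.} The analyticity hypothesis is stated for deforming $\gamma_j$ while $z_i\in\gamma_i$ for $i<j$ and $z_i\in\gamma_k$ for $i>j$; this supports shrinking $\gamma_{k-1}$ first, then $\gamma_{k-2}$, and so on down to $\gamma_1$. Your order (starting with $z_1$) is not directly covered by the hypothesis as written, so you should either reverse the order or note why the given hypothesis suffices for yours.

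\smallskip
\textbf{Where $E^q$ comes from.} You attribute the appearance of $E^q$ to ``symmetrizing the integration over the common contour $\gamma_k$ (equivalently, over $S_k$)''. This is the wrong mechanism, and it is the heart of the argument. Once residues along a nontrivial $\lambda$ are taken, only $\ell(\lambda)<k$ integration variables remain on $\gamma_k$; there is no $S_k$-symmetrization of integration variables available. What actually happens is combinatorial: for fixed $\lambda$, the sum over ordered index assignments to strings (your ``string configurations'') is naturally a sum over a \emph{proper subset} of $S_k$ after relabeling. The key observation is that one may \emph{complete} this to a sum over all of $S_k$, because any $\sigma$ not arising from a valid string assignment produces a residue that vanishes (the required pole $y_{i+1}-qy_i$ in the denominator is replaced by $y_i-qy_{i+1}$, which is regular at the substitution point). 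Only after this completion does the integrand factor as
\[
\prod_{A<B}\frac{y_{\sigma(A)}-y_{\sigma(B)}}{y_{\sigma(A)}-qy_{\sigma(B)}}
=\underbrace{\prod_{A\neq B}\frac{y_A-y_B}{y_A-qy_B}}_{\text{$S_k$-invariant, carries all poles}}
\cdot\underbrace{\prod_{B<A}\frac{y_{\sigma(A)}-qy_{\sigma(B)}}{y_{\sigma(A)}-y_{\sigma(B)}}}_{\text{regular at the string}},
\]
so that the residue $\Resq{\lambda}$ acts only on the first factor (computed in closed form as a separate lemma, yielding the Cauchy determinant and the monomial weights), while the second factor together with $F(\sigma(\vec y))$ is merely substituted and, summed over $\sigma\in S_k$, gives $E^q(\vec w\circ\lambda)$. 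Your phrasing conflates this with an integration-variable symmetrization and obscures the zero-residue completion step, which is precisely what makes the algebra close. With this mechanism in hand, the ``bookkeeping'' you flag as the main obstacle becomes a single explicit residue computation on the $S_k$-invariant factor rather than a cluster-by-cluster accounting.
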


\begin{proof}
First notice that for $k=1$ the result follows immediately. Hence, in what follows we assume $k\geq 2$.

We proceed sequentially and deform (using the deformation $D_{k-1}$ afforded from the hypotheses of the theorem) the $\gamma_{k-1}$ contour to $\gamma_k$, and then deform (using the deformation $D_{k-2}$ afforded from the hypotheses of the theorem) $\gamma_{k-2}$ contour to $\gamma_k$, and so on until all contours have been deformed to $\gamma_k$. However, due to the $z_A-qz_B$ terms in the denominator of the integrand, during the deformation of $\gamma_A$ we may encounter simple poles at the points $z_A=q z_B$, for $B>A$. The residue theorem implies that the integral on the left-hand side of (\ref{qAK}) can be expanded into a summation over integrals of possibly few variables (all along $\gamma_k$) whose integrands correspond to the various possible residue subspaces coming from these poles.

Our proof splits into three basic steps. First, we identify the residual subspaces upon which our integral is expanded via residues. This brings us to equation (\ref{RHS2}). Second, we show that these subspaces can be brought to a canonical form via the action of some $\sigma\in S_k$. This enables us to simplify the summation over the residual subspaces to a summation over partitions $\lambda\vdash k$ and certain subsets of permutations $\sigma$ in $S_k$. Inspection of those terms corresponding to $\sigma\in S_k$ not in these subsets shows that they have zero residue contribution and hence the summation can be completed to include all of $S_k$. This brings us to equation (\ref{above3}). And third, we rewrite the function whose residue we are computing as the product of an $S_k$ invariant function (which contains all of the poles related to the residual subspace) and a remainder function. We use Lemma \ref{reslemma} to evaluate the residue of the $S_k$ invariant function and we identify the summation over $\sigma\in S_k$ of the substitution into the remainder function as exactly giving the $E^q$ function in the statement of the proposition we are presently proving.

\smallskip
\noindent {\bf Step 1:} It is worthwhile to start with an example ($k=3$). Figure \ref{qcontourshifting} accompanies the example and illustrates the deformations and locations of poles.

\begin{figure}
\begin{center}
\includegraphics[scale=.8]{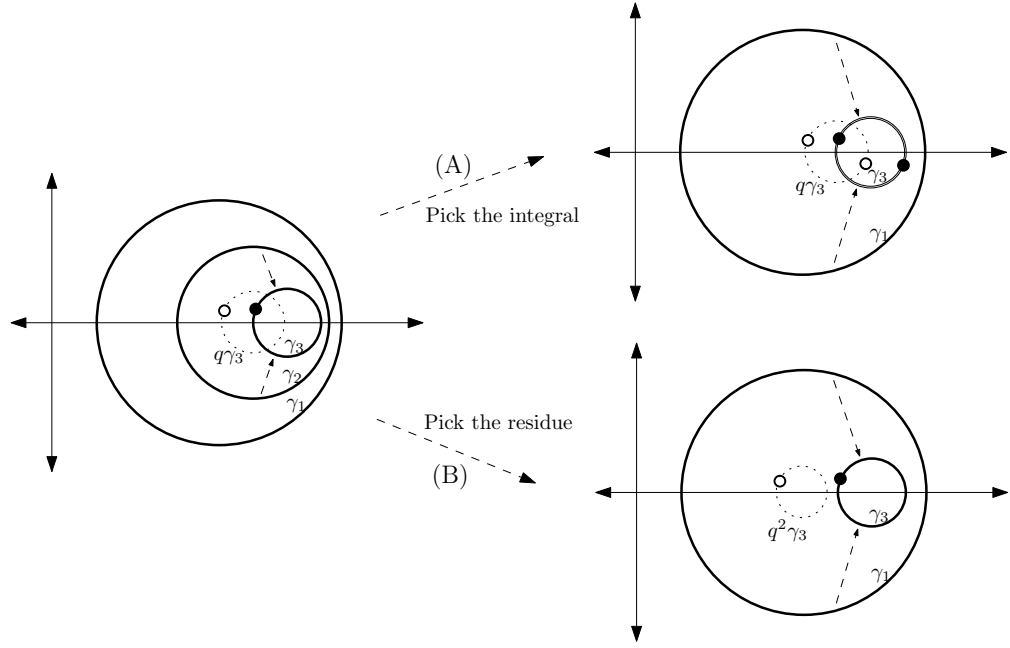}
\end{center}
\caption{The expansion of the $k=3$ nested contour integral as explained in Example \ref{ex2}. On the left-hand side, the $\gamma_2$ contour is deformed to the $\gamma_3$ contour and a pole is crossed along $q\gamma_3$ at the point $qz_3$ (here the point $z_3$ is drawn as a small black bullet and its location is on the solid line circle; and the point $q z_3$ is drawn as a small white bullet and its location is on the dotted line circle). On the upper right-hand side the effect of picking the integral is shown. The $\gamma_3$ contour is represented as a (doubled) solid line circle since both $z_3$ and $z_2$ are integrated along it (these correspond to the two black bullets). The white bullets are along $q\gamma_3$ (the dotted line circle) and represent $qz_3$ and $qz_2$. As the $\gamma_1$ contour is deformed to $\gamma_3$ these residues must be taken. On the lower right-hand side the effect of picking the initial residue at $z_2=qz_3$ is shown. As the $\gamma_1$ contour is deformed to $\gamma_3$ a pole is encountered along $q^2\gamma_3$ at the location $q^2 z_3$ (as before $z_3$ is the black bullet and $q^2 z_3$ is the white bullet.}\label{qcontourshifting}
\end{figure}

\begin{example}\label{ex2}
When $k=3$ the integrand on the left-hand side of (\ref{qAK}) contains the fractions
\begin{equation}\label{cross}
\frac{z_1-z_2}{z_1-qz_2}\, \frac{z_1-z_3}{z_1-qz_3}\,\frac{z_2-z_3}{z_2-qz_3},
\end{equation}
times the function $F(z_1,z_2,z_3)$ which (by the hypotheses of the proposition) does not have poles between $\gamma_j$ and $\gamma_k$ (for $j=1,2,3$). Thinking of $z_3$ as fixed along the contour $\gamma_3$, we begin by deforming the $\gamma_2$ contour to $\gamma_3$. As we proceed, we necessarily encounter a single simple pole at $z_2=qz_3$. The residue theorem implies that our initial integral equals the sum of (A) the integral where $z_2$ is along $\gamma_3$, and (B) the integral with only $z_1$ and $z_3$ remaining and integrand given by taking the residue of the initial integrand at $z_2=qz_3$.

Let us consider separately these two pieces. For (A) we now think of $z_2$ and $z_3$ as fixed along $\gamma_3$ and deform the $\gamma_1$ contour to $\gamma_3$, encountering two simple poles at $z_1=qz_2$ and $z_1=q z_3$. Thus (A) is expanded into a sum of three terms: the integral with $z_1, z_2$ and $z_3$ along $\gamma_3$; and the integral with only $z_2$ and $z_3$ remaining (along $\gamma_3$) and the residue taken at either $z_1=q z_2$ or $z_1=q z_3$.

For (B) we now think of $z_3$ as fixed along $\gamma_3$ and deform the $\gamma_1$ contour to $\gamma_3$, encountering a simple pole at $z_1=q^2 z_3$. This is because the residue of (\ref{cross}) at $z_2=q z_3$ equals
$$
\frac{z_1-z_3}{z_1-q^2 z_3}\,(qz_3-z_3).
$$
Thus (B) is expanded into a sum of two terms: the integral with $z_1$ and $z_3$ (along $\gamma_3$ and with the above expression in the integrand); and the integral with only $z_3$ remaining (along $\gamma_3$) and the residue of the above term taken at $z_1=q^2 z_3$.

Gathering the various terms in this expansion, we see that the residue subspaces we sum over are indexed by partitions of $k$ (here $k=3$) and take the form of geometric strings with the parameter $q$. For example, $\lambda = (1,1,1)$ corresponds to the term in the residue expansion in which all three variables $z_1,z_2$ and $z_3$ are still integrated, but along $\gamma_3$. On the other hand, $\lambda = (3)$ corresponds to the term in which the residue is taken at $z_1=qz_2=q^2z_3$ and the only variable which remains to be integrated along $\gamma_3$ is $z_3$. The partition $\lambda=(2,1)$ corresponds to the three remaining terms in the above expansion in which two integration variables remain. In general, $\ell(\lambda)$ corresponds to the number of variables which remain to be integrated in each term of the expansion.
\end{example}

Let us now turn to the general $k$ case. By the hypotheses of the theorem, the function $z_j\mapsto \prod_{1\leq A<B\leq k} (z_A-z_B) F(z_1,\ldots ,z_j,\ldots, z_k)$ has no poles which are encountered during contour deformations -- hence it plays no role in the residue analysis. As we deform sequentially the contours in the left-hand side of (\ref{qAK}) to $\gamma_k$ we find that the resulting terms in the residue expansion can be indexed by partitions $\lambda\vdash k$ along with a list (ordered set) of disjoint ordered subsets of $\{1,\ldots, k\}$ (whose union is all of $\{1,\ldots, k\}$)
\begin{eqnarray}\label{ijvar}
\nonumber &i_{1}< i_{2} < \cdots < i_{\lambda_1}&\\
&j_{1}< j_{2} < \cdots < j_{\lambda_2}&\\
\nonumber & \vdots&
\end{eqnarray}
Let us call such a list $I$. For a given partition $\lambda$ call $S(\lambda)$ the collection of all such lists $I$ corresponding to $\lambda$.
For $k=3$ and $\lambda=(2,1)$, in Example \ref{ex2} we saw there are three such lists which correspond with
$$S(\lambda) = \Big\{ \big\{1<2 , 3\big\}, \big\{1<3 ,2\big\}, \big\{2<3 , 1\big\}\Big\}.$$

For such a list $I$, we write $\Res{I} f(z_1,\ldots, z_k)$ as the residue of the function $f$ at
\begin{eqnarray*}
&z_{i_{1}}=qz_{i_{2}}, z_{i_2} = q z_{i_3}, \quad \ldots, \quad z_{i_{\lambda_1-1}}=q z_{i_{\lambda_1}}&\\
&z_{j_{1}}=qz_{j_{2}}, z_{j_2} = q z_{j_3}, \quad \ldots, \quad z_{j_{\lambda_2-1}}=q z_{j_{\lambda_2}}&\\
&\vdots&
\end{eqnarray*}
and regard the output as a function of the terminal variables $(z_{i_{\lambda_1}}, z_{j_{\lambda_2}}, \ldots)$. There are $\ell(\lambda)$ such strings and consequently that many remaining variables (though we have only written the first two strings above).

With the above notation in place, we may write the expansion of the integral on the left-hand side of (\ref{qAK}) as
\begin{eqnarray}\label{RHS2}
{\rm LHS} (\ref{qAK}) = \sum_{\substack{\lambda\vdash k\\ \lambda= 1^{m_1}2^{m_2}\cdots}} \frac{1}{m_1! m_2!\cdots} \sum_{I\in S(\lambda)}\oint_{\gamma_k} \frac{dz_{i_{\lambda_1}}}{2\pi \i} \oint_{\gamma_k} \frac{dz_{j_{\lambda_{2}}}}{2\pi \i} \cdots \\
\nonumber \times\,\Res{I} \left(\prod_{1\leq A<B\leq k} \frac{z_A-z_B}{z_A-q z_B}  F(z_1,\ldots, z_k) \right).
\end{eqnarray}
The factor of $\frac{1}{m_1! m_2!\cdots}$ arose from multiple counting of terms in the residue expansion due to symmetries of $\lambda$. For example, for the partition $\lambda=(2,2,1)$, each $I\in S(\lambda)$, corresponds uniquely with a different $I'\in S(\lambda)$ in which the $i$ and $j$ variables in (\ref{ijvar}) are switched. Since these correspond with the same term in the residue expansion, this constitutes double counting and hence the sum should be divided by $2!$. The reason why our residual subspace expansion only corresponds with strings is because if we took a residue which was not of the form of a string, then for some $A\neq A'$ we would be evaluating the residue at $z_A=qz_B$ and $z_{A'}=qz_B$. However, the Vandermonde determinant in the numerator of our integrand would then necessarily evaluate to zero. Therefore, such possible non-string residues (coming from the denominator) in fact have zero contribution.

\noindent {\bf Step 2:} For each $I\in S(\lambda)$ relabel the $z$ variables as
\begin{eqnarray*}
&(z_{i_{1}}, z_{i_{2}}, \ldots, z_{i_{\lambda_1}}) \mapsto (y_{\lambda_1},y_{\lambda_1-1},\ldots, y_1)&\\
&(z_{j_{1}}, z_{j_{2}}, \ldots, z_{j_{\lambda_2}}) \mapsto (y_{\lambda_1+\lambda_2}, \ldots, y_{\lambda_1+1})&\\
&\vdots&
\end{eqnarray*}
and observe that there exists a unique permutation $\sigma\in S_k$ for which $(z_1,\ldots, z_k) = (y_{\sigma(1)},\ldots, y_{\sigma(k)})$. Let us also call $w_j = y_{\lambda_1+\cdots \lambda_{j-1}+1}$, for $1\leq j\leq \ell(\lambda)$.

We introduce a canonical form for taking strings of residues. For a function $f(y_1,\ldots,y_k)$ define $\Resq{\lambda}f(y_1,\ldots, y_k)$ to be the residue of $f(y_1,\ldots, y_k)$ at
\begin{eqnarray}\label{resqvalues}
\nonumber &y_{\lambda_{1}}=qy_{\lambda_{1}-1},\quad y_{\lambda_{1}-1}=qy_{\lambda_{1}-2}, \quad \ldots, \quad y_{2}=qy_{1}\\
&y_{\lambda_{1}+\lambda_2}=qy_{\lambda_1+\lambda_2-1},\quad  y_{\lambda_{1}+\lambda_2-1}=qy_{\lambda_1+\lambda_2-2}, \quad \ldots, \quad y_{\lambda_{1}+2}=qy_{\lambda_1+1}\\
\nonumber &\vdots &
\end{eqnarray}
with the output regarded as a function of the terminal variables $\big(y_1,y_{\lambda_1+1},\ldots, y_{\lambda_1+\cdots+\lambda_{\ell(\lambda)-1}-1}\big)$. We call each sequence of identifications of variables a {\it string}. As all poles which we encounter in what follows are simple, the above residue evaluation amounts to
\begin{equation}\label{amts}
\Resq{\lambda}f(y_1,\ldots, y_k) = \lim_{\substack{y_2\to q y_1\\ \cdots\\y_{\lambda_1}\to q y_{\lambda_1-1}}}\prod_{i=2}^{\lambda_1}(y_i-qy_{i-1}) \quad \lim_{\substack{y_{\lambda_1+2}\to q y_{\lambda_1+1}\\ \cdots \\ y_{\lambda_1+\lambda_2}\to q y_{\lambda_1+\lambda_2-1}}}\prod_{i=\lambda_1+2}^{\lambda_1+\lambda_2}(y_i-qy_{i-1})\quad \cdots \,\, f(y_1,\ldots, y_k).
\end{equation}
It is also convenient to define $\Subq{\lambda}f(y_1,\ldots, y_k)$ as the function of $\big(y_1,y_{\lambda_1+1},\ldots, y_{\lambda_1+\cdots+\lambda_{\ell(\lambda)-1}-1}\big)$, which is the result of substituting the relations of (\ref{resqvalues}) into $f(y_1,\ldots, y_k)$.

With the above notation we rewrite the term corresponding to a partition $\lambda$ in the right-hand side of (\ref{RHS2}) as
\begin{equation}\label{above3}
 \frac{1}{m_1! m_2!\cdots} \sum_{\sigma\in S_k} \oint_{\gamma_k} \frac{dw_1}{2\pi \i} \cdots \oint_{\gamma_k} \frac{dw_{\ell(\lambda)}}{2\pi \i}
\Resq{\lambda} \left(\prod_{1\leq A<B\leq k} \frac{y_{\sigma(A)}-y_{\sigma(B)}}{y_{\sigma(A)}-q y_{\sigma(B)}} F(\sigma(\vec{y})) \right),
\end{equation}
where $\Resq{\lambda}$ is defined in equation (\ref{resqvalues}). Note that the output of the residue operation is a function of the variables $(y_1,y_{\lambda_1+1},\ldots) = (w_1,w_2,\ldots)$.

One should observe that the above expression includes the summation over all $\sigma\in S_k$, and not just those which arise from an $I\in S(\lambda)$ as above. This, however, is explained by the fact that if $\sigma$ does not arise from some $I\in S(\lambda)$, then the residue necessarily evaluates to zero (hence adding these terms is allowed). To see this, observe that in the renumbering of variables discussed above, only permutations with
$$\sigma^{-1}(1)>\sigma^{-1}(2)\cdots>\sigma^{-1}(\lambda_1), \qquad \sigma^{-1}(\lambda_1+1)>\sigma^{-1}(\lambda_1+2)\cdots>\sigma^{-1}(\lambda_1+\lambda_2),\qquad \ldots$$
participated. Any other $\sigma$ must violate one of these strings of conditions. Consider, for example, some $\sigma$ with $\sigma(\lambda_1-1) < \sigma(\lambda_1)$. This implies that the term $y_{\lambda_1-1}-qy_{\lambda_1}$ shows up in the denominator of (\ref{above3}), as opposed to the term $y_{\lambda_1}-qy_{\lambda_1-1}$. Residues can be taken in any order, and if we first take the residue at $y_{\lambda_1} = qy_{\lambda_1-1}$, we find that the above denominator does not have a pole (nor do any other parts of (\ref{above3})) and hence the residue is zero. Similar reasoning works in general.

\noindent {\bf Step 3:} All that remains is to compute the residues in (\ref{above3}) and identify the result (after summing over all $\lambda\vdash k$) with the right-hand side of (\ref{qAK}) as necessary to prove the proposition.

It is convenient to rewrite the product over $A<B$ as an $S_k$ invariant function, times a function that is analytic at the points in which the residue is being taken:
$$
\prod_{1\leq A<B\leq k} \frac{y_{\sigma(A)}-y_{\sigma(B)}}{y_{\sigma(A)}-q y_{\sigma(B)}}  = \prod_{1\leq A\neq B\leq k} \frac{y_A-y_B}{y_A-qy_B} \prod_{1\leq B<A\leq k} \frac{y_{\sigma(A)} - qy_{\sigma(B)}}{y_{\sigma(A)} - y_{\sigma(B)}}.
$$
Since it is only the $S_k$ invariant function that contains the poles with which we are concerned, it allows us to rewrite (\ref{above3}) as
\begin{eqnarray*}
\frac{1}{m_1! m_2!\cdots} \oint \frac{w_1}{2\pi \i} \cdots \oint \frac{w_{\ell(\lambda)}}{2\pi \i} \Resq{\lambda} \left( \prod_{1\leq A\neq B\leq k} \frac{y_A-y_B}{y_A-qy_B} \right)\\
\nonumber \times\,\Subq{\lambda} \left( \sum_{\sigma\in S_k} \prod_{1\leq B<A\leq k} \frac{y_{\sigma(A)} - qy_{\sigma(B)}}{y_{\sigma(A)} - y_{\sigma(B)}} F(\sigma(\vec{y})) \right).
\end{eqnarray*}
We use Lemma \ref{reslemma} to evaluate the above residue, and we easily identify the substitution on the second line with $E^q(w \circ \lambda)$ as in the statement of the proposition.
Combining these two expressions and summing the resulting expression over $\lambda\vdash k$ we arrive at the desired residue expansion claimed in the statement of the proposition.
\end{proof}

\begin{lemma}\label{reslemma}
For all $k\geq 1$, $\lambda\vdash k$ and $q\in (0,1)$, we have that
\begin{equation}\label{reslemmaeqn}
 \Resq{\lambda} \left(\prod_{1\leq i\neq j\leq k} \frac{y_i-y_j}{y_i-qy_j} \right) =(-1)^k (1-q)^k q^{-\frac{k^2}{2}} \prod_{j=1}^{\ell(\lambda)} w_j^{\lambda_j} q^{\frac{\lambda_j^2}{2}} \det\left[\frac{1}{w_i q^{\lambda_i}-w_j}\right]_{i,j=1}^{\ell(\lambda)},
\end{equation}
where we have renamed the variables remaining after the residue operator as $w_j = y_{\lambda_1+\cdots \lambda_{j-1}+1}$, for $1\leq j\leq \ell(\lambda)$.
\end{lemma}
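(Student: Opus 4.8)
The plan is to exploit the fact that the simple poles extracted by $\Resq{\lambda}$ all lie within a single string, so that the residue factorizes. First I would rewrite the integrand using $\prod_{1\le i\neq j\le k}(y_i-y_j) = (-1)^{k(k-1)/2}\V(\vec y)^2$, so that the function whose residue we compute is, up to sign, the Cauchy-type expression already appearing in (\ref{23star}). Grouping the factors $\frac{y_i-y_j}{y_i-qy_j}$ according to whether $i,j$ belong to the same string of (\ref{resqvalues}) or to different strings, one observes that on the string locus $y_p = q^{p-1}w_s$ a denominator factor $y_i-qy_j$ vanishes precisely when $i$ and $j$ are consecutive indices of one string. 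Hence the only poles met by $\Resq{\lambda}$ sit on the within-string factors, whereas every cross-string factor is analytic and generically nonzero there. Taking the iterated residue string by string, the cross-string factors are therefore simply evaluated on the strings, and the whole computation reduces to a product of single-string residues times an explicit rational function of $w_1,\dots,w_{\ell(\lambda)}$.

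Next I would compute the single-string residue, i.e.\ the case $\lambda=(m)$. Using the description (\ref{amts}) of $\Resq{\lambda}$ as multiplication by $\prod_{i=2}^{m}(y_i-qy_{i-1})$ followed by the substitution $y_p\mapsto q^{p-1}w$, the extraction factors cancel exactly the $m-1$ vanishing consecutive denominators, and what remains is a finite product of geometric monomials $q^{a}-q^{b}$ in the numerator against $q^{a-1}-q^{b}$ in the surviving denominator. These telescope to $c(m)\,w^{m-1}(q^{m}-1)^{-1}$ with $c(m)=(-1)^{m}(1-q)^{m}$, the single pole factor $(q^m-1)^{-1}$ being the key output. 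For the cross-string contribution between a string of length $m$ on $w_s$ and one of length $n$ on $w_{s'}$, the double product $\prod_{a=0}^{m-1}\prod_{b=0}^{n-1}\frac{q^{a}w_s-q^{b}w_{s'}}{q^{a}w_s-q^{b+1}w_{s'}}$ telescopes in $b$ to $\prod_{a=0}^{m-1}\frac{q^{a}w_s-w_{s'}}{q^{a}w_s-q^{n}w_{s'}}$, and combining the ordered pair $(s,s')$ with $(s',s)$ produces exactly the off-diagonal factors that will assemble into the Cauchy determinant.

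Finally I would assemble these pieces and match them against the product form of the Cauchy determinant, $\det\left[\frac{1}{w_iq^{\lambda_i}-w_j}\right]_{i,j=1}^{\ell(\lambda)} = \frac{\prod_{i<j}(w_iq^{\lambda_i}-w_jq^{\lambda_j})(w_j-w_i)}{\prod_{i,j}(w_iq^{\lambda_i}-w_j)}$. The diagonal denominators $w_i(q^{\lambda_i}-1)$ are supplied by the single-string residues, while the telescoped cross-string products furnish the off-diagonal denominators $w_iq^{\lambda_i}-w_j$ together with the numerator factors $(w_iq^{\lambda_i}-w_jq^{\lambda_j})(w_j-w_i)$; collecting the accumulated powers of $q$ and $(1-q)$ and the overall sign then yields the right-hand side of (\ref{reslemmaeqn}). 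I expect the main obstacle to be the bookkeeping in this last assembly: keeping track of the precise powers of $q$ (in particular reconciling the per-string $q^{\lambda_j^2/2}$ with the global $q^{-k^2/2}$) and the signs across all strings, so that the bulk of the work lies in verifying that the telescoped $q$-products reorganize exactly into the Cauchy determinant, rather than in any conceptual difficulty.
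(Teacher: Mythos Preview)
Your proposal is correct and follows essentially the same route as the paper: split the product into same-string and cross-string factors, compute the single-string residue (obtaining $(-1)^{m-1}w^{m-1}(1-q)^m/(1-q^m)$), telescope the cross-string substitutions, and then recognize the product as the Cauchy determinant. Your anticipation that the main work is the $q$-power and sign bookkeeping in the final assembly is exactly right; the paper simply records the same-string and cross-string pieces and invokes the Cauchy determinant identity without spelling out that bookkeeping in full.
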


\begin{proof} The product on the left-hand side of (\ref{reslemmaeqn}) involves terms in which $i$ and $j$ are in the same string of variables in (\ref{resqvalues}) as well as terms in which they are in different strings. We need to compute the residue of the same string terms and multiply it by the substitution of variables into the different string terms.

Let us first evaluate same string residues. Consider variables $y_1,\ldots, y_{\ell}$, $\ell\geq 2$, and observe that
\begin{equation*}
\Res{\substack{y_2=qy_1\\y_3=qy_2\\\cdots\\y_{\ell}= qy_{\ell-1}}}  \left( \prod_{1\leq i\neq j\leq \ell} \frac{y_i-y_j}{y_i-qy_j}\right) = (-1)^{\ell-1} y_1^{\ell-1} \frac{(1-q)^{\ell}}{1-q^{\ell}}.
\end{equation*}

Now turn to the cross term between two strings of variables. Consider one set of variables  $y_1,\ldots, y_{\ell}$ with $\ell\geq 2$ and a second set of variables $y'_1\ldots, y'_{\ell'}$ with $\ell'\geq 2$. Then
\begin{equation*}
\Sub{\substack{y_2=qy_1\\y_3=qy_2\\\cdots\\y_{\ell}= qy_{\ell-1}}}  \Sub{\substack{y'_2=qy'_1\\y'_3=qy'_2\\\cdots\\y'_{\ell'}= qy'_{\ell'-1}}} \left( \prod_{i=1}^{\ell}\prod_{j=1}^{\ell'} \frac{y_i-y_j}{y_i-qy_j}\right)= \prod_{i=1}^{\ell} \frac{y_1 q^{i-1}-y'_1}{y q^{i-1} - y'_1 q^{\ell'}}.
\end{equation*}

Since the strings of variables also come interchanged, we should multiply the above expression by the same term with $(y_1,\ell)$ and $(y'_1,\ell')$ interchanged.

Returning to the statement of the lemma, we see that we can evaluate the desired residue by multiplying the same string terms over all strings in (\ref{resqvalues}) as well as multiplying all terms corresponding to pairs of different strings. Using the above calculations we obtain
\begin{equation*}
\Resq{\lambda} \left(\prod_{1\leq i\neq j\leq k} \frac{y_i-y_j}{y_i-qy_j} \right) = \prod_{j=1}^{\ell(\lambda)} \frac{(1-q)^{\lambda_j}}{(1-q^{\lambda_j})} (-1)^{\lambda_j-1} w_{j}^{\lambda_j-1} \, \prod_{1\leq i<j\leq \ell(\lambda)} q^{-\lambda_i\lambda_j} \frac{(w_i-w_j)(w_i q^{\lambda_i} - w_j q^{\lambda_j})}{(w_i q^{\lambda_i} - w_j)(w_i -w_j q^{\lambda_j})}.\qquad\qquad
\end{equation*}
It is easy now to rewrite (using the Cauchy determinant) the above expression so as to produce the equality of the lemma, as desired.
\end{proof}

%

\subsection{Inductive proof of Lemma \ref{belowlemma}}\label{inductiveproof} 
The proof we give in the main body of the text for Lemma \ref{belowlemma} relied upon Theorem \ref{KqBosonIdDual} (the dual Plancherel theorem). We record here an inductive proof of this same result.

The desired identity (\ref{halflem}) can be rewritten as
\begin{equation}\label{halflemrewrite}
\sum_{n_1\geq \cdots \geq n_k\geq 0} \psir_{\vec{z}}(\vec{n}) \prod_{j=1}^{k} \left(1-\alpha/q^j\right)^{-n_j} = (-1)^k q^{\frac{k(k-1)}{2}} \prod_{j=1}^{k} \frac{1-\alpha/q^j}{z_j-\alpha/q}.
\end{equation}

From the definition of $\psir$ we can rewrite this identity as
\begin{equation}\label{Ik}
\sum_{n_1\geq \cdots \geq n_k\geq 0}\, \prod_{i=1}^{M} \frac{1}{(c_i)!_q} \sum_{\sigma\in S_k} \prod_{1\leq B<A\leq k} \frac{z_{\sigma(A)}-q^{-1}z_{\sigma(B)}}{z_{\sigma(A)}-z_{\sigma(B)}} \prod_{j=1}^{k} \left(\frac{1-z_{\sigma(j)}}{1-\alpha/q^j}\right)^{n_j}  = \prod_{j=1}^k \frac{1-\alpha/q^j}{z_j - \alpha/q}
\end{equation}
where we recall $M=M(\vec{n})$ and $(c_1,\ldots,c_M)=\vec{c}(\vec{n})$ from Section \ref{notations}. Define $I_k$ to be the left-hand side of the above desired equality.

We will proceed by induction in $k$. It is straightforward to check that the identity holds for $k=1$, and for later use define $I_0=1$. We may write
\begin{eqnarray*}
I_k &=& \sum_{n_k\geq 0}\, \sum_{I,J}\, \sum_{\sigma_I}\, \sum_{\sigma_J} \sum_{n_1\geq \cdots \geq n_{k-m} >n_k} \frac{1}{(m)!_q}\, \prod_{i=1}^{M'} \frac{1}{(c_i')!_q}\\
&&\times \prod_{\substack{B<A\\A,B\in I}} \frac{z_{\sigma_I(i_A)}-q^{-1}z_{\sigma_I(i_B)}}{z_{\sigma_I(i_A)}-z_{\sigma_I(i_B)}}
\prod_{\substack{B<A\\A,B\in J}} \frac{z_{\sigma_J(j_A)}-q^{-1}z_{\sigma_J(j_B)}}{z_{\sigma_J(j_A)}-z_{\sigma_J(j_B)}}
\prod_{\substack{i\in I\\j\in J}} \frac{z_i-q^{-1}z_j}{z_i-z_j} \\
&& \times \prod_{\ell=k-m+1}^{k} \left(\frac{1}{1-\alpha/q^{\ell}}\right)^{n_k} \prod_{i\in I} (1-z_i)^{n_k} \prod_{\ell=1}^{k-m} \left(\frac{1-z_{\sigma_J(j_{\ell})}}{1-\alpha/q^{\ell}}\right)^{n_{\ell}}.
\end{eqnarray*}
The above formula involves some notation which should be explained. The term $\sum_{I,J}$  is the summation over all subsets $I=\{i_1<\cdots <i_m\}$ and $J=\{j_1<\cdots <j_{k-m}\}$ such that $m=|I|\geq 1$, $I\cup J = \{1,\ldots, k\}$ and $I\cap J = \emptyset$. The term $\sum_{\sigma_I}$ is the summation over all permutations $\sigma_I$ which permute the elements of $I$ and fix the elements of $J$, and, conversely, the term $\sum_{\sigma_J}$ is the summation over all permutations $\sigma_J$ which permute the elements of $J$ and fix the elements of $I$. The final summation is over $n_1\geq \cdots \geq n_{k-m}>n_k$ with $n_k$ fixed. We call $\vec{n}' = (n_1,\ldots, n_{k-m})$ and set $M' = M(\vec{n}')$ and $(c'_1,\ldots, c'_{M'})=\vec{c}(\vec{n}')$.

Notice that the summation over $\sigma_I$ can be readily computed to be $(m)!_{q^{-1}}$ using (\ref{mqinverse}). Note that when $(m)!_{q^{-1}}$ is divided by $(m)!_q$ this yields $q^{-\frac{m(m-1)}{2}}$. Further rearrangement of terms leads to
\begin{eqnarray}\label{almostt}
I_k &=& \sum_{n_k\geq 0} \,\sum_{I,J}  \, q^{-\frac{k(k-1)}{2}} \prod_{i\in I} (1-z_i)^{n_k} \prod_{\ell=k-m+1}^{k} \left(\frac{1}{1-\alpha/q^{\ell}}\right)^{n_k} \prod_{\substack{i\in I\\j\in J}} \frac{z_i-q^{-1}z_j}{z_i-z_j} \\
&& \times \sum_{n_1\geq \cdots \geq n_{k-m} >n_k} \, \prod_{i=1}^{M'} \frac{1}{(c_i')!_q}\, \sum_{\sigma_J} \prod_{\substack{B<A\\A,B\in J}} \frac{z_{\sigma_J(j_A)}-q^{-1}z_{\sigma_J(j_B)}}{z_{\sigma_J(j_A)}-z_{\sigma_J(j_B)}} \prod_{\ell=1}^{k-m} \left(\frac{1-z_{\sigma_J(j_\ell)}}{1-\alpha/q}\right)^{n_{\ell}}.
\end{eqnarray}

From the definition of $I_{k-m}$ we can rewrite the second line as
$$
I_{k-m}(z_{j_1},\ldots, z_{j_{k-m}}) \prod_{\ell=1}^{k-m} \left(\frac{1-z_{j_{\ell}}}{1-\alpha/q^j}\right)^{n_k+1}.
$$

By the inductive hypothesis $I_{k-m}$ is given by the right-hand side of (\ref{Ik}). Let us make this substitution into (\ref{almostt}):
\begin{equation*}
I_k = \sum_{n_k\geq 0}\, \prod_{\ell=1}^{k} \left(\frac{1-z_{\ell}}{1-\alpha/q^{\ell}}\right)^{n_k} \sum_{I,J}q^{-\frac{m(m-1)}{2}} \prod_{\substack{i\in I\\j\in J}} \frac{z_i-q^{-1}z_j}{z_i-z_j} \prod_{j\in J} \frac{1-z_j}{z_j-\alpha/q}.
\end{equation*}
The summation in $n_k$ can now be easily computed.

In order to prove the inductive step for $k$ it suffices to check that the above expression equals the right-hand side of (\ref{Ik}). This reduces the proof of the lemma to showing the following equality:
$$
\frac{1-\prod_{\ell=1}^{k} \frac{1-z_{\ell}}{1-\alpha/q^{\ell}}}{\prod_{\ell=1}^k \frac{z_\ell - \alpha/q}{1-\alpha/q^{\ell}}} = \sum_{I,J} q^{-\frac{m(m-1)}{2}} \prod_{\substack{i\in I\\j\in J}} \frac{z_i-q^{-1}z_j}{z_i-z_j} \prod_{j\in J} \frac{1-z_j}{z_j-\alpha/q}.
$$
The summation on the left-hand side does not include the term corresponding to $I=\emptyset$. Including this term cancels a portion of the left-hand side and further cross multiplying brings the desired equality to:
\begin{equation}\label{desireeqn}
\prod_{\ell=1}^{k} \left(1-\frac{\alpha}{q^\ell}\right) =  \sum_{I,J} q^{-\frac{k(k-1)}{2}} \prod_{\substack{i\in I\\j\in J}} \frac{z_i-q^{-1}z_j}{z_i-z_j} \prod_{i\in I} \left(z_i -\frac{\alpha}{q}\right) \prod_{j\in J} (1-z_j)
\end{equation}
where the term $\sum_{I,J}$ (without the $*$) represents the summation over all $I$ and $J$ such that $I\cup J =\{1,\ldots,k\}$ and $I\cap J = \emptyset$.

This desired equality is a $q$-deformation of the binomial expansion of $(1-\alpha)^k$. To prove it, let $V(\vec{z}) = \prod_{i<j} (z_j-z_i)$ be the Vandermonde determinant and let $T_{\ell}$ be the $q$-shift operator defined as $(T_\ell f)(z_1,\ldots, z_k) = (z_1,\ldots, q^{-1} z_{\ell},\ldots, z_k)$. Making the change of variables $z_i\mapsto 1/z_i$ we can rewrite the desired equality (\ref{desireeqn}) as
$$
\prod_{\ell=1}^{k} \left(\frac{T_\ell -1}{z_\ell} + (1-\frac{\alpha}{q} T_\ell )\right)  V(z)  = \prod_{j=1}^{k} (1-\frac{\alpha}{q^j}) V(z).
$$
This eigenfunction relationship for $V(z)$ is easily proved by writing $V(z) = \det\left[z_i^{j-1}\right]_{i,j=1}^{k}$ and observing that
$$
 \left(\frac{T_\ell -1}{z_\ell} + (1-\frac{\alpha}{q} T_\ell)\right)  V(z)  = \det\left[ (q^{-j+1} -1)z_i^{j-2} + (1-\alpha/q^j) z_i^{j-1}\right]_{i,j=1}^{k}.
$$
By elementary column transformations the above matrix can be brought to the form
$$
 \det\left[(1-\alpha/q^j) z_i^{j-1}\right]_{i,j=1}^{k} = \prod_{j=1}^{k} (1-\alpha/q^j) V(z),
$$
thus proving the desired equality (\ref{desireeqn}) and ultimately the inductive step and the lemma.

\end{document}